\documentclass[a4paper, 11pt]{article}
\usepackage[usenames]{xcolor}
\usepackage{xifthen}
\usepackage{algorithmic}
\usepackage{comment}
\usepackage{tcolorbox}
\usepackage{relsize}
\usepackage[font=footnotesize]{caption}
\usepackage[normalem]{ulem}
\usepackage{todonotes}

\usepackage{bm}
\usepackage{booktabs}
\usepackage{tabularx}
\usepackage{tikz}
\usetikzlibrary{shapes.geometric, arrows.meta, positioning, decorations.pathreplacing, calc}

\def\fontsettingup{2} 

\usepackage{amsmath, amsthm, amssymb}
\usepackage{bbm}
\usepackage[T1]{fontenc}
\ifthenelse{\fontsettingup = 1}{ \usepackage{eulerpx, eucal, tgpagella}   }{}
\ifthenelse{\fontsettingup = 2}{ \usepackage{mathpazo, eufrak, tgpagella} }{}

\usepackage{hyperref}
\hypersetup{
  colorlinks=true,
  linkcolor=blue,
  filecolor=magenta,
  urlcolor=cyan,
  citecolor=blue
}

\usepackage[linesnumbered,boxed,ruled,vlined]{algorithm2e}
\usepackage{tikz}
\usepackage{cleveref}
\usepackage{makecell}

\usepackage[a4paper]{geometry}
\newgeometry{
textheight=9in,
textwidth=6.5in,
top=1in,
headheight=14pt,
headsep=25pt,
footskip=30pt
}

\newtheorem{theorem}{Theorem}

\newtheorem*{claim*}{Claim}

\newtheorem{lemma}[theorem]{Lemma}
\newtheorem{proposition}[theorem]{Proposition}
\newtheorem{corollary}[theorem]{Corollary}
\theoremstyle{definition}

\newtheorem{definition}[theorem]{Definition}
\newtheorem{remark}[theorem]{Remark}
\newtheorem*{remark*}{Remark}
\newtheorem{assumption}{Assumption}

\ifthenelse{\fontsettingup = 1}{
  \def\*#1{\mathbf{#1}} 
  \def\+#1{\mathcal{#1}} 
  \def\-#1{\mathrm{#1}} 
  \def\^#1{\mathbb{#1}} 
  \def\!#1{\mathfrak{#1}} 
}{}

\ifthenelse{\fontsettingup = 2}{
  \def\*#1{\boldsymbol{#1}} 
  \def\+#1{\mathcal{#1}} 
  \def\-#1{\mathrm{#1}} 
  \def\^#1{\mathbb{#1}} 
  \def\!#1{\mathfrak{#1}} 
}{}

\DeclareMathOperator*{\argmax}{arg\,max}

\def\oPr{\mathbf{Pr}}
\renewcommand{\Pr}[2][]{ \ifthenelse{\isempty{#1}}
  {\oPr\left[#2\right]}
  {\oPr_{#1}\left[#2\right]} } 

\DeclareMathOperator*{\oVar}{\mathbf{Var}}
\newcommand{\Var}[2][]{ \ifthenelse{\isempty{#1}}
  {\oVar\left[#2\right]}
  {\oVar_{#1}\left[#2\right]} }

\def\oEnt{\mathbf{Ent}}
\newcommand{\Ent}[2][]{ \ifthenelse{\isempty{#1}}
  {\oEnt\left[#2\right]}
  {\oEnt_{#1}\left[#2\right]} }

\newcommand{\e}{\mathrm{e}}
\renewcommand{\epsilon}{\varepsilon}
\renewcommand{\emptyset}{\varnothing}
\newcommand{\norm}[1]{\left\Vert#1\right\Vert}
\newcommand{\set}[1]{\left\{#1\right\}}
 \newcommand{\eps}{\varepsilon}

\newcommand{\abs}[1]{\left\vert#1\right\vert}
\newcommand{\ctp}[1]{\left\lceil#1\right\rceil}

\newcommand{\junk}[1]{}

\newcommand{\lap}{\texttt{Lap}}

\newcommand{\Paren}[1]{\left(#1\right)}

\newcommand{\cX}{\mathcal{X}}

\newcommand{\cW}{\mathcal{W}}

\newcommand{\cut}{\operatorname{cut}}
\newcommand{\diag}{\operatorname{diag}}

\newcommand{\tr}{\operatorname{tr}}
\newcommand{\E}{\mathbb{E}}
\newcommand{\normt}[1]{\norm{#1}_2}

\newcommand{\ip}[2]{\left\langle #1,#2\right\rangle}

\let\epsilon=\varepsilon
\newcommand{\del}{\delta}

\newcommand{\one}{\mathbbm{1}}
\newcommand{\opnorm}[1]{\left\lVert #1\right\rVert_2}
\newcommand{\fro}[1]{\left\lVert #1\right\rVert_F}

\newcommand{\codeg}{\operatorname{codeg}}

\newcommand{\poly}{\operatorname{poly}}
\newcommand{\Prb}{\mathbf{Pr}}
\newcommand{\R}{\mathbb{R}}
\newcommand{\1}{\mathbf{1}}
\newcommand{\symdiff}{\triangle}
\newcommand{\F}{\mathcal{F}}

\title{Private Approximation of Graph Spectra and Cuts via Spectral Amplifiers}
\date{}
\author{Chenglin Fan\footnotemark[1] \and Jingcheng Liu\footnotemark[2] \and Pan Peng\footnotemark[3] \and Hangyu Xu\footnotemark[3] \and Zongrui Zou\footnotemark[2]}

\footnotetext[1]{Seoul National University. Email: \texttt{fanchenglin@gmail.com}}
\footnotetext[2]{Nanjing University. Emails: \texttt{liu@nju.edu.cn, zou.zongrui@smail.nju.edu.cn}}
\footnotetext[3]{University of Science and Technology of China. Emails: \texttt{ppeng@ustc.edu.cn, hangyuxu@mail.ustc.edu.cn}}

\begin{document}
\pagenumbering{arabic}
\maketitle

\begin{abstract}
We study the problem of releasing a synthetic graph that approximates the sizes of all cuts of an input graph under edge-level differential privacy.  If one insists on purely additive error, the optimal worst-case error is \(\widetilde\Theta(n^{3/2})\).  If one allows a small multiplicative slack, an information-theoretic exponential-time mechanism achieves nearly linear additive error, but the best known polynomial-time algorithms have substantially larger error. We give a polynomial-time \((\varepsilon,\delta)\)-differentially private algorithm which, for every \(n\)-vertex unweighted graph \(G\), outputs a non-negative weighted synthetic graph \(\widetilde G\) such that, with high probability, every cut \(S\subseteq V(G)\) satisfies
\[
  |w_G(S)-w_{\widetilde G}(S)|
  \le
  \gamma w_G(S)+\widetilde O_{\varepsilon,\delta,\gamma}(n^{13/12+o(1)}).
\]
This improves the previous polynomial-time worst-case bound \(\widetilde O(n^{5/4+o(1)})\) of Aamand et al. (ICML 2025) for mixed multiplicative/additive private cut approximation.

The main technical ingredient is a new set of private spectral primitives for bounded-degree graphs. Our power-based spectral amplifier releases high powers of the adjacency matrix and uses the noisy amplified matrix to identify the large-eigenvalue subspace.  The square amplifier gives spectral error \(\widetilde O_{\delta}((nd)^{1/4}/\sqrt\varepsilon)\) in estimating the graph Laplacian for graphs of maximum degree \(d\), being the \emph{first} to beat the standard \(\min\{2d,\widetilde O_{\delta}(\sqrt{n}/\varepsilon)\}\) baseline in the high-degree regime. We further develop a bootstrapped fourth-power primitive with a sharper error dependence on $n$ and $d$ for the downstream cut approximation. Combined with a stable port gadget, a demand-aware recursive expander decomposition, and a new edge-sensitive terminal cut oracle with additive error \(\widetilde O(n+(n^2M)^{1/3})\) on graphs with \(M\) edges, this yields the final worst-case \(\widetilde O(n^{13/12+o(1)})\) private cut-release error. We also prove an \(\Omega(\sqrt n)\) lower bound on the error of private Laplacian release, which holds under approximate differential privacy even for graphs of  average degree $O(1)$. This shows that the worst-case \(\sqrt n\)-scale spectral barrier is information-theoretic, even for sparse graphs. 

\end{abstract}

 \thispagestyle{empty}

\newpage
\tableofcontents
\thispagestyle{empty}

\newpage
\pagenumbering{arabic}
\section{Introduction}

Graph data provides a powerful and intuitive framework for modeling complex interactions, making it widely used in areas like machine learning, network analysis, social sciences, and biology. As a graph may contain sensitive information such as contact networks, financial transactions and communication logs, ensuring the privacy of these underlying relationships is a fundamental challenge. To mitigate the risk of exposing sensitive data, Differential Privacy (DP) \cite{dwork2006calibrating} has been established as the gold standard for privacy-preserving data analysis. A randomized algorithm $\mathcal{A}$ is $(\epsilon, \delta)$-differentially private if, for every pair of \emph{``neighboring''} datasets $D$ and $D'$ differing by at most one individual record, and for every set of possible outputs $\mathcal{S}$, it holds that:
\[
    \Prb[\mathcal{A}(D) \in \mathcal{S}] \le e^\epsilon \Prb[\mathcal{A}(D') \in \mathcal{S}] + \delta.
\]
Informally, this ensures that the behavior of the algorithm remains indistinguishable (parametrized by $(\eps,\del)$) whether any single individual opts into or out of the dataset, so an adversary observing the final algorithmic output cannot infer the presence or exact data of any specific individual. Differential privacy has established successful applications in \emph{private graph analysis} including cut approximation \cite{gupta2012iterative, eliavs2020differentially, dalirrooyfard2023nearly,liu2024optimal, aamandbreaking, chandra2026differentially}, spectral approximation \cite{blocki2012johnson, arora2019differentially, upadhyay2021differentially, zou2025almost}, correlation or hierarchical clustering (\cite{bun2021differentially, DBLP:conf/nips/Cohen-AddadFLMN22, DBLP:conf/icml/ImolaEMCM23, cohen2022scalable, deng2025price, zoudifferentially}) and numerical statistics release (\cite{DBLP:conf/tcc/KasiviswanathanNRS13, DBLP:conf/nips/UllmanS19, DBLP:conf/focs/BorgsCSZ18, ding2021differentially, imola2022differentially, raskhodnikova2026local}), among others.

A central problem in private graph analysis is to release a synthetic graph preserving the \emph{values} of all cuts.  For an unweighted graph \(G=([n],E)\), write \(w_G(S)\) for the number of edges (or the sum of edge weights in weighted graphs) crossing \((S,[n]\setminus S)\).  We seek to release a (probably weighted) graph \(\widetilde G\) under \emph{edge-level privacy} (firstly proposed in \cite{dwork2006calibrating}) such that, simultaneously for all \(S\subseteq[n]\),
\begin{equation}\label{eq:cut-approximation}
    (1-\gamma)w_G(S)-\alpha
    \le
    w_{\widetilde G}(S)
    \le
    (1+\gamma)w_G(S)+\alpha .
\end{equation}
Here \(\gamma\in[0,1)\) is a relative-error parameter and \(\alpha\) is the additive error. In edge-level privacy, two graphs $G,G'$ are neighboring if they differ by adding or removing at most one edge.

The purely additive case \(\gamma=0\) is now well understood. It is easy to verify that adding independent Gaussian noise with variance ${O}(\log^2(1/\del)/\eps^2)$ on the $0/1$ weight between each pair of vertices is $(\eps,\del)$-edge-level differentially private, and it gives an error \(\alpha = \widetilde O(n^{3/2}/\varepsilon)\). More refined algorithms of Gupta et al.~\cite{gupta2012iterative} and Eli\'{a}\v{s} et al.~\cite{eliavs2020differentially} achieve \(\widetilde O(\sqrt{mn/\varepsilon})\) on \(m\)-edge graphs, and this dependence is essentially tight for unweighted graphs due to lower bounds in \cite{eliavs2020differentially}. Liu et al.~\cite{liu2024optimal} further sharpened the weighted setting.  Thus, in the dense worst case \(m=\Theta(n^2)\), purely additive private cut release has an unavoidable \(\widetilde\Theta(n^{3/2})\)-scale error.

This error scale is too large for many dense graphs.  A dense graph may contain sparse but structurally important cuts, and an additive error of order \(n^{3/2}\) can completely hide them.  This motivates the mixed guarantee (\cref{eq:cut-approximation}) for $\gamma\in (0,1)$: large cuts are approximated relatively, while small cuts are protected by an additive floor.  In this mixed model, Blocki et al.~\cite{blocki2012johnson} gave an information-theoretic, exponential-time mechanism with additive error \(\widetilde O(n)\), using the existence of cut sparsifiers and the exponential mechanism over all sparse graphs. The challenge is algorithmic.  The exponential-time mechanism searches over a huge family of sparse synthetic graphs, and it has not been clear how to reproduce the same behavior in polynomial time.

Aamand et al.~\cite{aamandbreaking} recently made the first polynomial-time progress beyond the purely additive $\widetilde O(n^{3/2})$ barrier in the mixed model, achieving an additive error of $\widetilde O_{\varepsilon,\delta,\gamma}(n^{5/4+o(1)})$. Their approach is based on private expander decompositions: inside an expander piece, the graph is highly connected, which means that any cut contains enough edges to absorb the additive noise of a local private release as a small relative error (i.e. $\gamma w_G(S)$), leaving a \emph{residual} graph of unprocessed inter-component edges to be released separately.

Their work identifies the right high-level strategy, but it relies on a one-time decomposition and leaves two bottlenecks. First, the best-known (before this paper) private spectral primitive used to estimate cuts inside the expanders has an $\widetilde O(\sqrt{n})$ spectral error. This error is oblivious to the graph's density, and forces the algorithm to use a high expansion threshold that consequently leaves a massive residual graph. Second, to process this leftover graph, their algorithm falls back to a standard purely additive release. Specifically, it uses the synthetic graph oracle by Eli\'{a}\v{s} et al.~\cite{eliavs2020differentially}, which inherently incurs an $\widetilde O(\sqrt{nM})$ error on an $M$-edge residual.

This paper addresses both bottlenecks. To overcome the first, we have to develop new private spectral primitives whose errors explicitly depend on the graph's maximum degree $d$ (e.g., achieving an $\widetilde O((nd)^{1/4})$ spectral error). Because these primitives become increasingly accurate as the graph becomes sparser, they make it possible to apply expander decompositions \emph{recursively} on the residual, which is impossible with previous $\widetilde O(\sqrt{n})$-error private spectral estimators. To address the second bottleneck, we introduce a new \emph{terminal} oracle for sparse residual graphs. Here \emph{terminal} means ``at the end of the recursion'': it is a final-stage subroutine invoked only after the recursive decompositions have made the residual sparse enough, releasing this leftover graph directly rather than decomposing it further. By our new degree-sensitive primitives, the recursion drives the residual graph to $n^{5/4+o(1)}$ edges. Our terminal oracle privatizes an $M$-edge residual with an additive error of $\widetilde{O}(n+(n^2M)^{1/3})$ and a small multiplicative relaxation. Combining these two ingredients yields a final additive error of $\widetilde O(n^{13/12+o(1)})$, improving over the previous $\widetilde O(n^{5/4+o(1)})$ polynomial-time bound and moving closer to the $\widetilde O(n)$ information-theoretic benchmark.

\subsection{Our Results}

\paragraph{Private cut approximation.}
Our main theorem is the following.

\begin{theorem}[Private Cut Approximation, restatement of Corollary~\ref{cor:cut_13_12}, Section~\ref{sec:cut}]
\label{thm:result-main-cut}
Let \(G=(V,E)\) be an \(n\)-vertex unweighted graph. For every \(\varepsilon\in(0,1)\), $\del\in (0,1/2)$ and \(\gamma\in(0,1/4)\), there is a polynomial-time \((\varepsilon,\delta)\)-edge-DP algorithm that outputs a non-negative weighted synthetic graph \(\widetilde G\) such that, with high probability, simultaneously for all cuts \(C\subseteq V\),
\[
    |w_G(C)-w_{\widetilde G}(C)|
    \le
    \gamma w_G(C)
    +
    \widetilde O_{\delta}\left(
      \frac{n^{13/12+o(1)}}{\varepsilon\gamma^{7/6}}
    \right).
\]
\end{theorem}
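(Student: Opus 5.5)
The plan is a recursive decompose-and-release scheme: the guarantee is assembled from three primitives, each privatized separately and composed. Throughout, I would treat as given the bootstrapped fourth-power spectral primitive, the stable port gadget, the demand-aware recursive expander decomposition, and the terminal cut oracle with additive error $\widetilde O(n+(n^2M)^{1/3})$ on $M$-edge graphs.

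\textbf{Step 1: one level of the recursion.} Start from $G_0=G$. At level $i$, run the private expander decomposition on the current residual $G_i$ with expansion threshold $\phi_i$, yielding clusters and a residual $G_{i+1}$ of inter-cluster edges. Inside each cluster, low-degree pieces are handled by the port gadget, and the cluster is released with the degree-sensitive spectral primitive. The reason this works is that any cut $S$ inside a $\phi_i$-expander of degree $d$ has $w(S)\ge \phi_i\, d\, \min(|S|,|\bar S|)$. Meanwhile the spectral error contributes at most $\lambda\cdot\min(|S|,|\bar S|)$ to the cut value, where $\lambda$ is the spectral error of the fourth-power primitive. So if $\phi_i d\ge \lambda/\gamma$, the spectral error is absorbed into $\gamma w_G(S)$. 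This fixes $\phi_i$ as a function of $d$ and $\lambda(n,d)$. The residual then has at most $\widetilde O(\phi_i\cdot\text{vol})\cdot n^{o(1)}$ edges.

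\textbf{Step 2: iterate and account for privacy.} Iterate for $O(\log n)$ or $n^{o(1)}$ levels. The edge count decreases geometrically until it reaches $M=n^{5/4+o(1)}$; the point is that the degree-sensitive error shrinks as the residual becomes sparser, so further levels remain useful. For privacy, I would use advanced composition over the levels, splitting $\varepsilon$ and $\delta$ evenly. The demand-aware decomposition needs a stability property: one edge change affects only a bounded number of downstream releases. I would cite this stability property rather than reprove it.

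\textbf{Step 3: terminal release and error accounting.} Apply the terminal oracle to the final residual, with error $\widetilde O(n+(n^2\cdot n^{5/4})^{1/3})=\widetilde O(n^{13/12})$. The output $\widetilde G$ is the union of all released pieces, with negative weights truncated (a truncation that is cut-preserving up to the stated error via the primitives' nonnegativity guarantees). For a cut $S$, its value decomposes as a sum over levels of intra-cluster contributions plus the terminal residual. Summing gives a multiplicative term $\sum_i \gamma_i w_{G_i}(S)\le\gamma w_G(S)$ with $\gamma_i=\gamma/O(\log n)$, plus additive terms. The $\gamma^{-7/6}$ factor arises from the threshold $\phi\propto 1/\gamma$ inflating the residual size $M$, combined with the terminal oracle's $M^{1/3}$ dependence and its own $\gamma$-slack.

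\textbf{Main obstacle.} I expect the hardest step to be balancing the thresholds so that every intermediate level's additive error is at most $n^{13/12+o(1)}$. Concretely, the degree-dependent spectral error times the cluster sizes (through the port gadget) must not exceed the terminal bound. I would first try optimizing a single threshold exponent, solving for the exponent at which the recursion's residual $n^{5/4}$ matches the terminal-oracle crossover.
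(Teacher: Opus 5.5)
Your outline has the same architecture as the paper (port gadget, demand-aware recursive decomposition, fourth-power primitive, terminal oracle). However, the step that produces $n^{5/4+o(1)}$ and the $\gamma^{-7/6}$ is missing.

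\textbf{The contraction mechanism.} You assert that the residual ``decreases geometrically'' until $M=n^{5/4+o(1)}$, but nothing in your argument forces a decrease. After dividing the residual bound by $n$, the fourth-power primitive gives
\[
d_{t+1}\le 1+B\Bigl[\frac{d_t}{\sqrt{\varepsilon q}}+\frac{n^{1/8}d_t^{1/2}q^{1/8}}{\varepsilon^{3/8}}+\frac{n^{1/4}q^{1/4}}{\varepsilon^{3/4}}+\frac{1}{\varepsilon}\Bigr],\qquad B=\frac{\mathcal{L}}{\gamma}.
\]
Here $\mathcal{L}=n^{o(1)}$ collects the polylogarithmic factors, the powers of $T$ from budget splitting, and the $n^{o(1)}$ loss of the expander decomposition. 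The linear term contracts only if the tunable parameter is taken as large as $q\approx 16B^2/\varepsilon$, which is polynomial in $1/\gamma$ and $1/\varepsilon$ up to $n^{o(1)}$. Only then does the recurrence become $d_{t+1}\le \tfrac14 d_t+b\sqrt{d_t}+c$. Lemma~\ref{lem:sqrt-contraction} then gives $d_T=O(b^2+c)=O(n^{1/4}B^{5/2}/\varepsilon)$ after $T=\lceil\log_2 n\rceil$ rounds.

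Two consequences follow for your ``main obstacle.''
\begin{itemize}
\item $n^{1/4}$ is the fixed point of $d=n^{1/8}d^{1/2}$; it does not come from balancing against the terminal oracle.
\item Intermediate levels contribute no additive error at all, since their error is entirely absorbed into $\gamma w_G(C)$. So there is nothing to balance there.
\end{itemize}
The $\gamma$-exponent also comes out of this computation. The $1/\gamma$ in $\psi$ is squared at the square-root fixed point, and $q\propto\gamma^{-2}$ contributes another $\gamma^{-1/2}$ through $q^{1/4}$. Hence $M_T\propto\gamma^{-5/2}$, and $(n^2M_T/(\varepsilon^2\gamma))^{1/3}\propto n^{13/12+o(1)}/(\varepsilon\gamma^{7/6})$. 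Your explanation (one $1/\gamma$ inflation of $M$, then the oracle's cube root with its own $\gamma$) would give only $\gamma^{-2/3}$.

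\textbf{The units in Step 1.} You bound the in-cluster spectral error by $\lambda\min(|S|,|\bar S|)$ with $\lambda$ evaluated at the degree scale $M_t/n$. But the residual's \emph{maximum} degree need not shrink, since a residual can contain a star. The hidden-star lower bound of Section~\ref{sec:lowerbound} shows that no private primitive has error governed by average degree.

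After the port gadget, the error on a cluster $U$ is $\Lambda\min\{\rho(S),\rho(U\setminus S)\}$ in port units. Neither of the expansion notions you use can absorb this:
\begin{itemize}
\item Cardinality expansion fails because $\rho(S)\gg|S|$ when $S$ contains high-degree vertices.
\item Conductance expansion fails on sets of constant-degree vertices, where $\rho(S)=|S|$ and $\mathrm{vol}(S)=O(|S|)$. It would need $\phi=\Omega(\Lambda/\gamma)>1$.
\end{itemize}
The fix is to build the gadget's public $\rho$ before decomposing. Then decompose under $(\psi,\rho)$-demand expansion with $\psi\ge\Lambda/\gamma$. The residual bound $\psi\rho(V)^{1+o(1)}=\widetilde O(\psi n^{1+o(1)})$ then uses $\rho(V)=\widetilde O(n)$.

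Two minor points. No truncation of negative weights is needed, because each primitive already outputs a non-negative graph (via the SDP projection or the mirror-descent domain). For privacy, the paper uses basic composition over the $T$ rounds and the $n^{o(1)}$ recursion depth, parallel composition over disjoint components, and the four-edge stability of the gadget.
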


The theorem should be compared with three benchmarks.  The first is the optimal purely additive bound \(\widetilde\Theta(n^{3/2})\) in the dense worst case.  The second is the exponential-time \(\widetilde O(n)\) mixed-error mechanism of Blocki et al.~\cite{blocki2012johnson}.  The third is a recent polynomial-time mixed-error bound \(\widetilde O(n^{5/4+o(1)})\) of Aamand et al.~\cite{aamandbreaking}.  Our exponent \(13/12\) lies strictly between the efficient \(5/4\) barrier and the nearly linear information-theoretic target.  More importantly, the proof introduces a modular way to combine degree-sensitive private spectral release with an edge-sensitive terminal oracle; this modular interface may be useful for further improvements.

\paragraph{Private spectral primitives.}
The first main technical component is a family of private Laplacian estimators.  For a graph \(G\) with Laplacian \(L_G :=D_G-A_G\), we measure error in spectral norm $\|L_G - L_{\widetilde{G}}\|_2$. This is stronger than cut approximation in the sense that
\begin{equation}\label{eq:laplacian-implies-cut}
        |w_G(S)-w_{\widetilde G}(S)|
    =
    |\mathbf 1_S^\top(L_G-L_{\widetilde G})\mathbf 1_S|
    \le
    \|L_G-L_{\widetilde G}\|_2\cdot |S| .
\end{equation}
Spectral approximation is also useful beyond cuts, since the Laplacian controls random walks, effective resistances, and many spectral graph quantities~\cite{lovasz1993random, spielman2011graph}.

For an \(n\)-vertex graph of maximum degree \(d\), two simple baselines are immediate.  The zero estimator has error at most \(2d\) as $\|L_G\|_2 \leq 2d$ if $G$ is unweighted, and the dense Gaussian mechanism has error \(\widetilde O(\sqrt n/\varepsilon)\) (see e.g. \cite{blocki2012johnson}).  Previous private spectral-approximation results, including the topology-sampler approach of Liu, Upadhyay and Zou~\cite{liu2024optimal}, essentially achieve a \(\widetilde O(d)\)-type worst-case dependence on the maximum degree for \textit{weighted} graphs, together with a \(\widetilde O(\sqrt n)\)-type fallback. Both bounds are insufficient for our recursive cut algorithm: an additive error proportional to $d$ is trivial for unweighted graphs, and the $\widetilde O(\sqrt{n})$ error from adding Gaussian noise is oblivious to graph sparsity and therefore does not yield better accuracy as the residual becomes sparser, preventing us from applying recursion to the residual graph. We therefore need a primitive that is genuinely sublinear in \(d\) in the regimes encountered by the recursion.

Our new spectral primitives are based on a new technique, the \emph{power-based spectral amplifier} (Theorem~\ref{thm:result_generic_framework}, Section~\ref{sec:power_amplifier}), which we develop as a generic mechanism for the private release of symmetric matrices. Specialized to graph Laplacian matrices, the square amplifier already gives a clean improvement upon the longstanding baseline $\min\{O(d), \widetilde{O}(\sqrt{n}/\eps)\}$, and the fourth-power primitive is designed for the recursive cut framework.

\begin{theorem}[Private Laplacian Release, restatement of Theorems~\ref{thm:laplacian-square} and~\ref{thm:improved-spectral}, Section~\ref{sec:laplacain_analysis}]
\label{thm:intro_laplacian}
Let \(G\) be an \(n\)-vertex graph with maximum degree at most \(d\).  For every tunable parameter \(q\ge \operatorname{polylog}(n,1/\delta)/\varepsilon\), there is a polynomial-time \((\varepsilon,\delta)\)-edge-DP algorithm that outputs a non-negative weighted graph \(\widetilde G\) whose Laplacian satisfies, with high probability,
\begin{equation}\label{eq:intro-a4-amplifier}
    \|L_G-L_{\widetilde G}\|_2
    \le
    \widetilde O\left(\frac1\varepsilon\right)
    +
    \min\left\{
      \widetilde O\left(\frac{(nd)^{1/4}}{\sqrt\varepsilon}\right),
      \widetilde O\left(
        \frac{d}{\sqrt{\varepsilon q}}
        +
        \frac{n^{1/8}d^{1/2}q^{1/8}}{\varepsilon^{3/8}}
        +
        \frac{n^{1/4}q^{1/4}}{\varepsilon^{3/4}}
      \right),
      2d
    \right\}.
\end{equation}
\end{theorem}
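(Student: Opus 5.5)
The plan is to prove the three terms of the minimum separately and then output whichever estimator the a priori parameters $(n,d,\varepsilon,q)$ favor. This choice is data-independent, so privacy is preserved. The $2d$ term is the zero estimator, since $\|L_G\|_2\le 2d$. For the other two terms I would split $L_G=D_G-A_G$. The degree vector has $\ell_2$-sensitivity $\sqrt2$, so Gaussian noise on the degrees costs only $\widetilde O(1/\varepsilon)$ in spectral norm; this is the additive $\widetilde O(1/\varepsilon)$ term. The real work is therefore a private spectral approximation $\widetilde A$ of $A=A_G$. At the end I would project or round $\widetilde A$ to a non-negative weighted graph, which loses only a constant factor in spectral norm by a standard argument: if some non-negative $A'$ is close to $\widetilde A$, the projection onto non-negative matrices within spectral-norm distance is a convex program.

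For the square amplifier, i.e.\ the $(nd)^{1/4}/\sqrt\varepsilon$ term, release $\widetilde B=A^2+N$ with $N$ a symmetric Gaussian matrix. Adding one edge $uv$ changes $A^2$ in row $u$ by the column $A_{\cdot v}$, changes row $v$ by $A_{\cdot u}$, and changes the diagonal entries by $1$. Hence the Frobenius sensitivity is $O(\sqrt d)$, $N$ has entrywise scale $\sigma=\widetilde O(\sqrt d/\varepsilon)$, and $\|N\|_2=\widetilde O(\sqrt{nd}/\varepsilon)$. The amplifier idea is then applied with threshold $\tau^2:=\|N\|_2$. Let $P$ be the spectral projector of $\widetilde B$ onto eigenvalues above $C\tau^2$. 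By Weyl and Davis--Kahan-type arguments, $P$ captures all eigendirections of $A$ with $|\lambda|\gtrsim\tau$, up to controlled leakage. Since $P$ is post-processing, we may then release $PAP$ privately. Because $P$ has rank $r\le \|A\|_F^2/\tau^2\le nd/\tau^2$, we can do this by adding Gaussian noise $PNP'$ of spectral size $\widetilde O(\sqrt r/\varepsilon)$. The estimate $\widetilde A=PAP+PN'P$ then has error at most $\|(I-P)A(I-P)\|+\|\text{cross terms}\|+\sqrt r/\varepsilon$, and the first term is $\lesssim\tau$. Balancing gives $\tau\asymp (nd)^{1/4}/\sqrt\varepsilon$ and $\sqrt r/\varepsilon\lesssim \sqrt{nd}/(\tau\varepsilon)\asymp\tau$, which is the claimed bound.

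The fourth-power term is the bootstrapped version. First compute a square-amplifier projection and release a private approximation with a tunable internal noise/threshold parameter $q$. Then release a noisy $A^4$ restricted off the already-captured subspace. Here the sensitivity of $A^4$ is $O(d^{3/2})$, and it is reduced by working with clipped or bootstrapped factors; $q$ calibrates the privacy split and the clipping. This yields an eigenvalue threshold of order $(\text{noise})^{1/4}$. Optimizing the three error sources (residual below threshold, rank-dependent projected noise, and the bootstrap error from the first stage) produces the three summands $d/\sqrt{\varepsilon q}$, $n^{1/8}d^{1/2}q^{1/8}/\varepsilon^{3/8}$, and $n^{1/4}q^{1/4}/\varepsilon^{3/4}$. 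Privacy composes over the two to three Gaussian releases, with $q\ge\operatorname{polylog}/\varepsilon$ ensuring that each stage's noise is large enough.

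The main obstacle is the cross-term and leakage analysis. $P$ is computed from the noisy $\widetilde B$, so eigenvalues of $A^2$ near the threshold may be split across it. I would handle this with a two-sided argument rather than eigenvector perturbation. One can write $\|(I-P)A(I-P)\|^2=\|(I-P)A^2(I-P)\|\le\|(I-P)\widetilde B(I-P)\|+\|N\|\le (C+1)\tau^2$, and the cross term $\|(I-P)AP\|$ is bounded similarly via $\|(I-P)A\|^2\le\|(I-P)A^2(I-P)\|$. This avoids eigengap assumptions entirely. Carrying the same gap-free argument through the fourth-power bootstrap, where the first-stage error feeds into the second, is where I expect most of the difficulty.
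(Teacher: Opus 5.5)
Your treatment of the $2d$ term, the degree term, and the square amplifier is essentially the paper's proof of Theorem~\ref{thm:laplacian-square}. You release $A^2$ plus Gaussian noise calibrated to Frobenius sensitivity $O(\sqrt d)$. You keep the eigenspace above a multiple of the noise norm and bound the truncation bias through $\|(I-P)A\|_2^2=\|(I-P)A^2(I-P)\|_2$. You bound the rank by $\operatorname{tr}(A^2)/\eta\le nd/\eta$ and release $U^\top AU$ with $r\times r$ noise.

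One step there fails as stated. You cannot project $\widetilde A$ alone onto non-negative matrices and keep the separately released degrees. The output graph's Laplacian is $\operatorname{diag}(A'\mathbf 1)-A'$, and $\|(A_G-A')\mathbf 1\|_\infty$ is not controlled by $\|A_G-A'\|_2$; it can be up to $\sqrt n$ times larger. The paper instead projects the pseudo-Laplacian $\widetilde D-\widetilde A$ in spectral norm onto the cone $\mathbb{L}_n$ of Laplacians of non-negative weighted graphs, which is an SDP. This loses only a factor $2$ because $L_G\in\mathbb{L}_n$.

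The genuine gap is the fourth-power branch, which you assert rather than derive.
\begin{itemize}
\item \textbf{The sensitivity claim is false.} $\|(A')^4-A^4\|_F$ is governed by the two-walk energy $\kappa_2(G)=\max_v\|A^2e_v\|_2^2$; the paper shows it is $O(d\sqrt{\kappa_2})$. On $K_{d+1}$ or $K_{d,d}$ one has $\kappa_2=\Theta(d^3)$ and sensitivity $\Theta(d^{5/2})$; the bound $O(d^{3/2})$ belongs to the cubic query. With $\Delta_4\sim d^{5/2}$ and $\operatorname{tr}(A^4)\le n\kappa_2\sim nd^3$, the amplifier gives only $n^{1/8}d^{5/8}+(nd)^{1/4}$, which is no gain.
\item \textbf{Your two-stage architecture does not repair this.} Running a second stage on the complement of a square-amplifier subspace leaves the cross block $PA(I-P)$ controlled only at the first-stage scale $(nd)^{1/4}/\sqrt\varepsilon$ by your gap-free argument. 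This is the difficulty you flag but leave open. Nothing in that architecture produces a term like $d/\sqrt{\varepsilon q}$.
\end{itemize}

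The missing idea is the iterated cubic-score cover.
\begin{itemize}
\item \textbf{Peeling.} Repeatedly release noisy $A_{R_i}^3$, which has sensitivity $O(\sqrt{K_i})$ under the promise $\kappa_2(R_i)\le K_i$. That promise is data-dependent, so it must be enforced by a propose-test-release check. Add every pair whose noisy score exceeds $\tau_i=\max(2\zeta_i,dq/\varepsilon)$ to a public support $P$, and delete those edges.
\item \textbf{Residual energy.} Since $\kappa_2(R)=\max_x\sum_{y\in N_R(x)}(A_R^3)_{xy}$, the residual energy obeys $K_{i+1}\lesssim d\tau_i$. It reaches $\widetilde O(d^2q/\varepsilon)$ after $O(\log\log d)$ rounds.
\item \textbf{Support degree.} Row sums of $A^3$ are at most $d^3$, so $\Delta(P)\lesssim d^3/\tau_i\lesssim \varepsilon d^2/q$.
\item \textbf{Assembling the three summands.} Releasing the edges on the public support with entrywise Gaussian noise costs $\sqrt{\Delta(P)}/\varepsilon=\widetilde O(d/\sqrt{\varepsilon q})$ by the matrix Gaussian series bound. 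The $A^4$ amplifier on the residual uses $\Delta_4=O(d\sqrt K)$ and $B_4\le nK$ with $K=\widetilde O(d^2q/\varepsilon)$. It yields $n^{1/8}d^{1/2}q^{1/8}/\varepsilon^{3/8}$ and $n^{1/4}q^{1/4}/\varepsilon^{3/4}$.
\end{itemize}
The parameter $q$ encodes exactly this trade-off between the degree of the peeled support and the residual two-walk energy. Your plan never identifies it, so it does not establish the second branch of the minimum.
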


The term \(d/\sqrt q\) in \eqref{eq:intro-a4-amplifier} is important.  It should not be confused with the known \(\widetilde O(d)\) primitive.  In the recursive cut framework, a local spectral error \(\Lambda(d)\) leads to a residual-degree recurrence of the form \(d_{t+1}\approx \Lambda(d_t)\), up to privacy, \(\gamma\), and polylogarithmic factors.  A \(\widetilde O(d)\) error therefore does not reduce the residual degree.  By contrast, \(d/\sqrt q\), with \(q\) a sufficiently large polylogarithm, is a tunably contracting linear term.  The polynomial fixed point is then governed by the sublinear terms in \eqref{eq:intro-a4-amplifier}.

We also show that the maximum degree $d$ cannot be replaced by average degree, as the worst-case private spectral release has an unavoidable \(\sqrt n\) lower bound even if average degree is constant.

\begin{theorem}[Spectral Lower Bound in Sparse Graphs, restatement of Theorem~\ref{thm:spectral-lower-bound}, Section~\ref{sec:lowerbound}]
\label{thm:result-lower-bound}
Fix a constant \(0<\varepsilon<1\) and let \(\delta=n^{-c}\).  Any \((\varepsilon,\delta)\)-edge-DP algorithm that releases a synthetic graph Laplacian $\widehat{L}$ must incur worst-case expected spectral error
\[
    \mathbb E\bigl[\|L_G-\widehat L\|_2\bigr]
    =\Omega(\sqrt n).
\]
The hard instances are connected graphs with maximum degree \(\Theta(n)\) and average degree $O(1)$.
\end{theorem}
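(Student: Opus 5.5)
We prove Theorem~\ref{thm:result-lower-bound} by reducing to a lower bound for privately estimating many independent bits, using a star-plus-sparse-structure hard instance. The natural candidate is a graph with a hub vertex $v_0$ adjacent to a subset of $[n]$ encoded by a vector $x\in\{0,1\}^{n}$, plus a fixed sparse connected backbone (say a path or cycle on all vertices) that keeps the graph connected and the average degree $O(1)$. The hub has degree $\Theta(n)$. Neighboring graphs correspond to flipping one coordinate of $x$, so any $(\varepsilon,\delta)$-edge-DP Laplacian release gives an $(\varepsilon,\delta)$-DP mechanism on $x$ with respect to Hamming neighbors.

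First we relate spectral error to reconstruction error of $x$. The hub row of $L_G$ is essentially $(\deg(v_0), -x)$ plus the fixed backbone contribution. Given $\widehat L$, we read off the estimate $\hat x = -(\widehat L_{v_0,\cdot})$ restricted to the leaves, after subtracting the known backbone entries. Because the backbone is public, $\|L_G-\widehat L\|_2 \ge \|(L_G-\widehat L)e_{v_0}\|_2 \ge \|x-\hat x\|_2$. So it suffices to show that every $(\varepsilon,\delta)$-DP algorithm on $\{0,1\}^n$ has $\mathbb E\|x-\hat x\|_2=\Omega(\sqrt n)$ for some $x$. Equivalently, it must have squared error $\Omega(n)$, i.e.\ a constant fraction of coordinates wrong in expectation.

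Second we prove this reconstruction lower bound. Take $x$ uniform on $\{0,1\}^n$. For each coordinate $i$, let $x^{(i)}$ denote $x$ with bit $i$ flipped. By $(\varepsilon,\delta)$-DP applied to the event $\{\hat x_i\text{ rounds to }x_i\}$,
\[
\Pr{\text{round}(\hat x_i)=x_i}\le e^{\varepsilon}\Pr{\text{round}(\hat x_i)=x_i \mid \text{input }x^{(i)}}+\delta .
\]
Averaging over the uniform prior and pairing $x$ with $x^{(i)}$ then gives success probability at most $\frac{e^\varepsilon+\delta}{1+e^\varepsilon}<1-c$ for constant $\varepsilon<1$ and $\delta=n^{-c}$. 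Thus each coordinate is wrong, with $|\hat x_i-x_i|\ge 1/2$, with constant probability. Hence $\mathbb E\|x-\hat x\|_2^2=\Omega(n)$.

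Finally we convert to $\mathbb E\|x-\hat x\|_2$. Truncating $\hat x$ to $[0,1]^n$ only reduces the error, after which $\|x-\hat x\|_2^2\le n$. A Paley--Zygmund argument, or simply $\mathbb E\|\cdot\|_2\ge \mathbb E\|\cdot\|_2^2/\sqrt n$, then yields $\Omega(\sqrt n)$. The main obstacle is the first step: making the extraction of $x$ from $\widehat L$ robust. This requires that the synthetic Laplacian's hub row genuinely encodes $x$ and that the public backbone does not interfere; both hold because the backbone is fixed and independent of $x$.
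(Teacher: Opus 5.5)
Your argument is correct, and it takes a genuinely different and more elementary route than the paper.

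\textbf{The paper's route.} It hides a star of fixed size $k=\lfloor\rho N\rfloor$ on $U=\{2,\dots,n-1\}$. It proves a spectral separation lemma $\|L_{G_S}-L_{G_T}\|_2\ge\sqrt{|S\triangle T|}$ via an explicit test vector. It then decodes with an inefficient nearest-neighbour decoder. Finally it compares true- and false-positive rates through a swap coupling; each swap is two edge edits, which costs $e^{2\varepsilon}$ and $(1+e^{\varepsilon})\delta$.

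\textbf{Your route.} You use independent bits and read off the hub column directly. Since $\|A\|_2\ge\|Ae_{v_0}\|_2$, the spectral error dominates the $\ell_2$ error of that column. (The same observation on column $0$ would in fact give the paper's separation lemma in one line.) As a result you need no decoder, neighbouring inputs are single-edge neighbours, and a per-coordinate two-point test bounds the success probability by $(e^{\varepsilon}+\delta)/(1+e^{\varepsilon})$. Clipping to $[0,1]^N$ then turns $\Omega(N)$ squared error into $\Omega(\sqrt N)$ expected error. This is shorter, gives an explicit constant $\frac{1-\delta}{4(1+e^{\varepsilon})}\sqrt N$, and works for every $\delta$ bounded away from $1$, not only $\delta=n^{-c}$.

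\textbf{What the paper's sparse fixed-size family buys.} First, every hard instance automatically has exactly $n-1+k$ edges and hub degree $\Theta(n)$. Second, it yields the stronger tail form of Theorem~\ref{thm:spectral-lower-bound}: error at least $c\sqrt n$ with probability $>0.49$. That exploits the small prior $k/N$. Your uniform prior only rules out failure probabilities below roughly $(1-\delta)/(1+e^{\varepsilon})$. This is enough for the expectation statement you were asked to prove, but not directly for that tail statement.

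\textbf{Two small points to tidy up.}
\begin{enumerate}
\item Exclude $v_0$ and its backbone neighbours from the private coordinates, so the graph stays simple and $(L_G)_{i,v_0}=-x_i$ exactly.
\item Averaging over the uniform prior could land on an $x$ with small $|x|$. That would violate the claimed maximum degree $\Theta(n)$. To fix it, either condition on $|x|\in[N/4,3N/4]$, or give the hub $\Theta(n)$ extra public edges. Conditioning costs essentially nothing: the event has probability $1-e^{-\Omega(N)}$ and the clipped error is at most $\sqrt N$.
\end{enumerate}
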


It should be noted that an \(\Omega(\sqrt{n})\) worst-case spectral lower bound in \textbf{dense} graphs already implicitly follows from the \(\Omega(n^{3/2})\) lower bound for cut error by Eli{\'a}{\v{s}} et al.~\cite{eliavs2020differentially} (since a spectral error of \(o(\sqrt n)\) would immediately yield a cut error of \(o(n^{3/2})\) by \cref{eq:laplacian-implies-cut}). However, their hard instances require \(d_{\text{avg}}=\Theta(n)\) so {cannot} rule out possibility of a spectral primitive whose error scales with the {average} degree. Our lower bound is based on connected hidden-star instances. Spectral error \(o(\sqrt n)\) would reconstruct most private edges incident to a high-degree center; approximate differential privacy forbids such reconstruction.  This shows that the role of our degree-sensitive spectral primitives is not to beat this barrier on every graph, but to exploit the decreasing residual degree inside the cut algorithm.

\paragraph{Edge-sensitive terminal cut release.}
The second ingredient is a terminal oracle for sparse residual graphs.  We use the phrase \emph{terminal cut release} for this final direct release step: after the recursive expander routine has reduced the hard leftover edges to a graph with only \(M\) edges, we stop recursing and privately release that leftover graph's cut function in one shot.  This terminal step is different from the local spectral releases used earlier in the recursion; it is designed to exploit sparsity of the residual edge set rather than expansion of individual pieces.
 

\begin{theorem}[Edge-Sensitive Cut Oracle, restatement of Theorem~\ref{thm:cut-oracle-main}, Section~\ref{sec:edge-sensitive-cut-oracle}]
\label{thm:intro-edge-sensitive-cut-oracle}
Let \(G=(V,E)\) be an \(n\)-vertex unweighted graph, and let \(M\ge |E|\) be a public upper bound.  For every \(\varepsilon,
\delta\in(0,1)\) and \(\gamma\in(0,1/4)\), there is a polynomial-time \((\varepsilon,\delta)\)-edge-DP algorithm that outputs a non-negative weighted graph \(\widetilde G\) such that, with high probability, simultaneously for all cuts \(C\subseteq V\),
\[
    |w_G(C)-w_{\widetilde G}(C)|
    \le
    \gamma w_G(C)
    +
    \widetilde O\left(
       \frac n\varepsilon
       +
       \left(\frac{n^2M}{\varepsilon^2\gamma}\right)^{1/3}
    \right).
\]
\end{theorem}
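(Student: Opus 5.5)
The plan is to combine a degree-splitting step with two existing primitives: the Gaussian/Laplacian-type release on the low-degree part, and an $\widetilde O(\sqrt{mn/\varepsilon})$-style additive oracle in the spirit of Eli\'{a}\v{s} et al.~\cite{eliavs2020differentially} on the edges at high-degree vertices. The right-hand side has the form of a balance of a threshold $\tau$: the $n/\varepsilon$ term is the unavoidable floor, and $(n^2M/(\varepsilon^2\gamma))^{1/3}$ is the optimum of a trade-off between an error that grows with $\tau$ and one that shrinks with $\tau$.

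\textbf{Step 1: private degree split.} Fix a threshold $\tau$ and privately classify vertices as heavy or light. Each vertex degree is perturbed by Laplace noise of scale $O(1/\varepsilon)$; one edge changes two degrees, so this is private. The classification is released as public information. The number of heavy vertices is at most $h=O(M/\tau)$, up to the noise. The noise blurs the threshold only by $\widetilde O(1/\varepsilon)$, which is absorbed if $\tau\gg 1/\varepsilon$.

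\textbf{Step 2: the light part.} Let $G_L$ consist of the edges with both endpoints light, so its maximum degree is $O(\tau)$. I cannot directly use the degree bound of Theorem~\ref{thm:intro_laplacian}, since its exponents do not match the target. Instead I would take a cruder but cut-tailored bound. Release each light vertex's degree with $\widetilde O(1/\varepsilon)$ noise, and release the adjacency on light pairs with Gaussian noise, giving cut error $\widetilde O(\sqrt{|S|\,|\bar S|\,n}/\varepsilon)$ type bounds. Then use the multiplicative slack: a cut $S$ with $w_{G_L}(S)$ large relative to $\gamma^{-1}\cdot\mathrm{error}$ is handled multiplicatively, and otherwise the error is charged additively.

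I expect the honest route is an Eli\'{a}\v{s}-type bound for a graph with $M_L$ edges and max degree $\tau$, of order $\sqrt{n\cdot\min(M,n\tau)/\varepsilon}$. Combined with the $\gamma w$ slack via $\sqrt{ab}\le \gamma a+b/\gamma$, this should give an additive $\widetilde O(n\tau/(\varepsilon\gamma))^{1/2}$-like contribution. I expect roughly $\sqrt{n\tau}$ with $\gamma$ slack to improve to something that is increasing in $\tau$.

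\textbf{Step 3: the heavy part.} The edges incident to heavy vertices form a graph whose cut structure is determined by $h=O(M/\tau)$ ``star centers''. For each heavy vertex $v$ and each cut, the contribution is a linear query on the row of $v$. Privatize each heavy row with Gaussian noise of scale $\widetilde O(\sqrt{h}/\varepsilon)$, accounting for composition over the $h$ rows touched by one edge (actually only $2$, so noise $\widetilde O(1/\varepsilon)$ suffices). A cut then incurs error $\widetilde O(\sqrt{n}/\varepsilon)$ per heavy vertex, hence $\widetilde O(hn^{1/2}/\varepsilon)$ in total. Taking a union bound over $2^n$ cuts costs a factor $\sqrt n$, giving $\widetilde O(hn/\varepsilon)=\widetilde O(Mn/(\tau\varepsilon))$. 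I would then round to a non-negative weighted graph by the usual projection (clip negative weights and re-solve a cut-LP or use the Eli\'{a}\v{s} SDP rounding), losing only constants.

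\textbf{Step 4: balance.} Setting the light error (increasing in $\tau$, with $\gamma$ slack, heuristically $(n\tau)^{1/2}\cdot n^{1/2}/(\varepsilon\gamma)^{1/2}$-ish) equal to $nM/(\tau\varepsilon)$ should produce the cube root $(n^2M/(\varepsilon^2\gamma))^{1/3}$. I expect the exact exponents to require the light part to contribute roughly $n\sqrt{\tau/(\varepsilon\gamma)}$, since $n\sqrt{\tau}=nM/\tau$ gives $\tau=M^{2/3}$ and error $nM^{1/3}=(n^3M)^{1/3}$. That misses by a factor $n^{1/3}$. So the main obstacle is sharpening Step 3: the heavy error must be $\sqrt{n}\,M/\tau$ rather than $nM/\tau$. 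I would try to get this by a single joint noisy release of the $h\times n$ heavy block and bounding its spectral/cut norm by $\widetilde O(\sqrt{n}+\sqrt h)\cdot\sqrt{h}$ rather than summing rows. The $n/\varepsilon$ floor then absorbs the degree noise and the small-$M$ regime.
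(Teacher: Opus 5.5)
\textbf{The approach cannot reach the stated bound.} Every release in your Steps 1--3 has an error bound that does not depend on $w_G(S)$. This covers the Laplace degrees, the Gaussian or Eli\'{a}\v{s}-type release of the light part, and the row-wise or joint Gaussian noise on the heavy block. Gaussian noise on pairs gives error of order $\sqrt{|S|\,|V\setminus S|\,n}/\varepsilon$ on a cut, and the Eli\'{a}\v{s} guarantee is a uniform $\widetilde O(\sqrt{nm/\varepsilon})$. So the sum of your releases is itself a \emph{purely additive} cut-release mechanism for all graphs with at most $M$ edges, and the multiplicative slack is never really used.

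Sorting cuts into ``large'' ones charged to $\gamma w_G(S)$ and ``small'' ones charged additively gains nothing. A cut with $w_G(S)=0$ and $|S|=n/2$ carries the same noise as any other. The AM--GM step $\sqrt{ab}\le\gamma a+b/\gamma$ would need a per-cut error of the form $\sqrt{w_G(S)\cdot b}$, which none of your primitives provide. In Step 4, $\gamma$ only appears in a denominator, which makes the bound worse.

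Purely additive release on $M$-edge graphs is essentially tight at $\sqrt{nM}$ by the Eli\'{a}\v{s} et al.\ lower bound quoted in the introduction. Since $\sqrt{nM}/(n^2M)^{1/3}=(M/n)^{1/6}$, for $M\ge n^{1+\Omega(1)}$ no choice of $\tau$ and no sharpening of Step 3 can give the target.

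The bookkeeping in Steps 3--4 also does not close:
\begin{enumerate}
\item Balancing a heavy error $\sqrt{n}M/\tau$ against a light error $n\sqrt{\tau}$ gives $n^{5/6}M^{1/3}$, not $n^{2/3}M^{1/3}$. You would need heavy error $\widetilde O(M/\tau)$, i.e.\ $O(1)$ per heavy vertex.
\item A joint Gaussian release of the $h\times n$ heavy block has cut error up to $\|W\|_2\sqrt{|X|\,|Y|}\approx(\sqrt n+\sqrt h)\sqrt{hn}\approx n\sqrt h$. Your bound dropped the factor $\sqrt{|Y|}\le\sqrt n$.
\item Clipping negative weights is not a constant-loss operation on cuts.
\end{enumerate}

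\textbf{The missing idea} is to use $\gamma$ inside an iterative mechanism rather than as after-the-fact slack. The paper runs private mirror descent with the KL (Bregman) potential over the edge histogram, padded with a slack coordinate so the total mass is $M$. The descent works against the relative residuals $a_+=w-(1+\gamma)w^*$ and $a_-=(1-\gamma)w^*-w$.

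A violated cut is found by a log-det-regularized SDP whose optimizer is stable, $\|\Sigma^{-1/2}(\Sigma'-\Sigma)\Sigma^{-1/2}\|_F=O(1/\lambda)$. So a sample $z\sim\mathcal N(0,\Sigma)$ can be released privately, and the clipped gradient $\min\{B,\tfrac14(z_u-z_v)^2\}$ is nearly unbiased.

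Because the violated constraint is relative, the first-order progress is at least $\eta\bigl(\tfrac{\gamma}{1\pm\gamma}U+\Omega(\alpha)\bigr)$. With $\eta=\Theta(\gamma/B)$ this absorbs the second-order variance term $3\eta^2U$. The potential then drops by $\Omega(\gamma\alpha/B)$ per step, so $T=\widetilde O(M/(\gamma\alpha))$ steps suffice.

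Privacy over $T$ adaptive steps forces $\lambda=\widetilde\Theta(\sqrt T/\varepsilon)$, and the regularization bias is $O(\lambda n\log(1/\rho))$. Requiring this bias to be at most $\alpha$ gives $\alpha\approx n\sqrt{M/(\gamma\alpha)}/\varepsilon$, which solves to the cube root. With $\gamma=0$ the same analysis only yields $\sqrt{Mn}$, which is exactly where your decomposition is stuck.
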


The above theorem should be compared with the purely additive error bound \(\widetilde O(\sqrt{nM/\eps})\) in \cite{eliavs2020differentially}.  When \(M\) is moderately sparse and a relative slack is allowed, our terminal oracle is better.  The key new feature is that its error depends on the number of residual edges through the cubic expression \((n^2M)^{1/3}\), rather than through \(\sqrt{nM}\).  This improved edge sensitivity is the reason reducing the residual to \(n^{5/4+o(1)}\) edges suffices to obtain the final exponent \(13/12\).

\paragraph{A modular composition theorem.}To compose preceding primitives, we must bridge the arbitrary degrees of intermediate residuals with the bounded-degree requirements of our spectral oracles. We achieve this via two combinatorial tools: a \emph{stable port gadget} (Section~\ref{sec:stable_port_gadget}) privately flattens high-degree vertices to match the graph's average degree. It routes incident edges to public ``ports'' via randomized stable truncation, preserving cuts with $O(1)$ sensitivity. The \emph{demand-aware expander decomposition} (Section~\ref{sec:expander-decomposition}) is the matching decomposition step: it measures the size of a side by its public port demand, not by the raw number of original vertices.  This ensures that expansion, spectral error, and residual charging are all measured in the same units.

\begin{theorem}[Modular spectral-to-cut amplification, informal]
\label{thm:intro-modular-composition}
Suppose there is a polynomial-time \((\varepsilon,\delta)\)-DP spectral primitive which, on every \(N\)-vertex graph of maximum degree at most \(d\), releases a Laplacian with error at most \(\Pi(N,d)\).  Suppose further that there is a terminal cut oracle with mixed error \(\alpha_{\rm term}(n,M)\) on residual graphs with at most \(M\) edges.  Then the recursive port-and-expander framework gives a polynomial-time private cut release algorithm whose residual edge counts satisfy
\[
    M_{t+1}
    \le
    \widetilde O_{\varepsilon,\delta,\gamma}
    \left(
      n\,\Pi\left(\widetilde O(n), M_t/n\right)
    \right),
\]
up to polylogarithmic and \(n^{o(1)}\) factors.  If this recurrence reaches \(M_T\), then the final cut error is
\[
    \gamma w_G(S)+\alpha_{\rm term}(n,M_T)
\]
simultaneously for all cuts \(S\subseteq V(G)\).
\end{theorem}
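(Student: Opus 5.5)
The plan is to prove the modular statement by induction over the rounds $t=0,1,\dots,T$ of the recursion. Each round splits privacy budget $(\varepsilon/T',\delta/T')$ for $T'=O(T)$ invocations, and the whole mechanism is analysed via basic composition. Since the recurrence is only iterated until it stabilises, $T=O(\log\log n)$ or at worst $n^{o(1)}$, and the budget split is absorbed into the $\widetilde O_{\varepsilon,\delta,\gamma}$ and $n^{o(1)}$ factors.

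Let $R_t$ be the residual graph at round $t$, with $R_0=G$. We use a public noisy upper bound $M_t\ge |E(R_t)|$, obtained by Laplace-noising the edge count and adding a slack. One round consists of three steps.

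\begin{enumerate}
\item \textbf{Port gadget.} Apply the stable port gadget of Section~\ref{sec:stable_port_gadget} to $R_t$ with target degree $d_t\approx M_t/n$. This produces a graph $H_t$ on $N=\widetilde O(n)$ public ports with maximum degree $O(d_t)$. Every cut of $R_t$ is the cut of $H_t$ induced by contracting ports back to vertices, and the map $R_t\mapsto H_t$ has $O(1)$ edge sensitivity.
\item \textbf{Decomposition.} Run the demand-aware expander decomposition of Section~\ref{sec:expander-decomposition} on $H_t$ with conductance threshold $\phi_t$. Sides are measured by port demand.
\item \textbf{Release and recurse.} Release each cluster via the assumed spectral primitive, with error $\Pi(N,d_t)$. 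The inter-cluster edges become $R_{t+1}$.
\end{enumerate}

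After the last round, the terminal oracle is applied to $R_T$ with bound $M_T$.

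The output is the union of all released cluster graphs, lifted back through the port contraction, together with the terminal graph. By the triangle inequality over the edge partition $E(G)=\bigsqcup_t E(\text{clusters at round } t)\sqcup E(R_T)$, the error on a cut $S$ is bounded by the sum of the per-cluster errors plus the terminal error $\alpha_{\rm term}(n,M_T)$.

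For a cluster $X$ and the port set $S_X$ induced by $S$, \eqref{eq:laplacian-implies-cut} gives error at most $\Pi(N,d_t)\min(|S_X|,|X\setminus S_X|)$. Here we use the complement trick: since the Laplacian annihilates $\mathbf 1$, we may replace $\mathbf 1_S$ by $\mathbf 1_{X\setminus S}$. By $\phi_t$-expansion in demand units, $\min(|S_X|,|X\setminus S_X|)\le w_X(S)/(\phi_t d_t)$. Choosing
\[
\phi_t d_t \approx \Pi(N,d_t)\cdot \operatorname{polylog}/\gamma
\]
therefore bounds the cluster error by $(\gamma/\operatorname{polylog})\,w_X(S)$. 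Summing over clusters and rounds gives at most $\gamma w_G(S)$, provided the per-round relative slack is $\gamma/T$. The expander decomposition guarantees that the number of inter-cluster edges is $\widetilde O(\phi_t)\cdot$(total demand), and the total demand is $\widetilde O(M_t)=\widetilde O(n d_t)$. Hence
\[
M_{t+1}\le \widetilde O(\phi_t n d_t)=\widetilde O_{\gamma}\bigl(n\,\Pi(\widetilde O(n),M_t/n)\bigr),
\]
which is exactly the claimed recurrence, up to the $n^{o(1)}$ loss of the decomposition.

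Privacy of each round requires care because the decomposition itself depends on the private graph. I would invoke the private-decomposition guarantee of Section~\ref{sec:expander-decomposition}: it is either a DP procedure outputting public cluster boundaries, or it is stable, so that neighbouring inputs change the partition by few edges, which the spectral primitive, run with group privacy, can absorb. Given a public partition, the cluster releases act on disjoint edge sets, so by parallel composition they cost a single budget. The residual $R_{t+1}$ is then a post-processing of $R_t$ and the public partition, with sensitivity $O(1)$.

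I expect this privacy step to be the main obstacle. One changed edge can move a vertex between a cluster and the residual, which cascades into all later rounds. The first attempt would be to show that the demand-aware decomposition outputs a partition that is released privately (noisy cut tests in the style of Aamand et al.), so that, conditioned on it, every subsequent object has $O(1)$ sensitivity. The remaining accuracy issue, that a DP decomposition only yields approximate expanders, would be handled by letting $\phi_t$ absorb the additive slack of the noisy expansion certificate.
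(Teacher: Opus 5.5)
Your architecture is the paper's (Theorem~\ref{thm:cut_main}, Algorithm~\ref{algo:main_recursive_cut}). You flatten degrees with the port gadget, decompose with a demand-aware threshold $\psi_t\approx\Pi/\gamma$, and release each piece with the spectral primitive, absorbing its error via the complement trick plus expansion. You then charge the $\widetilde O(\psi_t n^{1+o(1)})$ inter-piece edges to the next residual and finish with the terminal oracle. The utility and recurrence bookkeeping is right. Two of your choices are harmless variants: decomposing the port graph itself with demand $d_t$ per port instead of the original vertices with demand $b_v$, and taking $M_t$ from a noisy edge count instead of the public bound $M_{t+1}=\psi_t\rho_t(V)^{1+o(1)}$.

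The gap is the step you flag as the main obstacle; that is where the proof lives, and you leave it open between two options. The ``stable partition'' option is not available. A deterministic partition is not DP no matter how stable it is, and expander decompositions are not edge-stable anyway, as your cascade remark shows. So the partition must be released privately, and for the claimed recurrence it must be released at the right accuracy, which you do not establish.

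Concretely, three things are missing.

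\textbf{First}, every private sparse-cut test inside the decomposition must have additive error $O(\Pi(\widetilde O(n),d_t)\cdot\min\{\rho(S),\rho(U\setminus S)\})$. Otherwise the threshold cannot be set at $\psi_t\approx\Pi/\gamma$. The paper gets this by routing every cut query through $\mathsf{SyntheticOracle}$: the port gadget is rebuilt on the current induced piece and fed to the assumed $\Pi$-primitive (Lemma~\ref{lem:demand_oracle}). It then runs the deterministic Li--Saranurak bicriteria routine (Theorem~\ref{thm:demand_cut_routine}) on the synthetic graph as pure post-processing (Lemma~\ref{lem:cut_finder}). Cut tests ``in the style of Aamand et al.'', taken literally, use the $\widetilde O(\sqrt n)$ Gaussian synthetic graph. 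That forces $\psi\gtrsim\sqrt n$ and leaves $n^{3/2}$ residual edges, so your recurrence would not follow.

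\textbf{Second}, the decomposition is an adaptive recursion, not a single private call. The paper caps its depth deterministically at $D_{\max}=n^{o(1)}$ via a Nanongkai--Saranurak level/threshold argument (Algorithm~\ref{algo:expander_decomposition}, Lemma~\ref{lem:demand_ns17}). It composes sequentially over depth and in parallel over the disjoint active pieces at each depth, so each call runs at budget $\epsilon/D_{\max}$. This is the source of the $n^{o(1)}$ loss inside $\Pi$, and why the hypothesis reads $\psi\ge 10\Lambda_\rho(\epsilon_{step},\delta_{step},\beta_{step})$. Your split over ``$T'=O(T)$ invocations'' misses this.

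\textbf{Third}, your worry that a private decomposition only certifies approximate expanders is resolved by that same hypothesis. Since $\psi_l\ge\psi_L=c_{exp}\psi\ge 10c_{exp}\Lambda_\rho$ at every level, the sparsity and bicriteria-maximality guarantees transfer from the synthetic graph to the true graph. Each returned piece is therefore a genuine $(\psi,\rho)$-expander with high probability. The geometric level thresholds cost only a $\rho(V)^{o(1)}$ factor in the residual count. Once $\rho_t$ and $\mathcal P_0,\dots,\mathcal P_{t-1}$ are public, $R_t$ has one-edge sensitivity, and the cascade concern disappears by adaptive composition.

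A minor point: $T=O(\log\log n)$ is unjustified in general. With a linear term $d/\sqrt q$ the recurrence contracts only geometrically, and the paper takes $T=\lceil\log_2 n\rceil$. Any polylogarithmic $T$ is absorbed, though.
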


In our instantiation, \(\Pi\) is the bootstrapped fourth-power primitive (the second term in \cref{eq:intro-a4-amplifier}, \Cref{thm:intro_laplacian}) and \(\alpha_{\rm term}(n,M)=\widetilde O(n+(n^2M)^{1/3})\) due to Theorem~\ref{thm:intro-edge-sensitive-cut-oracle}. The recurrence reaches \(M_T=n^{5/4+o(1)}\), and the terminal oracle then gives \(n^{13/12+o(1)}\).  This theorem is included to emphasize that the proof is \emph{not} a black-box substitution into the one-shot expander-decomposition algorithm in Aamand et al.~\cite{aamandbreaking}.  The spectral primitive, the stable port gadget, the recursive demand-aware decomposition, and the terminal oracle must be matched so that each edge is charged either multiplicatively inside an expander or additively only at the sparse terminal stage.

\section{Technical Overview}\label{sec:tech_overview}

We now give a more detailed overview of the proof.  The algorithm has three layers.  The first layer is spectral: we build private Laplacian estimators whose accuracy improves as the maximum degree decreases.  The second layer is combinatorial: we use a stable port gadget and a demand-aware expander decomposition so that a residual with \(m\) edges can be treated as a graph of degree scale \(m/n\).  The third layer is terminal: once the residual is sparse enough, the edge-sensitive cut oracle finishes the release.

The layers are complementary. A better spectral primitive alone only yields an additive cut error proportional to the size of the queried side $|S|$ (as illustrated in \cref{eq:laplacian-implies-cut}). An expander decomposition alone reduces the problem size only if the local spectral oracle is accurate enough relative to the expansion threshold. A terminal oracle alone helps only after the graph is already sparse. The proof is therefore a unified pipeline: the spectral primitive triggers the private expander decomposition to shrink the residual recursively, and the terminal oracle converts the final residual size into the global all-cuts error bound.


\subsection{Private Laplacian Analysis via a Spectral Amplifier}
\label{sec:tech_overview-laplacian-oracle}

Given a simple graph $G = ([n],E)$ with bounded degree $d$ and adjacency matrix $A\in \mathbb{R}^{n\times n}$, our first step is to privately release a synthetic matrix $\widetilde{A}$ to minimize $\|\widetilde{A} - A\|_2$, and then combine $\widetilde{A}$ with privately released degrees to form an estimator of Laplacian $L_G$. A direct application of the Gaussian mechanism to \(A\) adds  independent noise $\mathcal{N}(0, \sigma^2)$ with $\sigma = \Theta(\log(1/\del)/\eps)$ to each entry of $A$ as $\|A - A'\|_F\leq O(1)$ under edge-level privacy. Therefore, the resulting noise matrix has spectral norm \(\widetilde O(\sqrt n/\eps)\) due to Lemma~\ref{lem:gaussian_norm}. This gives the standard oblivious baseline, but it does not exploit the bounded-degree structure of the graph.

\paragraph{The square amplifier.} 
Our first spectral amplifier improves on this baseline by releasing a noisy version of \(A^2\) rather than \(A\) itself. The entries of \(A^2\) count length-two walks, or equivalently common neighbors. In a \(d\)-bounded graph, changing one edge only changes a controlled local collection of such two-walk counts. Formally, this ensures that
$
    \|A^2-(A')^2\|_F=O(\sqrt d)
$ 
for neighboring adjacency matrices \(A,A'\). Hence the noise matrix of a Gaussian release $
    \widetilde Y=A^2+W
$ 
has spectral norm
$
    \|W\|_2=\widetilde O(\sqrt{nd}/\eps).
$

As illustrated in Figure~\ref{fig:amplifier_intuition}, the advantage is that the signal is amplified quadratically. If
$A=\sum_i \lambda_i u_i u_i^\top$, 
then $A^2=\sum_i \lambda_i^2 u_i u_i^\top$. 
Thus an eigenvector with \(|\lambda_i|\) moderately large in \(A\) becomes much easier to detect in \(A^2\). In particular, once $\lambda_i^2 \gtrsim \widetilde O(\sqrt{nd}/\eps),$
the direction \(u_i\) can be separated from the noise in the amplified matrix. Equivalently, the amplifier identifies all directions with
\[
    |\lambda_i|
    \gtrsim
    \widetilde O\left((nd)^{1/4}/\sqrt{\eps}\right).
\]
The remaining directions can be safely discarded, as their contribution to the final spectral error $\|A-\widetilde A\|_2$ is exactly their maximum absolute eigenvalue in $A$, which is at most $\widetilde O((nd)^{1/4}/\sqrt{\eps})$.

\begin{figure}[t]
\centering
\begin{tikzpicture}[
    >=Latex,
    font=\sffamily\small,
    eig/.style={circle, fill=blue!60, inner sep=1.8pt, draw=blue!80!black},
    sig/.style={circle, fill=red!60, inner sep=2pt, draw=red!80!black},
    thresh/.style={dashed, thick, red!80},
    naive/.style={dashed, thick, gray!80}
]

\node[align=center, font=\bfseries] at (2.5, 2.35) {Original adjacency matrix \(A\)};

\draw[->, thick] (0,0) -- (5.5,0) node[right] {\(|\lambda_i|\)};

\foreach \x in {0.2, 0.4, 0.6, 0.9, 1.2} {
    \node[eig] at (\x, 0) {};
}

\node[eig] (eigL0) at (1.4, 0) {};
\node[sig] (sigL1) at (2.6, 0) {};
\node[sig] (sigL2) at (3.8, 0) {};

\draw[thresh] (1.8, -0.8) -- (1.8, 1.0)
    node[above, text=red!80, align=center]
    {\footnotesize direct-release noise \\ \(\widetilde O(\sqrt n)\)};

\draw[decorate, decoration={brace, amplitude=5pt, mirror}, thick, blue!80]
    (0.1, -0.2) -- (1.7, -0.2)
    node[midway, below=5pt, align=center]
    {\footnotesize hidden by \\ noise};

\draw[decorate, decoration={brace, amplitude=5pt, mirror}, thick, red!80]
    (2.1, -0.2) -- (5.4, -0.2)
    node[midway, below=5pt, align=center]
    {\footnotesize directions visible \\ after direct release};

\draw[->, ultra thick, blue!50, bend right=15] (2.7, -1.8)
    to node[above=4pt, font=\bfseries, text=black] {\(\Phi_2(X)=X^2\)}
       node[below=4pt, align=center, font=\footnotesize, text=gray!80!black]
       {Spectral Amplification}
    (8.5, -1.8);

\node[align=center, font=\bfseries] at (11, 2.35) {Squared amplifier \(A^2\)};

\draw[->, thick] (7,0) -- (15.5,0) node[right] {\(\lambda_i^2\)};

\foreach \x in {7.02, 7.06, 7.14, 7.32, 7.57, 7.78} {
    \node[eig] at (\x, 0) {};
}

\node[sig] (sigR0) at (10.0, 0) {};
\node[sig] (sigR1) at (12.5, 0) {};
\node[sig] (sigR2) at (14.0, 0) {};

\draw[thresh] (8.6, -0.8) -- (8.6, 1.0)
    node[above, text=red!80, align=center]
    {\footnotesize amplified noise \\ \(\widetilde O(\sqrt{nd})\)};

\draw[naive] (11.5, -0.8) -- (11.5, 1.0)
    node[above, text=gray!80, align=center]
    {\footnotesize dense-graph scale \\ \(\widetilde O(n)\)};

\draw[decorate, decoration={brace, amplitude=5pt, mirror}, thick, blue!80]
    (7.0, -0.2) -- (8.5, -0.2)
    node[midway, below=5pt, align=center]
    {\footnotesize discarded \\ directions};

\draw[decorate, decoration={brace, amplitude=5pt, mirror}, thick, red!80]
    (8.7, -0.2) -- (13.5, -0.2)
    node[midway, below=5pt, align=center]
    {\footnotesize directions recovered from \(A^2\)};

\draw[->, dashed, red!40, bend left=15] (sigL1) to (sigR1);
\draw[->, dashed, red!40, bend left=15] (sigL2) to (sigR2);
\draw[->, dashed, red!40, bend left=15] (eigL0) to (sigR0);

\end{tikzpicture}
\caption{
Spectral amplification by the square map. Directly releasing \(A\) incurs spectral noise \(\widetilde O(\sqrt n)\). Releasing \(A^2\) instead amplifies eigenvalues from \(\lambda_i\) to \(\lambda_i^2\), while bounded degree keeps the sensitivity of \(A^2\) at \(O(\sqrt d)\), giving amplified spectral noise \(\widetilde O(\sqrt{nd})\). Thus directions with \(\lambda_i^2\gtrsim \widetilde O(\sqrt{nd})\), equivalently \(|\lambda_i|\gtrsim \widetilde O((nd)^{1/4})\), can be recovered. This explains the \(\widetilde O((nd)^{1/4})\) adjacency spectral error.
}
\label{fig:amplifier_intuition}
\end{figure}


Driven by this intuition, the adjacency release proceeds in three steps: 
\begin{itemize}
    \item First, we release an amplified matrix $\widetilde{Y} = A^2 + W$, where $W$ is Gaussian noise calibrated to the $O(\sqrt{d})$ sensitivity of $A^2$.
    \item Second, we truncate all eigenvectors of $\widetilde{Y}$ with eigenvalues falling below the noise scale $\eta = \widetilde{O}(\sqrt{nd}/\epsilon)$, leaving an orthonormal basis $U \in \mathbb{R}^{n \times r}$ for the remaining $r$ eigenvectors. 
    \item Third, privately release only the compressed matrix \(U^\top A U + E\) where $E \in \mathbb{R}^{r \times r}$ is the Gaussian noise calibrated by $\|U^\top AU - U^\top A'U\|_F= O(1)$.
\end{itemize}

The total spectral error balances the \emph{truncation bias} and the \emph{subspace noise}. For the truncation bias, eigenvalues of $A^2$ in the discarded subspace are bounded by $\eta$, so their spectral contribution to $A$ is bounded by $\eta^{1/2} = \widetilde{O}((nd)^{1/4}/\sqrt{\epsilon})$. For the subspace noise, the diagonal of $A^2$ contains vertex degrees, giving $\operatorname{tr}(A^2) \le nd$. Since each of the $r$ surviving eigenvectors has an eigenvalue of at least $\eta$, we establish that $r \cdot \eta \le \operatorname{tr}(A^2) \le nd$, yielding $r \le \widetilde{O}(\sqrt{nd})$. Consequently, the spectral norm of the noise matrix $E$ is bounded by $\widetilde{O}(\sqrt{r}) = \widetilde{O}((nd)^{1/4})$, matching the truncation error.


Finally, combining $\widetilde{A}$ with a privately released degree matrix $\widetilde D$ yields a pseudo-Laplacian $\widetilde L=\widetilde D-\widetilde A$. To guarantee non-negative edge weights for downstream cut algorithms, we apply an SDP post-processing step from \cite{upadhyay2021differentially}: solving $L_{\widetilde{G}} = \arg\min_{X \in \mathbb{L}_n} \|\widetilde{L} - X\|_2$ to project $\widetilde{L}$ onto $$ \mathbb{L}_n = \left\{ X \in \mathbb{R}^{n \times n} \mid X = X^\top, X_{ij} \le 0 \text{ for } i \ne j, X \mathbf{1} = \mathbf{0} \right\},$$ the convex cone of all valid non-negative weighted graph Laplacians.


This square amplifier is useful as a standalone primitive. It improves over the naive \(\widetilde O(\sqrt n/\eps)\) adjacency release whenever \(d=o(n)\), and it improves over the degree-only \(\widetilde O(d)\) bound when \(d\gg n^{1/3}\). Further, it is also strong for the cut-approximation in our recursive framework. If the spectral primitive has error scale $n^{1/4}d^{1/4}$, 
then the recurrence in the cut framework drives the edge count of residual graph to only $M=n^{4/3}$. Specifically, applying our new terminal oracle (Theorem~\ref{thm:intro-edge-sensitive-cut-oracle}) to this residual already yields an $\widetilde O(n^{10/9})$ additive cut error, which breaks the previous $\widetilde O(n^{5/4})$ polynomial-time barrier. However, to achieve our final $\widetilde O(n^{13/12})$ error, we must force the residual graph even sparser. Next we show how to develop a more advanced fourth-power primitive that achieves a sharper error contraction.


\paragraph{Why fourth powers are delicate.}
Motivated by the $A^2$ amplifier, it is natural to ask whether we can evaluate higher powers, as evaluating $A^q$ for larger even $q$ amplifies the spectral gap more aggressively. Unfortunately, the delicate trade-off between truncation error and subspace noise in $A^2$ breaks down for higher powers, such as $A^4$.
The obstacle is local two-walk congestion. The sensitivity of \(A^4\) is governed by \textit{two-walk energy} defined as 
\[
    \kappa_2(G)
    :=
    \max_{v\in V}\|A^2\mathbf{e}_v\|_2^2
    =
    \max_{v\in V}\sum_{u\in V}\codeg(u,v)^2 .
\]
This quantity can be as large as \(d^3\), for example in a \(d\)-clique or in a complete bipartite graph \(K_{d,d}\). Thus the hard examples are not only triangle-rich graphs; biclique-like regions with many repeated two-walks are equally problematic. A naive \(A^4\)-amplifier calibrated to the worst-case $d^3$ sensitivity yields a spectral error of $ \widetilde O\left(n^{1/8}d^{5/8} + (nd)^{1/4}\right) $, therefore does not improve over the \(A^2\) primitive. For further improvements, our solution is to first identify, in a private aggregate way, the edges responsible for large two-walk energy $\kappa_2$, and only then apply the \(A^4\) amplifier to the remaining graph with smaller sensitivity.

\paragraph{Iterated cubic-score cover.}
The key preprocessing step is an \emph{iterated cubic-score cover}. The score
$
    (A^3)_{uv}
$ 
counts length-three walks from \(u\) to \(v\). It is useful because, for any remaining graph \(R\) and any $x\in[n]$:
\begin{equation}\label{eq:local-walk-energy-and-cubic}
    \|A_R^2\mathbf{e}_x\|_2^2
    =
    (A_R^4)_{xx}
    =
    \sum_{y\in N_R(x)}(A_R^3)_{xy},
\end{equation}
where $N_R(x)$ is the set of neighbors of $x$ in $R$. Hence, if every edge \(xy\) in $R$ has small cubic score \((A_R^3)_{xy}\), then every vertex has small two-walk energy. The cover works as follows. Starting with \(R_0=G\), we privately release a noisy cubic matrix by Gaussian mechanism $\widetilde B_i=A_{R_i}^3+Z_i$
for the current residual \(R_i\). We then add to a public pair support \(P_i\) all unordered pairs \(uv\) whose noisy cubic score is above the current threshold, and we remove from \(R_i\) the actual residual edges whose unordered pairs lie in \(P_i\). In other words,
\[
    R_{i+1}
    :=
    R_i\setminus \bigl(E(R_i)\cap P_i\bigr),
    \qquad
    P:=\bigcup_i P_i .
\]
The support \(P_i\) contains pairs rather than only true edges; this aggregate support is important for privacy, because it avoids revealing the exact edge set.

The reason for iterating is that the sensitivity decreases as the residual becomes less congested. Initially every \(d\)-bounded graph satisfies
$
    \kappa_2(R_0)\le d^3.
$ 
More generally, we show that if the current residual satisfies
$
    \kappa_2(R_i)\le K_i,
$
then the Frobenius sensitivity of the cubic query \(A_{R_i}^3\) is only
$
    \widetilde O(\sqrt{K_i}).
$
Thus the next noisy cubic-score cover can use threshold \(\widetilde O(\sqrt{K_i})\). Since every remaining residual edge has a cubic score below this threshold, and any vertex has at most $d$ neighbors, the new local two-walk energy (which is the sum of cubic scores to adjacent neighbors, as shown in \cref{eq:local-walk-energy-and-cubic}) is bounded by $d$ times the threshold. Thus, the next residual satisfies $K_{i+1} = \widetilde O(d\sqrt{K_i}).$
Starting from \(K_0=d^3\), this recurrence rapidly approaches to a fixed point of \(d^2\). We stop when
$K_i\le \widetilde O(d^2q),$
where \(q\) is a polylogarithmic slack parameter. At this point,  
\[
    \kappa_2(R)\le \widetilde O(d^2q),\quad \Delta(P)\le \widetilde O(d^2/q).
\]
where $\Delta(P)$ is the maximum degree in the pair support $P$. We then release the actual edges of \(G\) inside the support \(P\) by adding Gaussian noise only to every existing pair in \(P\), as $P$ is now \textit{public}. Since the support degree is \(\widetilde O(d^2/q)\), Gaussian mechanism gives spectral error
\[
    \widetilde O\left(\sqrt{d^2/q}\right)
    =
    \widetilde O(d/\sqrt q).
\]
This term is linear in \(d\), but it is divided by a tunable polylogarithmic factor. In the recursive cut framework, this makes it a genuinely contracting linear term rather than the non-contracting \(\widetilde O(d)\) term from degree-only spectral release.

\paragraph{Fourth-power release on the residual.}
After the iterated cover, the residual graph \(R\) satisfies
$
    \kappa_2(R)\le \widetilde O(d^2q).
$ 
This low two-walk-energy condition makes the fourth-power query stable:
\[
    \|A_R^4-A_{R'}^4\|_F \leq O\left(d\sqrt{\kappa_2(R)}\right)
    \le
    \widetilde O(d^2\sqrt q)
\]
for neighboring graphs \(R,R'\). Therefore a Gaussian release of \(A_R^4\) has spectral noise
$
    \widetilde O\left(\sqrt n\,d^2\sqrt q\right).
$
Taking fourth roots, the \textbf{truncation bias} is 
$\widetilde O\left(n^{1/8}d^{1/2}q^{1/8}\right).$

Following the identical logic as the analysis of the square amplifier, the \textbf{subspace noise} added into the retained low-dimensional space is 
$\widetilde{O}\left(n^{1/4}q^{1/4}\right).$

Combining the private release of the covered part, the application of $A^4$-amplifier on the residual graph, and a standard private degree estimation yields our improved spectral primitive, which achieves an error of
\[
    \|L_G-\widetilde L\|_2
    \le
    \widetilde O_{\eps,\delta}
    \left(
        \frac d{\sqrt q}
        +
        n^{1/8}d^{1/2}q^{1/8}
        +
        n^{1/4}q^{1/4}
    \right),
\]
which is the second bound stated in \eqref{eq:intro-a4-amplifier}. As before, we conclude with a post-processing SDP projection to ensure the final output is a valid Laplacian of a non-negative weighted graph.

The important feature is the combination of a tunable contracting term \(d/\sqrt q\) with two genuinely sublinear amplifier terms. Taking \(q=\operatorname{polylog} n\), the polynomial fixed point of the residual-degree recurrence is \(d=n^{1/4}\), which is the source of the improved cut approximation bound.

\subsection{Building an Edge-Sensitive Cut Oracle}
\label{sec:tech_overview-cut-oracle}

We next describe the terminal cut oracle used at the end of the recursion. Its role is to release the cut function of a sparse residual graph more accurately than what is possible from a purely additive release (e.g., the mechanisms in Eli{\'a}\v{s} et al.~\cite{eliavs2020differentially} or Gupta et al.~\cite{gupta2012iterative}) alone.

Let \(G=([n],E)\) be the current residual graph, and suppose that we have a \textit{public} upper bound \(M\ge |E|\). A standard differentially private cut-release mechanism gives additive error
$
    \widetilde O(\sqrt{Mn})
$ 
simultaneously for all cuts~\cite{eliavs2020differentially}. This bound is essentially optimal for purely additive cut release. However, our overarching goal is to achieve a mixed multiplicative and additive error. Since the edges processed earlier in the recursive expander decomposition are already approximated with a $\gamma$ relative error, it is sufficient to evaluate the sparse residual graph under the same multiplicative slack. We exploit this relaxation to bypass the purely additive barrier, designing a mixed-error terminal oracle: for every cut \(S\subseteq[n]\), it outputs a synthetic graph \(\widehat G\) satisfying
\[
    |w_{\widehat G}(S)-w_G(S)|
    \le
    \gamma w_G(S)
    +
    \widetilde O_{\eps,\delta,\gamma}
    \left(
        n+\left(n^2M\right)^{1/3}
    \right).
\]

Our terminal oracle is best viewed as a private multiplicative-weights update over edge histograms. We represent the residual graph by a histogram \(h\) over the edge universe \(\binom{[n]}2\). To make the total mass public and stable under edge replacement, we add a dummy coordinate \(\bot\) and set
$
    h(\bot)=M-|E|.
$ 
Thus the padded histogram has total mass \(M\). A cut \(S\subseteq[n]\) corresponds to a query vector
$
    c_S\in\{0,1\}^{\binom{[n]}2\cup\{\bot\}}$, 
with $c_S(\bot)=0$, 
and
$
    \langle h,c_S\rangle=\cut_G(S).
$ 
The goal is to construct a synthetic histogram \(\widehat h\) such that, for all cuts \(S\),
\[
    |\langle \widehat h-h,c_S\rangle|
    \le
    \gamma\langle h,c_S\rangle+\alpha,
\]
where \(\alpha\) is the target error. In multiplicative weights update, we iteratively find a cut that violates the above inequality, computing the gradients and updating the edge weights in $\widehat{h}$ accordingly.

\paragraph{Breaking the \(\sqrt{Mn}\) barrier: Absorbing variance via multiplicative slack.}
Inspired by private iterative methods~\cite{gupta2012iterative,eliavs2020differentially,peng2025differentially}, we replace the computationally inefficient discrete search for violated cuts with a continuous regularized semidefinite programming relaxation. However, the fundamental departure from prior work lies in our potential function analysis. 

It is important to note that the improved $(n^2M)^{1/3}$ bound cannot be achieved by just modifying the objective in prior works, that is, applying the standard private mirror descent analysis from Eli{\'a}{\v{s}} et al.~\cite{eliavs2020differentially} to our relative-error objective fails to beat the $\widetilde O(\sqrt{Mn})$ barrier. Symmetrically, employing our analysis on a purely additive objective is also restricted to the same $\widetilde O(\sqrt{Mn})$ bottleneck. To be more specifically, in standard private mirror descent using KL divergence as the potential\footnote{In this framework, the potential function (specifically the KL divergence) acts as a metric of the "distance" between our current synthetic edge distribution and the true target graph. The algorithm's convergence is proven by showing that every valid update strictly decreases this potential, thereby bounding the total number of required steps. See \cite{bansal2017potential} for a detailed introduction to the potential method.} (Bregman Divergence), a single update step with learning rate \(\eta\) changes the expected potential by roughly \(-\eta \cdot (\text{first-order gain}) + \eta^2 U\), where the second-order term \(\eta^2 U\) acts as a variance penalty proportional to the expected gradient magnitude \(U\). For a purely additive error guarantee (\(\gamma=0\)), one is forced to bound this \(\eta^2 U\) term generically (e.g., via Cauchy-Schwarz), which restricts the step size \(\eta\) to be very small and limits the potential drop per round, this results in the final \(\sqrt{Mn}\) error.

Our core insight is that with relative error $\gamma$, the first-order progress term naturally contains a \(-\eta \gamma U\) component. By carefully tuning the learning rate \(\eta = \Theta(\gamma/B)\) (where \(B\) is a hard clipping threshold of gradients), this \(-\eta \gamma U\) term actually \emph{absorbs} the positive variance penalty \(+\eta^2 U\). As detailed in our proofs of Lemma~\ref{lem:potential progress}, this absorption allows us to extract a much larger drop in the potential function per iteration (scaling with \(\gamma\)) without relying on generic Cauchy-Schwarz bounds. This rapid potential decay reduces the required number of iterations, lowering the privacy composition cost and yielding the final \(\widetilde O((n^2M)^{1/3})\) additive error.

Having established the intuition behind our improved error bound, we now detail the specific algorithmic components that gives our new edge-sensitive cut oracle.

\paragraph{Multiplicative Gaussian mechanism and log-det stability.} To privately generate the gradients for this update step, recall that we formulate the cut separation problem as a continuous SDP relaxation over a covariance matrix \(\Sigma\). Conceptually, \(\Sigma\) is a fractional representation of a violating cut found by SDP. However, the exact optimum of an unregularized SDP can be very unstable under edge changes, making privatization difficult. To overcome this, we show that the log-det regularizer,
$    \lambda\log\det(\Sigma),
$ 
gives the stability needed for privacy. Let \(\Sigma\) and \(\Sigma'\) be the optimal covariance matrices (i.e., fractional cuts) for two neighboring input graphs, we prove that
\[
    \left\|
    \Sigma^{-1/2}(\Sigma'-\Sigma)\Sigma^{-1/2}
    \right\|_F
    \le
    O(1/\lambda).
\]
 We then use a multiplicative Gaussian mechanism, adapted from Eli{\'a}{\v{s}} et al.~\cite{eliavs2020differentially}, to privately sample a Gaussian direction from the optimized covariance. By advanced composition over \(T\) iterations, it is enough to choose
$
    \lambda=\widetilde\Theta_{\del}\left(\frac{\sqrt T}{\eps}\right)
$
to obtain \((\eps,\delta)\)-differential privacy.

\paragraph{Relative multiplicative weights update.}
Once a private fractional violating cut is obtained, we update the synthetic histogram using a relative multiplicative-weights step. 
To bound the maximum gradient \(B\) and satisfy the learning rate requirement \(\eta = \Theta(\gamma/B)\), the SDP feedback is clipped before being used in the update. Clipping controls the tails of the Gaussian sample and ensures that the expected KL potential decreases predictably by the \(\gamma\)-absorbed amount described above. Consequently, the capped run terminates after
$
    T = \widetilde O\left(\frac{M}{\gamma\alpha}\right)
$ 
iterations with high probability.

\paragraph{Balancing the parameters.}
The final error is obtained by balancing three effects. The log-det regularizer contributes an additive bias proportional to
$
    \widetilde O(\lambda n).
$
Privacy requires \(\lambda\) to grow like \(\widetilde\Theta_{\del}(\sqrt T/\eps)\), where the number of adaptive calls is roughly
$
    T=\widetilde O\left(\frac{M}{\gamma\alpha}\right).
$ 
Larger \(\lambda\) improves stability but increases the regularization bias. Solving the resulting balance gives
\[
    \alpha
    =
    \widetilde O_{\del}
    \left(
        \frac n\eps
        +
        \left(\frac{n^2M}{\eps^2\gamma}\right)^{1/3}
    \right).
\]
This proves the edge-sensitive terminal cut oracle stated in
Theorem~\ref{thm:intro-edge-sensitive-cut-oracle}. Conceptually, the terminal oracle is a separate, self-contained module from our recursive spectral reduction for the worst-case cut approximation. The recursion is used to reduce the residual edge bound \(M\), while this terminal oracle converts the smaller \(M\) into a final mixed multiplicative/additive cut guarantee. Any future improvement either to the residual size produced by the recursion or to the terminal oracle itself can be plugged into the same modular framework.

\subsection{Private Cut Approximation with \texorpdfstring{\(\widetilde O(n^{13/12+o(1)})\)}{O(n^{13/12+o(1)})} Additive Error}
\label{sec:tech_overview_cut}

We now explain how the spectral primitive and the edge-sensitive terminal oracle combine to yield our worst-case private cut approximation guarantee. We first recall the bottleneck in the previous \(\widetilde O(n^{5/4})\)-additive-error framework, and then describe how our recursive degree-sensitive approach breaks this barrier.

\subsubsection{The bottleneck behind the \texorpdfstring{\(\widetilde O(n^{5/4})\)}{O(n^{5/4})} bound}

Aamand et al.~\cite{aamandbreaking} obtains a private cut approximation with additive error \(\widetilde O(n^{5/4})\), while allowing a small multiplicative error. Their approach can be viewed as having two phases. First, one can just use a private Laplacian for estimating cut sizes as illustrated in \cref{eq:laplacian-implies-cut}. The Gaussian mechanism adds noise to all ${n\choose 2}$ potential edges and produces a synthetic graph \(\widetilde G\) satisfying
\[
    \|L_G-L_{\widetilde G}\|_2
    \le
    \widetilde O(\sqrt n/\eps).
\]
Consequently, for every cut \(S\), the error is 
\[
    \left|
    \mathbf 1_S^\top(L_G-L_{\widetilde G})\mathbf 1_S
    \right|
    \le
    \widetilde O(|S|\sqrt n/\eps).
\]
Their key observation is that this error can be absorbed into a \((1+\gamma)\)-multiplicative error only when the true value of every cut $S$ is at least
$\widetilde \Omega(|S|\sqrt n/(\eps\gamma)).$
Equivalently, the graph must have expansion parameter roughly
$\psi=\widetilde \Theta(\sqrt n/(\eps\gamma))$
with respect to the usual cardinality demand \(\rho(S)=|S|\) (see \cref{eq:expansion_definition} for the definition of expansion). 

Given this observation, the graph is privately decomposed into such expanders by iteratively using the private Laplacian estimator $L_{\widetilde{G}}$ to find sparsest cuts. In expanders, cuts are multiplicatively estimated. A \(\psi\)-expander decomposition leaves
$
    \widetilde O(n\psi)
$ 
residual edges~\cite{nanongkai2017dynamic}. With \(\psi\approx \sqrt n\), the residual therefore has
$
    m_{\mathrm{res}}=\widetilde O(n^{3/2})
$ 
edges. Applying the standard purely additive private cut release of Eli{\'a}{\v{s}} et al.~\cite{eliavs2020differentially} to this residual gives error
$
    \widetilde O(\sqrt{n m_{\mathrm{res}}})
    =
    \widetilde O(n^{5/4}).
$

The bottleneck is thus twofold. The local spectral primitive has global error scale \(\widetilde O(\sqrt n)\), forcing the expander threshold to remain at \(\sqrt n\) and leaving \(n^{3/2}\) residual edges. Moreover, the final release is purely additive, so even if the residual were somewhat smaller, the terminal cost would still be governed by the square-root expression \(\widetilde O(\sqrt{nm_{\mathrm{res}}})\).

\subsubsection{Our approach: recursive sparsification plus a relative terminal oracle}

Our framework improves both parts of the above pipeline.

\begin{enumerate}
    \item \textbf{Degree-sensitive recursive sparsification.}
    Instead of using a worst-case \(\widetilde O(\sqrt n)\)-spectral primitive, we use our new degree-sensitive private Laplacian primitive in Theorem~\ref{thm:intro_laplacian}. Its error decreases as the residual graph becomes sparser. This allows us to apply expander decomposition recursively: after each round, the residual becomes sparser, which in turn makes the next spectral call more accurate.

    \item \textbf{Edge-sensitive terminal release.}
    Once the residual is sufficiently sparse, we do not use the standard purely additive \(\widetilde O(\sqrt{mn})\) release. Instead, we use the relative-error terminal oracle from Theorem~\ref{thm:intro-edge-sensitive-cut-oracle}, which gives error
    $
        \widetilde O_{\eps,\delta,\gamma}
        \left(
            n+\left(n^2M\right)^{1/3}
        \right)
    $
    on a residual with at most \(M\) edges.
\end{enumerate}

A technical issue remains before a degree-sensitive primitive can be really used. A sparse graph may still have very large maximum degree; for example, a star has only \(O(n)\) edges but maximum degree \(\Theta(n)\). Since our spectral primitive depends on maximum degree instead of average degree (which is unavoidable due to Theorem~\ref{thm:result-lower-bound}), we first flatten degrees using a stable port gadget and then run a demand-aware expander decomposition. This yields the following four-step pipeline.

\paragraph{Step 1: Degree flattening via the stable port gadget.}
Let \(R\) be the current residual graph with \(m\) edges. Set
$
    d_{\mathrm{avg}}:=2m/n.
$ 
The stable port gadget replaces each vertex \(v\) by roughly
\[
    b_v\approx \max\left\{1,\frac{\deg_R(v)}{d_{\mathrm{avg}}}\right\}
\]
ports. Incident edges are routed to ports using public hashing and stable truncation. The resulting proxy graph, denoted \(^{\sharp} R\), has
\begin{equation}\label{eq:total-demand}
    \sum_{v\in[n]} b_v =\sum_{v\in[n]} \max\{1, deg(v)/d_{\text{avg}}\} =  O(n)
\end{equation}
vertices and maximum degree
$
    \Delta(R^\sharp)= O(d_{\mathrm{avg}})= O(m/n).
$
Moreover, changing one original edge changes only \(O(1)\) edges in the proxy graph, so the transformation preserves edge-level privacy up to constant factors. Cuts in the original graph are represented by clone-consistent cuts in the proxy graph.

\paragraph{Step 2: Demand-aware expansion.}
Vertex splitting changes the correct notion of cut size. A set containing a high-degree original vertex may contain many ports, even if it contains only one original vertex. Thus cardinality is no longer the right scale.

We assign each original vertex a demand
$
    \rho(v):=b_v,
$ 
so that the number of ports representing a set \(S\) is exactly \(\rho(S) = \sum_{v\in S}\rho(v)\). A demand-aware \(\psi\)-expander guarantees that every internal cut has size at least
$
    \psi\cdot \min\{\rho(S),\rho(V\setminus S)\}.
$ 
This matches the way the spectral error scales after the port transformation. Let \(m_t\) be the public upper bound of residual edge count at round \(t\), and write
$
    d_t:=2m_t/n.
$ 
Suppose the private spectral primitive on a graph of degree \(d\) has error
$
    \Pi(n,d,\eps,\delta).
$ 
For further decomposing the residual with at most $m_t$ edges, after flattening the degrees to make sure that the maximum degree is around $O(d_t)$, we choose the expansion parameter:
\[
    \psi_t
    \asymp
    \frac{\Pi(n,d_t,\eps,\delta)}{\gamma}.
\]
This ensures not only that such $\psi_t$-expanders can be found privately by our spectral primitives, and that the spectral additive error is absorbed into a \(\gamma\)-fraction of the true cut value inside the expander piece.

\paragraph{Step 3: Recursive sparsification.}
Due to \cite{nanongkai2017dynamic} and \cref{eq:total-demand}, the demand-aware expander decomposition leaves only
$
    \widetilde O(\psi_t \rho(V)^{1+o(1)}) = \widetilde O(\psi_t n^{1+o(1)})
$ 
edges for the next residual. Hence, we start from the trivial upper bound $m_0 = n^2$, and the recursion satisfies:
\[
    m_{t+1}
    \le
    \widetilde O_{\eps,\delta,\gamma}
    \left(
        n^{1+o(1)}
        \Pi\left(n,\frac{m_t}{n},\eps,\delta\right)
    \right).
\]
Equivalently,
\[
    d_{t+1}
    \le
    \widetilde O_{\eps,\delta,\gamma}
    \left(
        \Pi(n,d_t,\eps,\delta)
    \right).
\]
Our bootstrapped fourth-power spectral primitive gives
\[
    \Pi(n,d,\eps,\delta)
    =
    \widetilde O_{\eps,\delta}
    \left(
        \frac d{\sqrt q}
        +
        n^{1/8}d^{1/2}q^{1/8}
        +
        n^{1/4}q^{1/4}
    \right),
\]
where the tunable \(q \geq\operatorname{polylog} (n,1/\epsilon,1/\gamma,\log(1/\del))\) is chosen large enough to absorb privacy, failure-probability, and decomposition losses. The first term \(d/\sqrt q\) is a genuine polylogarithmic contraction. It is crucially different from the known \(\widetilde O(d)\) degree-only spectral bound, which would not contract the residual. The polynomial fixed point is determined by the two sublinear terms. Ignoring polylogarithmic factors, the recurrence is governed by
\[
    d_{t+1}
    \lesssim
    n^{1/8}d_t^{1/2}
    +
    n^{1/4}.
\]
Both terms have the same fixed point:
\[
    d_{t+1} \gets n^{1/8}d_t^{1/2}
    \quad\Longrightarrow\quad
    d_*=n^{1/4},
\]
and trivially
\[
    d_{t+1} \gets n^{1/4}
    \quad\Longrightarrow\quad
    d_*=n^{1/4}.
\]
Thus after \(T = O(\log n)\) recursive rounds, $d_T=\widetilde O(n^{1/4+o(1)}),$
and the final residual has
$
    m_T=nd_T/2=\widetilde O(n^{5/4+o(1)})
$
edges.

\paragraph{Step 4: The terminal end-game.}
If we applied the standard purely additive private cut release to the terminal residual, the error would be
$
    \widetilde O(\sqrt{nm_*})
    =
    \widetilde O(n^{9/8+o(1)}).
$ 
Instead, we use the edge-sensitive terminal oracle 
(Theorem~\ref{thm:intro-edge-sensitive-cut-oracle}). On a residual with at most \(M\) edges, it gives a mixed relative/additive guarantee with additive term
$
    \widetilde O_{\eps,\delta,\gamma}
    \left(
        n+\left(n^2M\right)^{1/3}
    \right).
$ 
Substituting $M=m_T=\widetilde O(n^{5/4+o(1)})$
yields
$\left(n^2\cdot n^{5/4+o(1)}\right)^{1/3}
    =
    n^{13/12+o(1)}$.
    
Therefore the final synthetic graph satisfies, simultaneously for all cuts \(S\subseteq V\),
\[
    |\cut_{\widehat G}(S)-\cut_G(S)|
    \le
    \gamma\cut_G(S)
    +
    \widetilde O_{\eps,\delta,\gamma}(n^{13/12+o(1)}).
\]
This proves the claimed worst-case private cut approximation bound in Theorem~\ref{thm:result-main-cut}. Conceptually, the recursion and the terminal oracle play complementary roles: the recursion reduces the residual edge count from \(n^2\) to \(n^{5/4+o(1)}\), while the terminal oracle converts this sparsity into a final additive error better than the purely additive square-root release.

\section{Preliminaries}\label{sec:preliminary}

\subsection{Basics in Differential Privacy}\label{def:dp}
For two datasets $D,D'$ in some data domain $\mathcal{D}^*$, we denoted as $D \sim D'$ if they are \emph{neighboring}. Fix any $\eps \geq 0, 0\leq \delta \leq 1$. A randomized algorithm $\+A$ is $(\epsilon, \delta)$-differentially private if for any pair of neighboring datasets $D \sim D'$ and any set of outcomes $\+O$, $$\Prb[\+A(D) \in \+O] \le \e^\epsilon \Prb[\+A(D') \in \+O] + \delta.$$


\begin{lemma}[Post processing \cite{dwork2006calibrating}]
\label{lem:postprocess}
    Let $\mathcal{M}:\mathcal{D}^*\rightarrow \mathcal{O}$ be an $(\epsilon,\delta)$-differentially private algorithm. Let $f:\mathcal{O}\rightarrow \mathcal{O}'$ be an arbitrary randomized mapping, then $f\circ \mathcal{M}: \mathcal{D}^* \rightarrow \mathcal{O'}$ is also $(\epsilon,\delta)$-differentially private.
\end{lemma}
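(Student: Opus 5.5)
The plan is to reduce the statement to the definition of $(\epsilon,\delta)$-differential privacy by conditioning on the output of $\mathcal{M}$. A randomized map $f:\mathcal{O}\to\mathcal{O}'$ can be written as a deterministic function $g:\mathcal{O}\times\mathcal{R}\to\mathcal{O}'$ together with a random seed $r\sim\mu$ drawn from some probability space $(\mathcal{R},\mu)$, independent of the data and of the internal randomness of $\mathcal{M}$. I will therefore prove the deterministic case first and then average over the seed.

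\textbf{Deterministic case.} Suppose $f$ is deterministic. Fix neighboring $D\sim D'$ and a measurable set $\mathcal{O}_1\subseteq\mathcal{O}'$. Its preimage $f^{-1}(\mathcal{O}_1)\subseteq\mathcal{O}$ is a fixed set of outcomes of $\mathcal{M}$. The DP guarantee of $\mathcal{M}$ then gives
\[
\Pr{f(\mathcal{M}(D))\in\mathcal{O}_1}=\Pr{\mathcal{M}(D)\in f^{-1}(\mathcal{O}_1)}\le \e^{\epsilon}\Pr{\mathcal{M}(D')\in f^{-1}(\mathcal{O}_1)}+\delta ,
\]
and the last probability equals $\Pr{f(\mathcal{M}(D'))\in\mathcal{O}_1}$.

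\textbf{Randomized case.} For each fixed seed $r$, the map $g(\cdot,r)$ is deterministic, so the inequality above holds with $f$ replaced by $g(\cdot,r)$. Because the seed is independent, I integrate this inequality over $r\sim\mu$. Linearity of the integral preserves both the factor $\e^\epsilon$ and the additive term $\delta$, since $\int\delta\,d\mu=\delta$. This gives $\Pr{f(\mathcal{M}(D))\in\mathcal{O}_1}\le \e^\epsilon\Pr{f(\mathcal{M}(D'))\in\mathcal{O}_1}+\delta$, which is the required bound.

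\textbf{Main obstacle.} The only delicate point is measure-theoretic. I need that $f$ is a measurable (Markov) kernel, so that preimages are measurable and the integration over $r$ (Fubini) is justified. I would state this as the standing interpretation of ``randomized mapping''. Alternatively, I would bypass the seed representation by writing $\Pr{f(\mathcal{M}(D))\in\mathcal{O}_1}=\E[\kappa(\mathcal{M}(D),\mathcal{O}_1)]$ with $\kappa(o,\mathcal{O}_1)\in[0,1]$. I would then use the standard extension of the DP inequality from indicator functions to $[0,1]$-valued test functions, which follows from the layer-cake formula $\E[h]=\int_0^1\Pr{h>t}\,dt$.
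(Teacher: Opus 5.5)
Your argument is correct. The paper gives no proof of this lemma and only cites Dwork et al.; your route (preimages when $f$ is deterministic, then averaging over an independent seed) is the standard proof from that literature.

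Of your two ways of handling a general randomized $f$, the layer-cake/test-function version is the cleaner one to present. It needs only that $o\mapsto\kappa(o,\mathcal{O}_1)$ is measurable, whereas representing an arbitrary kernel by a deterministic map of an independent seed requires extra regularity, such as standard Borel spaces.
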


\begin{lemma}[Sequential Composition \cite{dwork2014algorithmic}]
\label{lem:composition}
Let $\mathcal{M}_1, \dots, \mathcal{M}_k$ be a sequence of randomized algorithms, where each $\mathcal{M}_i$ takes as input a dataset $D \in \mathcal{D}^*$ and a transcript of all previous outputs $y_{<i} = (y_1, \dots, y_{i-1})$. If for every $i \in [k]$ and every fixed transcript $y_{<i}$, the algorithm $\mathcal{M}_i(\cdot, y_{<i})$ satisfies $(\eps_i, \del_i)$-differential privacy, then the joint mechanism $\mathcal{M}(D) = (\mathcal{M}_1(D), \dots, \mathcal{M}_k(D))$ satisfies $\left(\sum_{i=1}^k \eps_i, \sum_{i=1}^k \del_i\right)$-differential privacy.
\end{lemma}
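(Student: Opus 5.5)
We prove sequential composition by induction on $k$, bounding the joint output distribution on an arbitrary measurable set of transcripts. The base case $k=1$ is the definition of differential privacy. It therefore suffices to prove the two-mechanism case: if $\mathcal{M}_1$ is $(\eps_1,\del_1)$-DP and, for every fixed $y_1$, $\mathcal{M}_2(\cdot,y_1)$ is $(\eps_2,\del_2)$-DP, then the pair $(\mathcal{M}_1(D),\mathcal{M}_2(D,\mathcal{M}_1(D)))$ is $(\eps_1+\eps_2,\del_1+\del_2)$-DP. The general case follows by viewing $(\mathcal{M}_1,\dots,\mathcal{M}_{k-1})$ as a single mechanism whose output is the transcript $y_{<k}$. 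By the inductive hypothesis this combined mechanism is $(\sum_{i<k}\eps_i,\sum_{i<k}\del_i)$-DP, and then we apply the two-mechanism case with $\mathcal{M}_k$ as the second step.

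\textbf{Two-step case.} Fix neighbors $D\sim D'$ and a set $\mathcal{O}$ of pairs $(y_1,y_2)$. For each $y_1$, let $\mathcal{O}_{y_1}=\{y_2:(y_1,y_2)\in\mathcal{O}\}$ be the corresponding section, and define
\[
g(y_1)=\Pr{\mathcal{M}_2(D,y_1)\in\mathcal{O}_{y_1}}, \qquad g'(y_1)=\Pr{\mathcal{M}_2(D',y_1)\in\mathcal{O}_{y_1}}.
\]
Conditioning on the first output, $\Pr{\mathcal{M}(D)\in\mathcal{O}}=\E_{y_1\sim\mathcal{M}_1(D)}[g(y_1)]$. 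The privacy of the second step gives the pointwise bound $g(y_1)\le \min\{1,\e^{\eps_2}g'(y_1)+\del_2\}$, and hence
\[
g(y_1)\le \min\{1,\e^{\eps_2}g'(y_1)\}+\del_2 .
\]

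\textbf{Main obstacle.} We must transfer the expectation of the $[0,1]$-valued function $h=\min\{1,\e^{\eps_2}g'\}$ from $\mathcal{M}_1(D)$ to $\mathcal{M}_1(D')$, paying only the additive slack $\del_1$. For indicator functions this is exactly the DP definition; we extend it to arbitrary $[0,1]$-valued $h$ by the layer-cake formula:
\[
\E_{\mathcal{M}_1(D)}[h]=\int_0^1 \Pr{h(\mathcal{M}_1(D))>t}\,dt \le \int_0^1\left(\e^{\eps_1}\Pr{h(\mathcal{M}_1(D'))>t}+\del_1\right)dt=\e^{\eps_1}\E_{\mathcal{M}_1(D')}[h]+\del_1 .
\]
The truncation at $1$ in the definition of $h$ is what makes this step valid: the integral runs only over $t\in[0,1]$, so the accumulated additive term is $\del_1$ rather than something scaled by $\e^{\eps_2}$.

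\textbf{Combining.} Using $h\le \e^{\eps_2}g'$, we obtain
\[
\Pr{\mathcal{M}(D)\in\mathcal{O}}\le \e^{\eps_1}\E_{\mathcal{M}_1(D')}[\e^{\eps_2}g']+\del_1+\del_2=\e^{\eps_1+\eps_2}\Pr{\mathcal{M}(D')\in\mathcal{O}}+\del_1+\del_2 .
\]
This is the two-step claim. The only remaining points are measurability, which holds for standard Borel output spaces, and the fact that the second mechanism uses randomness independent of the first given $y_1$; both are routine.
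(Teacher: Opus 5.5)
Your proof is correct. The paper gives no proof of this lemma (it is quoted from Dwork--Roth), and your argument is the standard one for adaptive basic composition: condition on the first output, truncate the second-stage guarantee at $1$ to get $g \le \min\{1, e^{\varepsilon_2} g'\} + \delta_2$, transfer the expectation of this $[0,1]$-valued function through the first stage at additive cost $\delta_1$, and induct on $k$. The transfer is your layer-cake step; equivalently, the positive part of $P_D - e^{\varepsilon_1} P_{D'}$, where $P_D$ is the law of $\mathcal{M}_1(D)$, has total mass at most $\delta_1$.
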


\begin{lemma}[Advanced Composition~\cite{dwork2010boosting, dwork2014algorithmic}]\label{lem:advanced_composition}
Let $\eps \in (0, 1)$ and $\delta \in (0, 1)$. Suppose an algorithm adaptively applies a sequence of $k$ randomized mechanisms, where each mechanism satisfies $(\eps_0, \delta_0)$-differential privacy. There exists an absolute constant $c > 0$ such that if
\[
    \eps_0 \le c \frac{\eps}{\sqrt{k \log(2/\delta)}} \qquad \text{and} \qquad \delta_0 \le \frac{\delta}{2k},
\]
then the overall composite algorithm satisfies $(\eps, \delta)$-differential privacy.
\end{lemma}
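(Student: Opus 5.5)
The plan is to derive advanced composition from a privacy-loss random-variable argument. The goal is to show that the composed privacy loss concentrates around $k\eps_0(e^{\eps_0}-1)$ with fluctuations of order $\eps_0\sqrt{k\log(1/\delta')}$. The first step is to reduce each $(\eps_0,\delta_0)$-DP mechanism to a pure-DP one up to a small failure event. This uses the standard decomposition lemma: for every neighboring pair $D\sim D'$ and every fixed transcript $y_{<i}$, the output distributions of $\mathcal{M}_i(D,y_{<i})$ and $\mathcal{M}_i(D',y_{<i})$ can be written as mixtures. In each mixture, with probability $1-\delta_0$ the component is $(\eps_0,0)$-indistinguishable from the corresponding component of the other distribution, and with probability $\delta_0$ it is arbitrary. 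A union bound over the $k$ rounds then costs an additive $k\delta_0\le\delta/2$ in the final $\delta$. This is the only use of the condition $\delta_0\le\delta/(2k)$.

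The second step works conditionally on the good events. Define the privacy loss of the transcript as
\[
\mathcal{L}(y_1,\dots,y_k)=\sum_{i=1}^k \ln\frac{\Prb[\mathcal{M}_i(D,y_{<i})=y_i]}{\Prb[\mathcal{M}_i(D',y_{<i})=y_i]}.
\]
Each summand $\mathcal{L}_i$ satisfies $|\mathcal{L}_i|\le\eps_0$. Its conditional expectation given $y_{<i}$ is a KL divergence between two $\eps_0$-close distributions, which is at most $\eps_0(e^{\eps_0}-1)\le 2\eps_0^2$ for $\eps_0\le1$. Therefore the centered sums $\sum_i(\mathcal{L}_i-\E[\mathcal{L}_i\mid y_{<i}])$ form a martingale with increments bounded by $2\eps_0$. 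Azuma's inequality gives
\[
\Prb\bigl[\mathcal{L}>2k\eps_0^2+2\eps_0\sqrt{2k\ln(2/\delta)}\bigr]\le\delta/2.
\]
On the complement of this event, $\Prb[\mathcal{M}(D)\in\mathcal{O}]\le e^{\eps'}\Prb[\mathcal{M}(D')\in\mathcal{O}]$ pointwise, where $\eps'$ is the threshold above. Summing the failure probabilities yields $(\eps',\delta)$-DP.

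The final step is the parameter check. Substitute $\eps_0\le c\,\eps/\sqrt{k\log(2/\delta)}$. The fluctuation term becomes at most $2\sqrt2\,c\,\eps$. The drift term is $2k\eps_0^2\le 2c^2\eps^2/\log(2/\delta)\le 2c^2\eps$, using $\eps<1$ and $\log(2/\delta)\ge\log 4$. Choosing the absolute constant $c$ small enough, for instance $c=1/8$, makes $\eps'\le\eps$.

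I expect the main obstacle to be the first step under adaptivity. The decomposition must be carried out for every fixed prefix $y_{<i}$, and the bad events have to be coupled across rounds so that their total probability is at most $k\delta_0$. This is handled by the hybrid or coupling argument of Dwork, Rothblum and Vadhan. Alternatively, one can simply invoke the stated theorem from \cite{dwork2010boosting, dwork2014algorithmic}, since the lemma is a direct restatement of it with constants absorbed into $c$.
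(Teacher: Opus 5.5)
Your proposal is correct. It is essentially the standard Dwork--Rothblum--Vadhan argument: a per-prefix mixture decomposition costing $k\delta_0\le\delta/2$, followed by Azuma on the privacy loss of the pure parts. The paper does not reprove this lemma; it simply imports it by citation, which is the alternative you mention yourself.

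One small slip in your parameter check: the lemma allows $\delta\in(0,1)$, so you only have $\log(2/\delta)>\log 2$, not $\ge\log 4$. The drift term is therefore at most about $3c^2\eps$ rather than $2c^2\eps$, which $c=1/8$ still comfortably absorbs.
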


The following introduces the Laplace distribution and its standard tail bound:
\begin{definition}
    [Laplace distribution] Given parameter $b$, the Laplace distribution (with scale $b$) is the distribution with probability density function 
    \[\text{Lap}(x|b) = \frac{1}{2b} \exp\left(-\frac{|x|}{b}\right).\]
\end{definition}
We use $\text{Lap}(b)$ to denote the Laplace distribution with scale $b$.
\begin{lemma}
    [Tail bound on Laplace distribution]
    \label{lem:laplace-tail}
    Let $x$ be a random variable with $\text{Lap}(b)$ distribution. Then, $\mathbf{Pr}[{|x| \geq tb}] \leq \exp{(-t)}$.
\end{lemma}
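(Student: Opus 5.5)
The plan is to compute the tail probability directly from the density, using the symmetry of the Laplace distribution about zero. Fix $b>0$ and $t\ge 0$ (for $t<0$ the bound is vacuous, since $\exp(-t)>1$). The event $\{|x|\ge tb\}$ is the disjoint union of $\{x\ge tb\}$ and $\{x\le -tb\}$, up to the null set $\{x=0\}$ when $t=0$.

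\textbf{Upper tail.} I will compute it directly:
\[
\int_{tb}^{\infty}\frac{1}{2b}\exp\left(-\frac{x}{b}\right)dx=\frac12\exp(-t).
\]

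\textbf{Lower tail.} The density depends only on $|x|$, so $\Pr{x\le -tb}=\Pr{x\ge tb}$. Summing the two tails gives $\Pr{|x|\ge tb}=\exp(-t)$ exactly, and in particular the claimed inequality holds.

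There is no real obstacle here. The only point that needs care is the normalization constant $\frac{1}{2b}$: it is exactly what makes the two one-sided tails $\frac12 e^{-t}$ each, so they sum to $e^{-t}$ rather than $2e^{-t}$.
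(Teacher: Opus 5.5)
Your proof is correct: integrating the density gives each one-sided tail exactly $\frac12 e^{-t}$, so $\mathbf{Pr}[|x|\ge tb]=e^{-t}$ for $t\ge 0$, and the case $t<0$ is trivial. The paper states this standard fact without proof, and your direct computation is the usual argument for it. It also yields the one-sided bound $\frac12 e^{-t}$, which is the form the paper actually invokes later in its propose-test-release analysis.
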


We also describe mechanisms that are commonly used in differential privacy. We begin by defining the sensitivity of a function.
\begin{definition}
    [$\ell_p$-sensitivity] Let $f:\mathcal{D}^*\rightarrow \mathbb{R}^k$ be a query function on datasets. The $\ell_{p}$-sensitivity of $f$ (with respect to $\mathcal{D}^*$) is 
    \[\Delta^{(p)} (f) = \max_{D,D'\in \mathcal{D}^*,\atop D\sim D'} \|f(D) - f(D')\|_p.\]
\end{definition}

\begin{lemma}[Laplace mechanism \cite{dwork2006calibrating}]
\label{lem:laplace-priv}
     Fix any $\epsilon>0$. Suppose $f:\mathcal{D}^*\rightarrow \mathbb{R}^k$ is a query function with $\ell_1$-sensitivity $\Delta^{(1)}(f)$. Then the mechanism
    \begin{align*}
        \mathcal{M}(D) = f(D) + (Z_1,\cdots,Z_k)^\top
    \end{align*}
    is $(\epsilon,0)$-differentially private, where $Z_1,\cdots, Z_k$ are i.i.d random variables drawn from $\text{Lap}(\Delta^{(1)}(f)/\epsilon)$.
\end{lemma}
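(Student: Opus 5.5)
The privacy claim follows from the fact that the output density is a shift of a product Laplace density, so I would compare densities pointwise and then integrate. Write $b=\Delta^{(1)}(f)/\epsilon$ and fix neighboring datasets $D\sim D'$. The output $\mathcal{M}(D)$ has density
\[
p_D(z)=\prod_{i=1}^k \frac{1}{2b}\exp\left(-\frac{|z_i-f(D)_i|}{b}\right)
\]
on $\mathbb{R}^k$. The same formula with $f(D')$ in place of $f(D)$ gives the density $p_{D'}$ of $\mathcal{M}(D')$. The plan is to bound the ratio $p_D(z)/p_{D'}(z)$ uniformly in $z$.

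Taking the ratio, the normalizing constants cancel:
\[
\frac{p_D(z)}{p_{D'}(z)}=\exp\left(\frac{1}{b}\sum_{i=1}^k\bigl(|z_i-f(D')_i|-|z_i-f(D)_i|\bigr)\right).
\]
By the reverse triangle inequality, each summand is at most $|f(D)_i-f(D')_i|$. So the exponent is at most $\|f(D)-f(D')\|_1/b$. By the definition of $\ell_1$-sensitivity this is at most $\Delta^{(1)}(f)/b=\epsilon$. Hence $p_D(z)\le e^{\epsilon}p_{D'}(z)$ for every $z\in\mathbb{R}^k$.

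It remains to integrate this pointwise bound over an arbitrary measurable outcome set $\mathcal{O}\subseteq\mathbb{R}^k$:
\[
\Pr{\mathcal{M}(D)\in\mathcal{O}}=\int_{\mathcal{O}}p_D\le e^{\epsilon}\int_{\mathcal{O}}p_{D'}=e^{\epsilon}\Pr{\mathcal{M}(D')\in\mathcal{O}}.
\]
This is $(\epsilon,0)$-differential privacy.

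The argument has no real obstacle. The one degenerate case is $\Delta^{(1)}(f)=0$, where $b=0$ and there is no density. In that case $f(D)=f(D')$ for all neighbors, so the two output distributions are identical and the claim holds trivially.
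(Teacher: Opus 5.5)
Your proof is correct. It is the standard pointwise density-ratio argument: apply the reverse triangle inequality coordinatewise, sum to get the $\ell_1$-sensitivity, then integrate over the outcome set, with the $\Delta^{(1)}(f)=0$ case handled separately. The paper gives no proof of its own; it imports this lemma by citation to \cite{dwork2006calibrating}, and your argument is the standard proof of that result.
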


\begin{lemma}[Gaussian Mechanism \cite{dwork2014algorithmic}]
\label{lem:gaussian-mechanism}
Fix any $\epsilon \in (0, 1)$ and $\delta \in (0, 1)$. Suppose $f:\mathcal{X}\rightarrow \mathbb{R}^k$ is a query function with $\ell_2$-sensitivity $\Delta^{(2)}(f)$. Then the Gaussian Mechanism $\mathcal{M}$, defined by
\[
\mathcal{M}(D) = f(D) + \mathcal{N}(0, \sigma^2 \mathbf{I}_k),
\]
is $(\epsilon, \delta)$-differentially private if the noise scale $\sigma$ satisfies
\[
\sigma \geq \frac{\Delta^{(2)}(f) }{\epsilon} \sqrt{2 \ln\left(\frac{1.25}{\delta}\right)}.
\]
\end{lemma}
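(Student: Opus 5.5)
The plan is the standard privacy-loss argument for Gaussian noise. Three steps: reduce to one dimension, show the privacy loss exceeds $\epsilon$ only with probability at most $\delta$, and convert that tail bound into the $(\epsilon,\delta)$ inequality.

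\textbf{Step 1: reduce to one dimension.} Fix neighbors $D\sim D'$ and write $v=f(D)-f(D')$, so $\|v\|_2\le\Delta:=\Delta^{(2)}(f)$. Let $p,p'$ be the densities of $\mathcal{M}(D)$ and $\mathcal{M}(D')$. At the point $x=f(D)+Z$, with $Z\sim\mathcal N(0,\sigma^2\mathbf I_k)$, the privacy loss is
\[
\mathcal{L}(x)=\ln\frac{p(x)}{p'(x)}=\frac{\|Z+v\|_2^2-\|Z\|_2^2}{2\sigma^2}=\frac{2\langle Z,v\rangle+\|v\|_2^2}{2\sigma^2}.
\]
By rotational invariance, $\langle Z,v\rangle=\|v\|_2\,g$ with $g\sim\mathcal N(0,\sigma^2)$ a scalar. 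So the analysis reduces to a single coordinate. The event $\mathcal L>\epsilon$ becomes $g>\sigma^2\epsilon/\|v\|_2-\|v\|_2/2$. The right-hand side is decreasing in $\|v\|_2$, so the worst case is $\|v\|_2=\Delta$.

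\textbf{Step 2: tail bound on the privacy loss.} Normalize $g=\sigma g_0$ with $g_0\sim\mathcal N(0,1)$. Then
\[
\Pr{\mathcal L>\epsilon}\le\Pr{g_0>t},\qquad t=\frac{\sigma\epsilon}{\Delta}-\frac{\Delta}{2\sigma}.
\]
Since $t$ is increasing in $\sigma$, it suffices to treat $\sigma=c\Delta/\epsilon$ with $c=\sqrt{2\ln(1.25/\delta)}$. This gives $t=c-\epsilon/(2c)$, and hence $t^2\ge c^2-\epsilon$. Apply the Gaussian tail bound
\[
\Pr{g_0>t}\le\frac{1}{t\sqrt{2\pi}}e^{-t^2/2}\le\frac{e^{\epsilon/2}}{t\sqrt{2\pi}}\cdot\frac{\delta}{1.25}.
\]
Using $\epsilon<1$, it remains to check that $e^{\epsilon/2}/(1.25\,t\sqrt{2\pi})\le 1$, which would give $\Pr{\mathcal L>\epsilon}\le\delta$.

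\textbf{Step 3: from the tail bound to $(\epsilon,\delta)$-DP.} Let $B=\{x:\mathcal L(x)>\epsilon\}$. For any outcome set $\mathcal O$,
\[
\Pr{\mathcal M(D)\in\mathcal O}\le\Pr{\mathcal M(D)\in\mathcal O\setminus B}+\Pr{\mathcal M(D)\in B}\le e^{\epsilon}\Pr{\mathcal M(D')\in\mathcal O}+\delta.
\]
The first term is bounded because $p\le e^\epsilon p'$ pointwise off $B$; the second is at most $\delta$ by Step 2. Since $D,D'$ were an arbitrary neighboring pair, this is exactly $(\epsilon,\delta)$-DP.

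\textbf{Main obstacle.} The expected difficulty is the constant bookkeeping in Step 2, namely verifying $e^{\epsilon/2}/(1.25\,t\sqrt{2\pi})\le1$. This needs $t$ bounded below by roughly $0.4$. That holds once $c$ is moderately large, i.e.\ when $\delta$ is not close to $1$. For $\delta$ near $1$ it can fail, so there I would either fall back to the looser direct bound $\Pr{g_0>t}\le e^{-t^2/2}$ or handle it by the direct computation in Dwork and Roth~\cite{dwork2014algorithmic}, whose constants I would track rather than rederive.
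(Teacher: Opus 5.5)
Your route is the standard privacy-loss argument, the same one Dwork and Roth use. The paper gives no proof of its own and only cites them. Steps 1 and 3 are correct. So is the reduction in Step 2 to showing $\Prb[g_0>t]\le\delta$ with $t=c-\epsilon/(2c)$ and $c=\sqrt{2\ln(1.25/\delta)}$.

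\textbf{The gap.} What is missing is the case you flagged and left open, and neither of your fallbacks closes it.
\begin{itemize}
\item Your Mills-ratio inequality $e^{\epsilon/2}/(1.25\,t\sqrt{2\pi})\le 1$ needs $t\gtrsim 0.53$ when $\epsilon$ is near $1$. So it already fails for $\delta$ above roughly $0.74$, not only for $\delta$ very close to $1$.
\item The bound $\Prb[g_0>t]\le e^{-t^2/2}$ only gives $\Prb[\mathcal L>\epsilon]\le e^{\epsilon/2}\delta/1.25$. This exceeds $\delta$ once $\epsilon>2\ln 1.25\approx 0.45$.
\item Its Chernoff proof also needs $t\ge 0$, and $t<0$ does occur in the allowed range, namely when $\delta>1.25e^{-\epsilon/4}$.
\item Deferring to Dwork and Roth does not help either: their computation explicitly assumes $c\ge 3/2$.
\end{itemize}

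\textbf{The fix.} Split on the sign of $t$.
\begin{itemize}
\item For $t\ge 0$, write $\Prb[g_0>t]=\int_0^\infty\phi(y)e^{-ty-t^2/2}\,dy\le\tfrac12 e^{-t^2/2}$, where $\phi$ is the standard normal density. Combined with your $t^2\ge c^2-\epsilon$, this gives $\Prb[\mathcal L>\epsilon]\le e^{\epsilon/2}\delta/2.5<0.66\,\delta$. Mills' ratio is then not needed at all.
\item For $t<0$ you have $c^2<\epsilon/2$, hence $\delta>1.25e^{-1/4}>0.97$. On the other hand, $c>\sqrt{2\ln 1.25}>0.668$ and $\epsilon<1$ give $t>0.668-1/(2\cdot 0.668)>-0.081$. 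So $\Prb[g_0>t]\le\Prb[g_0>-0.081]<0.54<\delta$.
\end{itemize}
These two cases give $\Prb[\mathcal L>\epsilon]\le\delta$ for all $\epsilon,\delta\in(0,1)$, and your Step 3 then completes the proof.
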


\subsection{Spectral Graph Theory}

\paragraph{Graph Laplacian Matrix.} Consider a non-negative weighted graph $G$. Let $A_G\in \mathbb{R}^n\times \mathbb{R}^n$ be the symmetric adjacency matrix of $G$. That is,
$A_G[u,v] = A_G[v,u] = w_{uv}$, where $u,v\in V$ are vertices, and $w_{u,v}$ is the weight of edge $uv$. (If $u$ is not adjacent with $v$, then $w_{uv} = 0$.) For vertex $v\in V$, denote by $d_v$ the (weighted) degree of $v$, namely $d_v = \sum_{u\neq v} w_{vu}$. Let $D_G\in \mathbb{R}^n\times \mathbb{R}^n$ be a diagonal matrix where $D_G[i,i]$ is the degree of $i\in[n]$. Here, we define the edge adjacency matrix:

\begin{definition}
    [Edge adjacency matrix] Let $G$ be an undirected graph of $n$ vertices and $m$ edges. Consider an arbitrary orientation of edges, then $E_G\in \mathbb{R}^{m\times n}$ is the edge adjacency matrix where 
    $$E_G[e,v] = \left\{
        \begin{aligned}
            &+\sqrt{w_e}, &\text{if } v \text{ is } e\text{'s head,} \\
            &-\sqrt{w_e}, &\text{if } v \text{ is } e\text{'s tail,} \\
            &0, &\text{otherwize.}
        \end{aligned}
    \right.$$ 
\end{definition}
\noindent An important object of interest in graph theory is the Laplacian of a graph:
\begin{definition}
    [Laplacian matrix] For an undirected graph $G$ with $n$ vertices and $m$ edges, the Laplacian matrix $L_G\in \mathbb{R}^{n\times n}$ of $G$ is $$L_G = E_G^\top E_G.$$
\end{definition}

\noindent Equivalently, one can verify that $L_G = D_G - A_G$. Also, we note that for any graph $G$, $L_G$ is a positive semi-definite (PSD) matrix and $L_G \bm{1} = 0$, where $\bm{1}\in \mathbb{R}^n$ is the all one vector. For any vector $x$ in $\mathbb{R}^n$, the quadratic form of $L_G$ is $x^\top L_G x  \geq 0$. In particular, one can verify that 
$$x^\top L_G x = \sum_{(u,v)\in E} w_{u,v} (x(u) - x(v))^2.$$

\paragraph{Demand Sparsity and Expander Graphs.}
To analyze the private expander decomposition, we use the generalized version of the standard cardinality-based sparsity to a \emph{demand-aware} setting.
Given a graph $G=(V, E)$ and a demand vector $\rho: V \to \^R_{\ge 1}$, we extend the demand to any subset $S \subseteq V$ as $\rho(S) = \sum_{v \in S} \rho(v)$. Fix an $U\subseteq V$. The \emph{demand sparsity} of a cut $(S, U \setminus S)$ in an induced subgraph $G[U]$ is defined as:
\begin{equation}\label{eq:expansion_definition}
    \phi_\rho^{G[U]}(S) = \frac{w_{G[U]}(S, U \setminus S)}{\min\{\rho(S), \rho(U \setminus S)\}}.
\end{equation}
We say $G[U]$ is a $(\psi, \rho)$-expander if for all $\emptyset \ne S \subsetneq U$, $\phi_\rho^{G[U]}(S) \ge \psi$.

\subsection{Useful Tools in Probability and Matrix}

\begin{lemma}[Properties and Tail Bounds of the $\chi_1^2$ Distribution, \cite{vershynin2018high}] \label{lem:chi_square_tails}
Let $Z \sim \mathcal{N}(0,1)$ be a standard Gaussian random variable. Its square $X = Z^2$ follows a chi-squared distribution with one degree of freedom, denoted as $X \sim \chi_1^2$. The following properties hold:
\begin{enumerate}
    \item \textbf{Moments:} $\E[X] = 1$ and $\E[X^2] = 3$.
    \item \textbf{Sub-exponential Tail:} There exist absolute constants $C, c > 0$ such that for any $s \ge 0$, $$ \Prb[X > s] \le C e^{-cs}.$$
\end{enumerate}
\end{lemma}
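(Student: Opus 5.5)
The statement is a standard fact about $\chi^2_1$, so I would prove both parts directly from the Gaussian density rather than invoke heavy machinery.

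\textbf{Moments.} Write $X=Z^2$ with $Z\sim\mathcal N(0,1)$. Then $\E[X]=\E[Z^2]=\Var{Z}=1$. For the second moment, $\E[X^2]=\E[Z^4]$. I would compute this by Gaussian integration by parts (Stein's identity $\E[Zf(Z)]=\E[f'(Z)]$) with $f(z)=z^3$, which gives $\E[Z^4]=3\E[Z^2]=3$. Alternatively, differentiating the moment generating function $\E[e^{tZ}]=e^{t^2/2}$ four times at $t=0$ gives the same value.

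\textbf{Sub-exponential tail.} I would use a Chernoff/MGF argument. For $t<1/2$,
\[
\E[e^{tX}]=\frac{1}{\sqrt{2\pi}}\int_{\R} e^{-(1-2t)z^2/2}\,dz=(1-2t)^{-1/2}.
\]
Fixing $t=1/4$ gives $\E[e^{X/4}]=\sqrt2$. Markov's inequality applied to $e^{X/4}$ then yields, for every $s\ge0$,
\[
\Pr{X>s}\le \sqrt2\,e^{-s/4}.
\]
So the constants $C=\sqrt2$ and $c=1/4$ work.

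As a cross-check, the standard Gaussian tail bound $\Pr{|Z|>u}\le 2e^{-u^2/2}$ with $u=\sqrt s$ gives $\Pr{X>s}\le 2e^{-s/2}$, which also suffices.

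\textbf{Main obstacle.} There is essentially none. The only point needing care is confirming that the MGF integral converges for the chosen $t<1/2$; taking $t=1/4$ handles this.
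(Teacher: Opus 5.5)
Your proof is correct and complete. Stein's identity, or the MGF $e^{t^2/2}$, gives $\E[Z^4]=3$. Markov's inequality applied to $e^{X/4}$, with $\E[e^{X/4}]=(1-2\cdot\tfrac14)^{-1/2}=\sqrt2$, gives $\Prb[X>s]\le\sqrt2\,e^{-s/4}$ for all $s\ge0$.

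The paper gives no proof of its own; it only cites Vershynin. There the tail follows from the general fact that the square of a sub-gaussian variable is sub-exponential ($\norm{Z^2}_{\psi_1}=\norm{Z}_{\psi_2}^2$), and the moments follow from the standard Gaussian moment formula. Your direct MGF computation is more elementary and self-contained, and it produces explicit constants, which the abstract statement hides.

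Those constants matter slightly downstream. In the proof of Lemma~\ref{lem:gradient_unbias}, the paper invokes this lemma in the concrete form $\Prb[\xi^2>y]\le e^{-y/2}$. Neither your Markov bound nor your cross-check $2e^{-s/2}$ gives that form verbatim. To match it exactly, use the one-sided bound $\Prb[Z>u]\le\tfrac12e^{-u^2/2}$ for $u\ge0$, which yields $\Prb[X>s]\le e^{-s/2}$ for every $s\ge0$. To see the one-sided bound:
\begin{itemize}
\item the difference $\tfrac12e^{-u^2/2}-\Prb[Z>u]$ vanishes at $u=0$ and as $u\to\infty$;
\item its derivative $e^{-u^2/2}(1/\sqrt{2\pi}-u/2)$ changes sign only once, from positive to negative, so the difference is nonnegative throughout.
\end{itemize}
If you keep your constants instead, the downstream clipping estimate changes only by constant factors, absorbed into $B=\Theta(\log(M/\alpha))$, so nothing breaks.
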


\begin{lemma}[Matrix Gaussian Series Inequality, \cite{tropp2012user}]
\label{lem:matrix-gaussian-series}
Let $A_1, \dots, A_k \in \mathbb{R}^{n \times n}$ be deterministic symmetric matrices, and let $g_1, \dots, g_k \sim \mathcal{N}(0,1)$ be independent standard Gaussian random variables. Consider the random symmetric matrix $Z = \sum_{i=1}^k g_i A_i$. Define the matrix variance proxy parameter $v$ as:
\[
    v = \opnorm{ \sum_{i=1}^k A_i^2 }.
\]
Then, for any failure probability $\beta \in (0, 1)$, with probability at least $1-\beta$, the spectral norm of $Z$ satisfies the tail bound:
\[
    \opnorm{Z} \le \sqrt{2v \log(2n/\beta)}.
\]
\end{lemma}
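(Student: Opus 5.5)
The statement to prove is the Matrix Gaussian Series Inequality (Lemma~\ref{lem:matrix-gaussian-series}). The plan is the standard matrix Laplace-transform argument. The central object is the trace moment generating function $\E\,\tr\exp(\theta Z)$ for $\theta>0$.

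First I will bound the probability of a large maximum eigenvalue. By Markov's inequality and $\exp(\theta\lambda_{\max}(Z))\le \tr\exp(\theta Z)$,
\[
\Pr{\lambda_{\max}(Z)\ge t}\le e^{-\theta t}\,\E\,\tr\exp(\theta Z).
\]
Next I will control this expectation. Since the $g_i$ are independent, I apply Lieb's concavity theorem: the map $H\mapsto\tr\exp(H+\log X)$ is concave in $X\succ0$. Conditioning on one summand at a time and using Jensen's inequality gives the subadditivity of matrix cumulant generating functions,
\[
\E\,\tr\exp\Bigl(\sum_i\theta g_iA_i\Bigr)\le \tr\exp\Bigl(\sum_i\log\E e^{\theta g_iA_i}\Bigr).
\]

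For a single Gaussian summand, diagonalizing $A_i$ and using the scalar Gaussian MGF gives the exact identity $\E e^{\theta g_iA_i}=e^{\theta^2A_i^2/2}$, so $\log\E e^{\theta g_iA_i}=\theta^2A_i^2/2$. Substituting this in,
\[
\E\,\tr\exp(\theta Z)\le \tr\exp\Bigl(\tfrac{\theta^2}{2}\sum_iA_i^2\Bigr)\le n\,e^{\theta^2v/2}.
\]
This yields $\Pr{\lambda_{\max}(Z)\ge t}\le n e^{-\theta t+\theta^2 v/2}$, and choosing $\theta=t/v$ gives $n e^{-t^2/(2v)}$.

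The same bound holds for $-Z=\sum_i(-g_i)A_i$, since $-g_i$ has the same distribution as $g_i$. A union bound over the two tails gives $\Pr{\|Z\|_2\ge t}\le 2n e^{-t^2/(2v)}$. Setting the right-hand side equal to $\beta$ gives $t=\sqrt{2v\log(2n/\beta)}$, which is the claimed bound.

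The main obstacle is the step that turns the expectation of a trace exponential of a sum of noncommuting matrices into a trace exponential of summed cumulants. This step needs Lieb's theorem. I will cite it as in \cite{tropp2012user} rather than reprove it, since Golden--Thompson does not extend directly to more than two summands. The remaining steps are routine scalar calculus.
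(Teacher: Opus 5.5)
Your proof is correct. The paper gives no proof of its own: it states the lemma by citation to \cite{tropp2012user}, and your argument (matrix Laplace transform, Lieb-based subadditivity of matrix cumulant generating functions, the exact Gaussian identity $\E e^{\theta gA}=e^{\theta^2A^2/2}$, and a union bound over $\pm Z$) is exactly the standard proof given there. One small slip: Lieb's theorem concerns concavity of the map $X\mapsto\tr\exp(H+\log X)$ for fixed $H$, not a map in $H$.
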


\begin{lemma}[Hadamard's Inequality, \cite{horn2012matrix}] \label{lem:hadamard}
Let $X \in \mathbb{R}^{n \times n}$ be a symmetric positive semi-definite matrix. Then the determinant of $X$ is upper-bounded by the product of its diagonal entries:
\begin{equation}
    \det(X) \le \prod_{i=1}^n X_{ii}.
\end{equation}
Equality holds if and only if $X$ is a diagonal matrix, or if any diagonal entry $X_{ii} = 0$. 
\end{lemma}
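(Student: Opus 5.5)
We will prove the inequality by normalizing the diagonal and applying AM--GM to the eigenvalues; the equality characterization comes out of the equality case of AM--GM. There are two cases, depending on whether some diagonal entry vanishes.

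\emph{Case 1: some $X_{ii}=0$.} Positive semi-definiteness forces every $2\times 2$ principal minor to be non-negative, so $X_{ii}X_{jj}-X_{ij}^2\ge 0$. Hence $X_{ij}=0$ for all $j$, i.e.\ the $i$-th row and column of $X$ vanish. Then $\det(X)=0=\prod_{k}X_{kk}$, so the inequality holds with equality. This matches the clause ``or if any diagonal entry $X_{ii}=0$'' in the statement.

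\emph{Case 2: all $X_{ii}>0$.} Let $D=\diag(X_{11},\dots,X_{nn})$ and set $Y=D^{-1/2}XD^{-1/2}$. This is a congruence, so $Y$ is symmetric positive semi-definite. It has unit diagonal, and by multiplicativity of the determinant
\[
\det(X)=\det(Y)\prod_{i=1}^n X_{ii}.
\]
It therefore suffices to show $\det(Y)\le 1$. Let $\mu_1,\dots,\mu_n\ge 0$ be the eigenvalues of $Y$; they satisfy $\sum_i\mu_i=\tr(Y)=n$. By AM--GM,
\[
\det(Y)=\prod_i\mu_i\le\Bigl(\tfrac1n\textstyle\sum_i\mu_i\Bigr)^n=1 .
\]

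For the equality characterization in Case 2: equality in AM--GM holds iff all $\mu_i$ are equal, and hence all equal to $1$. A symmetric matrix whose eigenvalues are all $1$ is the identity, so $Y=I$, i.e.\ $X=D$ is diagonal. Conversely, if $X$ is diagonal then trivially $\det(X)=\prod_i X_{ii}$. Together with Case 1, this gives exactly the stated equality condition.

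There is no real obstacle. The only step needing care is the degenerate Case 1: the normalization by $D^{-1/2}$ is undefined there, and it is exactly the principal-minor argument that makes the determinant vanish.
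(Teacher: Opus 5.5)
Your proof is correct, including the equality characterization. The $2\times 2$ principal-minor argument correctly handles a zero diagonal entry. When all diagonal entries are positive, the congruence $Y=D^{-1/2}XD^{-1/2}$ together with AM--GM on the nonnegative eigenvalues of $Y$ covers singular and nonsingular $X$ alike, and equality forces $Y=I$.

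The paper gives no proof of this lemma; it only cites Horn and Johnson, and your normalize-then-AM--GM argument is the standard textbook proof. The only place the paper uses the lemma is the unit-diagonal case $\det(X)\le 1$ for $\diag(X)=\mathbf{1}_n$ (in Lemma~\ref{lem:regularizer-bound}), which is exactly your step $\det(Y)\le 1$.
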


To bound the retained rank of our spectral truncation, we recall on the following variational characterization of eigenvalues.

\begin{theorem}[Ky Fan's Maximum Principle, Corollary 4.3.39 in \cite{horn2012matrix}] \label{thm:kyfan}
Let $A \in \mathbb{R}^{n \times n}$ be a symmetric matrix with eigenvalues $\lambda_1 \ge \lambda_2 \ge \dots \ge \lambda_n$. For any integer $1 \le k \le n$, the sum of the top $k$ eigenvalues satisfies:
$$ \sum_{i=1}^k \lambda_i = \max_{U \in \mathbb{R}^{n \times k}, U^T U = I_k} \text{tr}(U^T A U). $$
Consequently, for any positive semi-definite (PSD) matrix $A \succeq 0$ and any set of $k$ orthonormal vectors $\{u_1, \dots, u_k\}$, we have:
$$ \sum_{i=1}^k u_i^T A u_i \le \sum_{i=1}^k \lambda_i \le \text{tr}(A). $$
\end{theorem}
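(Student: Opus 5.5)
The plan is to diagonalize $A$ and reduce the trace maximization to a linear program over a simple polytope of "weights". By the spectral theorem, write $A=Q\Lambda Q^\top$ with $Q=[q_1,\dots,q_n]$ orthogonal and $\Lambda=\diag(\lambda_1,\dots,\lambda_n)$, $\lambda_1\ge\dots\ge\lambda_n$.

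\textbf{Expressing the trace as a weighted sum.} Fix any $U\in\mathbb{R}^{n\times k}$ with $U^\top U=I_k$ and columns $u_1,\dots,u_k$. Then
\[
\tr(U^\top AU)=\sum_{i=1}^k u_i^\top A u_i=\sum_{j=1}^n \lambda_j w_j,
\qquad
w_j:=\sum_{i=1}^k (q_j^\top u_i)^2 = q_j^\top UU^\top q_j .
\]

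\textbf{Constraints on the weights.} I will check two facts about the $w_j$.
\begin{enumerate}
\item $0\le w_j\le 1$. Indeed $UU^\top$ is the orthogonal projection onto the column span of $U$, so $w_j=\|UU^\top q_j\|_2^2\le\|q_j\|_2^2=1$.
\item $\sum_j w_j=k$. This holds because $\sum_j w_j=\tr(Q^\top UU^\top Q)=\tr(U^\top U)=k$.
\end{enumerate}

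\textbf{Solving the linear program.} It remains to maximize the linear function $\sum_j\lambda_j w_j$ over the polytope $\{w\in[0,1]^n:\sum_j w_j=k\}$. I would use an exchange argument: if some $w_j<1$ with $j\le k$ while some $w_l>0$ with $l>k$, then shifting mass from $w_l$ to $w_j$ does not decrease the objective, since $\lambda_j\ge\lambda_l$. Repeating this shows the maximum is attained at $w=\mathbf{1}_{\{1,\dots,k\}}$, so $\sum_j \lambda_j w_j\le\sum_{i\le k}\lambda_i$. Equivalently, one can write $\sum_{i\le k}\lambda_i-\sum_j\lambda_jw_j=\sum_{j\le k}(1-w_j)(\lambda_j-\lambda_k)+\sum_{j>k}w_j(\lambda_k-\lambda_j)\ge0$, using $\sum_j w_j=k$. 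The bound is attained by taking $U=[q_1,\dots,q_k]$, which gives the maximum principle.

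\textbf{The consequence for PSD matrices.} The first inequality $\sum_{i\le k} u_i^\top A u_i\le\sum_{i\le k}\lambda_i$ is exactly the bound just proved. The second, $\sum_{i\le k}\lambda_i\le \tr(A)$, follows because $A\succeq0$ forces every $\lambda_i\ge 0$, so dropping the terms $\lambda_{k+1},\dots,\lambda_n$ from $\tr(A)=\sum_{i=1}^n\lambda_i$ can only decrease it.

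The only mildly delicate step is the bound $w_j\le1$, which needs the projection interpretation of $UU^\top$. Everything else is linear algebra bookkeeping.
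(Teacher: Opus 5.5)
Your proof is correct and complete. The spectral decomposition gives
\[
\tr(U^\top AU)=\sum_j \lambda_j w_j .
\]
The two weight constraints, $w_j\le 1$ (via the projection $UU^\top$) and $\sum_j w_j=k$ (via the trace), are right. Your explicit identity
\[
\sum_{j\le k}(1-w_j)(\lambda_j-\lambda_k)+\sum_{j>k}w_j(\lambda_k-\lambda_j)\ge 0
\]
certifies the upper bound without relying on the informal exchange argument, and $U=[q_1,\dots,q_k]$ attains it.

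The paper gives no proof of its own here; it simply quotes Horn and Johnson. So there is no internal argument to compare against. Standard textbook proofs usually go through the Poincar\'e separation (Cauchy interlacing) theorem for the compression $U^\top AU$. That gives $\lambda_i(U^\top AU)\le\lambda_i(A)$ for each $i\le k$, and summing yields the claim. An alternative is Schur's theorem that the diagonal of a symmetric matrix is majorized by its eigenvalues.

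Your route instead reduces everything to a linear program over $\{w\in[0,1]^n:\sum_j w_j=k\}$. It needs only the spectral theorem and the fact that $UU^\top$ is an orthogonal projection, so it is more elementary. The interlacing route yields the stronger eigenvalue-by-eigenvalue comparison.

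Your formulation also makes one point transparent. The only place the paper uses this theorem is bounding the retained rank $r$ in the proof of Theorem~\ref{thm:utility}. There it needs only $\sum_i u_i^\top A u_i\le\tr(A)$ for PSD $A$. That follows immediately from $w_j\le 1$ and $\lambda_j\ge 0$, without the sorting or exchange step at all.
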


Furthermore, to explicitly characterize the noise magnitude in the matrix mechanism, we recall the standard non-asymptotic bound for the spectral norm of symmetric Gaussian matrices.

\begin{lemma}[Spectral Norm of Symmetric Gaussian Matrices, Theorem 4.4.5 in \cite{vershynin2018high}] \label{lem:gaussian_norm}
Let $W \in \mathbb{R}^{n \times n}$ be a symmetric random matrix where the upper triangular entries (including diagonal entries) are sampled i.i.d. from $\mathcal{N}(0, \sigma^2)$. Then, for any failure probability $\beta \in (0, 1)$, there exists an absolute constant $C > 0$ such that with probability at least $1 - \beta$, the operator norm of $W$ satisfies:
$$ \norm{W}_2 \le C\sigma\left(\sqrt{n} + \sqrt{\log(1/\beta)}\right). $$
\end{lemma}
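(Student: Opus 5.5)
The plan is the standard $\varepsilon$-net argument for the quadratic form of $W$. Because $W$ is symmetric, $\|W\|_2=\sup_{x\in S^{n-1}}|x^\top Wx|$. So it suffices to control the Gaussian process $x\mapsto x^\top Wx$ on a finite net of the sphere and then extend from the net to the whole sphere. I will not use Lemma~\ref{lem:matrix-gaussian-series}: it gives a bound of order $\sigma\sqrt{n\log(n/\beta)}$, which carries an extra $\sqrt{\log n}$ factor. The net argument avoids that factor.

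\emph{Step 1 (net reduction).} Fix a $1/4$-net $\mathcal N\subseteq S^{n-1}$ with $|\mathcal N|\le 9^n$; a volumetric argument gives such a net. Let $x^\ast$ attain the supremum and pick $y\in\mathcal N$ with $\|x^\ast-y\|\le 1/4$. Then
\[
|x^{\ast\top}Wx^\ast-y^\top Wy|=|(x^\ast-y)^\top W x^\ast+y^\top W(x^\ast-y)|\le 2\cdot\tfrac14\|W\|_2 .
\]
It follows that $\|W\|_2\le 2\max_{y\in\mathcal N}|y^\top Wy|$. This is the one step where symmetry is used, and it is the step needing the most care. The only subtlety is using the two-term telescoping identity rather than a bilinear $\sup_{x,y}$ formulation, so that the constant stays below $1$.

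\emph{Step 2 (single-vector tail).} For fixed unit $x$, write
\[
x^\top Wx=\sum_i W_{ii}x_i^2+2\sum_{i<j}W_{ij}x_ix_j .
\]
This is a linear combination of the independent $\mathcal N(0,\sigma^2)$ upper-triangular entries, so it is centered Gaussian. Its variance is
\[
\sigma^2\Big(\sum_i x_i^4+4\sum_{i<j}x_i^2x_j^2\Big)\le 2\sigma^2\Big(\sum_i x_i^2\Big)^2=2\sigma^2 .
\]
The standard Gaussian tail bound then gives $\Pr{|x^\top Wx|>t}\le 2\exp(-t^2/(4\sigma^2))$.

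\emph{Step 3 (union bound).} Apply Step 2 to every point of $\mathcal N$:
\[
\Pr{\max_{y\in\mathcal N}|y^\top Wy|>t}\le 2\cdot 9^n e^{-t^2/(4\sigma^2)} .
\]
Choose $t=2\sigma\bigl(\sqrt{n\ln 9}+\sqrt{\ln(2/\beta)}\bigr)$. Then $t^2/(4\sigma^2)\ge n\ln 9+\ln(2/\beta)$, so the right-hand side is at most $\beta$.

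Combining with Step 1, with probability at least $1-\beta$ we get
\[
\|W\|_2\le 4\sigma\bigl(\sqrt{n\ln 9}+\sqrt{\ln 2+\ln(1/\beta)}\bigr)\le C\sigma\bigl(\sqrt n+\sqrt{\log(1/\beta)}\bigr)
\]
for an absolute constant $C$. Here the $\ln 2$ term is absorbed into the $\sqrt n$ term, since $n\ge 1$.
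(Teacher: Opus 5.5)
Your proof is correct: the net reduction $\|W\|_2\le 2\max_{y\in\mathcal N}|y^\top Wy|$, the variance bound $2\sigma^2$ for the quadratic form, and the union bound over at most $9^n$ net points all check out, and you are right that the matrix Gaussian series inequality would cost an extra $\sqrt{\log n}$ factor. The paper gives no proof of its own and only cites Vershynin, whose argument is the same $\varepsilon$-net-plus-union-bound technique. There it is phrased via bilinear forms on two nets for matrices with independent entries, and the symmetric case is obtained by splitting $W$ into its upper- and lower-triangular parts. Your single-net quadratic-form version is an equally standard, self-contained variant of it, with explicit constants.
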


Our vertex splitting gadget heavily relies on pseudo-random routing. To formally analyze this without compromising the deterministic bounds required by differential privacy, we define the following hash function family and introduce concentration inequalities.

\begin{definition}[Public Hash Functions] \label{def:hash}
Let $\+H$ be a family of hash functions uniform at randomly mapping a universe $\+U$ to a set of $k$ bins. For each vertex $v \in V$, we publicly sample a fresh hash function $h_v: V \setminus \{v\} \to [\rho(v)]$ from a family of fully independent hash functions.
\end{definition}
Crucially, these hash functions are generated using \emph{public randomness} and strictly do not consume any privacy budget. Next we introduce the standard Chernoff bounds.

\begin{lemma}[Chernoff Bound~\cite{chernoff1952measure}] \label{lem:chernoff}
Let $X_1, \dots, X_m$ be independent Bernoulli random variables (Poisson trials) such that $\Prb[X_i = 1] = p_i$. Let $X = \sum_{i=1}^m X_i$ and denote its expectation by $\mu = \E{X}$. Then, for any $\delta > 0$:
\begin{equation*}
    \Prb[X \ge (1+\delta)\mu] \le \Paren{ \frac{\e^\delta}{(1+\delta)^{1+\delta}} }^\mu.
\end{equation*}
For $\delta \ge 1$, this bound simplifies to $\Prb[X \ge (1+\delta)\mu] \le \exp(-\frac{\delta \mu}{3})$.
\end{lemma}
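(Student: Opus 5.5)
The statement to prove is the Chernoff bound (Lemma~\ref{lem:chernoff}). I will use the exponential moment method.

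\textbf{Step 1: Markov on the exponential.} Fix $t>0$. By Markov's inequality applied to the nonnegative variable $\e^{tX}$,
\[
\Prb[X\ge (1+\delta)\mu]\le \e^{-t(1+\delta)\mu}\,\E[\e^{tX}].
\]

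\textbf{Step 2: bound the moment generating function.} By independence the expectation factorizes:
\[
\E[\e^{tX}]=\prod_i \E[\e^{tX_i}]=\prod_i\bigl(1+p_i(\e^t-1)\bigr).
\]
Using $1+x\le \e^x$ on each factor gives
\[
\E[\e^{tX}]\le \exp\bigl((\e^t-1)\textstyle\sum_i p_i\bigr)=\exp\bigl((\e^t-1)\mu\bigr).
\]
This is the only place the Bernoulli structure is used.

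\textbf{Step 3: optimize $t$.} Take $t=\ln(1+\delta)$, which is positive since $\delta>0$. Then $\e^t-1=\delta$, so the bound becomes
\[
\exp\bigl(\delta\mu-(1+\delta)\ln(1+\delta)\mu\bigr)=\Paren{\frac{\e^\delta}{(1+\delta)^{1+\delta}}}^\mu,
\]
which is the first claim.

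\textbf{Step 4: the simplified form for $\delta\ge1$.} It suffices to show that
\[
f(\delta):=(1+\delta)\ln(1+\delta)-\delta-\frac{\delta}{3}\ge 0 \quad\text{for all }\delta\ge1.
\]
Then the base satisfies $\e^{\delta}(1+\delta)^{-(1+\delta)}\le \e^{-\delta/3}$, and raising to the power $\mu$ gives the second claim.
\begin{itemize}
\item At the left endpoint, $f(1)=2\ln2-\tfrac43\approx 0.053>0$.
\item The derivative is $f'(\delta)=\ln(1+\delta)+1-\tfrac43=\ln(1+\delta)-\tfrac13$.
\item For $\delta\ge1$ we have $\ln(1+\delta)\ge\ln 2\approx0.69>\tfrac13$, so $f'(\delta)>0$.
\end{itemize}
Hence $f$ is increasing on $[1,\infty)$ and $f(\delta)\ge f(1)>0$ throughout.

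\textbf{Main obstacle.} There is no genuine obstacle. The only points needing care are the elementary inequality in Step~4, specifically checking the value $f(1)$ numerically, and noting that Step~2 requires independence.
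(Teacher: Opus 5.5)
Your proof is correct: the exponential-moment bound with $t=\ln(1+\delta)$ gives the first inequality, and checking that $(1+\delta)\ln(1+\delta)-\tfrac{4}{3}\delta$ is positive at $\delta=1$ and increasing on $[1,\infty)$ gives the simplified form for $\delta\ge 1$. The paper gives no proof of this lemma; it quotes it as a standard tool from Chernoff's paper, and your argument is the standard textbook derivation behind that citation.
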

The following direct corollary of Lemma~\ref{lem:chernoff} is useful for analyzing our stable port gadget:
\begin{corollary}[Balls-into-Bins Concentration] \label{cor:balls_bins}
Suppose $m$ balls are independently distributed into $k$ bins using a random hash function $h \in \+H$. If the expected load of a specific bin is $\mu = m/k \ge 1$, then by applying Lemma \ref{lem:chernoff}, the maximum load across all $k$ bins is bounded by $O(\mu + \log(k/\beta))$ with probability at least $1-\beta$.
\end{corollary}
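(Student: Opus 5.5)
The statement is the balls-into-bins corollary (Corollary~\ref{cor:balls_bins}). I will prove it by a union bound over bins, applying the Chernoff bound (Lemma~\ref{lem:chernoff}) to each bin with a carefully chosen deviation parameter.

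Fix a bin $j\in[k]$ and let $X_i=\mathbf 1[h(\text{ball } i)=j]$. Since $h$ is drawn from the fully independent family of Definition~\ref{def:hash}, the $X_i$ are independent Bernoulli variables. Their sum $X^{(j)}$ has mean $\mu=m/k$. It therefore suffices to show
\[
\Prb[X^{(j)}\ge C(\mu+\log(k/\beta))]\le \beta/k
\]
for an absolute constant $C$. A union bound over the $k$ bins then gives the claim with probability at least $1-\beta$.

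For the single-bin bound, set the target threshold $\tau:=2\mu+3\log(k/\beta)$ and write $\tau=(1+\delta)\mu$, so that
\[
\delta=1+3\log(k/\beta)/\mu\ge 1 .
\]
This places us in the simplified regime of Lemma~\ref{lem:chernoff}, which gives
\[
\Prb[X^{(j)}\ge \tau]\le \exp(-\delta\mu/3)=\exp\bigl(-\mu/3-\log(k/\beta)\bigr)\le \beta/k .
\]
Hence the maximum load is at most $\tau=O(\mu+\log(k/\beta))$ with probability at least $1-\beta$, which is the claim.

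The only points needing care are modest. First, the hypothesis $\mu\ge 1$ only ensures $\mu>0$, so $\delta$ is well defined; in fact the argument works for any $\mu>0$. Second, when the mean is tiny relative to $\log(k/\beta)$, the deviation $\delta$ becomes large. Choosing the threshold additively, as $\mu+\log(k/\beta)$ rather than $(1+\text{const})\mu$, is exactly what keeps the bound valid in that regime. Third, if the loads were only pairwise independent, Chernoff would not apply. We avoid this because Definition~\ref{def:hash} stipulates fully independent hashing.
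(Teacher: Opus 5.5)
Your proof is correct and is the argument the paper has in mind: the paper states the corollary as a direct consequence of Lemma~\ref{lem:chernoff} without further proof, and that means a per-bin Chernoff bound followed by a union bound over the $k$ bins. Your threshold $\tau = 2\mu + 3\log(k/\beta)$, i.e.\ $\delta = 1 + 3\log(k/\beta)/\mu \ge 1$, correctly puts you in the simplified regime and gives a per-bin failure probability of $e^{-\mu/3}\beta/k \le \beta/k$.
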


\section{Power-based Spectral Amplifier for Private Matrix Release}\label{sec:power_amplifier}

Our improvements in private graph analysis start from a new technique for the private estimation of general symmetric matrices, which we refer to as the \emph{power-based spectral amplifier}. While our primary motivation is to deploy this tool for private spectral and cut approximation (in Section~\ref{sec:laplacain_analysis}), we present its privacy and utility analyses in full generality as they may be of independent interest. In particular, we show the following generic theorem for private matrix release:

\begin{theorem}[Power-based Spectral Amplifier; Combination of Theorem~\ref{thm:privacy} and Theorem~\ref{thm:utility}] \label{thm:result_generic_framework}
Fix privacy parameters $0<\eps,\delta<1$ and an integer $n\in \mathbb{N}_+$. Let $M(D) \in \mathbb{R}^{n \times n}$ be a symmetric matrix evaluated on a dataset $D$. For any integer $k \ge 1$, we define the $k$-th Frobenius sensitivity over neighboring datasets $D \sim D'$ as $\Delta_k = \max_{D \sim D'} \|M(D)^k - M(D')^k\|_F$. For any even integer $q \ge 2$, there exists an efficient $(\epsilon, \delta)$-differentially private algorithm such that for any input matrix $M(D)$ satisfying $\text{tr}(M(D)^q) \le B_q$, it outputs a synthetic matrix $\widehat{M}$ satisfying, with high probability:
\begin{equation}
    \norm{M(D)-\widehat M}_2
    \le
    \widetilde O\left(
        \left(\frac{\Delta_q\sqrt n}{\eps}\right)^{1/q}
        +
        \frac{\Delta_1}{\eps}
        \sqrt{
            \frac{B_q\eps}{\Delta_q\sqrt n}
        }
    \right).
\end{equation}
\end{theorem}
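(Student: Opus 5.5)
The algorithm is the three-step amplifier from the technical overview, now for a general symmetric $M=M(D)$ and even power $q$. The privacy budget is split in half between two Gaussian releases.

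\emph{Step 1 (amplified release).} Release $\widetilde Y = M^q + W$, where $W$ is a symmetric Gaussian matrix with entry scale $\sigma_q=\Theta(\Delta_q\sqrt{\log(1/\delta)}/\eps)$. This is $(\eps/2,\delta/2)$-DP by Lemma~\ref{lem:gaussian-mechanism}. Lemma~\ref{lem:gaussian_norm} then gives, with high probability, $\|W\|_2\le \eta/2$, where $\eta:=\widetilde\Theta(\Delta_q\sqrt n/\eps)$.

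\emph{Step 2 (truncation).} Let $U\in\mathbb R^{n\times r}$ be the eigenvectors of $\widetilde Y$ with eigenvalue at least $\eta$. Since $U$ is computed from $\widetilde Y$ alone, it is public by Lemma~\ref{lem:postprocess}.

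\emph{Step 3 (compressed release).} Release $\widetilde C=U^\top M U+E$, where $E$ is a symmetric Gaussian matrix in $\mathbb R^{r\times r}$ calibrated to $\|U^\top(M-M')U\|_F\le\|M-M'\|_F\le\Delta_1$. Here $U$ is fixed from the transcript and $U$ has orthonormal columns, which justifies the inequality. The output is $\widehat M=U\widetilde C U^\top$. Lemma~\ref{lem:composition} gives overall $(\eps,\delta)$-DP; this is the content of Theorem~\ref{thm:privacy}.

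\emph{Utility.} Let $P=UU^\top$. Decompose
\[
M-\widehat M=(M-PMP)-UEU^\top .
\]
\emph{Noise term.} By Lemma~\ref{lem:gaussian_norm}, $\|UEU^\top\|_2=\|E\|_2\le\widetilde O(\Delta_1\sqrt r/\eps)$. To bound $r$: every retained direction $u$ has $u^\top\widetilde Y u\ge\eta$, so $u^\top M^q u\ge \eta-\|W\|_2\ge\eta/2$. By Ky Fan's principle (Theorem~\ref{thm:kyfan}), applicable because $M^q\succeq0$ for even $q$,
\[
r\eta/2\le \tr(U^\top M^qU)\le\tr(M^q)\le B_q .
\]
Hence $r\le 2B_q/\eta$, and the noise term is $\widetilde O\bigl(\tfrac{\Delta_1}{\eps}\sqrt{B_q\eps/(\Delta_q\sqrt n)}\bigr)$, which is the second term of the theorem.

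\emph{Truncation bias.} On $\mathrm{range}(I-P)$ we have $\widetilde Y\prec\eta$ in the quadratic-form sense, hence $(I-P)M^q(I-P)\preceq (\eta+\|W\|_2)(I-P)\preceq \tfrac32\eta(I-P)$. This gives $\|M^{q/2}(I-P)\|_2\le O(\eta^{1/2})$.

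The main obstacle is converting this into $\|(I-P)M\|_2\le O(\eta^{1/q})$. Operator monotonicity alone only gives the map $X\mapsto X^{2/q}$ on $PSD$ matrices applied to $M^q$, and the compressed operator $(I-P)M^q(I-P)$ is not a power of $(I-P)M(I-P)$. The approach I would try first is the inequality $\|(I-P)M\|_2^{q}\le\|(I-P)M^q(I-P)\|_2$. It follows from writing $\|(I-P)M\|_2^2=\|(I-P)M^2(I-P)\|_2$ and then using, for a projection $Q$ and $X\succeq0$, that $QX^{s}Q\preceq (QXQ)^s$ fails in general but $\|QX^{s}Q\|\ge\|QXQ\|^{s}$ for $s\ge1$. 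The latter is a Jensen/Hölder-type bound: for any unit $x=Qx$, $x^\top X^s x\ge (x^\top Xx)^s$, applied with $X=M^2$ and $s=q/2$. This yields $\|(I-P)M\|_2\le O(\eta^{1/q})$.

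\emph{Cross term.} Split $M-PMP=(I-P)M+PM(I-P)$; both pieces are bounded by $\|(I-P)M\|_2$. The resulting truncation bias is $\widetilde O((\Delta_q\sqrt n/\eps)^{1/q})$, the first term of the theorem. Combining the bias and noise bounds, together with a union bound over failure events, completes the proof of Theorem~\ref{thm:utility}.
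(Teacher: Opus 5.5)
Your proposal is correct and follows the paper's proof essentially step for step. It uses the same two-stage Gaussian release with the projector obtained by post-processing, the same rank bound $r\lesssim B_q/\eta$ via Ky Fan, and the same Jensen inequality $(x^\top M^2x)^{q/2}\le x^\top M^qx$ to convert $(I-P)M^q(I-P)\preceq O(\eta)(I-P)$ into $\|(I-P)M\|_2\le O(\eta^{1/q})$. The paper phrases that last step as concavity of $z^{2/q}$ applied to each unit vector separately, whereas you route it through $\|(I-P)M^2(I-P)\|_2$; your truncation threshold $\eta$ versus the paper's $2\eta$ is an immaterial constant change.
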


As illustrated in Section~\ref{sec:tech_overview-laplacian-oracle}, the core idea of the spectral amplifier is: utilizing a non-linear matrix mapping $\Phi_q(M) = M^q (q\geq 2)$ to amplify the true signal, specifically the spectral gap. This amplification ensures that the true data features contained in the subspace can still be truncated and extracted with high probability after adding privacy noise. Consequently, this approach restricts subsequent queries to this private, low-dimensional subspace, thereby achieving an error bound much smaller than that incurred by naive additive noise mechanisms that add noise directly into the original $n$-dimensional space.


\subsection{Definitions and Algorithm}\label{sec:framework_notions}

Given a dataset $D$, our goal is to privately release a symmetric real matrix $M(D) \in \mathbb{R}^{n \times n}$. For simplicity, we drop the dependency on $D$ and write $M$ when the context is clear. We denote $D \sim D'$ to indicate that $D$ and $D'$ are neighboring datasets. 

\begin{itemize}
    \item \textbf{Power Amplifier:} For any symmetric matrix $M$ with eigen-decomposition $M = \sum_{i\in [n]} \lambda_iv_iv^\top_i$, we define the amplifying operator as $\Phi_q(M) = M^q = \sum_{i\in [n]} \lambda_i^q v_iv^\top_i$ for any \textbf{even integer} $q \ge 2$. Because $q$ is even, the amplified matrix $M^q$ is positive semi-definite (i.e., $M^q \succeq 0$), regardless of whether $M$ contains negative eigenvalues.\footnote{For example, graph adjacency matrices $A$ has both positive and negative eigenvalues.}
    
    \item \textbf{Sensitivity Bounds:} The mechanism design relies on the $\ell_2$-sensitivities (for vectorized matrix, which is equivalent to Frobenius norm) in both the original and the powered spaces:
    \begin{itemize}
        \item Original Frobenius sensitivity: $\Delta_M = \Delta_1 = \max_{D \sim D'} \norm{M(D) - M(D')}_F$
        \item Amplified Frobenius sensitivity: $\Delta_q = \max_{D \sim D'} \norm{M(D)^q - M(D')^q}_F$
    \end{itemize}

    \item \textbf{Trace Bound:} We assume there exists an (pure analytical) upper bound $B_q$ such that for all valid datasets, the trace of the amplified matrix is bounded: $\text{tr}(M^q) \le B_q$.\footnote{Importantly, this bound $B_q$ is a structural property used solely in our utility analysis to bound the retained rank; it \emph{does not} need to be public or provided as an input parameter to the privacy mechanism.}
\end{itemize}
With these parameters, the Power-based Spectral Amplifier Framework proceeds in Algorithm~\ref{alg:spectral-amplifier}.

\begin{algorithm}[h]
\caption{Power-based Spectral Amplifier}
\label{alg:spectral-amplifier}
\begin{algorithmic}[1]
\REQUIRE A dataset $D$ with target symmetric matrix $M(D) \in \mathbb{R}^{n \times n}$, privacy parameters $\epsilon, \delta \in (0, 1)$, failure probability $\beta =1/n^c$, and an even integer $q \ge 2$.
\ENSURE A synthetic matrix $\widehat{M}$.

\STATE Split the privacy budget such that $\epsilon_1 = \epsilon_2 = \epsilon/2$ and $\delta_1 = \delta_2 = \delta/2$; similarly allocate the failure probability $\beta_1 = \beta_2 = \beta/2$\;

\textbf{Step 1: Amplified Spectral Release:}
\STATE Compute the amplified matrix $Y = M^q$.
\STATE Sample a symmetric Gaussian noise matrix $W \in \mathbb{R}^{n \times n}$ by drawing its upper-triangular entries i.i.d. from $\mathcal{N}(0, \sigma_1^2)$ and setting $W_{j,i} = W_{i,j}$ for $i > j$, with noise scale $\sigma_1 = \Delta_q \frac{\sqrt{2\log(1.25/\delta_1)}}{\epsilon_1}$.
\STATE Construct the noisy amplified matrix $\widetilde{Y} = Y + W$.

\textbf{Step 2: Spectral Truncation:}
\STATE Let the truncation threshold be $\eta = C\sigma_1 (\sqrt{n} + \sqrt{\log(1/\beta_1)})$, such that by Lemma \ref{lem:gaussian_norm}, $\norm{W}_2 \le \eta$ holds with probability at least $1-\beta_1$.
\STATE Compute the spectral decomposition $\widetilde{Y} = \sum_{i=1}^n \widetilde{\lambda}_i u_i u_i^\top$. Let $P = \sum_{j:\, \widetilde{\lambda}_j > 2\eta} u_j u_j^\top$ be the orthogonal projector onto the eigenspace associated with eigenvalues greater than $2\eta$.
\STATE Extract an orthonormal basis $U \in \mathbb{R}^{n \times r}$ for $\text{range}(P)$, where $r = \text{rank}(P)$.

\textbf{Step 3: Exact Release:}
\STATE Project the original matrix into the low-dimensional subspace: $C = U^\top M U$.
\STATE Sample a symmetric Gaussian noise matrix $E \in \mathbb{R}^{r \times r}$ by drawing its upper-triangular entries i.i.d. from $\mathcal{N}(0, \sigma_2^2)$ and setting $E_{j,i} = E_{i,j}$ for $i > j$, with noise scale $\sigma_2 = \Delta_M \frac{\sqrt{2\log(1.25/\delta_2)}}{\epsilon_2}$.
\STATE Construct the noisy compressed matrix $\widetilde{C} = C + E$.

\RETURN $\widehat{M} = U \widetilde{C} U^\top$.
\end{algorithmic}
\end{algorithm}

\subsection{Privacy Guarantee}

\begin{theorem}
\label{thm:privacy}
The Power-based Spectral Amplifier (Algorithm~\ref{alg:spectral-amplifier}) is $(\eps, \delta)$-differentially private.
\end{theorem}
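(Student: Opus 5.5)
The plan is to view Algorithm~\ref{alg:spectral-amplifier} as an adaptive composition of two Gaussian mechanisms, interleaved with post-processing. Then we invoke Lemma~\ref{lem:gaussian-mechanism} for each mechanism and Lemma~\ref{lem:composition} to combine them. The budget split $\eps_1=\eps_2=\eps/2$, $\delta_1=\delta_2=\delta/2$ sums to $(\eps,\delta)$, as required.

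\textbf{Step 1.} Consider the map $D\mapsto Y=M(D)^q$. Because $W$ is symmetric, only the upper-triangular entries (diagonal included) carry independent randomness. We therefore view the mechanism as releasing the vector $f_1(D)=(Y_{ij})_{i\le j}$ plus i.i.d.\ noise $\mathcal N(0,\sigma_1^2)$. Its $\ell_2$-sensitivity is at most
\[
\Bigl(\textstyle\sum_{i\le j}(Y_{ij}-Y'_{ij})^2\Bigr)^{1/2}\le \|M(D)^q-M(D')^q\|_F\le\Delta_q .
\]
So, by Lemma~\ref{lem:gaussian-mechanism} with $\sigma_1=\Delta_q\sqrt{2\log(1.25/\delta_1)}/\eps_1$, releasing $\widetilde Y$ is $(\eps_1,\delta_1)$-DP.

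The eigendecomposition of $\widetilde Y$, the threshold $2\eta$ (which depends only on public quantities $\sigma_1,n,\beta_1$), and the basis $U$ are then post-processing of $\widetilde Y$ (Lemma~\ref{lem:postprocess}). To make $U$ a deterministic function of $\widetilde Y$, we fix a canonical choice of basis, or use independent public randomness.

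\textbf{Step 2.} Condition on any fixed output $\widetilde Y$, and hence any fixed orthonormal $U\in\mathbb R^{n\times r}$. The second mechanism releases the upper triangle of $C=U^\top M(D)U$ with i.i.d.\ $\mathcal N(0,\sigma_2^2)$ noise. The key estimate is that the projection does not increase Frobenius sensitivity:
\[
\|U^\top(M(D)-M(D'))U\|_F\le \|U\|_2^2\,\|M(D)-M(D')\|_F\le\Delta_1 ,
\]
using $\|XYZ\|_F\le\|X\|_2\|Y\|_F\|Z\|_2$ and $\|U\|_2=1$. Restricting to the upper triangle again only decreases the $\ell_2$ norm. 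Hence, for every fixed $U$, Step~3 is $(\eps_2,\delta_2)$-DP by Lemma~\ref{lem:gaussian-mechanism}.

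\textbf{Combining.} Lemma~\ref{lem:composition} is stated exactly for mechanisms that take the previous transcript as input. Applying it gives that the pair $(\widetilde Y,\widetilde C)$ is $(\eps,\delta)$-DP. The output $\widehat M=U\widetilde C U^\top$ is a function of this pair, so post-processing finishes the proof.

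The only delicate point is the adaptivity: the second mechanism's query $U$ depends on the private $\widetilde Y$. This is handled by the sensitivity bound holding uniformly over all orthonormal $U$, including every possible rank $r$, which lets the composition lemma apply pointwise. The standing assumptions $\eps,\delta<1$ ensure the Gaussian mechanism lemma applies with $\eps_i,\delta_i<1$. Note also that the privacy argument never uses the trace bound $B_q$ or the high-probability event $\|W\|_2\le\eta$. Privacy therefore holds unconditionally, with $\eta$ public.
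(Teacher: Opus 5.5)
Your proof is correct and matches the paper's argument: a Gaussian mechanism on the upper triangle of $M^q$, post-processing to obtain $U$, a second Gaussian mechanism on $U^\top M U$ for each fixed $U$, and adaptive basic composition. The only cosmetic difference is in bounding the projected sensitivity: you use $\|U^\top(M-M')U\|_F\le\|U\|_2^2\|M-M'\|_F$, while the paper writes the squared norm as $\mathrm{tr}\bigl((M-M')P(M-M')P\bigr)$ with $P=UU^\top\preceq I$.
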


\begin{proof}
The algorithm interacts with the private dataset $D$ through two sequential queries, and we bound the total privacy leakage via basic composition (Lemma~\ref{lem:composition}). 

In Step 1, the algorithm releases the symmetric matrix $\widetilde{Y} = M^q + W$. Since $\widetilde{Y}$ is symmetric, its output space is uniquely determined by its $n(n+1)/2$ upper-triangular entries (including the diagonal). The $\ell_2$ sensitivity of these elements is indeed bounded by the Frobenius sensitivity of the full matrix $M^q$: 
$$ \sqrt{\sum_{1 \le i \le j \le n} \left(M^q_{i,j} - (M')^q_{i,j}\right)^2} \le \norm{M^q - (M')^q}_F \le \Delta_q. $$
Thus, by the Gaussian mechanism (Lemma~\ref{lem:gaussian-mechanism}), adding independent Gaussian noise $\mathcal{N}(0, \sigma_1^2)$ calibrated to $\sigma_1 = \Delta_q \sqrt{2\log(1.25/\delta_1)} / \eps_1$ to the upper-triangular entries (and mirroring them to the lower triangle) guarantees $(\eps_1, \delta_1)$-differential privacy.

In Step 2, the algorithm extracts the projection matrix $P$ and its orthogonal basis $U$ as a function solely of the privatized matrix $\widetilde{Y}$. By the post-processing property of differential privacy (Lemma~\ref{lem:postprocess}), this step does not consume additional privacy budget.

In Step 3, the algorithm performs a secondary query on the original dataset, releasing the matrix $C = U^\top M U$ confined within the fixed public subspace spanned by $U$. To bound the $\ell_2$ sensitivity of this linear query, consider any two neighboring datasets yielding symmetric matrices $M$ and $M'$. The difference in the query output is $U^\top (M - M') U$. Using the cyclic property of the trace and the identity $U U^\top = P$, the Frobenius norm of the difference is bounded by:
\begin{align*}
    \norm{U^\top M U - U^\top M' U}_F^2 &= \norm{U^\top(M-M')U}_F^2 \\
    &= \text{tr}\Big( U^\top(M-M')^\top U U^\top (M-M') U \Big) \\
    &= \text{tr}\Big( (M-M') P (M-M') P \Big).
\end{align*}
Since $P$ is an orthogonal projection matrix, its eigenvalues are exactly $0$ or $1$, establishing $P \preceq I$. Consequently, the trace is bounded by $\text{tr}((M-M')^2) = \norm{M-M'}_F^2 \le \Delta_M^2$. Thus, the Frobenius sensitivity of $U^\top M U$ remains bounded by $\Delta_M$. 

Analogous to Step 1, since $C = U^\top M U$ is also symmetric, we only need to release its upper-triangular entries. Their $\ell_2$ sensitivity is bounded by $\norm{C - C'}_F \le \Delta_M$. Adding independent Gaussian noise $\mathcal{N}(0, \sigma_2^2)$ to these entries (and mirroring them to construct the symmetric matrix $E$) with scale $\sigma_2 = \Delta_M \sqrt{2\log(1.25/\delta_2)} / \eps_2$ guarantees $(\eps_2, \delta_2)$-DP for this final query.

Finally, by the adaptive composition theorem (Lemma~\ref{lem:composition}), the overall algorithm satisfies $(\eps_1 + \eps_2, \delta_1 + \delta_2)$-DP, which concludes the proof.
\end{proof}

\subsection{Utility Analysis}

\begin{theorem}[Spectral Error Bound] \label{thm:utility}
Let $q \ge 2$ be an even integer. Algorithm~\ref{alg:spectral-amplifier} generates a synthetic matrix $\widehat{M}$ such that with probability at least $1-1/n^c$:
\begin{equation}\label{eq:power-amplifier-general}
    \norm{M-\widehat M}_2
    \le
    \widetilde O_{\del}\left(
        \left(\frac{\Delta_q\sqrt n}{\eps}\right)^{1/q}
        +
        \frac{\Delta_M}{\eps}
        \sqrt{
            \frac{B_q\eps}{\Delta_q\sqrt n}
        }
    \right).
\end{equation}
\end{theorem}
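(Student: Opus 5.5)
We work on the event $\mathcal{E}$ on which both Gaussian noise matrices are controlled. By Lemma~\ref{lem:gaussian_norm}, with probability $1-\beta_1$ we have $\norm{W}_2\le\eta$. Conditional on $U$, we have $\norm{E}_2\le C\sigma_2(\sqrt r+\sqrt{\log(1/\beta_2)})$ with probability $1-\beta_2$. A union bound gives probability $1-\beta=1-1/n^c$.

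Let $P$ be the projector onto $\mathrm{range}(U)$, so that $\widehat M = PMP + UEU^\top$. We decompose
\[
M-\widehat M = (M - PMP) - UEU^\top ,
\]
and bound the two pieces separately: a truncation bias and a subspace noise.

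\textbf{Truncation bias.} Write $M-PMP = (I-P)M + PM(I-P)$. Then $\norm{M-PMP}_2\le 2\norm{(I-P)M}_2$, so it suffices to bound $\norm{(I-P)M}_2$.

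For any unit $x$ in $\mathrm{range}(I-P)$, every eigenvalue of $\widetilde Y$ on that subspace is at most $2\eta$, so $x^\top\widetilde Y x\le 2\eta$. On $\mathcal{E}$ this gives
\[
x^\top M^q x \le 2\eta+\norm{W}_2 \le 3\eta .
\]
To pass from $M^q$ to $M$, we use the operator-monotone / power-mean inequality. Set $B=(I-P)M^2(I-P)\succeq 0$. Since $t\mapsto t^{q/2}$ is convex and $q/2\ge1$, Jensen's inequality applied to the spectral measure of $M^2$ at the unit vector $x$ gives
\[
x^\top M^2x\le (x^\top M^{q}x)^{2/q}.
\]
Hence $\norm{(I-P)M}_2^2=\max_x x^\top M^2 x\le (3\eta)^{2/q}$, so the bias is $O(\eta^{1/q})$. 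Since $\eta=\widetilde O_\delta(\Delta_q\sqrt n/\eps)$, this is the first term of \eqref{eq:power-amplifier-general}.

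\textbf{Subspace noise.} The noise term satisfies $\norm{UEU^\top}_2=\norm{E}_2=\widetilde O(\sigma_2\sqrt r)=\widetilde O_\delta(\Delta_M\sqrt r/\eps)$, so we need a bound on $r$. Every unit $u_j$ in $\mathrm{range}(P)$ has $u_j^\top\widetilde Y u_j>2\eta$, so on $\mathcal E$,
\[
u_j^\top M^q u_j > 2\eta-\eta=\eta .
\]
Summing over the $r$ orthonormal columns and applying Ky Fan (Theorem~\ref{thm:kyfan}) to the PSD matrix $M^q$ gives
\[
r\eta<\tr(M^q)\le B_q, \qquad\text{so}\qquad r\le B_q/\eta = \widetilde O\bigl(B_q\eps/(\Delta_q\sqrt n)\bigr).
\]
Plugging this in gives the second term, with the $\sqrt{\log(1/\beta_2)}$ absorbed into $\widetilde O$. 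This term makes the noise effective only after $r$ is known, which is why we condition on $U$: $E$ is independent of $U$ given Step 1.

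\textbf{Main obstacle.} The delicate step is the bias. The naive bound $\norm{(I-P)M(I-P)}_2\le\eta^{1/q}$ ignores the cross term $PM(I-P)$, because $P$ is an eigenprojector of $\widetilde Y$, not of $M$, and so need not commute with $M$. The fix is to bound $\norm{(I-P)M}_2$ directly via $x^\top M^2x\le(x^\top M^qx)^{2/q}$, which controls the cross terms without any commutation. This costs only a factor of 2.
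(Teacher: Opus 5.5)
Your proposal is correct and follows essentially the same argument as the paper. The shared steps are the split $\widehat M = PMP + UEU^\top$, the bound $\norm{M-PMP}_2\le 2\norm{M(I-P)}_2$ obtained by applying Jensen to $x^\top M^q x\le 3\eta$ on $\mathrm{range}(I-P)$, and the rank bound $r\eta<\tr(M^q)\le B_q$ via Ky Fan. The one step worth stating explicitly is the absorption of the $\sigma_2\sqrt{\log(1/\beta_2)}$ term into the second summand: it is valid because either $r=0$, so the noise term vanishes, or $r\ge 1$, which forces $B_q/\eta>1$.
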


\begin{proof}
Recall that the reconstructed matrix is given by $\widehat{M} = U \widetilde{C} U^T$. By substituting the definition of the noisy matrix $\widetilde{C} = U^T M U + E$, we have:
$$ \widehat{M} = U (U^T M U + E) U^T = U U^T M U U^T + U E U^T. $$
Notice that $U \in \mathbb{R}^{n \times r}$ consists of orthonormal column vectors spanning the range of $P$, meaning $U U^T = P$ and $U^T U = I_r$. Substituting this yields $\widehat{M} = P M P + U E U^T$. We decompose the total estimation error using the triangle inequality:
$$ \norm{M - \widehat{M}}_2 \le \underbrace{\norm{M - P M P}_2}_{\text{Truncation Bias}} + \underbrace{\norm{U E U^T}_2}_{\text{Subspace Noise}}. $$
Because $U$ has orthonormal columns, for any vector $x$, $\norm{U E U^T x}_2 \le \norm{E}_2 \norm{x}_2$. Hence, the subspace noise is at most $\norm{E}_2$. 

First, we bound the truncation bias. Condition on the high-probability event $\mathcal{E}_1$ that $\norm{W}_2 \le \eta$, which occurs with probability at least $1-\beta_1$ (Lemma \ref{lem:gaussian_norm}). For any unit vector $x \perp \text{range}(P)$, its Rayleigh quotient on $\widetilde{Y}$ satisfies $x^T \widetilde{Y} x \le 2\eta$. Since $\widetilde{Y} = M^q + W$ and $\norm{W}_2 \le \eta$, the Rayleigh quotient on the true amplified matrix is bounded by:
$$ x^T M^q x = x^T \widetilde{Y} x - x^T W x \le 2\eta + \eta = 3\eta, \text{ where } x \perp \text{range}(P). $$
To bound the spectral norm of $M$ in the orthogonal complement $P^\perp$, we evaluate $\norm{Mx}_2^2 = x^T M^2 x$. Let $M = \sum_{i=1}^n \lambda_i v_i v_i^T$ be the eigendecomposition of $M$. We have $x^T M^2 x = \sum_{i=1}^n \lambda_i^2 (v_i^T x)^2$. Because $q \ge 2$, the mapping $f(z) = z^{2/q}$ is a concave function for $z \ge 0$. Writing $\lambda_i^2 = (|\lambda_i|^q)^{2/q}$, and noting that $\sum_{i=1}^n (v_i^T x)^2 = \norm{x}_2^2 = 1$, we apply Jensen's inequality:
\begin{align*}
 \norm{Mx}_2^2 = \sum_{i=1}^n \left(|\lambda_i|^q\right)^{2/q} (v_i^T x)^2 
 &\le \left( \sum_{i=1}^n |\lambda_i|^q (v_i^T x)^2 \right)^{2/q} \\
 &= \left( x^T M^q x \right)^{2/q} \le (3\eta)^{2/q}.
\end{align*}
Taking the square root yields $\norm{Mx}_2 \le (3\eta)^{1/q}$ for any unit vector $x \perp \text{range}(P)$. To bound the matrix operator norm $\norm{M(I-P)}_2$, we note that:
\begin{equation}\label{eq:project_space}
        \norm{M(I-P)}_2 = \sup_{y \in \mathbb{R}^n, \, \norm{y}_2 = 1} \norm{M(I-P)y}_2.
\end{equation}
For any arbitrary unit vector $y \in \mathbb{R}^n$, let $z = (I-P)y$. Since $I-P$ is a projector orthogonal to $P$, then $z \perp \text{range}(P)$ and that $\norm{z}_2 \le \norm{y}_2 = 1$. If $z = \mathbf{0}$, the bound holds trivially. Otherwise, we can normalize it into a valid unit vector $\widehat{z} = z / \norm{z}_2 \perp \text{range}(P)$, then:
\[
    \norm{M(I-P)y}_2 = \norm{Mz}_2 = \norm{z}_2 \norm{M\widehat{z}}_2 \le 1 \cdot (3\eta)^{1/q} = (3\eta)^{1/q}.
\]
Taking this back to \cref{eq:project_space} yields $\norm{M(I-P)}_2 \le (3\eta)^{1/q}$. We then decompose the residual matrix:
$$ \norm{M - PMP}_2 \le \norm{M(I-P)}_2 + \norm{PM(I-P)}_2 \le 2\norm{M(I-P)}_2 \le 2(3\eta)^{1/q}. $$
Next, we bound the variance inside the subspace. Still conditioning on $\mathcal{E}_1$, let $\{u_1, \dots, u_r\}$ be the orthonormal basis forming $P$. For each basis vector, $u_i^T M^q u_i = u_i^T \widetilde{Y} u_i - u_i^T W u_i > 2\eta - \eta = \eta$. Summing this over all $r$ dimensions yields $r\eta < \sum_{i=1}^r u_i^T M^q u_i$. By Ky Fan's Maximum Principle (Theorem~\ref{thm:kyfan}), because $M^q \succeq 0$, this sum is bounded by the total trace $B_q$:
$$ r\eta < \sum_{i=1}^{r} u_i^T M^q u_i \le \text{tr}(M^q) \le B_q \implies r \le \frac{B_q}{\eta}. $$
Conditioning on a second event $\mathcal{E}_2$ that the spectral norm of the Gaussian noise $E \in \mathbb{R}^{r \times r}$ follows the concentration bound (Lemma \ref{lem:gaussian_norm}):
$$ \norm{E}_2 \le O_{\del}\left(\sigma_2 \sqrt{r} + \sigma_2 \sqrt{\log(1/\beta_2)}\right) = \widetilde{O}_{\del}\left( \frac{\Delta_M}{\eps_2} \sqrt{\frac{B_q}{\eta}} \right), $$
where we let $\beta_2=1/n^c$. By a union bound, events $\mathcal{E}_1$ and $\mathcal{E}_2$ hold simultaneously with probability at least $1 - \beta$. Summing the truncation bias and the subspace noise, the overall error is:
$$ \norm{M - \widehat{M}}_2 \le \widetilde{O}_{\del}\left( \eta^{1/q} + \frac{\Delta_M}{\epsilon} \sqrt{\frac{B_q}{\eta}} \right). $$
Substituting $\eta = \widetilde{O}_{\del}(\sigma_1 \sqrt{n}) = \widetilde{O}_{\del}\left( \frac{\Delta_q \sqrt{n}}{\epsilon} \right)$ yields the desired bound \cref{eq:power-amplifier-general}, concluding the proof.
\end{proof}

\begin{remark}[The Square Amplifier Special Case and Column-Sparse Matrices]
\label{rek:general_of_amplifier}
We note that the square amplifier ($q=2$) presents an easy-to-use form of error. In particular, substituting $q=2$ directly into Theorem~\ref{thm:utility} yields:
\begin{equation}\label{eq:square-amplifier-general}
    \norm{M-\widehat M}_2
    \le
    \widetilde O_{\del}\left(
        \sqrt{
            \frac{\Delta_2\sqrt n}{\eps}
        }
        +
        \sqrt{
            \frac{\Delta_M^2\norm{M}_F^2}{\epsilon\Delta_2\sqrt n}
        }
    \right)
\end{equation}

This bounds the error by a function of the trace of the squared matrix ($\text{tr}(M^2) = \norm{M}_F^2$) and its squared sensitivity $\Delta_2 = \norm{M'^2 - M^2}_F$. Crucially, we can bridge these quantities with the matrix's maximum column $\ell_2$-norm, denoted as $R = \max_{u\in [n]} \norm{M e_u}_2 = \|M\|_{1\rightarrow2}$\footnote{For any matrix $\mathbf{A}\in \mathbb{R}^{n\times n}$ and integer $p,q\geq 1$, $   \|\mathbf A\|_{p \to q} = \max {\|\mathbf A\mathbf x\|_q \over \|\mathbf x\|_p}$.}. 
Furthermore, for any update $H$ supported on at most $O(1)$ entries with with $\norm{H}_F = O(1)$, the squared sensitivity expands to $$\norm{(M+H)^2 - M^2}_F \approx \norm{MH + HM}_F \le O(R).$$ Meanwhile, the trace bound is inherently $$\text{tr}(M^2) =\|M\|_F^2 =  \sum_u \norm{Me_u}_2^2 \le nR^2.$$ Thus substituting $\Delta_2 = O(R)$ and $\|M\|_F^2 = nR^2$ into \cref{eq:square-amplifier-general} gives the spectral error on private matrix estimation: 
\begin{equation}\label{eq:sparse_matrix}
    \norm{M-\widehat M}_2 \leq  \widetilde{O}_{\del}\left( n^{1/4} \sqrt{\frac{R}{\epsilon}} \right) = \widetilde{O}_{\del}\left( n^{1/4} \sqrt{\frac{\|M\|_{1\rightarrow 2}}{\epsilon}} \right).
\end{equation}
This new error bound highlights two powerful implications:
\begin{enumerate}
    \item \textbf{Column-Sparse Matrices:} It provides an out-of-the-box private release mechanism for various of sparse symmetric matrices (such as event-level co-occurrence or item-similarity matrices commonly used in NLP~\cite{pennington2014glove, wang2025beyond, he2026harmonizing} or recommendation systems~\cite{linden2003amazon}), outperforming the Gaussian mechanism (Lemma~\ref{lem:gaussian-mechanism}, which incurs $\widetilde{O}_{\del}(\sqrt{n}/\epsilon)$ error according to Lemma~\ref{lem:gaussian_norm}) in sparse regimes.
    \item \textbf{Graph Laplacian Approximation:} More importantly, as we will detail in Section~\ref{sec:laplacain_analysis}, using this error bound as an important building block, we will drive an $\widetilde{O}_{\eps,\del}((nd)^{1/4})$ error in approximating the Laplacian matrices of $n$-vertex graphs with bounded degree $d$, being the \emph{first} to bypass the longstanding $\min\{2d, \widetilde{O}_{\del}(\sqrt{n}/\eps)\}$ worst-case baseline~\cite{blocki2012johnson, upadhyay2021differentially}.
\end{enumerate}
The remainder of this paper focuses on graph spectral and cut approximations with privacy. While we believe that our generic spectral amplifier framework (Algorithm~\ref{alg:spectral-amplifier}) has the potential to facilitate more general private matrix analysis and is of independent interest, we do not find it necessary to discuss more extensions specifically in this paper.
\end{remark}

\section{Differentially Private Graph Laplacian Analysis}
\label{sec:laplacain_analysis}

Let \(G=([n],E)\) be a simple undirected graph with maximum degree at most \(d\). Let $A\in\set{0,1}^{n\times n}$
be the adjacency matrix of $G = ([n],E)$, where $$A_{ij} = 
\begin{cases} 
1 & \text{if } (i,j) \in E, \\
0 & \text{otherwise.}
\end{cases}.$$
We define graph Laplacian matrix as
$$L_G :=\operatorname{diag}(A\mathbf 1)-A.$$
Throughout this paper we adopt the well-established edge-level differential privacy in private graph analysis~\cite{dwork2006calibrating, gupta2010differentially, blocki2012johnson, eliavs2020differentially, liu2024optimal, zou2025generalized, aamandbreaking, chandra2026differentially, zoudifferentially}, in which two graphs
are adjacent if they differ by at most one undirected edge. 

Many advanced graph analysis tasks such as computing effective resistances~\cite{liu2024optimal}, expander decomposition or answering cut queries (Section~\ref{sec:cut}) rely inherently on the graph Laplacian matrix. 
In this section, we apply the power-based spectral amplifiers developed in Section~\ref{sec:power_amplifier} to privately approximate the graph Laplacian. To capture different structural requirements in downstream applications, we develop two distinct spectral primitives:

\begin{center}
  \begin{tcolorbox}[sharpish corners, colback=white, width=1\linewidth]
    \begin{center}
      \textbf{\emph{Our Private Spectral Primitives}}
    \end{center}
    \vspace{6pt}
    \begin{itemize}
        \item \textbf{Primitive I: The Square Amplifier (\Cref{sec:laplacian-square}).} By utilizing the $M^2$ amplifier, we provide a private Laplacian estimator with an additive error of $\widetilde{O}((nd)^{1/4})$. This bypasses the trivial worst-case baseline $\min \{2d, \widetilde{O}(\sqrt{n}/\eps)\}$ for unweighted graphs.
        
        \item \textbf{Primitive II: The Bootstrapped Fourth-Power Amplifier (Section~\ref{sec:bootstrapped_fourth_power}).} By applying an iterated cubic-walk preprocessing (bootstrap) followed by the $M^4$ amplifier, we derive an alternative error bound. Specifically, for a tunable parameter $q$, with high probability, the primitive guarantees an additive error of:
        \begin{equation}\label{eq:better_spectral_primitive}
            \opnorm{L_G- L_{\widetilde{G}}}
          \le
          \widetilde O_{\eps,\del}
          \left(
               \frac d{\sqrt q}
               +n^{1/8}d^{1/2}q^{1/8}
               +n^{1/4}q^{1/4}
          \right).
        \end{equation}
        While not universally beating the square amplifier, this specific error structure is essential for triggering the recursive expander decomposition procedure to finally get the $\widetilde{O}(n^{13/12+o(1)})$ error on private cut approximation(Section~\ref{sec:cut}).
    \end{itemize}
  \end{tcolorbox} 
\end{center}

At a high level, the standard approach of adding independent Gaussian noise directly into the Laplacian matrix incurs an worst-case error of $\widetilde{O}_{\del}(\sqrt{n}/\epsilon)$~\cite{upadhyay2021differentially}. To do better, our algorithm follows the three steps for both primitives:
\begin{enumerate}
    \item \textbf{Adjacency Estimation:} We first estimate the unweighted graph adjacency matrix $A$ using the power-based amplifier $\Phi_q(A) = A^q$ (incorporating the iterated cubic-walk preprocessing when $q=4$). 
    \item \textbf{Degree Estimation:} We independently approximate the graph's degree sequence using the standard Laplace mechanism, which has an $O(1)$ $\ell_1$-sensitivity under edge-level DP.
    \item \textbf{SDP Projection:} Simply combining the noisy adjacency matrix and noisy degrees yields an intermediate matrix $\widetilde{L}$ that is typically \emph{not} a valid graph Laplacian (e.g., possessing positive off-diagonal entries or non-zero row sums). This breaks the structural requirements for downstream spectral algorithms. To resolve this, we adopt the semidefinite programming (SDP) projection technique formulated by Upadhyay et al.~\cite{upadhyay2021differentially}, projecting $\widetilde{L}$ back onto the convex cone of valid non-negative weighted graph Laplacians.
\end{enumerate}

\subsection{Graph Laplacian Estimation via Square Amplifier}
\label{sec:laplacian-square}

Let $G = ([n], E)$ be a simple unweighted graph with maximum degree bounded by $d$. In this section we will formalize the application of the square amplifier ($q=2$) to achieve the $\widetilde{O}((nd)^{1/4})$ spectral approximation. First, we explicitly bound the amplified parameters $\Delta_2$ and $B_2$ for the unweighted graph adjacency matrix under edge-level differential privacy. All $\widetilde O_{\delta}(\cdot)$ factors hide polylogarithmic factors in $n, 1/\delta$, and $\log d$, but \emph{not} polynomial factors in $1/\eps$.

\begin{lemma}[Square Sensitivity for Unweighted Graphs]\label{lem:graph-square-sensitivity}
Given two neighboring graphs $G \sim G'$, let $A$ and $A'$ be their respective adjacency matrices. Then the original sensitivity is $\Delta_A = \sqrt{2}$, the squared Frobenius sensitivity satisfies $\Delta_2 = \norm{A'^2 - A^2}_F \le O(\sqrt{d})$, and the trace bound is $B_2 = \operatorname{tr}(A^2) \le nd$.
\end{lemma}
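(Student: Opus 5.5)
Three independent claims are at stake, and I would verify each by direct computation with the neighbouring perturbation written explicitly. Fix $G\sim G'$ differing in the single edge $\{a,b\}$, and write $A' = A + sH$ with $s\in\{\pm1\}$ and $H = \mathbf e_a\mathbf e_b^\top + \mathbf e_b\mathbf e_a^\top$. Since $H$ has exactly two nonzero entries, both equal to one, $\norm{A-A'}_F = \norm{H}_F = \sqrt 2$, which gives $\Delta_A=\sqrt2$.

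For the trace bound, $\operatorname{tr}(A^2) = \sum_{i}\sum_j A_{ij}^2 = \sum_i \deg(i) \le nd$, because $A$ is symmetric with $0/1$ entries. This is immediate, and it holds for both $G$ and $G'$ provided both have maximum degree at most $d$.

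The main step is $\Delta_2$. Expand
\[
A'^2 - A^2 = s(AH + HA) + H^2,
\]
using $s^2=1$. Here $H^2 = \mathbf e_a\mathbf e_a^\top + \mathbf e_b\mathbf e_b^\top$, so $\norm{H^2}_F=\sqrt2$. Next, $AH = (A\mathbf e_a)\mathbf e_b^\top + (A\mathbf e_b)\mathbf e_a^\top$. This matrix has only two nonzero columns, namely the adjacency columns of $a$ and $b$, and each column contains at most $d$ ones. Hence $\norm{AH}_F^2 = \deg(a)+\deg(b)\le 2d$. By symmetry $\norm{HA}_F=\norm{AH}_F$. The triangle inequality then gives
\[
\norm{A'^2-A^2}_F \le 2\sqrt{2d}+\sqrt2 = O(\sqrt d).
\]

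The only point requiring care, and the one I expect to be the main obstacle, is fixing the convention for which graph's degree bound is used. $A$ must be taken as the adjacency matrix of the graph without the edge, or else both graphs must be assumed $d$-bounded. In either case the degrees of $a$ and $b$ are at most $d$, so the column-counting bound applies. If the algorithm must be private over all graphs, rather than only $d$-bounded pairs, this lemma would instead be applied after a degree-capping step. I would state it under the assumption that both neighbouring graphs have maximum degree at most $d$, consistent with the section's setting.
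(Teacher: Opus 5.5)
Your proof is correct and follows the paper's argument essentially verbatim: you use the same expansion $A'^2-A^2 = AH+HA+H^2$ with $\norm{H^2}_F=\sqrt2$, the same column bound $\norm{Ae_u}_2\le\sqrt d$ to control $\norm{AH}_F$, and the same identity $\operatorname{tr}(A^2)=\sum_i\deg(i)\le nd$. Your disjoint-column observation gives the slightly sharper $\norm{AH}_F\le\sqrt{2d}$ in place of the paper's $2\sqrt d$. Your degree-convention caveat is harmless, since the expansion only uses the degrees of whichever of the two graphs you expand around.
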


\begin{proof}
Let $A' = A + H$, where the update corresponding to a single edge between vertices $u$ and $v$ is $H = s(e_u e_v^\top + e_v e_u^\top)$ for $s \in \{-1, +1\}$.
Clearly, $\norm{H}_F = \sqrt{2} = O(1)$, establishing $\Delta_M = \sqrt{2}$. 
For the squared sensitivity, we have:
\[
    A'^2 - A^2 = (A+H)^2 - A^2 = AH + HA + H^2.
\]
Because $A$ is the adjacency matrix of a graph with maximum degree $d$, its local column $\ell_2$-norm is bounded by the degree: $\norm{A e_u}_2 = \sqrt{\sum_{i=1}^n A_{iu}^2} \le \sqrt{d}$. Consequently,
\[
    \norm{AH}_F = \norm{A(s e_u e_v^\top + s e_v e_u^\top)}_F \le \norm{A e_u}_2 \norm{e_v}_2 + \norm{A e_v}_2 \norm{e_u}_2 \le 2\sqrt{d}.
\]
By symmetry, $\norm{HA}_F \le 2\sqrt{d}$. Furthermore, $H^2 = e_u e_u^\top + e_v e_v^\top$, meaning $\norm{H^2}_F = \sqrt{2}$. Applying the triangle inequality, we bound the squared sensitivity as follows:
\[
    \Delta_2 \le \norm{AH}_F + \norm{HA}_F + \norm{H^2}_F \le 4\sqrt{d} + \sqrt{2} = O(\sqrt{d}).
\]
Finally, since the trace of $A^2$ is exactly the sum of vertex degrees:
\[
    B_2 = \operatorname{tr}(A^2) = \norm{A}_F^2 = \sum_{u=1}^n \operatorname{deg}(u) \le nd.
\]
\end{proof}

To enforce that our final output is a valid graph Laplacian, we define $\mathbb{L}_n$ as the convex set of all valid Laplacians of non-negative weighted graphs on $n$ vertices:
$$ \mathbb{L}_n = \left\{ X \in \mathbb{R}^{n \times n} \mid X = X^\top, X_{ij} \le 0 \text{ for } i \ne j, X \mathbf{1} = \mathbf{0} \right\}. $$
We combine the square amplifier with the SDP projection in Algorithm~\ref{alg:laplacian-square}.

\begin{algorithm}[h]
\caption{Private Graph Laplacian Oracle (Square Amplifier)}
\label{alg:laplacian-square}
\begin{algorithmic}[1]
\REQUIRE Graph $G = ([n], E)$ with maximum degree $d$, privacy parameters $\epsilon, \delta \in (0, 1)$.
\ENSURE The non-negative weighted graph Laplacian $\widehat{L} \in \mathbb{L}_n$.

\textbf{Adjacency Estimation:} 
\STATE Execute the Power-based Spectral Amplifier Framework (Algorithm~\ref{alg:spectral-amplifier}) on $A$ with $q=2$, privacy parameters $(\epsilon/2, \delta)$, sensitivity bounds $\Delta_A = \sqrt{2}$, $\Delta_2 = \Theta(\sqrt{d})$, and $B_2 = nd$. Let the output be $\widehat{A}$.

\textbf{Degree Estimation:} 
\STATE Compute the true degree vector $D \mathbf{1} = (d_1, \dots, d_n)^\top$.
\FOR{each $i \in [n]$}
    \STATE Sample independent Laplace noise $Z_i \sim \operatorname{Lap}(4/\epsilon)$.
    \STATE Compute noisy degree $\widehat{d}_i = d_i + Z_i$.
\ENDFOR
\STATE Let $\widehat{D} = \operatorname{diag}(\widehat{d}_1, \dots, \widehat{d}_n)$.

 \textbf{Intermediate Matrix:} 
\STATE Construct the intermediate pseudo-Laplacian $\widetilde{L} = \widehat{D} - \widehat{A}$.

 \textbf{SDP Projection:} 
\STATE Solve the following semidefinite program to find the optimal valid projection $\widehat{L}$:
\[
\min_{X \in \mathbb{L}_n, \, \gamma \in \mathbb{R}} \gamma \quad \text{subject to} \quad -\gamma I_n \preceq \widetilde{L} - X \preceq \gamma I_n
\]

\RETURN $\widehat{L}$.
\end{algorithmic}
\end{algorithm}

\begin{theorem}[Private Laplacian Estimation via $A^2$]\label{thm:laplacian-square}
For any $\epsilon, \delta \in (0, 1)$ and maximum degree $d \ge 1$, Algorithm~\ref{alg:laplacian-square} is $(\epsilon, \delta)$-differentially private. Moreover, for any input graph $G$, it outputs a non-negative weighted graph Laplacian $\widehat{L}$ such that with probability at least $1-1/n^c$:
$$ \norm{L_G - \widehat{L}}_2 \le \widetilde{O}_{\del}\left( \frac{(nd)^{1/4}}{\sqrt{\eps}} + \frac{1}{\epsilon}\right). $$
\end{theorem}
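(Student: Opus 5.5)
Privacy and utility are handled separately, with the analysis reduced to the generic amplifier theorems already proved.

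\emph{Privacy.} Algorithm~\ref{alg:laplacian-square} touches $G$ through two releases. The first is the call to Algorithm~\ref{alg:spectral-amplifier} with budget $(\eps/2,\delta)$. By Theorem~\ref{thm:privacy}, this call is $(\eps/2,\delta)$-DP once we supply valid sensitivity bounds, and Lemma~\ref{lem:graph-square-sensitivity} provides them: $\Delta_A=\sqrt2$ and $\Delta_2=O(\sqrt d)$. The constant hidden in $\Theta(\sqrt d)$ must be at least the $4\sqrt d+\sqrt2$ from that proof. The second release is the degree vector. One edge changes exactly two degrees by one each, so the $\ell_1$-sensitivity is $2$, and $\mathrm{Lap}(4/\eps)$ noise gives $(\eps/2,0)$-DP by Lemma~\ref{lem:laplace-priv}. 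Basic composition (Lemma~\ref{lem:composition}) then gives $(\eps,\delta)$-DP. The SDP projection is post-processing (Lemma~\ref{lem:postprocess}).

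\emph{Utility.} Write $L_G=D-A$ and $\widetilde L=\widehat D-\widehat A$. Then
\[
\|L_G-\widetilde L\|_2\le \|D-\widehat D\|_2+\|A-\widehat A\|_2 .
\]
\begin{itemize}
\item \emph{Degree term.} $D-\widehat D$ is diagonal, so its norm is $\max_i|Z_i|$. Lemma~\ref{lem:laplace-tail} and a union bound over $n$ coordinates give $\max_i|Z_i|=O(\log n/\eps)$ with probability $1-n^{-c}$.
\item \emph{Adjacency term.} Apply Theorem~\ref{thm:utility} with $q=2$, $\Delta_M=\sqrt2$, $\Delta_2=O(\sqrt d)$ and $B_2=nd$. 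The truncation part is $(\sqrt{d}\sqrt n/\eps)^{1/2}=(nd)^{1/4}/\sqrt\eps$. The subspace part is
\[
\frac{1}{\eps}\sqrt{\frac{nd\,\eps}{\sqrt{d}\sqrt n}}=\frac{(nd)^{1/4}}{\sqrt\eps}.
\]
\end{itemize}
A union bound over the two failure events gives $\|L_G-\widetilde L\|_2\le \widetilde O_\delta\big((nd)^{1/4}/\sqrt\eps+1/\eps\big)$.

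One caveat concerns Theorem~\ref{thm:utility}. It is stated with a term $\sigma_2\sqrt{\log(1/\beta_2)}$ folded into $\widetilde O$. This term contributes only $\widetilde O(1/\eps)$, which is exactly why the statement carries an additive $1/\eps$. It matters in the regime $r$ small, since the bound $r\le B_2/\eta$ can be below $1$.

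\emph{Projection.} This is the step needing care, though it is short. The true Laplacian $L_G$ lies in $\mathbb{L}_n$, so it is feasible for the SDP with $\gamma=\|\widetilde L-L_G\|_2$. The optimal $\widehat L$ therefore satisfies $\|\widetilde L-\widehat L\|_2\le\|\widetilde L-L_G\|_2$. By the triangle inequality,
\[
\|L_G-\widehat L\|_2\le 2\|L_G-\widetilde L\|_2 ,
\]
which yields the claimed bound. Membership in $\mathbb{L}_n$ means $\widehat L$ is the Laplacian of a non-negatively weighted graph. I expect no real obstacle beyond two bookkeeping points: the constants in $\Delta_2$ must be chosen so that privacy holds for every graph of maximum degree $d$, and the $\widetilde O$ must not hide polynomial factors in $1/\eps$.
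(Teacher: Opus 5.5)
Your proposal is correct and follows the paper's proof essentially step for step: privacy via the amplifier theorem, the $\mathrm{Lap}(4/\eps)$ degree release and basic composition, and utility via the triangle inequality, Theorem~\ref{thm:utility} at $q=2$ with the parameters of Lemma~\ref{lem:graph-square-sensitivity}, a union bound on $\max_i|Z_i|$, and the factor-$2$ loss from the operator-norm projection onto $\mathbb{L}_n$. Your side remark about the $\sigma_2\sqrt{\log(1/\beta_2)}$ term is harmless but unnecessary: if $r\ge 1$ then $r\eta<B_2$ forces $B_2/\eta>1$, so the term is absorbed into $\sigma_2\sqrt{B_2/\eta}$ up to a logarithmic factor; if $r=0$ there is no noise $E$ at all; and the additive $1/\eps$ in the statement actually comes from the degree noise.
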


\begin{proof}
We first see the privacy guarantee. We partition the budget: In line 1, running the Square Amplifier requires $(\epsilon/2, \delta)$-DP by \Cref{thm:privacy}. In line 5, we estimate the degree sequence. Adding or removing a single edge alters two vertex degrees by at most $1$, so the global $\ell_1$-sensitivity of the degree vector is $2$. Applying the Laplace mechanism (Lemma~\ref{lem:laplace-priv}) with scale $2 / (\epsilon/2) = 4/\epsilon$ guarantees $(\epsilon/2, 0)$-DP. Therefore, the joint release of $\widehat{A}$ and $\widehat{D}$ satisfies $(\epsilon, \delta)$-DP by the basic composition theorem (Lemma~\ref{lem:composition}). The SDP projection in Step 9 operates exclusively on the privatized matrix $\widetilde{L}$ without accessing the underlying graph topology, consuming zero additional privacy budget due to post-processing (Lemma~\ref{lem:postprocess}). Thus, the algorithm is overall $(\epsilon, \delta)$-DP.

For the utility bound, we first compute the spectral distance of the intermediate matrix $\widetilde{L}$. By the triangle inequality:
\begin{equation}\label{eq:triangle_laplacian}
    \norm{\widetilde{L} - L_G}_2 = \norm{(\widehat{D} - \widehat{A}) - (D - A)}_2 \le \norm{\widehat{D} - D}_2 + \norm{A - \widehat{A}}_2. 
\end{equation}
Applying Theorem~\ref{thm:utility} (specifically the $q=2$ form in \cref{eq:square-amplifier-general}) and substituting the parameters from Lemma~\ref{lem:graph-square-sensitivity} ($\Delta_A = \sqrt{2}$, $\Delta_2 = \Theta(\sqrt{d})$, $B_2 = nd$) evaluated under budget $\epsilon/2$, the approximation error on adjacency matrix is:
\begin{align*}
    \norm{\widehat{A} - A}_2 &\le \widetilde O_{\del}\left( \sqrt{ \frac{\sqrt{d}\sqrt n}{\eps} } + \frac{1}{\eps} \sqrt{ \frac{nd\eps}{\sqrt{d}\sqrt n} } \right) 
    = \widetilde{O}_{\del}\left( \frac{(nd)^{1/4}}{\sqrt{\eps}} \right).
\end{align*}
For the degree matrix, the error is only contained on the diagonal: $\norm{\widehat{D} - D}_2 = \max_{i \in [n]} |Z_i|$. By the tail bounds of the Laplace distribution (Lemma~\ref{lem:laplace-tail}) and a union bound over $n$ variables, with high probability:
$$ \norm{\widehat{D} - D}_2 \le {O}\left(\frac{\log n}{\epsilon}\right). $$
Substituting both bounds into \cref{eq:triangle_laplacian}:
$$ \norm{\widetilde{L} - L_G}_2 \le 
\norm{\widehat{A} - A}_2+\norm{\widehat{D} - D}_2
= \widetilde{O}_{\del}\left( \frac{(nd)^{1/4}}{\sqrt{\epsilon}} + \frac{1}{\epsilon} \right).$$
Finally, we analyze the SDP projection. The constraint $-\gamma I_n \preceq \widetilde{L} - X \preceq \gamma I_n$ is identical to $\norm{\widetilde{L} - X}_2 \le \gamma$. The SDP computes the operator-norm projection onto $\mathbb{L}_n$:
$$ \widehat{L} = \arg\min_{X \in \mathbb{L}_n} \norm{\widetilde{L} - X}_2. $$
Because the true unweighted graph Laplacian inherently satisfies $L_G \in \mathbb{L}_n$, the optimality of $\widehat{L}$ dictates that $\norm{\widehat{L} - \widetilde{L}}_2 \le \norm{L_G - \widetilde{L}}_2$. Applying the triangle inequality yields the final spectral error:
\begin{align*}
    \norm{L_G - \widehat{L}}_2  &\le \norm{L_G - \widetilde{L}}_2 + \norm{\widetilde{L} - \widehat{L}}_2 \le 2 \norm{L_G - \widetilde{L}}_2
    \le \widetilde{O}_{\del}\left( \frac{(nd)^{1/4}}{\sqrt{\epsilon}} +\frac{1}{\epsilon}\right).
\end{align*}
The constant factor is absorbed by the $\widetilde{O}_{\del}$ notation, completing the proof.
\end{proof}

\subsection{A Bootstrapped Fourth-Power Spectral Primitive}
\label{sec:bootstrapped_fourth_power}

In this section, we refine the algorithm for private graph Laplacian estimation beyond the direct square amplifier, deriving an alternative spectral error bound. In particular, we show that there is a polynomial-time, $(\eps,\delta)$-DP algorithm that, for any bounded-degree graph, outputs a valid non-negative weighted synthetic graph whose Laplacian satisfies the error interface stated in \cref{eq:better_spectral_primitive}. Although this error bound does not universally beat $\widetilde{O}_{\eps,\del}((nd)^{1/4})$ across all parameter regimes, we will demonstrate in Section~\ref{sec:cut} that its dependence on $d$ and $q$ makes it useful for driving the recursive expander decomposition framework.

A direct application of the power-based amplifier framework (Algorithm~\ref{alg:spectral-amplifier}) with the operator $\Phi_4(A) = A^4$ to the original graph would not give satisfying bounds, as its Frobenius sensitivity $\|A^4 - (A')^4\|_F$ is governed by the graph's local two-walk column norm ($\max_{v\in V}\norm{A^2e_v}_2^2$), which can be as large as $O(d^3)$ in the worst case, leading to large privacy noise. To resolve this barrier, we introduce a \emph{bootstrapped} mechanism. Before applying the $A^4$ amplifier, we utilize an iterated cubic-walk ($A^3$) scoring step to privately identify and ``peel off'' a sparse set of edges that cause high column norms. This preprocessing step reduces the local two-walk column norms of the residual graph down to $\widetilde{O}_{\del}(d^2q/\eps)$ for some $q\leq d$. After that, we apply the $A^4$ amplifier via our framework to obtain a refined spectral approximation. Again, throughout this section, $G=([n],E)$ is a simple undirected graph with maximum degree at most $d$, adjacency matrix $A$, and Laplacian $L_G=\operatorname{diag}(A \mathbf{1})-A$. All $\widetilde O_{\delta}(\cdot)$ factors hide polylogarithmic factors in $n, 1/\delta$, and $\log d$, but \emph{not} polynomial factors in $1/\eps$.

\subsubsection{Local Column Norm for Two-Walk}

For a graph $G$ with adjacency matrix $A$, we define the maximum local energy:
\begin{equation}
\label{eq:kappa2-def}
  \kappa_2(G)
  :=
  \max_{v\in V}\norm{A^2e_v}_2^2
  =
  \max_{v\in V}\sum_{u\in V}\codeg_G(u,v)^2,
\end{equation}
where $\codeg_G(u,v)$ denotes the number of common neighbors between $u$ and $v$ in $G$. For every $d$-bounded graph and any fixed vertex $v$, we have $$\sum_u\codeg(u,v)^2 \le d\sum_u\codeg(u,v) \le d\sum_{x\in N(v)}\operatorname{deg}(x) \le d^3.$$ Thus, we inherently have $\kappa_2(G) \le d^3$. The quantity $\kappa_2$ is the critical obstacle to high-power private spectral release, as it controls the sensitivity of both cubic and fourth-power queries:

\begin{lemma}[Sensitivity of the cubic query under a local-energy promise]
\label{lem:cubic-sensitivity}
Let $A,A'$ be adjacency matrices of neighboring graphs of maximum degree at most $d$. Suppose both graphs satisfy $\kappa_2 \le K$ for a given threshold $K \ge d^2$. Then:
\[
  \fro{(A')^3-A^3}
  \le
  C\sqrt K.
\]
\end{lemma}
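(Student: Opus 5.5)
We write $A'=A+H$ with $H=s(e_ue_v^\top+e_ve_u^\top)$, exactly as in the proof of Lemma~\ref{lem:graph-square-sensitivity}, and expand
\[
(A+H)^3-A^3 = A^2H+AHA+HA^2 + AH^2+HAH+H^2A + H^3 .
\]
We then bound each group of terms in Frobenius norm by the triangle inequality. The linear-in-$H$ terms carry the main contribution, and the higher-order terms are lower order because $d\le \sqrt K$ (using $K\ge d^2$).

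For the first-order terms we use that $H$ has rank two and is supported on the coordinates $u,v$.
\begin{itemize}
\item $\fro{A^2H}\le \norm{A^2e_u}_2+\norm{A^2e_v}_2\le 2\sqrt{\kappa_2(G)}\le 2\sqrt K$, since $A^2He_v=sA^2e_u$ and similarly for the other column. The same bound holds for $HA^2$ by transposition.
\item The middle term $AHA=s(Ae_u(Ae_v)^\top+Ae_v(Ae_u)^\top)$ is a sum of two rank-one matrices. Hence $\fro{AHA}\le 2\norm{Ae_u}_2\norm{Ae_v}_2\le 2d\le 2\sqrt K$.
\end{itemize}

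For the second-order terms:
\begin{itemize}
\item $H^2=e_ue_u^\top+e_ve_v^\top$, so $\fro{AH^2}\le \norm{Ae_u}_2+\norm{Ae_v}_2\le 2\sqrt d$, and likewise for $H^2A$.
\item $HAH$ is supported on the $2\times 2$ block indexed by $\{u,v\}$ with entries of absolute value at most $1$, so $\fro{HAH}\le 2$.
\item $H^3=H$ has Frobenius norm $\sqrt2$.
\end{itemize}

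Summing everything gives $\fro{(A')^3-A^3}\le 4\sqrt K+2d+4\sqrt d+2+\sqrt2\le C\sqrt K$, using $d\le\sqrt K$ and $K\ge 1$.

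One subtlety is that $\kappa_2$ is evaluated on $A$, the graph in which $A^2$ appears. The expansion is taken around $A$, so only $\kappa_2(G)\le K$ is used, although the hypothesis grants the bound for both graphs. The step needing the most care is the cross term $AHA$: naively bounding it by $\norm{A}_2$-type quantities would give a $d$-scale bound without exploiting the rank-one structure. The rank-one factorization above resolves this. I don't anticipate further obstacles, since the bound on $\fro{A^2H}$ is exactly where the definition of $\kappa_2$ enters.
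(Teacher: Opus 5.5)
Your proposal is correct and follows the same argument as the paper: the same expansion of $(A+H)^3-A^3$, the same $2\sqrt K$ bound on $A^2H$ and $HA^2$ via columns of $A^2$, the same rank-one bound $2d$ on $AHA$, and the higher-order terms absorbed using $K\ge d^2$. Beyond the paper, you write out the higher-order bounds explicitly. Your $K\ge 1$ also holds automatically, since neighboring graphs differ by an edge, which forces $d\ge 1$.
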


\begin{proof}
Let $A'=A+H$, where $H=\pm(e_ue_v^\top+e_ve_u^\top)$. We expand the difference:
\[
(A+H)^3-A^3
=A^2H+AHA+HA^2+AH^2+HAH+H^2A+H^3.
\]
The first and third terms satisfy
\[
  \fro{A^2H}, \fro{HA^2}
  \le
  2\max_v\norm{A^2e_v}_2
  \le 2\sqrt K.
\]
Moreover, since $\max_{u \in [n]} \norm{A e_u}_2 = \max_{u \in [n]} \sqrt{\sum_{v=1}^n A_{uv}^2} \le \sqrt{d}$, we have:
\[
  \fro{AHA}
  \le 2\norm{Ae_u}_2\norm{Ae_v}_2
  \le 2d.
\]
By our premise $K \ge d^2$, the remaining terms involve at least two factors of $H$ and have Frobenius norm $O(\sqrt d+1)$, which is dominated by $O(\sqrt K)$. Therefore, $\fro{(A')^3-A^3} \le C\sqrt K$.
\end{proof}

\begin{lemma}[Sensitivity of the fourth-power query under a local-energy promise]
\label{lem:fourth-sensitivity}
Let $A,A'$ be adjacency matrices of neighboring graphs of maximum degree at most $d$. Suppose both graphs satisfy $\kappa_2 \le K$ for a given threshold $K \ge d^2$. Then:
\[
  \fro{(A')^4-A^4}
  \le
  C d\sqrt K.
\]
\end{lemma}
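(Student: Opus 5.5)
Following the proof of Lemma~\ref{lem:cubic-sensitivity}, write $A'=A+H$ with $H=s(e_ue_v^\top+e_ve_u^\top)$, $s\in\{\pm1\}$, and expand
\[
(A+H)^4-A^4=\sum_{\text{words } w\neq A^4} w ,
\]
where the sum runs over the $15$ products of four factors from $\{A,H\}$ containing at least one $H$. Each word is bounded in Frobenius norm separately and the bounds are summed. The basic column estimates, valid for both graphs, are $\norm{Ae_x}_2\le\sqrt d$, $\norm{A^2e_x}_2\le\sqrt K$, and $\norm{A^3e_x}_2\le d\sqrt K$. The last holds because $\norm{A}_2\le d$ for a $d$-bounded graph, so $\norm{A^3e_x}_2\le\norm{A}_2\norm{A^2e_x}_2$. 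Since $H$ has rank two with columns supported on $e_u,e_v$, any word of the form $XHY$ satisfies
\[
\norm{XHY}_F\le \norm{Xe_u}_2\norm{Y^\top e_v}_2+\norm{Xe_v}_2\norm{Y^\top e_u}_2 .
\]
Here $X,Y$ are powers of $A$, or more generally products involving $H$, handled recursively.

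\textbf{Single-$H$ words.} The word $A^3H$ and its transpose $HA^3$ give at most $2\norm{A^3e_{u/v}}_2\le 2d\sqrt K$. The words $A^2HA$ and $AHA^2$ give at most $2\norm{A^2e_u}_2\norm{Ae_v}_2\le 2\sqrt{Kd}\le 2d\sqrt K$.

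\textbf{Words with two or more $H$'s.} Use $\norm{H}_2=1$ and $\norm{XY}_F\le\norm{X}_2\norm{Y}_F$. This reduces each such word to a single-$H$ word of lower degree times factors of operator norm at most $d$ (from $A$) or $1$ (from $H$). For example,
\[
\norm{A^2H^2}_F\le\norm{A^2H}_F\le 2\sqrt K, \qquad \norm{AHAH}_F\le\norm{AHA}_F\le 2d ,
\]
and $\norm{HA^2H}_F\le 2\sqrt K$. The pure word satisfies $\norm{H^4}_F=\sqrt2$. Using the premise $K\ge d^2$ and $d\ge1$, every such bound is $O(d\sqrt K)$.

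Summing the $15$ terms gives $\fro{(A')^4-A^4}\le C d\sqrt K$.

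The main obstacle is bounding $\norm{A^3e_x}_2$. The assumption $\kappa_2\le K$ controls only two-step columns, so the third power of $A$ must be absorbed via the operator norm $\norm{A}_2\le d$. A Gershgorin / max-row-sum argument gives this, since every row of the $0/1$ adjacency matrix has at most $d$ ones. This step is where the factor $d$ in the statement comes from.
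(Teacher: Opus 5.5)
Your proposal is correct and follows essentially the same route as the paper: expand $(A+H)^4-A^4$, bound the single-$H$ words via the rank-two column estimates $\norm{Ae_x}_2\le\sqrt d$ and $\norm{A^2e_x}_2\le\sqrt K$, and use $\norm{A}_2\le d$ to control the $A^3H$ and $HA^3$ terms. For the multi-$H$ words, where the paper only asserts an $O(d^2)\le O(d\sqrt K)$ bound, your explicit reduction via $\norm{H}_2=1$ supplies the missing detail.
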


\begin{proof}
Let $A'=A+H$, where $H=\pm(e_ue_v^\top+e_ve_u^\top)$. We expand $(A+H)^4-A^4$ binomially. We categorize the terms into those containing exactly one $H$ (one-$H$ terms) and those containing multiple (multi-$H$ terms).

The one-$H$ terms take the form $A^i H A^{3-i}$ for $i\in\{0,1,2,3\}$. Using the facts that $\norm{A^2e_v}_2\leq\sqrt{\kappa_2}\le \sqrt K$, $\norm{Ae_v}_2\le \sqrt{d}$, and $\norm{A}_2\leq d$, we bound them as follows:
\[
  \fro{A^2HA}
  \le
  \norm{A^2e_u}_2\norm{Ae_v}_2+\norm{A^2e_v}_2\norm{Ae_u}_2
  \le 2\sqrt{Kd},
\]
\[
  \fro{A^3H}
  \leq \norm{A^3e_u}_2\norm{e_v}_2+\norm{A^3e_v}_2\norm{e_u}_2
  \leq (\norm{A^2e_u}_2+\norm{A^2e_v}_2)\norm{A}_2\leq 2d\sqrt{K}.
\]
By symmetry, $\fro{HA^3}$ and $\fro{AHA^2}$ are similarly bounded by $2d\sqrt{K}$.
Furthermore, the Frobenius norm of all multi-$H$ terms is at most $O(d^2) \le O(d\sqrt K)$ because $K \ge d^2$. Summing these bounded terms completes the proof.
\end{proof}


\subsubsection{The Improved Private Spectral Primitive}

We now assemble the components to prove the following formal theorem corresponding to the error interface in \cref{eq:better_spectral_primitive}.

\begin{theorem}[Improved Bounded-Degree Private Spectral Release]
\label{thm:improved-spectral}
For any $\eps, \delta \in (0, 1)$, $n$-vertex input graph $G$ with maximum degree $d \ge 1$, and target parameter $q \ge \operatorname{polylog}(n,1/\del)/\eps$, there is a polynomial-time $(\eps,\del)$-DP algorithm that outputs a non-negative weighted graph $\widehat{G}$ with Laplacian $L_{\widehat{G}}$ such that with probability at least $1-1/n^c$:
$$    \norm{L_G-L_{\widehat{G}}}_2
  \le
  \widetilde O_{\delta}\left(
      \frac{d}{\sqrt{\eps q}}
      +
      \frac{n^{1/8}d^{1/2}q^{1/8}}{\eps^{3/8}}
      +
      \frac{n^{1/4}q^{1/4}}{\eps^{3/4}}
      +\frac{1}{\eps}
  \right).$$
\end{theorem}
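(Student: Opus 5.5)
The proof follows the three-stage pipeline of the technical overview: an iterated cubic-score cover, a fourth-power amplifier on the residual, and Laplace degree estimation. The SDP projection of Algorithm~\ref{alg:laplacian-square} is applied at the end. Privacy is split over $T=O(\log\log d)$ cover rounds, one release of the covered support, one amplifier call and one degree release, each with budget $\eps/\Theta(T)$. This costs only a $\log\log$ factor, which the $\widetilde O$ absorbs.

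\textbf{Cover stage.} Set $R_0=G$ and $K_0=d^3$. In round $i$ we release $\widetilde B_i=A_{R_i}^3+Z_i$, with Gaussian noise calibrated by Lemma~\ref{lem:cubic-sensitivity} to sensitivity $C\sqrt{K_i}$. By Lemma~\ref{lem:gaussian_norm} and a union bound over entries, each entry of $Z_i$ is at most $\tau_i=\widetilde O_\delta(\sqrt{K_i}/\eps)$ in magnitude with high probability. Let $P_i$ be the set of pairs $uv$ with $(\widetilde B_i)_{uv}>\theta_i:=2\tau_i$, and set $R_{i+1}=R_i\setminus P_i$. Every surviving edge $xy$ then has $(A_{R_i}^3)_{xy}\le 3\tau_i$. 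Removing edges only decreases walk counts, so the same bound holds in $R_{i+1}$. By \eqref{eq:local-walk-energy-and-cubic}, $\kappa_2(R_{i+1})\le d\cdot 3\tau_i=:K_{i+1}=\widetilde O(d\sqrt{K_i}/\eps)$. This recurrence converges doubly exponentially to $\widetilde O(d^2/\eps^2)$. We stop once $K_i\le d^2 q/\eps$, which requires the assumption $q\ge \mathrm{polylog}/\eps$.

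\textbf{Support degree.} The main obstacle is bounding the maximum degree of $P=\bigcup_i P_i$. Each pair in $P_i$ has true score at least $\tau_i$. For a fixed $u$, Lemma~\ref{lem:cubic-sensitivity}'s ingredients give $\sum_v (A^3_{R_i})_{uv}\le d^3$, hence $\deg_{P_i}(u)\le d^3/\tau_i$. This is too weak in early rounds, so I would instead use the $\ell_2$ bound: $\sum_v (A^3 e_u)_v^2\le d\cdot\kappa_2(R_i)\le dK_i$, which gives $\deg_{P_i}(u)\le dK_i/\tau_i^2=\widetilde O(d\eps^2)$. That is also too weak. The approach I would try first is to combine the $\ell_1$ bound $\|A^3e_u\|_1\le d\,\|A^2 e_u\|_1\le d^3$ with the threshold in the final round, where $\tau\approx\sqrt{d^2q/\eps}/\eps$. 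Choosing the stopping threshold so that $\theta_{\rm final}\gtrsim dq$ yields degree $d^3/(dq)=d^2/q$, matching the overview. If necessary I would alter the schedule so that every round's threshold is at least $\widetilde\Omega(dq)$, possibly at the cost of the $q^{1/4},q^{1/8}$ slack. Given public $P$ with $\Delta(P)\le\widetilde O(d^2/q)$, we release $A_G$ restricted to $P$ with entrywise Gaussian noise. Its spectral error, via Lemma~\ref{lem:matrix-gaussian-series} with variance proxy $\Delta(P)\sigma^2$, is $\widetilde O(d/(\sqrt q\,\eps))$. After rebalancing $\eps$ into $q$, this becomes the stated $d/\sqrt{\eps q}$ term.

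\textbf{Residual stage.} The residual $R$ is a function of private data. To handle this, the amplifier is applied to $R$ viewed as the graph $G\setminus P$. Neighboring $G$ correspond to neighboring or equal residuals, and the promise $\kappa_2\le K$ must hold for both. I would enforce this by projecting or clipping, or argue it holds with high probability on both graphs. Lemma~\ref{lem:fourth-sensitivity} gives $\Delta_4=O(d\sqrt{K})=\widetilde O(d^2\sqrt{q/\eps})$. For the trace, $\mathrm{tr}(A_R^4)=\sum_v\|A_R^2e_v\|^2\le nK$, and $\Delta_1=\sqrt2$. Plugging these into Theorem~\ref{thm:utility} gives the bias term $(\Delta_4\sqrt n/\eps)^{1/4}=\widetilde O(n^{1/8}d^{1/2}q^{1/8}/\eps^{3/8})$ and the noise term $\frac1\eps\sqrt{nK\eps/(\Delta_4\sqrt n)}=\widetilde O(n^{1/4}q^{1/4}/\eps^{3/4})$. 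Degrees are released by the Laplace mechanism with error $O(\log n/\eps)$.

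\textbf{Assembly.} Summing the three adjacency estimates and the degree estimate gives $\widetilde L$. Projection via the SDP of Algorithm~\ref{alg:laplacian-square} at most doubles the error, exactly as in the proof of Theorem~\ref{thm:laplacian-square}.
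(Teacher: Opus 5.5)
\paragraph{Assessment.} Your pipeline is the paper's: cubic cover, Gaussian release on the public support $P$, the $A^4$ amplifier on $R$, Laplace degrees, and SDP projection. Your residual-stage arithmetic reproduces Lemma~\ref{lem:thresholded_fourth_power_residual}. The genuine gap is the \emph{privacy} argument.

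\paragraph{The privacy gap.} For every cover round $i\ge 1$ you calibrate the noise on $A_{R_i}^3$ to $C\sqrt{K_i}$. Lemma~\ref{lem:cubic-sensitivity} gives that sensitivity only under the promise $\kappa_2(R_i)\le K_i$. This is not a property of all inputs: it holds only when the earlier rounds' noise happened to be small. The same applies to $\Delta_4=O(d\sqrt K)$ in the amplifier on $R$. Differential privacy must hold for every input and every transcript. In the cover stage you never address this. For the residual you only list options, and ``projecting or clipping'' would itself need a map onto $\{\kappa_2\le K\}$ that moves by $O(1)$ edges under an edge change, which you do not construct.

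\paragraph{How the paper closes it.} The paper uses propose-test-release. Since $\kappa_2$ has global sensitivity at most $3d^2$, before each cover round and before the amplifier it releases $\kappa_2(R_i)+\mathrm{Lap}(3d^2/\eps_{test})$. It aborts if this exceeds $K_i-3d^2\log(1/\del')/\eps_{test}$, and it treats the safe, unsafe and boundary cases separately. For the test to pass with high probability, the schedule needs slack: $K_{i+1}=C_1 d\tau_i\ge 2\kappa_2(R_{i+1})$ with $C_1\ge 4$. The assumption $q\ge \mathrm{polylog}(n,1/\del)/\eps$ makes this gap dominate the $\widetilde O(d^2/\eps)$ margin. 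Your schedule $K_{i+1}=3d\tau_i$ has no such slack.

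\paragraph{A possible alternative.} Your ``with high probability'' route can be made rigorous, but you have to write the argument. The proofs of Lemmas~\ref{lem:cubic-sensitivity} and~\ref{lem:fourth-sensitivity} use the promise only for \emph{one} of the two neighbours, not both. With your data-independent noise schedule, the event that all Gaussian entries are small is input-independent. On that event the promise holds for every input along its own transcript. Bounding the privacy-loss random variable outside an event of probability $\beta\le\del$ then gives $(\eps,O(\del))$-DP.

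\paragraph{The support-degree bound.} Your first schedule (threshold $2\tau_i$, stop once $K_i\le d^2q/\eps$) does not suffice. The last executed round can have $K_i$ just above $d^2q/\eps$, so its threshold can be as small as about $d\sqrt{q/\eps}/\eps$. That yields only $\Delta(P)\lesssim d^2\eps^{3/2}/\sqrt q$ and a covered-part error of $d/(\eps q)^{1/4}$, weaker than the claimed $d/\sqrt{\eps q}$.

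Your fallback is exactly the paper's choice, $\tau_i=\max(2\zeta_i,\,dq/\eps_{rel})$. With it, $\Delta(P_i)\le 2d^3/\tau_i=\widetilde O(\eps d^2/q)$ in every round. The floor also moves the energy fixed point to $\Theta(d^2q/\eps)$, which is precisely the $K$ your residual stage uses, and termination still occurs within $O(\log\log d)$ rounds. Commit to this schedule rather than leaving it conditional.
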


\paragraph{Iterated Cubic-Score Cover.}
We use a cubic-walk strategy (Algorithm~\ref{alg:cubic-cover}) to decompose $G$ into a covered part $P$ and a residual graph $R$ with low local column norms for two-walks. $P$ is a private set of vertex pairs identified via noisy cubic-walk scores. The cover routine repeatedly releases these scores: if the current residual satisfies $\kappa_2\le K_i$, then $A^3$ has sensitivity $O(\sqrt{K_i})$. Thresholding the noisy $(A^3)_{uv}$ values successfully isolates the heavy pairs. In Algorithm~\ref{alg:cubic-cover}, we also use a propose-test-release subroutine to explicitly test whether $\kappa_2\le K_i$ to safely calibrate noise according to $K_i$.

\begin{algorithm}[h]
\caption{\textsc{IteratedCubicScoreCover}}
\label{alg:cubic-cover}
\begin{algorithmic}[1]
\REQUIRE Graph $G=(V,E)$, degree bound $d$, target parameter $q$, privacy parameters $(\eps,\del)$.
\ENSURE Covered pairs $P\subseteq\binom V2$ or $\bot$ (if safety test fails).
\STATE $R_0\gets G$, $P\gets\emptyset$, $K_0\gets 2d^3$.
\STATE Let $S\gets\lceil C\log\log d\rceil+2$.
\STATE Allocate per-round budgets: $\eps'\gets\eps/S$, $\del'\gets 2\del/(3S)$. For each round, split $\eps' = \eps_{test} + \eps_{rel}$ where $\eps_{test} = \eps_{rel} = \eps'/2$.
\FOR{$i=0,1,\ldots,S-1$}
    \IF{$K_i\le \widetilde{O}(d^2q/\eps_{rel})$}
        \STATE $R\gets R_i$ and \textbf{break}.
    \ENDIF
    
\textbf{Propose-Test-Release (PTR) Check}
    \STATE Compute true local energy $\kappa_{2,i} = \kappa_2(R_i)$, and let $\Delta_\kappa =3d^2$\;
    \STATE Let the safety margin be $M_i = \Delta_\kappa \frac{\log(1/\del')}{\eps_{test}}$.
    \STATE Sample Laplace noise $\xi_i \sim \operatorname{Lap}(\Delta_\kappa / \eps_{test})$.
    \IF{$\kappa_{2,i} + \xi_i > K_i - M_i$}
        \RETURN $\bot$
    \ENDIF
    
 \textbf{Private Release}
    \STATE Let $A_i$ be the adjacency matrix of $R_i$.
    \STATE Privately release $\widetilde B_i=A_i^3+Z_i$, where $Z_i\sim \mathcal{N}(0,\sigma^2)$ is Gaussian noise with $\sigma=\widetilde{O}_{\del}\left(\frac{\sqrt{K_i}}{\eps_{rel}}\right)$.
    \STATE Let $\zeta_i=\widetilde O_{\del}(\frac{\sqrt{K_i}}{\eps_{rel}})$ such that $\|Z_i\|_{\max}\le\zeta_i$ with high probability.
    \STATE Let $\tau_i=\max(2\zeta_i,dq/\eps_{rel})$.
    \STATE $P_i\gets\{uv:\widetilde B_i(u,v)\ge \tau_i\}$.
    \STATE $P\gets P\cup P_i$.
    \STATE $R_{i+1}\gets R_i\setminus(E(R_i)\cap P_i)$.
    \STATE $K_{i+1}\gets C_1d\tau_i$ (where $C_1 \ge 4$ is a sufficiently large constant).
\ENDFOR
\RETURN $P$.
\end{algorithmic}
\end{algorithm}

\begin{lemma}[The Iterated Cover]
\label{lem:iterated-cover-utility}
Assume $q \ge \operatorname{polylog}(d)/\eps$ and $d \ge 1$. \textsc{IteratedCubicScoreCover} is $(\eps,\del)$-DP for any input graph $G$. Furthermore, with probability at least $1-1/n^c$, it outputs $P\subseteq {n \choose 2}$ without aborting, satisfying $\Delta(P) \le \widetilde{O}_{\del}\left(\frac{\eps d^2}{q}\right)$ and $\kappa_2(R) \le \widetilde{O}_{\del}(\frac{d^2q}{\eps})$. Here $R=([n], E(G)\setminus (E(G)\cap P))$ is the remaining graph and $\Delta(P)$ is the maximum degree of the edge support $P$.
\end{lemma}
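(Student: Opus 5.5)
We prove privacy and utility separately; the utility argument rests on a per-round invariant that holds on a single good event.

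\textbf{Privacy.} The mechanism runs at most $S$ rounds. Each round makes two private accesses to $G$: the PTR test and the Gaussian release of $A_i^3$. The pair support $P_i$, and hence $P$, is a post-processing of $\widetilde B_i$ (Lemma~\ref{lem:postprocess}). The residual $R_{i+1}$ does depend on $G$, but later queries only ever read $R_{i+1}$ through new noisy releases, so this causes no leakage beyond those releases.

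For the test, removing the edges in $E\cap P_i$ for a public $P_i$ maps neighboring graphs to neighboring (or identical) residuals. A single edge change alters $\kappa_2$ by at most $\Delta_\kappa=3d^2$: the codegrees in the affected column move by $\pm1$ in $O(d)$ entries, each of size at most $d$, plus $O(d)$ other columns change. Hence the Laplace test is $\eps_{test}$-DP.

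For the release, I will use the standard PTR argument. If a neighbor $R'_i$ has $\kappa_2(R'_i)>K_i$, then $\kappa_2(R_i)>K_i-\Delta_\kappa$. Such a graph passes the test $\kappa_{2,i}+\xi_i\le K_i-M_i$ with probability at most $e^{-\log(1/\del')+1}\lesssim\del'$. Otherwise both neighbors satisfy $\kappa_2\le K_i$. Then Lemma~\ref{lem:cubic-sensitivity} bounds the sensitivity by $C\sqrt{K_i}$, which requires $K_i\ge d^2$ and holds since $K_i\ge d\tau\ge d^2q/\eps$. With this sensitivity, the Gaussian mechanism (Lemma~\ref{lem:gaussian-mechanism}) gives $(\eps_{rel},\del'/2)$-DP.

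Each round is therefore $(\eps',\tfrac32\del')$-DP. Basic composition (Lemma~\ref{lem:composition}) over $S$ rounds gives $(\eps,\del)$, using $\del'=2\del/(3S)$.

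\textbf{Utility.} Fix the good event: every Laplace variable satisfies $|\xi_i|\le\Delta_\kappa\log(n)/\eps_{test}$, and every Gaussian matrix satisfies $\|Z_i\|_{\max}\le\zeta_i$. By a union bound this holds with probability $1-1/n^c$. I then prove by induction the invariant $\kappa_2(R_i)\le K_i/2$.

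For the base case, $\kappa_2(G)\le d^3=K_0/2$.

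For the inductive step, consider any edge $xy$ that survives into $R_{i+1}$. It was not placed in $P_i$, so $(A_i^3)_{xy}<\tau_i+\zeta_i\le 2\tau_i$. Deleting edges only decreases walk counts, so $(A_{i+1}^3)_{xy}\le (A_i^3)_{xy}$. By \cref{eq:local-walk-energy-and-cubic},
\[
\|A_{i+1}^2e_x\|_2^2=\sum_{y\in N_{R_{i+1}}(x)}(A_{i+1}^3)_{xy}\le 2d\tau_i\le K_{i+1}/2
\]
since $C_1\ge4$. The PTR test then passes: $\kappa_{2,i}+\xi_i\le K_i/2+O(d^2\log(n/\del')/\eps')<K_i-M_i$. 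This holds because $K_i\ge d^2q/\eps_{rel}$ and $q\ge\mathrm{polylog}/\eps$ is large enough to absorb the margins. So no round aborts.

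\textbf{Convergence of $K_i$.} The recursion is $K_{i+1}=C_1d\max(\widetilde O(\sqrt{K_i}/\eps_{rel}),\,dq/\eps_{rel})$. Writing $K_i=d^2q\,x_i/\eps_{rel}$, as long as the $\zeta$ branch dominates, $x_{i+1}\lesssim\sqrt{x_i}\cdot\mathrm{polylog}/\sqrt{q\eps_{rel}}$. Starting from $x_0\le d\eps/q$, the exponent of $x_i$ halves each round. After $O(\log\log d)$ rounds, $x_i$ is $O(1)$ once $q$ exceeds the polylog factors. This triggers the break condition $K_i\le\widetilde O(d^2q/\eps_{rel})$, and the choice of $S$ covers that many rounds. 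The bound $\kappa_2(R)\le\widetilde O_\del(d^2q/\eps)$ follows, since $\eps_{rel}=\eps/\widetilde O(1)$.

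\textbf{Bounding $\Delta(P)$.} This is the main obstacle. Every pair $uv$ in $P_i$ has $(A_i^3)_{uv}\ge\tau_i-\zeta_i\ge\tau_i/2\ge dq/(2\eps_{rel})$. Hence, for each $u$,
\[
|\{v:uv\in P_i\}|\le \frac{\sum_v (A_i^3)_{uv}}{dq/(2\eps_{rel})}\le\frac{d^3\cdot2\eps_{rel}}{dq}=O\!\left(\frac{\eps d^2}{q}\right),
\]
using $\sum_v(A^3)_{uv}\le d^3$. Summing over the $S=O(\log\log d)$ rounds gives $\Delta(P)\le\widetilde O(\eps d^2/q)$.

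The delicate parts are this degree counting, which relies on non-negative walk sums with the threshold floor $dq/\eps_{rel}$, and keeping the PTR margin compatible with the recursion for $K_i$.
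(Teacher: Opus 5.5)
Your proposal is correct and takes essentially the same route as the paper. For privacy, both use propose-test-release with the $3d^2$ sensitivity of $\kappa_2$, Lemma~\ref{lem:cubic-sensitivity}, and basic composition over the $O(\log\log d)$ rounds; for utility, both bound $\Delta(P_i)$ by comparing the row sum $\sum_v (A_i^3)_{uv}\le d^3$ with the threshold floor $dq/\eps_{rel}$, and drive the recursion for $K_i$ with $\|A_{i+1}^2e_x\|_2^2=\sum_{y\in N_{R_{i+1}}(x)}(A_{i+1}^3)_{xy}\le 2d\tau_i$. The differences are only presentational: you prove the invariant $\kappa_2(R_i)\le K_i/2$ by an explicit induction, which also shows the test never aborts (the paper instead conditions on this and argues via the safety gap), and you track $x_i=K_i\eps_{rel}/(d^2q)$ rather than solving the recurrence in closed form.
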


\begin{proof}
\textbf{Privacy Guarantee.} The algorithm runs for at most $S = O(\log \log d)$ iterations. In each round $i$, the algorithm splits the budget into $\eps_{test}$ and $\eps_{rel}$. 
First, we analyze the sensitivity of the true local energy $\kappa_2(G) = \max_v \sum_u \operatorname{codeg}(u,v)^2$. Adding or removing a single edge $e=\{x,y\}$ in a $d$-bounded graph changes the common neighborhood $\operatorname{codeg}(u,v)$ by at most 1, and this perturbation only occurs for pairs where $\{u,v\} \cap \{x,y\} \neq \emptyset$. Thus, for any fixed vertex $v$, the sum $\sum_u \operatorname{codeg}(u,v)^2$ changes by at most $\sum_{u \in N(v)} (2\operatorname{codeg}(u,v) + 1) \le 2d^2 + d \le 3d^2$. Consequently, $\kappa_2(G)$ has an exact global sensitivity bounded by $\Delta_\kappa \le 3d^2$. By the standard Laplace mechanism (Lemma~\ref{lem:laplace-priv}), releasing $\kappa_{2,i} + \xi_i$ satisfies $(\eps_{test}, 0)$-DP. 

Next, we analyze the overall privacy via the propose-test-release framework. Let $G,G'$ be a pair of neighboring graphs. By the principle of adaptive composition (Lemma~\ref{lem:composition}), we analyze the $i$-th round conditioned on an arbitrary fixed transcript of previous outputs in rounds $0 ,\cdots, i-1$. Because the set of removed edge pairs $P_{<i} = \bigcup_{j=0}^{i-1} P_j$ can be fully recovered from these public outputs, it is identical for both executions between inputs $G$ and $G'$, conditioned on any fix transcripts in the first $i-1$ round. Consequently, the residual graphs $R_i = G \setminus P_{<i}$ and $R'_i = G' \setminus P_{<i}$ differ by at most the single original edge separating $G$ and $G'$. For any such pair of residual graphs, we consider three different cases based on their true local energies $\kappa_{2,i}(R_i)$ and $\kappa_{2,i}(R'_i)$:
\begin{itemize}
    \item {Case 1 (Both Unsafe):} If both $\kappa_{2,i} > K_i$ and $\kappa'_{2,i} > K_i$, the probability that the algorithm passes the test for either graph is bounded by $\Prb[\xi_i \le -M_i] \le \frac{1}{2}\exp(-\eps_{test} M_i / \Delta_\kappa) = \del'/2$. Thus, the algorithm outputs $\bot$ with probability at least $1 - \del'/2$. For any set of outcomes $\mathcal{O}$, we have 
    \begin{align*}
        \Prb[\mathcal{M}(R_i) \in \mathcal{O}] &= \Prb[\mathcal{M}(R_i) \in \mathcal{O} \setminus \{\bot\}] + \Prb[\mathcal{M}(R_i) = \bot \land \bot \in \mathcal{O}]\\
        &\le \del'/2 + \del' + \Prb[\mathcal{M}(R_i') = \bot \land \bot \in \mathcal{O}] \\
        &\le e^{\eps'} \Prb[\mathcal{M}(R_i') \in \mathcal{O}] + 1.5\del'.
    \end{align*}
    satisfying $(\eps', 1.5\del')$-DP.
\item{Case 2 (Both Safe):} If both $\kappa_{2,i} \le K_i$ and $\kappa'_{2,i} \le K_i$, the true global sensitivity of the cubic query $A_i^3$ is deterministically bounded by $O(\sqrt{K_i})$ due to Lemma \ref{lem:cubic-sensitivity}. If the test fails, the algorithm outputs $\bot$. If it passes, the Gaussian mechanism is calibrated to the true sensitivity, guaranteeing $(\eps_{rel}, \del'/2)$-DP for the released matrix. Because the safety test itself provides $(\eps_{test}, 0)$-DP, standard basic composition dictates that the entire procedure (including the probability of outputting $\bot$) satisfies $(\eps_{test} + \eps_{rel}, \del'/2) \subset (\eps', \del')$-DP.
\item{Case 3 (Boundary):} Without loss of generality, assume $\kappa_{2,i} \le K_i$ (safe) but $\kappa'_{2,i} > K_i$ (unsafe). since they are neighbors, we have $\kappa'_{2,i} \le \kappa_{2,i} + \Delta_\kappa \le K_i + \Delta_\kappa$. The probability that the unsafe graph $R_i'$ passes the test is bounded by:
    \[
        \Prb[\text{pass} \mid R_i'] = \Prb[\xi'_i \le K_i - \kappa'_{2,i} - M_i]  \le \frac{1}{2}\exp\left(-\eps_{test} \frac{M_i - \Delta_\kappa}{\Delta_\kappa}\right) = \frac{\del'}{2} e^{\eps_{test}} \le \del',
    \]
    assuming $\eps_{test} \le \ln 2$. Crucially, the probability that the safe graph $R_i$ passes the test is identically bounded:
    \[
        \Prb[\text{pass} \mid R_i] = \Prb[\xi_i \le K_i - \kappa_{2,i} - M_i] < \Prb[\xi_i \le \Delta_\kappa - M_i] \le \del'.
    \]
    For any set of outcomes $\mathcal{O}$, let $\mathcal{O}_{\text{mat}} = \mathcal{O} \setminus \{\bot\}$, for the direction from $R_i$ to $R_i'$:
    \begin{align*}
        \Prb[\mathcal{M}(R_i) \in \mathcal{O}] 
        &= \Prb[\mathcal{M}(R_i) = \bot] \cdot \mathbbm{1}_{\bot \in \mathcal{O}} + \Prb[\mathcal{M}(R_i) \in \mathcal{O}_{\text{mat}}] \\
        &\le e^{\eps_{test}} \Prb[\mathcal{M}(R_i') = \bot] \cdot \mathbbm{1}_{\bot \in \mathcal{O}} + \Prb[\text{pass} \mid R_i] \\
        &\le e^{\eps_{test}} \Prb[\mathcal{M}(R_i') \in \mathcal{O}] + \del' \le e^{\eps'} \Prb[\mathcal{M}(R_i') \in \mathcal{O}] + \del',
    \end{align*}
    where the relation $\Prb[\mathcal{M}(R_i) = \bot] \le e^{\eps_{test}} \Prb[\mathcal{M}(R_i') = \bot]$ follows from the $(\eps_{test}, 0)$-DP property of Laplace mechanism (Lemma~\ref{lem:laplace-priv}). $\Prb[\mathcal{M}(R_i') \in \mathcal{O}] \le e^{\eps'} \Prb[\mathcal{M}(R_i) \in \mathcal{O}] + \del'$ holds by identical reasoning. This proves $(\eps', \del')$-DP on the boundary.
\end{itemize}
Therefore, the single round satisfies $(\eps', 1.5\del')$-DP for all possible inputs, without conditioning on high-probability utility events. Finally, basic composition (Lemma~\ref{lem:composition}) over $S$ rounds guarantees the overall $(\eps, \del)$-DP.


\textbf{Utility Guarantee.} 
For the utility guarantee, we condition on the high-probability event that for all rounds, the entry-wise maximum of the Gaussian noise matrix is bounded ($\|Z_i\|_{\max} \le \zeta_i$), and the safety test (lines 8-13) never aborts. By a union bound over the $O(\log \log d)$ rounds and all matrix entries, this joint event holds with probability at least $1-1/n^c$ for $\del = n^{-c'}$. 

Consider a single cover round on the residual graph $R_i$ with adjacency matrix $A_i$. If $uv \in P_i$, then $(A_i^3)_{uv} \ge \tau_i - \zeta_i \ge \tau_i/2$. For any vertex $u$, the row sum $\sum_v (A_i^3)_{uv}$ represents the total number of length-three walks originating from $u$, which is inherently at most $d^3$. Therefore, the maximum degree of the covered edges in this round is bounded by:
\begin{equation}
\label{eq:degree-Pi}
  \Delta(P_i)
  \le
  \max_{u}\frac{\sum_v(A_i^3)_{uv}}{\min_{v:uv\in P_i}(A_i^3)_{uv}}
  \le
  \frac{d^3}{\tau_i/2} = \frac{2d^3}{\tau_i}.
\end{equation}

Conversely, if an edge $uv$ remains in the residual graph $R_{i+1}$, it must be that $uv \notin P_i$, implying $(A_i^3)_{uv} < 2\tau_i$. Because $R_{i+1} \subseteq R_i$, the number of walks monotonically decreases, so $(A_{i+1}^3)_{uv} \le (A_i^3)_{uv} < 2\tau_i$. Thus, combing \cref{eq:local-walk-energy-and-cubic}, for any vertex $u$, the true local energy is bounded by:
\begin{equation}
\label{eq:energy-rec}
  \kappa_2(R_{i+1}) = \max_u \sum_{v \in N_{i+1}(u)} (A_{i+1}^3)_{uv} \le  2d\tau_i.
\end{equation}
Notice that the algorithm explicitly defines the public threshold $K_{i+1} = C_1 d \tau_i$ with a constant $C_1 \ge 4$. This guarantees a safety gap $K_{i+1} - \kappa_2(R_{i+1}) \ge \frac{1}{2} K_{i+1}$. Since the algorithm ensures $K_{i+1} \ge \widetilde{O}(d^2 q / \eps_{rel})$, and we assume $q \ge \operatorname{polylog}(d)/\eps$, this gap is asymptotically larger than the required Laplace margin $M_{i+1} = \widetilde{O}(d^2/\eps_{test})$, implying that the propose-test-release passes with high probability in the next round.

Now, we evaluate the recurrence. If the initial energy $K_0 = 2d^3 \le \widetilde{O}(d^2 q/\eps_{rel})$, the algorithm trivially satisfies the breaking condition at round $i=0$ and immediately outputs $P=\emptyset$, naturally satisfying all required bounds. Thus, we focus on the non-trivial regime where $K_0 > \widetilde{O}(d^2 q/\eps_{rel})$, which implies $$d \ge \widetilde{\Omega}(q/\eps_{rel}) \ge \Omega(1/\eps_{rel}^2),$$ meaning the base term $2d\eps_{rel}^2 \ge 1$. By the algorithmic definition, $\tau_i = \widetilde{O}_{\del}\left(\max(\sqrt{K_i}/\eps_{rel}, dq/\eps_{rel})\right)$. 
Notice that the $\max$ operation implies a phase transition at $K_i \approx d^2q^2$. In the noise-dominated regime where $K_i \ge d^2q^2$, substituting the threshold into $K_{i+1}\gets C_1d\tau_i$ gives the recurrence relation $K_{i+1} = \widetilde{O}_{\del}\left(d\sqrt{K_i}/\eps_{rel}\right)$. Ignoring polylogarithmic factors, solving this exact recurrence for the base case $K_0 = 2d^3$ yields 
$$K_i = \frac{d^2}{\eps_{rel}^2}(2d\eps_{rel}^2)^{2^{-i}}.$$
For any $d \ge 1$ and $\eps_{rel} > 0$, as $i \to \infty$, the exponential term $(2d\eps_{rel}^2)^{2^{-i}}$ rapidly approaches $1$. Thus, $K_i$ converges to the fixed point $d^2/\eps_{rel}^2$. 
Crucially, because we assume $q \ge \operatorname{polylog}(d)/\eps$, the phase-transition target $d^2q^2$ is larger than this fixed point $d^2/\eps_{rel}^2$. To ensure the sequence reaches the target $K_T \le d^2q^2$, it suffices to make sure that the residual factor satisfies:
$$(2d\eps_{rel}^2)^{2^{-T}} \leq (2d)^{2^{-T}} \leq 2 \leq \eps_{rel}^2 q^2,$$
which yields $T = O(\log\log d)$ since our premise $q \ge \operatorname{polylog}(d)/\eps$ naively guarantees $q \ge \sqrt{2}/\eps_{rel}$. Assume that in the $T$-th round, $K_T$ becomes smaller than $\widetilde{O}_{\del}(d^2q^2)$ for the first time, i.e., $K_T\le d^2q^2$ and $K_{T-1}>d^2q^2$. At this point, the Gaussian noise bound drops below the floor $\widetilde{O}_{\del}(dq/\eps_{rel})$, thus the threshold transitions to $\tau_T=\widetilde\Theta_{\del}(dq/\eps_{rel})$. Then in the $T+1$ round, 
\[
  \kappa_2(R_{T+1})
  \le d\tau_{T}
  =\widetilde{O}_{\del}(d^2q/\eps_{rel}).
\]
Therefore the algorithm successfully satisfies the breaking condition and stops after the $T+1$ round. Since $\tau_i\ge dq/\eps_{rel}$ for all $i\leq T+1$, substituting this into \cref{eq:degree-Pi} gives:
\[
  \Delta(P_{i})
  \le
  \frac{2d^3}{\tau_{i}}
  =
  \widetilde{O}_{\del}\left(\frac{\eps_{rel} d^2}{q}\right).
\]
Summing over $T+1=O(\log\log d)$ rounds only alters the polylogarithmic factor. Recalling that $\eps_{rel} = \Theta(\eps / \log\log d)$, we conclude:
\[
  \Delta(P)
  \le
  \widetilde{O}_{\del}\left(\frac{\eps d^2}{q}\right).
\]
This completes the proof.
\end{proof}

\paragraph{Releasing the Covered Part.}
\begin{proposition}[Gaussian Laplacian series]
\label{prop:gaussian-laplacian-series}
Let $P\subseteq \binom V2$ be an edge support of maximum degree at most $d$ on $n$ vertices, and let $\xi_e\sim \mathcal{N}(0,\sigma^2)$ be independent Gaussian random variables. Then, with probability at least $1-1/n^{c}$,
\[
  \opnorm{\sum_{e\in P}\xi_eL_e}
  \le
  \widetilde{O}_{\del}(\sigma\sqrt d).
\]
\end{proposition}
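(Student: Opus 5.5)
We will apply the Matrix Gaussian Series Inequality (Lemma~\ref{lem:matrix-gaussian-series}) directly. Write $L_e = b_e b_e^\top$ with $b_e = e_u - e_v$ for $e=uv$, so each $L_e$ is a deterministic symmetric matrix. The random matrix is $Z=\sum_{e\in P}\xi_e L_e = \sum_{e\in P} g_e(\sigma L_e)$ with $g_e\sim\mathcal N(0,1)$ independent. The only quantity to control is the variance proxy
\[
  v=\opnorm{\sum_{e\in P}\sigma^2 L_e^2}.
\]

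The key computation is $L_e^2 = b_e(b_e^\top b_e)b_e^\top = 2L_e$, since $\|b_e\|_2^2=2$. Hence $\sum_{e\in P}L_e^2 = 2L_P$, where $L_P$ is the Laplacian of the (unweighted) graph with edge set $P$. It therefore suffices to bound $\opnorm{L_P}$ by the maximum degree. I will use the standard bound $\opnorm{L_P}\le 2\Delta(P)\le 2d$. One way to see this is that $x^\top L_P x=\sum_{uv\in P}(x_u-x_v)^2\le 2\sum_{uv\in P}(x_u^2+x_v^2)=2\sum_u \deg_P(u)x_u^2\le 2d\|x\|_2^2$. Alternatively, Gershgorin's theorem applies, since each row has diagonal $\deg_P(u)$ and off-diagonal absolute sum $\deg_P(u)$. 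This gives $v\le 4\sigma^2 d$.

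Plugging into Lemma~\ref{lem:matrix-gaussian-series} with $\beta=1/n^c$ yields, with probability at least $1-1/n^c$,
\[
  \opnorm{Z}\le\sqrt{8\sigma^2 d\log(2n^{c+1})}=O\!\left(\sigma\sqrt{d\log n}\right)=\widetilde O_{\del}(\sigma\sqrt d).
\]
No step here is really an obstacle. The only points needing care are the identity $L_e^2=2L_e$, which collapses the variance proxy to a graph Laplacian, and the degree bound on $\opnorm{L_P}$. The $\log n$ factor is absorbed into $\widetilde O_\delta$.
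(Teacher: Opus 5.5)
Your proposal is correct and follows essentially the same route as the paper: the matrix Gaussian series inequality, the identity $L_e^2 = 2L_e$ to reduce the variance proxy to $2\sigma^2\opnorm{L_P}\le 4\sigma^2 d$, and a union-free $\log(2n^{c+1})$ tail. Your explicit justification of $\opnorm{L_P}\le 2d$ (via the quadratic form or Gershgorin) fills in the step the paper calls easy to verify.
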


\begin{proof}
Let $g_e = \xi_e / \sigma$, so that $g_e \sim \mathcal{N}(0,1)$ are independent standard Gaussian random variables. The target random matrix can be identically rewritten as $Z = \sum_{e \in P} g_e (\sigma L_e)$. We first compute the matrix variance proxy $v$:
\[
  v = \opnorm{\sum_{e\in P} (\sigma L_e)^2} = \sigma^2 \opnorm{\sum_{e\in P} L_e^2}.
\]
For any single edge $e=\{u,v\}$, its unweighted Laplacian matrix $L_e$ satisfies the identity $L_e^2 = 2L_e$. Substituting this identity simplifies the variance proxy to:
\[
  v = 2\sigma^2 \opnorm{\sum_{e\in P} L_e} = 2\sigma^2 \opnorm{L_P},
\]
where $L_P = \sum_{e\in P} L_e$ is the Laplacian matrix of the subgraph induced by the edge set $P$. Since the edge support $P$ is promised to have a maximum degree of at most $d$, it is easy to verify that $\opnorm{L_P} \le 2d$. Therefore, the variance proxy is bounded by:
\[
  v \le 2\sigma^2 (2d) = 4\sigma^2 d.
\]
Applying Lemma~\ref{lem:matrix-gaussian-series}, we conclude that with probability at least $1-1/n^c$:
\[
  \opnorm{\sum_{e\in P}\xi_eL_e} \le \sqrt{2(4\sigma^2 d) \log(2n\cdot n^c)} = \widetilde{O}_{\del}(\sigma\sqrt d),
\]
which completes the proof.
\end{proof}

\begin{lemma}[Covered-Part Error]
\label{lem:covered-error}
Given the support $P$ from Algorithm~\ref{alg:cubic-cover}, \textsc{ReleaseOnSupport} is $(\eps,\del)$-DP. With probability at least $1-1/n^c$, its output satisfies:
\[
  \opnorm{\widehat L_P-L_{G[P]}}
  \le
  \widetilde{O}_{\del}\left(\frac d{\sqrt{\eps q}}\right).
\]
\end{lemma}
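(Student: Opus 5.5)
The plan is to take \textsc{ReleaseOnSupport} to be the natural mechanism. Once $P$ is public, we release the vector $(\mathbbm{1}[e\in E(G)])_{e\in P}$ with independent Gaussian noise $\xi_e\sim\mathcal N(0,\sigma^2)$ added to every pair $e\in P$, and set $\widehat L_P=\sum_{e\in P}(\mathbbm{1}[e\in E]+\xi_e)L_e$. Here $L_e$ is the Laplacian of the single pair $e$, and $L_{G[P]}=\sum_{e\in P\cap E}L_e$.

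For privacy, we condition on the public support $P$, which is the output of Algorithm~\ref{alg:cubic-cover}, already accounted for by Lemma~\ref{lem:iterated-cover-utility}, and treated as fixed by adaptive composition (Lemma~\ref{lem:composition}). Neighboring graphs $G\sim G'$ then change the indicator vector on $P$ in at most one coordinate, and only by $\pm1$. So the $\ell_2$-sensitivity is $1$. Lemma~\ref{lem:gaussian-mechanism} with $\sigma=\sqrt{2\ln(1.25/\delta)}/\eps$ gives $(\eps,\delta)$-DP. Forming $\widehat L_P$ is post-processing (Lemma~\ref{lem:postprocess}).

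For utility, the error is exactly $\widehat L_P-L_{G[P]}=\sum_{e\in P}\xi_eL_e$. This is a Gaussian Laplacian series on a support with maximum degree $\Delta(P)$. Proposition~\ref{prop:gaussian-laplacian-series}, applied with $d$ replaced by $\Delta(P)$, gives with probability $1-1/n^c$
\[
\opnorm{\widehat L_P-L_{G[P]}}\le \widetilde O_\delta\bigl(\sigma\sqrt{\Delta(P)}\bigr)=\widetilde O_\delta\Bigl(\frac1\eps\sqrt{\Delta(P)}\Bigr).
\]
We then substitute the bound $\Delta(P)\le\widetilde O_\delta(\eps d^2/q)$ from Lemma~\ref{lem:iterated-cover-utility}, which holds with probability $1-1/n^c$. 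This gives $\frac1\eps\cdot d\sqrt{\eps/q}=d/\sqrt{\eps q}$. A union bound over the two failure events completes the proof.

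The only delicate point is the bookkeeping. The bound on $\Delta(P)$ is a utility statement holding with high probability, not a worst-case promise. This is harmless, because the noise scale $\sigma$ does not depend on $\Delta(P)$: privacy holds for every $P$, and only the accuracy uses the high-probability degree bound. I also need the $\eps$ factor inside $\Delta(P)$ in Lemma~\ref{lem:iterated-cover-utility}, since it produces the $\sqrt\eps$ rather than $\eps$ in the denominator. I will track it carefully and absorb polylogarithmic factors into $\widetilde O_\delta$.
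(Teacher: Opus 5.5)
Your proposal is correct and follows essentially the same route as the paper. You condition on the public support $P$, release the sensitivity-one indicator vector on $P$ via the Gaussian mechanism, and bound the error $\sum_{e\in P}\xi_e L_e$ by Proposition~\ref{prop:gaussian-laplacian-series} with degree $\Delta(P)$, plugging in $\Delta(P)\le\widetilde O_\delta(\eps d^2/q)$ from Lemma~\ref{lem:iterated-cover-utility}. Your remark that $\sigma$ does not depend on $\Delta(P)$, so privacy holds for every $P$ and only utility uses the high-probability degree bound, makes explicit a point the paper leaves implicit.
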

\begin{proof}
By the principle of adaptive composition (Lemma~\ref{lem:composition}), we analyze this subroutine conditioned on an arbitrary fixed transcript of the public support $P$ generated by Algorithm~\ref{alg:cubic-cover}. For a fixed \(P\), the query \(G\mapsto (\one[e\in E(G)])_{e\in P}\) has \(\ell_2\)-sensitivity at most one under edge-level privacy.  The privacy claim then follows from the Gaussian mechanism (Lemma~\ref{lem:gaussian-mechanism}).

Next, we establish the spectral utility bound. The true Laplacian of the graph $G$ restricted to the edge support $P$ is denoted as $L_{G[P]} = \sum_{e \in P} \one[e \in E] L_e$. By Algorithm~\ref{alg:release-support}, the spectral error matrix is:
\[
  \widehat L_P - L_{G[P]} = \sum_{e \in P} (\one[e \in E] + \xi_e) L_e - \sum_{e \in P} \one[e \in E] L_e = \sum_{e \in P} \xi_e L_e.
\]
Since the elements $\xi_e \sim \mathcal{N}(0, \sigma^2)$ are independent Gaussian random variables and the underlying edge support $P$ has a maximum vertex degree of at most $\Delta(P)$, we apply Proposition~\ref{prop:gaussian-laplacian-series} and with probability at least $1-1/n^c$, the operator norm of the error matrix satisfies:
\[
  \opnorm{\widehat L_P - L_{G[P]}} = \opnorm{\sum_{e \in P} \xi_e L_e} \le \widetilde O_{\del}\left(\sigma \sqrt{\Delta(P)}\right)\le \widetilde O_{\del}\left(\frac{\sqrt{\Delta(P)}}{\eps}\right).
\]
Combining the upper bound of $\Delta(P)$ in Lemma~\ref{lem:iterated-cover-utility} completes the proof.
\end{proof}

\begin{algorithm}[h]
\caption{\textsc{ReleaseOnSupport}}
\label{alg:release-support}
\begin{algorithmic}[1]
\REQUIRE Graph $G=(V,E)$, pair support $P\subseteq\binom V2$, privacy parameters $(\eps,\del)$.
\ENSURE Noisy Laplacian $\widehat L_P$ approximating $L_{G[P]}$.
\STATE Set the Gaussian noise scale $\sigma = \Theta\left(\frac{\sqrt{\log(1/\del)}}{\eps}\right)$.
\FOR{\textbf{each} $e \in P$}
    \STATE Set the indicator weight $w_e = \one[e \in E]$.
    \STATE Sample independent Gaussian noise $\xi_e \sim \mathcal{N}(0,\sigma^2)$.
    \STATE Compute the noisy weight $\widehat w_e = w_e + \xi_e$.
\ENDFOR
\RETURN $\widehat L_P = \sum_{e\in P} \widehat w_e L_e$.
\end{algorithmic}
\end{algorithm}

\paragraph{Thresholded Fourth-Power Release of the Residual.}
With high probability, the residual graph $R$ satisfies $\kappa_2(R)\le K:=\widetilde{O}_{\del}(d^2q/\eps)$ according to Lemma~\ref{lem:iterated-cover-utility}. We directly invoke our Power-based Spectral Amplifier (Algorithm~\ref{alg:spectral-amplifier}) with $q=4$.

\begin{lemma}[Fourth-Power Residual Release]
\label{lem:thresholded_fourth_power_residual}
Let $R$ be an $n$-vertex residual graph with maximum degree $d$ and $\kappa_2(R)\le K$. Instantiating Algorithm~\ref{alg:spectral-amplifier} with $\Phi_4(A_R) = A_R^4$ provides an $(\eps,\delta)$-DP raw Laplacian estimate $\widetilde L_R$ satisfying, with high probability:
\[
  \norm{L_R-\widetilde L_R}_2
  \le
  \widetilde O_{\delta}\left(
    \frac{(nK)^{1/8}d^{1/4}}{\eps^{1/4}}
    +\frac{(nK)^{1/4}}{\sqrt{d\eps}}
    +\frac{1}{\eps}\right).
\]
Substituting $K \le \widetilde O_{\del}(d^2q/\eps)$, this simplifies to:
\[
  \norm{L_R-\widetilde L_R}_2
  \le
  \widetilde O_{\delta}\left(
      \frac{n^{1/8}d^{1/2}q^{1/8}}{\eps^{3/8}}
      +
      \frac{n^{1/4}q^{1/4}}{\eps^{3/4}}
      +\frac{1}{\eps}
  \right).
\]
\end{lemma}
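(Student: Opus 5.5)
We plan to instantiate Theorem~\ref{thm:utility} with $M=A_R$ and $q=4$, and to combine the result with a Laplace-mechanism release of the degrees, exactly as in the proof of Theorem~\ref{thm:laplacian-square}. Privacy of the adjacency part is Theorem~\ref{thm:privacy}. We only need the sensitivity promise on $R$; in the assembled algorithm this promise is the one certified by the propose-test-release check of Algorithm~\ref{alg:cubic-cover}. The degree vector of $R$ has $\ell_1$-sensitivity $2$, so releasing it with $\operatorname{Lap}(O(1/\eps))$ noise costs $(\eps/2,0)$ and contributes an additive $O(\log n/\eps)$ in operator norm, which is the $1/\eps$ term. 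By the triangle inequality it then suffices to bound $\norm{A_R-\widehat A_R}_2$.

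The three parameters needed by Theorem~\ref{thm:utility} are as follows.
\begin{itemize}
\item Base sensitivity: $\Delta_1=\sqrt2$, as in Lemma~\ref{lem:graph-square-sensitivity}.
\item Amplified sensitivity: $\Delta_4\le Cd\sqrt K$ by Lemma~\ref{lem:fourth-sensitivity}, using $K\ge d^2$. We may assume this, since otherwise we simply enlarge $K$ to $d^2$.
\item Trace bound: we bound $B_4=\operatorname{tr}(A_R^4)=\sum_v\norm{A_R^2e_v}_2^2$. The crude estimate $nK$ suffices, and we keep it symbolic.
\end{itemize}

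Substituting into \eqref{eq:power-amplifier-general}, the truncation term is
\[
(\Delta_4\sqrt n/\eps)^{1/4}=\widetilde O\bigl((d\sqrt{Kn}/\eps)^{1/4}\bigr)=\widetilde O\bigl(d^{1/4}(nK)^{1/8}/\eps^{1/4}\bigr),
\]
which matches the first claimed term. The subspace term is
\[
\frac{1}{\eps}\sqrt{\frac{B_4\eps}{\Delta_4\sqrt n}}\le \frac1{\sqrt\eps}\sqrt{\frac{nK}{d\sqrt{K}\sqrt n}}=\frac{(nK)^{1/4}}{\sqrt{d\eps}},
\]
which matches the second. Finally we plug in $K=\widetilde O(d^2q/\eps)$. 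The first term becomes $d^{1/4}n^{1/8}d^{1/4}q^{1/8}\eps^{-1/8}\eps^{-1/4}=n^{1/8}d^{1/2}q^{1/8}\eps^{-3/8}$. The second becomes $n^{1/4}d^{1/2}q^{1/4}\eps^{-1/4}/(d^{1/2}\eps^{1/2})=n^{1/4}q^{1/4}\eps^{-3/4}$. Both agree with the claimed simplification.

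The main subtlety is privacy, not utility. Lemma~\ref{lem:fourth-sensitivity} requires both neighboring residuals to satisfy $\kappa_2\le K$. So $\sigma_1$ must be calibrated to a public $K$, and the promise must hold for neighbors as well, not only for the realized $R$. Two repairs are available.
\begin{itemize}
\item Reuse the propose-test-release wrapper of Lemma~\ref{lem:iterated-cover-utility}. Test $\kappa_2(R)+\operatorname{Lap}(3d^2/\eps)\le K-M$ before releasing $A_R^4$, and pay $\delta$ in the boundary case. This leaves the high-probability utility unchanged, because the true $\kappa_2(R)\le K/2$ gives a margin exceeding $\widetilde O(d^2/\eps)$.
\item Alternatively, observe that the same proof of Lemma~\ref{lem:fourth-sensitivity} only needs $\norm{A^2e_u}_2,\norm{A^2e_v}_2\le\sqrt K$ on one side, since $\kappa_2$ changes by at most $3d^2\le K$.
\end{itemize}
I would present the first repair. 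A second, minor point is the $\widetilde O$ bookkeeping: $\eta$ involves $\sqrt{\log(1/\delta)}$ and $\sqrt{\log n}$, and the rank bound $r\le B_4/\eta$ then yields $\norm{E}_2=\widetilde O(\sqrt r/\eps)$.
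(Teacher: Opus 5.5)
Your proposal is correct and follows the paper's proof: the paper also takes $\Delta_4=O(d\sqrt K)$ from Lemma~\ref{lem:fourth-sensitivity} and $B_4=\sum_v\norm{A_R^2e_v}_2^2\le nK$, substitutes both into \eqref{eq:power-amplifier-general} at $q=4$, and adds a $\widetilde O(1/\eps)$ term from a Laplace degree release. The sensitivity promise on $\kappa_2$ is enforced, as in your first repair, by the propose-test-release check placed before this step in Algorithm~\ref{alg:bootstrapped_fourth_power_primitive}. Your second, one-sided repair would only control local sensitivity around the realized $R$, which does not give differential privacy on its own, so you are right not to rely on it.
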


\begin{proof}
By Lemma~\ref{lem:fourth-sensitivity}, the amplified sensitivity is bounded by $\Delta_4 = O(d\sqrt K)$. The trace bound satisfies $B_4=\operatorname{tr}(A_R^4) = \sum_{v\in V}\norm{A_R^2e_v}_2^2 \le nK$. 
Applying the general bound for power amplifiers (\cref{eq:power-amplifier-general}) evaluated at $q=4$ directly yields:
$$ \norm{A_R-\widehat A_R}_2 \le \widetilde O_{\delta}\left( \left(\frac{d\sqrt{K}\sqrt n}{\eps}\right)^{1/4} + \frac{1}{\eps} \sqrt{ \frac{nK\eps}{d\sqrt{K}\sqrt n} } \right) = \widetilde O_{\delta}\left( \frac{n^{1/8}K^{1/8}d^{1/4}}{\eps^{1/4}} +\frac{n^{1/4}K^{1/4}}{\sqrt{d\eps}} \right). $$
Finally, as in the analysis of Theorem~\ref{thm:laplacian-square}, estimating the residual degree vector using the Laplace mechanism with $\ell_1$-sensitivity $O(1)$ independently contributes $\widetilde O_{\delta}(1/\eps)$ to the diagonal. Summing these errors yields the stated bound.
\end{proof}

\paragraph{Assembling the Final Oracle.}
\begin{algorithm}[h]
\caption{Bootstrapped Fourth-Power Private Laplacian Oracle}
\label{alg:bootstrapped_fourth_power_primitive}
\begin{algorithmic}[1]
\REQUIRE Graph $G=([n],E)$ with maximum degree at most $d$, tunable parameter $q$, privacy parameters $(\eps,\del)$, failure probability $\beta = n^{-c}$.
\ENSURE A valid non-negative weighted synthetic graph Laplacian $\widehat L_G \in \mathbb{L}_n$, or $\bot$.

\STATE Split budgets evenly: $\eps_{cov} = \eps/3, \eps_{sup} = \eps/3$, and $\eps_{amp} = \eps/3$. Split $\del$ and $\beta$ similarly.
\STATE $P\gets\textsc{IteratedCubicScoreCover}(G,d,q,\eps_{cov},\del/3,\beta/3)$\;
\IF{$P \text{ is } \bot$}
    \RETURN $\bot$\; 
\ENDIF
\STATE Let $R= ([n], E\setminus (E\cap P))$\;
\STATE $\widetilde L_P\gets\textsc{ReleaseOnSupport}(G,P,\eps_{sup},\del/3,\beta/3)$\;

\STATE Let $K_{\text{final}} = \widetilde{O}(d^2q/\eps)$ be the promised target energy. 
\STATE Split $\eps_{amp}$ into $\eps_{test} = \eps_{amp}/2$ and $\eps_{rel} = \eps_{amp}/2$.
\STATE Let the safety margin be $M_{\text{final}} = 3d^2 \frac{\log(6/\del)}{\eps_{test}}$.
\STATE Sample $\xi_{\text{final}} \sim \operatorname{Lap}(3d^2 / \eps_{test})$.
\IF{$\kappa_2(R) + \xi_{\text{final}} > K_{\text{final}} - M_{\text{final}}$}
    \RETURN $\bot$\; 
\ENDIF

\STATE Run the power-based spectral amplifier (Algorithm~\ref{alg:spectral-amplifier}) on $R$ with $q=4$ and budgets $(\eps_{rel}/2,\del/6,\beta/6)$ plus a degree estimate with budgets $(\eps_{rel}/2,\beta/6)$ to obtain $\widetilde L_R$.
\STATE Form the raw Laplacian estimate: $\widetilde L \gets \widetilde L_P+\widetilde L_R$.
\STATE \textbf{SDP Projection:} Compute the projection $X$ onto the valid Laplacian cone $\mathbb{L}_n$.
\RETURN $\widehat L_G = X$.
\end{algorithmic}
\end{algorithm}
Finally, to guarantee a valid graph Laplacian on non-negative edge graphs, in Algorithm~\ref{alg:bootstrapped_fourth_power_primitive} we integrate the SDP projection exactly as in Section \ref{sec:laplacian-square}.

\begin{proof}[Proof of Theorem~\ref{thm:improved-spectral}]
The privacy of the algorithm follows universally from basic composition and the propose-test-release framework. Crucially, the algorithm employs explicit safety checks (lines 3-5 and lines 9-13) to ensure that the sensitivity bound is satisfied throughout the subsequent subroutines. 
Note that in the the testing phase (lines 9-13), the only event that consumes the approximate DP budget $\delta$ is when a residual graph $R$ with true local energy $\kappa_2(R) > K_{\text{final}}$ accidentally passes the test. By setting the safety margin to $M_{\text{final}} = 3d^2 \frac{\log(6/\del)}{\eps_{test}}$ and applying the Laplace tail bound (Lemma~\ref{lem:laplace-tail}), the probability of such an unsafe graph passing the test is bounded by $\del/6$. 

Therefore, with all but $\del/6$ probability, any graph that proceeds to line 14 satisfies the 4th-power sensitivity bound $O(d\sqrt{K_{\text{final}}})$, guaranteeing that the subsequent spectral amplifier (Algorithm~\ref{alg:spectral-amplifier}) provides $(\eps_{amp}, \del/6)$-DP. Combining the budgets used in the cover (Algorithm~\ref{alg:cubic-cover}), the support release (Algorithm~\ref{alg:release-support}), the PTR test, and the final amplifier via basic composition (Lemma~\ref{lem:composition}) yields the overall $(\eps, \del)$-DP guarantee.

For the utility, we analyze by conditioning on the high-probability event that the algorithm does not abort (i.e., it does not output $\bot$) and all noise tail bounds hold. 

By Lemma~\ref{lem:iterated-cover-utility}, the \textsc{IteratedCubicScoreCover} successfully outputs $(P, R)$ satisfying $\kappa_2(R) \le \widetilde{O}(d^2q/\eps_{cov})$ with probability at least $1-\beta/3$, which ensures that $\kappa_2(R) \le K_{\text{final}}/2$ for the publicly promised target energy $K_{\text{final}} = \widetilde{O}(d^2q/\eps)$ by choosing large enough constants for $K_{\text{final}}$. Because $q \ge \operatorname{polylog}(n)$, the gap $K_{\text{final}} - \kappa_2(R) \ge K_{\text{final}}/2$ asymptotically dominates the safety margin $M_{\text{final}} = \widetilde{O}(d^2/\eps_{test})$. Therefore, to trigger a false abort, the Laplace noise $\xi_{\text{final}}$ must exceed $K_{\text{final}}/2 - M_{\text{final}} \gg \Delta_\kappa \frac{\ln(3/\beta)}{\eps_{test}}$. By the standard tail bound of the Laplace distribution (Lemma~\ref{lem:laplace-tail}), this happens with probability at most $\beta/3$.

Conditioned on this successful execution, the true graph Laplacian partitions exactly as $L_G = L_{G[P]} + L_R$. Applying the triangle inequality to our raw estimate $\widetilde L = \widetilde L_P + \widetilde L_R$, we sum the spectral error bounds from the support release (Lemma~\ref{lem:covered-error}) and the fourth-power residual release (Lemma~\ref{lem:thresholded_fourth_power_residual}). By a union bound over all subroutines, this raw estimation error bounds $\norm{L_G-\widetilde L}_2$ with overall probability at least $1-\beta$. 

Finally, because the true unweighted graph Laplacian inherently satisfies $L_G \in \mathbb{L}_n$, the optimality of the SDP projection guarantees that $\norm{\widehat L_G-\widetilde L}_2 \le \norm{L_G-\widetilde L}_2$. Applying the triangle inequality one last time yields $$\norm{L_G-\widehat L_G}_2 \le \norm{L_G-\widetilde L}_2 + \norm{\widetilde L-\widehat L_G}_2 \le 2\norm{L_G-\widetilde L}_2, $$ which gives the claimed error interface and completes the proof.
\end{proof}

\section{An Edge-sensitive Private Cut Oracle}\label{sec:edge-sensitive-cut-oracle}
Here we construct a new oracle for the private release of graphs that preserves all-cut values. This edge-sensitive cut oracle, together with the spectral approximation results in Section~\ref{sec:laplacain_analysis} serve as key components of our final worst-case cut approximation result in Section~\ref{sec:cut}. However, the development of this oracle is self-contained and independent of previous spectral primitives developed in Section~\ref{sec:laplacain_analysis}. As such, it may be of independent interest. Throughout this section, we denote $\cut_{G}(\cdot) \equiv w_G(\cdot)$ as the cut size function.

\begin{theorem}[Cut Oracle]\label{thm:cut-oracle-main}
Let $G=(V,E)$ be a simple unweighted graph on $n$ vertices, and let $M\geq |E|$ be public.  For $\eps\in (0,1), \delta\in(0,1/2), \gamma\in(0, 1/4)$, 
there is a polynomial-time $(\eps,\delta)$-edge-DP algorithm $\mathsf{CutOracle}$ that outputs a weighted graph $\widehat G$ such that, with probability at least $1-1/n^c$, for all $S\subseteq V$,
\[
  \left|\cut_{\widehat G}(S)-\cut_G(S)\right|
  \leq \gamma\,\cut_G(S)+\alpha,
\]
where
\[
  \alpha
  =\widetilde{O}_{\del}\left(
      \frac{n}{\eps}
      +\left(\frac{n^2M}{\eps^2\gamma}\right)^{1/3}
    \right).
\]
Furthermore, the output has total edge weight at most $M$.
\end{theorem}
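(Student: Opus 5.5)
We follow the outline of Section~\ref{sec:tech_overview-cut-oracle}: a private multiplicative-weights (mirror-descent) scheme over padded edge histograms, with violated cuts found by a log-det–regularized SDP.

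\textbf{Setup.} Encode $G$ as a histogram $h$ on $\binom{[n]}{2}\cup\{\bot\}$ with $h(\bot)=M-|E|$, so $\|h\|_1=M$. Neighboring graphs move one unit of mass, so $\|h-h'\|_1\le 2$. Initialize $\widehat h_0$ uniform with mass $M$ and track the potential $\Phi_t=\mathrm{KL}(h\,\|\,\widehat h_t)$, which is at most $M\log(n^2)$ initially. The final output $\widehat G$ is $\widehat h_T$ with the $\bot$ coordinate dropped, so its total weight is at most $M$ automatically.

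\textbf{Per-round step.} In each round, solve a concave maximization over covariances $\Sigma\succeq 0$ with $\Sigma_{ii}\le 1$. The objective is a linear SDP relaxation of
\[
\max_S \pm\bigl(\cut_{\widehat G_t}(S)-(1\pm\gamma)\cut_G(S)\bigr),
\]
plus the regularizer $\lambda\log\det\Sigma$. Two facts are needed:
\begin{itemize}
\item \textbf{Bias.} By Hadamard's inequality (Lemma~\ref{lem:hadamard}) together with a slightly shrunk feasible point, the regularizer costs only $O(\lambda n\log n)$ additive loss relative to the unregularized optimum.
\item \textbf{Stability.} Strong concavity of $\log\det$ in the local norm $\|\Sigma^{-1/2}\cdot\Sigma^{-1/2}\|_F$, combined with the fact that one edge changes the linear term by a rank-$2$ matrix of bounded local norm, gives $\|\Sigma^{-1/2}(\Sigma'-\Sigma)\Sigma^{-1/2}\|_F=O(1/\lambda)$.
\end{itemize}
Release one sample $g\sim\mathcal N(0,\Sigma)$ and round it by signs/hyperplane to a cut query $c$. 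Since $\mathcal N(0,\Sigma)$ and $\mathcal N(0,\Sigma')$ are close in the multiplicative-Gaussian sense, each round is $(O(1/\lambda),\delta_0)$-DP. Advanced composition (Lemma~\ref{lem:advanced_composition}) over $T$ rounds then requires $\lambda=\widetilde\Theta_\delta(\sqrt T/\eps)$. Because the rounding step is randomized, the comparison against the true $\cut_G$ only needs to hold in expectation, via the Goemans–Williamson style covariance identity.

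\textbf{Update and potential drop.} Update $\widehat h_{t+1}\propto \widehat h_t\exp(-\eta\,\mathrm{clip}_B(\text{gradient}))$, renormalized to mass $M$. Set the clipping threshold $B=\widetilde O(n)$, which is justified by Gaussian tails, and take $\eta=\Theta(\gamma/B)$. The heart of the proof, analogous to Lemma~\ref{lem:potential progress}, is the expected potential change:
\[
\E[\Phi_{t+1}-\Phi_t]\le -\eta(\text{gain})+\eta^2 B\,U_t,
\]
where $U_t=\langle \widehat h_t+h,|c|\rangle$. Whenever the current synthetic graph violates the target by more than $\alpha$, the gain is at least $\alpha+\gamma\langle h,c\rangle$. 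The slack term $\eta\gamma\langle h,c\rangle$ then absorbs the variance term $\eta^2BU_t$, after handling the $\widehat h_t$ part of $U_t$ by the same violation inequality. This yields a drop of $\Omega(\eta\alpha)=\Omega(\gamma\alpha/B)$ per violating round. Hence at most
\[
T=\widetilde O\!\left(\frac{MB}{\gamma\alpha}\right)=\widetilde O\!\left(\frac{nM}{\gamma\alpha}\right)
\]
violating rounds can occur, up to polylog factors. Stopping is decided by a noisy test, either a sparse-vector check or a fixed $T$ with averaging.

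\textbf{Main obstacle.} The hardest step is making this absorption rigorous for the randomized, clipped feedback. Clipping bias must be controlled, and the sampled Gaussian's second moment must be bounded by the cut values rather than by $n^2$; I would do this by bounding $\E[g_i-g_j]^2$ through $\Sigma$. I would also lean on the technical overview's claim that $T=\widetilde O(M/(\gamma\alpha))$, not the cruder $nM/(\gamma\alpha)$ above.

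\textbf{Balancing.} Require
\[
\alpha\gtrsim \lambda n=\widetilde O\!\left(\frac{n\sqrt T}{\eps}\right)
\quad\text{with}\quad T=\widetilde O\!\left(\frac{M}{\gamma\alpha}\right).
\]
This gives $\alpha^{3/2}\gtrsim n\sqrt{M/\gamma}/\eps$, i.e. $\alpha=(n^2M/(\eps^2\gamma))^{1/3}$. The additive $n/\eps$ term covers the regime $T=O(1)$ and the $O(\log n/\eps)$ per-round threshold noise. Finally, a union bound over all rounds gives success probability $1-n^{-c}$.
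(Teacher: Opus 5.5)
Your architecture is the paper's: padded histogram, log-det regularized SDP, a Gaussian sample from the optimizer, relative MWU with $\eta=\Theta(\gamma/B)$, $\gamma$-absorption of the variance, and balancing $\alpha\approx\lambda n$ against $T\approx M/(\gamma\alpha)$. However, the step that produces the exponent $1/3$ is not established. By your own count, $B=\widetilde O(n)$ gives $T=\widetilde O(nM/(\gamma\alpha))$, and balancing then yields only $\alpha=\widetilde O\bigl((n^3M/(\eps^2\gamma))^{1/3}\bigr)$. Appealing to the overview's $T=\widetilde O(M/(\gamma\alpha))$ assumes the conclusion.

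The missing idea is to feed back the unbiased quadratic estimator $\widehat g(\{u,v\})=\tfrac14(z_u-z_v)^2$ with $z\sim\mathcal N(0,X)$, and $z_0^2$ with $z_0\sim\mathcal N(0,t)$ on the slack coordinate, rather than a cut. Each coordinate equals $\tfrac{1-X_{uv}}{2}\xi^2$ with $\xi\sim\mathcal N(0,1)$. Consequently:
\begin{itemize}
\item $\E[\widehat g]=\tfrac{1-X_{uv}}{2}$ exactly, so the expected first-order gain is expressed directly through the SDP solution. On a non-halting round the linear part of the SDP objective is at least $9\alpha/16$, because the regularizer is nonpositive.
\item $\E[\widehat g^2]\le 3\cdot\tfrac{1-X_{uv}}{2}$.
\item Clipping at $B=\Theta(\log(M/\alpha))$ biases each coordinate by at most $\int_B^\infty e^{-x/2}\,dx=2e^{-B/2}$, hence the gain by at most $4Me^{-B/2}\ll\alpha$.
\end{itemize}
With $\eta B\le 1/4$ the bound $e^x\le 1+x+x^2$ applies. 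The second-order term is then at most $3\eta^2U$ with $U=\sum_{\{u,v\}}w_t(\{u,v\})\tfrac{1-X_{uv}}{2}+s(w_t)\,t$. It is absorbed by the $\tfrac{\gamma}{1\pm\gamma}U$ part of the expected gain once $\eta\le\gamma/(6(1+\gamma))$. This is Lemma~\ref{lem:potential progress}, and it gives a drop of $\Omega(\gamma\alpha/B)$ with $B$ only logarithmic. Even a ``no clipping w.h.p.'' argument needs only $B=O(\log(nT))$, since a maximum of polynomially many scaled $\chi^2_1$ variables is logarithmic; $\widetilde O(n)$ is an overestimate.

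Rounding $g$ by signs to a cut query, as you propose, would actually break this. It replaces $\tfrac{1-X_{uv}}{2}$ by $\arccos(X_{uv})/\pi$, whose ratio to $\tfrac{1-X_{uv}}{2}$ ranges over $[0.878,\infty)$. Since $a_\sigma=w_t-(1\pm\gamma)w^*$ has coefficients of both signs, the Goemans--Williamson comparison (valid only for nonnegative weights) gives nothing. A large SDP value then no longer certifies a positive expected gain.

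Two further points need repair. First, the potential drop holds only in expectation, so the iteration bound comes from optional stopping on the supermartingale $\Phi(t)+\Delta t$ plus Markov. That gives only constant success probability; a union bound over rounds does not give $1-n^{-c}$. The paper boosts this by running $O(\log(1/\beta))$ independent copies under advanced composition and outputting the first run that halts. Second, the halting test must release a low-sensitivity certificate. The paper adds Gaussian noise to the regularized values $F_\sigma(w_t,w^*)$, whose sensitivity is at most $4$ by the optimality argument of Lemma~\ref{lem:f_sensitivity}. It then uses the feasible point $(X_S^\rho,\rho)$ to show that $\max_\sigma F_\sigma\le 5\alpha/8$ implies the mixed guarantee for every cut.
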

To get the error bound $\widetilde{O}_{\del}({n}/{\eps} + \left({n^2M}/{(\eps^2\gamma)}\right)^{1/3})$, we utilize the private mirror descent framework. The challenge in applying this method to graph cuts is the need for a separation oracle: exactly identifying the worst-case cut violation among $2^n$ subsets is both computationally intractable and sensitive. Inspired by Eli{\'a}{\v{s}} et al.~\cite{eliavs2020differentially}, we resolve this by relaxing the combinatorial cut checking problem into a continuous convex optimization problem, which is equivalently a log-determinant regularized SDP. This carefully designed objective encodes all cut violations while enforcing stability on its optimal solution matrix to adopt privacy mechanisms. Further, compared to the direct application of the standard mirror descent framework \cite{eliavs2020differentially,peng2025differentially} which yields an additive error bound of $\widetilde{O}_{\eps,\del}(\sqrt{Mn})$, to obtain a better dependency on $M$ we privately test the objective value at each iterate and use the potential function method to analyze the utility.

\begin{remark}[Remark of Theorem~\ref{thm:cut-oracle-main}]
The primary motivation for designing this oracle is to replace the best-known (before us) edge-sensitive error bound of $\widetilde O(\sqrt{Mn})$ due to Eli{\'a}{\v{s}} et al.~\cite{eliavs2020differentially}. This oracle will serve as a black-box interface in Section~\ref{sec:cut}, where it achieves its full potential: after using the spectral primitive (Theorem~\ref{thm:improved-spectral}) from Section~\ref{sec:bootstrapped_fourth_power} to trigger a sequence of recursive private expander decomposition, for arbitrary dense input graphs, we are able to reduce the number of edges in the residual graph to $n^{5/4 + o(1)}$. Applying our oracle to this residual graph then yields a worst-case bound of $\widetilde{O}((n^2n^{5/4+o(1)})^{1/3}) = \widetilde{O}(n^{13/12 + o(1)})$, which improves upon the previous $\widetilde{O}(n^{1.25})$ bound.
\end{remark}

\subsection{The Regularized Separation Objective and Cut Approximation}

Let $G=(V, E)$ be a simple unweighted graph on $n$ vertices with edge-weight vector $w^* \in \{0, 1\}^{\binom{n}{2}}$. We assume a publicly known upper bound $M \geq \|w^*\|_1$. The goal is to privately release a weighted graph $\widehat{G}$ with edge-weight vector $w \in \mathbb{R}_{\geq 0}^{\binom{n}{2}}$ that preserves all cut values of $G$. To maintain a bounded solution space, we define the feasible region of the  edge-weight vector as $\mathcal{W} := \{w \in \mathbb{R}_{\geq 0}^{\binom{n}{2}} : \|w\|_1 \leq M\}$. For any $w \in \mathcal{W}$, we introduce a \emph{slack variable} $s(w) := M - \|w\|_1$, which captures the difference of total weight sum
. 
For any cut $S \subseteq V$, the Boolean cut query $c_S\in\{0,1\}^{\cX}$ is defined as:
\[
 c_S(\{u,v\})=\one[\{u,v\}\in(S,V\setminus S)],
\] hence the cut value of $S$ on $w$ equals to $\cut_w(S) = \langle w, c_S \rangle$.

Given a target multiplicative approximation parameter $\gamma \in (0, 1)$, we define the \textbf{signed} edge residual vectors between $w$ and $w^*$ as $a_\sigma(w, w^*) \in \mathbb{R}^{\binom{n}{2}}$ and their corresponding slack residuals $a_\sigma^{\text{slack}}(w, w^*) \in \mathbb{R}$ for directions $\sigma \in \{+, -\}$:
\begin{align}
a_+(w, w^*) &= w - (1+\gamma)w^*, \quad &a_+^{\text{slack}}(w, w^*) = s(w) - (1+\gamma)s(w^*), \\
a_-(w, w^*) &= (1-\gamma)w^* - w, \quad &a_-^{\text{slack}}(w, w^*) = (1-\gamma)s(w^*) - s(w).
\end{align}
We also write them in abbreviated form as $a_\sigma, a_\sigma^{\text{slack}}$, if $w,w^*$ are clear in the context. Note that the sum of the residuals is an invariant: $\sum_{e \in \binom{n}{2}} a_\sigma(e) + a_\sigma^{\text{slack}} = -\gamma M$ for $\sigma\in \{+1,-1\}$. To construct a computationally efficient and private separation oracle, we relax the problem of checking the all $2^n$ cuts into a continuous convex optimization over a positive semi-definite matrix $X \in \mathbb{R}^{n \times n}$ and a scalar $t \in \mathbb{R}$. For a fixed parameter $0 < \rho \leq 1/8$, we define the regularized cut separation objective function $F_\sigma(w, w^*)$ as:
\begin{equation}
    \label{eq:hat_objective_f_cut_oracle}
    \widehat{F}_\sigma(w, w^*,X,t):=\sum_{e=\{u,v\} \in \binom{n}{2}} a_\sigma(w,w^*)[e]\cdot\frac{1 - X_{uv}}{2} + a_\sigma^{\text{slack}}(w,w^*) \cdot t,
\end{equation}
\begin{equation}
    \label{eq:objective_f_cut_oracle}
    F_\sigma(w, w^*) := \max_{X\in\mathcal{X}, \rho \leq t\leq 1 } \left\{ \widehat{F}_\sigma(w, w^*,X,t) + \lambda \mathcal{R}(X, t) \right\},
\end{equation}
where the domain of $X$ is 
\[\mathcal{X}=\left\{X\in\mathbb{R}^{n\times n}|X = X^\top, \diag(X) = \mathbf{1}_n, X \succeq \rho I_n\right\},\]
and the regularizer $\mathcal{R}(X, t)$ controls the stability of the optimum:
\[\mathcal{R}(X, t) := \log\det(X) + \log t.\]
The parameter $\lambda$ is used to control the privacy. We use the private mirror descent method introduced by \cite{eliavs2020differentially} to solve the optimization
\begin{equation}
    \min_{w\in\mathcal{W}}\max_{\sigma\in\{+,-\}}F_\sigma(w, w^*).
\end{equation}
Before we introduce our algorithm, we first show that solving this optimization will yield a graph approximation that preserves all cut values. For every cut $S$ and every $0<\rho\leq1/8$, let $\beta^S\in\{\pm1\}^n$ be the sign vector of $S$, with $\beta^S_u=1$ on $S$ and $-1$ outside $S$. We define the matrix \[X_S^{\rho}=(1-2\rho)\beta^S(\beta^S)^\top+2\rho I_n.\]
\begin{lemma}[Intuition Behind Objective Function]\label{lem:intuition_obj_f}
    For every cut $S$ and every $0<\rho\leq1/8$, we have $X_S^{\rho}\in\mathcal{X}$ and 
    \[
    \widehat{F}_\sigma(w, w^*,X_S^\rho,\rho)
 =(1-2\rho)\ip{a_\sigma(w,w^*)}{c_S}-\rho\gamma M.\]
In particular, we have
    \[
    \widehat{F}_+(w, w^*,X_S^\rho,\rho)
    =(1-2\rho)\left(\cut_w(S)-(1+\gamma)\cut_{w^*}(S)\right)-\rho\gamma M,\]
    and
    \[
    \widehat{F}_-(w, w^*,X_S^\rho,\rho)
    =(1-2\rho)\left((1-\gamma)\cut_{w^*}(S)-\cut_w(S)\right)-\rho\gamma M.\]
\end{lemma}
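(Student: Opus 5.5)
The plan is to verify membership of $X_S^\rho$ in $\mathcal{X}$ directly and then compute $\widehat F_\sigma$ by splitting the edge sum according to whether a pair is cut by $S$.

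\textbf{Membership.} Symmetry of $X_S^\rho$ is clear. For the diagonal, $(\beta^S_u)^2=1$ gives $(X_S^\rho)_{uu}=(1-2\rho)+2\rho=1$. For the PSD bound, $\beta^S(\beta^S)^\top\succeq 0$ and $1-2\rho\ge 0$, so $X_S^\rho\succeq 2\rho I_n\succeq\rho I_n$. The scalar choice $t=\rho$ lies in $[\rho,1]$.

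\textbf{Edge terms.} For $u\neq v$ we have $(X_S^\rho)_{uv}=(1-2\rho)\beta^S_u\beta^S_v$. This entry equals $-(1-2\rho)$ when $\{u,v\}$ crosses $S$ and $+(1-2\rho)$ otherwise. Hence
\[
\frac{1-(X_S^\rho)_{uv}}{2}=(1-2\rho)\,c_S(\{u,v\})+\rho,
\]
since a crossing pair gives $(2-2\rho)/2=1-\rho$ and a non-crossing pair gives $2\rho/2=\rho$. Summing against $a_\sigma$ gives
\[
\sum_e a_\sigma[e]\,\frac{1-X_{uv}}{2}=(1-2\rho)\ip{a_\sigma}{c_S}+\rho\sum_e a_\sigma[e].
\]

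\textbf{Slack term and invariant.} Adding the slack term $a_\sigma^{\text{slack}}\cdot\rho$, the total becomes $(1-2\rho)\ip{a_\sigma}{c_S}+\rho\bigl(\sum_e a_\sigma[e]+a_\sigma^{\text{slack}}\bigr)$. It remains to use the invariant $\sum_e a_\sigma[e]+a_\sigma^{\text{slack}}=-\gamma M$, which I would verify explicitly. For $\sigma=+$, the sum is $\|w\|_1-(1+\gamma)\|w^*\|_1+M-\|w\|_1-(1+\gamma)(M-\|w^*\|_1)=-\gamma M$. The case $\sigma=-$ is symmetric. This establishes the main identity.

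\textbf{Specializations.} The two displayed formulas follow from linearity, $\ip{w}{c_S}=\cut_w(S)$, together with the definitions of $a_\pm$.

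There is no real obstacle here. The only point needing care is bookkeeping of the constant $\rho$ contributed by every pair, which is exactly what the slack invariant absorbs.
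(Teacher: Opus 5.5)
Your proof is correct and follows the paper's argument. Membership comes from the unit diagonal and $X_S^\rho\succeq 2\rho I_n$. The per-pair identity $\frac{1-(X_S^\rho)_{uv}}{2}=(1-2\rho)c_S(e)+\rho$ you obtain entrywise from $\beta^S_u\beta^S_v$, whereas the paper writes it as the quadratic form $\frac14 b_e^\top X_S^\rho b_e$; the two are equivalent. The slack invariant $\sum_e a_\sigma(e)+a_\sigma^{\text{slack}}=-\gamma M$ absorbs the constant, and you helpfully verify it explicitly where the paper only asserts it.
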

\begin{proof}
The diagonal of $X_S^\rho$ is one.  Its eigenvalues are $2\rho$ on $(\beta^S)^\perp$ subspace and $n(1-2\rho)+2\rho$ in the $\beta^S$ direction, so $X_S^\rho\succeq\rho I$. Hence $X_S^\rho\in\mathcal{X}$. For any $e=\{u,v\}$, write $b_e = e_u - e_v$, we have
\[
 \frac{1-(X_S^\rho)_{uv}}{2}=\frac14b_e^\top X_S^\rho b_e
 =\frac{1-2\rho}{4}(\beta^S_u-\beta^S_v)^2+\frac{2\rho}{4}\norm{b_e}_2^2
 =(1-2\rho)c_S(e)+\rho.
\]
Combining this with $\sum_{e \in \binom{n}{2}} a_\sigma(e) + a_\sigma^{\text{slack}} = -\gamma M$ gives
\[
 \widehat{F}_\sigma(w, w^*,X_S^\rho,\rho)=\sum_{e=\{u,v\} \in \binom{n}{2}} a_\sigma(w,w^*)_e\cdot\frac{1 - (X_S^\rho)_{uv}}{2}+a_\sigma^{\text{slack}}\cdot \rho
 =(1-2\rho)\ip{a_\sigma}{c_S}-\rho\gamma M.
\]
Using $\cut_w(S) = \langle w, c_S \rangle$ completes the proof.
\end{proof}

We use the following lemma to bound the error introduced by the regularizer term: 
\begin{lemma}\label{lem:regularizer-bound}
For a given $\rho\in(0,1/8)$ and any $X\in\mathcal{X},\rho \leq t\leq 1$, we have,
\[
 \mathcal R(X,t)\leq0.
\]
Furthermore, for any cut $S$, we have,
\[\mathcal R(X_S^\rho,\rho)\geq-C_0 n\log(1/\rho)\]
for a universal constant $C_0$.
\end{lemma}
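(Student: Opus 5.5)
The lemma has two parts: an upper bound $\mathcal R(X,t)\le 0$ on the whole domain, and a lower bound for the specific point $(X_S^\rho,\rho)$. Both come from explicit control of $\log\det$ and $\log t$.

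\textbf{Upper bound.} Fix $X\in\mathcal{X}$ and $\rho\le t\le 1$. Since $X\succeq \rho I_n \succeq 0$, the matrix $X$ is symmetric positive semi-definite. Hadamard's inequality (Lemma~\ref{lem:hadamard}) then gives $\det(X)\le\prod_i X_{ii}=1$, using $\diag(X)=\mathbf 1_n$. Moreover $\det(X)\ge\rho^n>0$, so $\log\det(X)$ is well defined and satisfies $\log\det(X)\le 0$. Also $t\le 1$ gives $\log t\le 0$. Summing the two yields $\mathcal R(X,t)\le 0$.

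\textbf{Lower bound.} Take $X=X_S^\rho$ and use its spectrum, already computed in the proof of Lemma~\ref{lem:intuition_obj_f}. The eigenvalue $2\rho$ has multiplicity $n-1$ on $(\beta^S)^\perp$. In the direction $\beta^S$ (note $\|\beta^S\|_2^2=n$) the eigenvalue is $n(1-2\rho)+2\rho\ge 1$, because $\rho\le 1/8$. Hence
\[
\log\det(X_S^\rho)=(n-1)\log(2\rho)+\log\bigl(n(1-2\rho)+2\rho\bigr)\ge (n-1)\log(2\rho)\ge -(n-1)\log(1/\rho),
\]
where the last step uses $\log(2\rho)\ge\log\rho$. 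Adding $\log\rho=-\log(1/\rho)$ gives
\[
\mathcal R(X_S^\rho,\rho)\ge -n\log(1/\rho),
\]
so the claim holds with $C_0=1$.

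\textbf{Possible obstacle.} The only point needing care is positivity of the eigenvalue in the $\beta^S$ direction. This is where $\rho\le 1/8<1/2$ is used. The same fact also ensures the argument applies to trivial cuts $S=\emptyset$ or $S=V$, where $\beta^S=\pm\mathbf 1$ and the eigenvalue computation is unchanged.
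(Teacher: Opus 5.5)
Your proposal is correct and follows the paper's proof. The upper bound comes from Hadamard's inequality with unit diagonal and $t\le 1$, and the lower bound from the explicit spectrum of $X_S^\rho$: the eigenvalue $2\rho$ with multiplicity $n-1$, and $n(1-2\rho)+2\rho\ge 1$ along $\beta^S$. The one addition is that you make the constant explicit, $C_0=1$, which the paper leaves implicit.
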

\begin{proof}
Hadamard's inequality (Lemma~\ref{lem:hadamard}) gives $\det X\leq1$ because $\diag(X)=\mathbf{1}_n$. Plus that $t\leq 1$ gives $\mathcal \mathcal R(X,\rho)\leq 0$. Since the eigenvalues of $X_S^\rho$ are $2\rho$ on $(\beta^S)^\perp$ subspace and $n(1-2\rho)+2\rho$ in the $\beta^S$ direction, we have
\[
 \log\det X_S^\rho
 =(n-1)\log(2\rho)
  +\log\bigl(n(1-2\rho)+2\rho\bigr).
\]
Therefore, by $\rho\leq1/8$,
\[R(X_S^\rho,\rho)=(n-1)\log(2\rho)
  +\log\bigl(n(1-2\rho)+2\rho\bigr)+\log\rho\geq-C_0 n\log(1/\rho)\]
for a universal constant $C_0$.
\end{proof}

The next lemma ensures that solving the optimization problem (\cref{eq:objective_f_cut_oracle}) sufficiently well yields a graph approximation that preserves all cut values with multiplicative error.

\begin{lemma}\label{lem:opt implies cut approx}
Let $\alpha > 0$. Assume $\rho = \min(\frac{1}{8},\frac{\alpha}{16\gamma M})$ and $\lambda$ satisfies $C_0 \lambda n \log(1/\rho) \leq \frac{\alpha}{32}$ for the universal constant $C_0$ decided in Lemma~\ref{lem:regularizer-bound}, then if
\[\max_{\sigma \in \{+,-\}} F_\sigma(w, w^*) \leq \frac{5}{8}\alpha,\]
then
\[|\cut_w(S) - \cut_{w^*}(S)| \leq \gamma \cut_{w^*}(S) + \alpha\]
holds for all $S \subseteq V$.
\end{lemma}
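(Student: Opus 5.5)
The plan is to argue by contrapositive, plugging the explicit feasible point $(X_S^\rho,\rho)$ into the maximization defining $F_\sigma$. Fix any cut $S$ and any sign $\sigma\in\{+,-\}$. Since $X_S^\rho\in\mathcal{X}$ and $t=\rho$ lies in $[\rho,1]$ (Lemma~\ref{lem:intuition_obj_f}), we get
\[
F_\sigma(w,w^*)\ \ge\ \widehat F_\sigma(w,w^*,X_S^\rho,\rho)+\lambda\mathcal R(X_S^\rho,\rho).
\]
By Lemma~\ref{lem:intuition_obj_f}, the first term equals $(1-2\rho)\ip{a_\sigma}{c_S}-\rho\gamma M$. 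By Lemma~\ref{lem:regularizer-bound} and the hypothesis on $\lambda$, the regularizer term is at least $-C_0\lambda n\log(1/\rho)\ge -\alpha/32$. Combining these with the assumption $F_\sigma\le \tfrac58\alpha$ gives
\[
(1-2\rho)\ip{a_\sigma}{c_S}\le \tfrac58\alpha+\tfrac{1}{32}\alpha+\rho\gamma M .
\]

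Next we bound the $\rho\gamma M$ term. Since $\rho\le \alpha/(16\gamma M)$, we have $\rho\gamma M\le \alpha/16$, so the right-hand side is at most $(\tfrac{20}{32}+\tfrac1{32}+\tfrac{2}{32})\alpha=\tfrac{23}{32}\alpha$. Since $\rho\le 1/8$, we have $1-2\rho\ge 3/4$. Dividing through yields
\[
\ip{a_\sigma}{c_S}\le \tfrac{4}{3}\cdot\tfrac{23}{32}\alpha=\tfrac{23}{24}\alpha\le\alpha .
\]
A small care point: if $\ip{a_\sigma}{c_S}$ is negative, the division is trivially fine and the desired inequality holds anyway.

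Finally we translate back to cuts. For $\sigma=+$, the bound reads $\cut_w(S)-(1+\gamma)\cut_{w^*}(S)\le\alpha$. For $\sigma=-$, it reads $(1-\gamma)\cut_{w^*}(S)-\cut_w(S)\le\alpha$. Together these give $|\cut_w(S)-\cut_{w^*}(S)|\le\gamma\cut_{w^*}(S)+\alpha$ for every $S$.

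The only delicate point is the constant bookkeeping: the $\rho\gamma M$ offset and the regularizer lower bound must together fit under the slack $\alpha-\tfrac58\alpha$ after the $(1-2\rho)^{-1}$ rescaling. The choice $\rho\le\alpha/(16\gamma M)$ is exactly what makes this work. Note also that $\rho$ cannot be taken too small, since $\log(1/\rho)$ enters the condition on $\lambda$; but that is a hypothesis here, not something to prove.
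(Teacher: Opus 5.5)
Your argument is correct and uses exactly the paper's ingredients: the feasible point $(X_S^\rho,\rho)$, the identity of Lemma~\ref{lem:intuition_obj_f}, the regularizer bound of Lemma~\ref{lem:regularizer-bound}, and the bounds $1-2\rho\ge 3/4$ and $\rho\gamma M\le\alpha/16$. The only difference is presentation. The paper assumes a violating cut with $\langle a_\sigma,c_S\rangle>\alpha$ and derives $F_\sigma\ge\tfrac{21}{32}\alpha>\tfrac58\alpha$. You instead run the same inequalities forward to get $\langle a_\sigma,c_S\rangle\le\tfrac{23}{24}\alpha$, so your argument is direct rather than the ``contrapositive'' you announce, which is harmless.
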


\begin{proof}
We prove the contrapositive. Assume there exists a cut $S$ such that
\[|\cut_w(S) - \cut_{w^*}(S)| > \gamma \cut_{w^*}(S) + \alpha.\]
Using $\cut_w(S) = \langle w, c_S \rangle$, this implies that there exists a sign $\sigma \in \{+, -\}$ such that $\langle a_\sigma, c_S \rangle > \alpha$. By Lemma~\ref{lem:intuition_obj_f} and $\rho=\min(\frac{1}{8},\frac{\alpha}{16\gamma M})$, we have 
\[
\widehat{F}_\sigma(w, w^*,X,t)
=(1-2\rho) \langle a_\sigma, c_S \rangle - \rho \gamma M\geq\frac{3\alpha}{4}-\frac{\alpha}{16}\geq\frac{11}{16}\alpha.\]
Furthermore, by Lemma~\ref{lem:regularizer-bound} and the assumption $C_0 \lambda n \log(1/\rho) \leq \frac{\alpha}{32}$, we have:
\[\lambda\mathcal{R}(X_S^\rho, \rho)
\geq-\lambda C_0 n\log(1/\rho)
\geq-\frac{\alpha}{32}.\]

Because $F_\sigma(w, w^*)$ maximizes over all $X\in\mathcal{X}$ and $\rho\leq t\leq1$, evaluating it at $(X_S^\rho, \rho)$ provides a lower bound of $F_\sigma(w, w^*)$. Therefore,
\begin{align*}
    F_\sigma(w, w^*)&\geq
    \widehat{F}_\sigma(w, w^*,X_S^\rho,\rho)
    +\lambda\mathcal{R}(X_S^\rho, \rho) \\
    &\geq \frac{11\alpha}{16}-\frac{\alpha}{32} > \frac{5}{8}\alpha.
\end{align*}
This contradicts with the assumption.
\end{proof}

\subsection{The Private Mirror Descent Algorithm}
In this section, we use the Private Mirror Descent framework by \cite{eliavs2020differentially} to solve the optimization problem defined in (\cref{eq:objective_f_cut_oracle}). We use the mirror function $\psi(w): \mathcal{W} \to \mathbb{R}$ defined as below:
\[\psi(w) := \sum_{e \in \binom{n}{2}} w_e \log w_e + s(w) \log s(w)\]
Generally, in each iteration, we solve the SDP defined in \cref{eq:objective_f_cut_oracle}, privately release the clipped gradient of the objective function, and finally perform a mirror descent update step with respect to the mirror function $\psi$. The details are in Algorithm~\ref{alg:cut_oracle}.

\begin{algorithm}[h]
\caption{Private Edge-sensitive Cut Oracle via Mirror Descent}
\label{alg:cut_oracle}
\begin{algorithmic}[1]
\REQUIRE Graph $G = ([n],E)$ with public edge bound $|E|\leq M$, parameters $\gamma,\epsilon, \delta \in (0, 1)$.
\ENSURE An edge-weight vector $w \in \mathcal{W}$.

\STATE Set parameters: target error $\alpha=\widetilde{O}_{\del}(
      \frac{n}{\eps}
      +\left(\frac{n^2M}{\eps^2\gamma}\right)^{1/3})$, threshold $B=\Theta(\log\frac{M}{\alpha})$, time limit $T=\Theta(\frac{BM\log n}{\gamma\alpha})$, $\lambda=\nu=\Theta(\frac{\sqrt{T}\log^{3/2}(T/\del)}{\eps})$, $\rho = \min(\frac{1}{8},\frac{\alpha}{16\gamma M})$, learning rate $\eta = \Theta(\frac{\gamma}{B})$.
\STATE Choose the initial solution $w_1(e) = \frac{M}{\binom{n}{2}+1}$ for all $e \in \binom{n}{2}$ and $s(w_1) = \frac{M}{\binom{n}{2}+1}$.
\FOR{$t = 1,2, \dots, T$}
    \item[]\textbf{Solving the SDP:}
    \STATE For both directions $\sigma \in \{+, -\}$, find the maximizer $X_{t,\sigma} \in \mathcal{X}$ and $t_{t,\sigma} \in [\rho, 1]$ of $F_\sigma(w_t, w^*)$, where $F_{\sigma}$ is defined in \cref{eq:objective_f_cut_oracle}. 
    
    \textbf{Test and Release:}
    \STATE Release private estimates $\widetilde{F}_{t,\sigma} = F_{t,\sigma} + Z_{t,\sigma}$, where $Z_{t,\sigma} \sim \mathcal{N}(0, \nu^2)$.
    \IF{$\max(\widetilde{F}_{t,+}, \widetilde{F}_{t,-}) \leq 19\alpha/32$}
        \RETURN $w_t$.
    \ENDIF
    
    \textbf{Release the Private Clipped Gradient:}
    \STATE Let $\sigma_t=\argmax_{\sigma\in\{+,-\}}\widetilde{F}_{t,\sigma}$.
    \STATE Choose a random vector $z \sim \mathcal{N}(0, X_{t,\sigma_t})$ and $z_0 \sim \mathcal{N}(0, t_{t,\sigma_t})$.
    \STATE Let $\widehat g_t(\{u,v\}) = \frac{1}{4}(z_u - z_v)^2, \forall \{u,v\} \in \binom{n}{2}$ and $\widehat g_t^{\text{slack}}=z_0^2$.
    \STATE Let $g_t(\{u,v\}) = \min\left(B, \widehat g_t(\{u,v\})\right), \forall \{u,v\} \in \binom{n}{2}$ and $g_t^{\text{slack}}=\min(B,\widehat g_t^{\text{slack}})$.
    
    \textbf{Mirror Descent Step:}
    \STATE Let $\theta_t = 1$ if $\sigma_t = +$, and $\theta_t = -1$ if $\sigma_t = -$.
    \STATE For each edge $e \in \binom{n}{2}$:
    \begin{equation}
    \label{eq:mirror descent update}
        w_{t+1}(e) = M \cdot \frac{w_t(e) \exp(-\theta_t\eta g_t(e))}{s(w_t) \exp(-\theta_t\eta g_t^{\text{slack}}) + \sum_{e' \in \binom{n}{2}} w_t(e') \exp(-\theta_t\eta g_t(e'))}.
    \end{equation}
    \STATE Update the slack variable: $s(w_{t+1})=M-\norm{w_{t+1}}_1$.
\ENDFOR
\RETURN $\perp$.
\end{algorithmic}
\end{algorithm}

\begin{remark}[Remark of Algorithm~\ref{alg:cut_oracle}]

Lemma~\ref{lem:gradient_unbias} shows that the unclipped gradient estimator $\widehat g_t$ is an unbiased approximation of the real gradient of $F_{\sigma}(w_t,w^*)$. However, we cannot directly use the unclipped gradient estimator, 
as the proof of Lemma~\ref{lem:potential progress} requires a constant upper bound on the gradient to establish the per-iteration decrease of the potential function. 
Also note that the update step indeed follows the mirror descent update rule~\cite{bubeck2015convex}:
\[w_{t+1} = \arg\min_{w \in \cW} \left\{ \langle \theta_t \eta g_t, w \rangle + \theta_t \eta g_t^{\text{slack}} s(w) + D_\psi(w \| w_t) \right\},\]
where $D_\psi(w \| w_t)$ is the Bregman divergence $D_\psi(w' \| w) := \sum_{e} w'_e \log \frac{w'_e}{w_e} + s(w') \log \frac{s(w')}{s(w)}$.
\end{remark}

\subsection{Differential Privacy Analysis}
In this section, we analyze the privacy of Algorithm~\ref{alg:cut_oracle}. Recall that we employ the standard edge level privacy, i.e., two unweighted graphs $G$ and $G'$ are considered neighboring (denoted as $G \sim G'$) if they differ by exactly one edge insertion or deletion. Therefore, their corresponding true edge-weight vectors satisfy $\|w^* - (w^{*})^\prime\|_1 \leq 1$.

For notations, denote $P_{\sigma}(w,w^*)=(a_{\sigma}(w,w^*),a_{\text{slack}}(w,w^*))$ and denote $Q(X, t)=(\{\frac{1-X_{uv}}{2}\}_{uv},t)$. Note that $Q(X, t)\in \mathbb{R}^{{n\choose 2} + 1}$ and $\|Q(X,t)\|_\infty\leq1$ since $X \succeq 0$ and $\diag(X) = \mathbf{1}_n$ implies $|X_{uv}|\leq1$. Then we have $\widehat{F}_{\sigma}(w,w^*,X,t)=\langle P_{\sigma}(w,w^*), Q(X, t)\rangle$. We first prove that the value of the objective function $F_\sigma$ has bounded sensitivity, which guarantees the privacy of releasing $\widetilde{F}_{t,\sigma}$.

\begin{lemma}[Sensitivity of $F_\sigma$]\label{lem:f_sensitivity}
For any fixed edge-weight vector $w_t \in \mathcal{W}$ and any two neighboring original edge-weight vectors $w^*, w^{*\prime}$, the objectives satisfy:
\[|F_\sigma(w_t, w^*) - F_\sigma(w_t, w^{*\prime})|\leq 4\]
for both $\sigma \in \{+,-\}$. 
\end{lemma}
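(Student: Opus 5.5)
The plan is to use the standard fact that a pointwise maximum of functions is $1$-Lipschitz, in sup norm, with respect to perturbations of those functions. Write $F_\sigma(w_t,w^*)=\max_{(X,t)}\{\langle P_\sigma(w_t,w^*),Q(X,t)\rangle+\lambda\mathcal R(X,t)\}$. The feasible domain $\mathcal X\times[\rho,1]$ and the regularizer $\lambda\mathcal R(X,t)$ do not depend on $w^*$; only the linear coefficient vector $P_\sigma$ does. Hence for any feasible $(X,t)$ the two objectives (for $w^*$ and $w^{*\prime}$) differ by exactly $\langle P_\sigma(w_t,w^*)-P_\sigma(w_t,w^{*\prime}),Q(X,t)\rangle$.

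I will bound this difference uniformly over the domain by Hölder's inequality:
\[
\bigl|\langle P_\sigma(w_t,w^*)-P_\sigma(w_t,w^{*\prime}),Q(X,t)\rangle\bigr|\le \norm{P_\sigma(w_t,w^*)-P_\sigma(w_t,w^{*\prime})}_1\,\norm{Q(X,t)}_\infty .
\]
As already noted before the lemma, $\norm{Q(X,t)}_\infty\le 1$. This uses $|X_{uv}|\le 1$, which follows from $X\succeq0$ and $\diag(X)=\mathbf 1_n$, so that $(1-X_{uv})/2\in[0,1]$; it also uses $t\le 1$.

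For the $\ell_1$ term, the edge part is $a_\sigma(w_t,w^*)-a_\sigma(w_t,w^{*\prime})=\mp(1\pm\gamma)(w^*-w^{*\prime})$, whose $\ell_1$ norm is at most $(1+\gamma)\norm{w^*-w^{*\prime}}_1\le 1+\gamma$. The slack part is $\mp(1\pm\gamma)(s(w^*)-s(w^{*\prime}))$, and $|s(w^*)-s(w^{*\prime})|=|\norm{w^{*\prime}}_1-\norm{w^*}_1|\le 1$. The total is therefore at most $2(1+\gamma)\le 4$.

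Let $(X^\star,t^\star)$ be the maximizer for $w^*$, which exists by compactness of the feasible set together with upper semicontinuity (the objective is concave and $\log\det$ is finite on $X\succeq\rho I$). Evaluating the $w^{*\prime}$ objective at $(X^\star,t^\star)$ gives $F_\sigma(w_t,w^{*\prime})\ge F_\sigma(w_t,w^*)-4$. The symmetric argument gives the other direction, which proves the claim.

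No step is a real obstacle. The only points to verify carefully are that the domain and the regularizer are independent of $w^*$, and that the slack coordinate contributes only $O(1)$.
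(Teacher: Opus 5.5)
Your proposal is correct and follows the paper's proof essentially line for line. Like the paper, you bound $\|P_\sigma-P_\sigma'\|_1\le 2(1+\gamma)\le 4$ by combining the edge part and the slack part, use $\|Q(X,t)\|_\infty\le 1$, compare the two objectives at a single optimizer via H\"older's inequality, and finish by symmetry. The differences are cosmetic: you evaluate the $w^{*\prime}$-objective at the $w^*$-maximizer, whereas the paper evaluates the $w^*$-objective at the $w^{*\prime}$-maximizer. You also justify that the maximizer exists, using compactness of $\mathcal{X}\times[\rho,1]$ and continuity of $\log\det$ on $X\succeq\rho I$, which the paper takes for granted.
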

Consequently, by adding Gaussian noise $Z_{t,\sigma} \sim \mathcal{N}(0, \nu^2)$ with $\nu \geq \frac{4\sqrt{2\ln(1.25/\delta_0)}}{\eps_0}$, the release of $\widetilde{F}_{t,\sigma}$ is $(\eps_0, \delta_0)$-differentially private for both $\sigma \in \{+,-\}$ (Lemma~\ref{lem:gaussian-mechanism}).
\begin{proof}
Let $P_{\sigma}=(a_\sigma(w_t,w^*),a_\sigma^{\text{slack}})$ be the signed residual vectors for $w^*$, and $P_{\sigma}'=(a_\sigma(w_t,w^{*\prime}),a_\sigma^{\text{slack}})$ be the signed residual vector for $w^{*\prime}$.
For $\sigma = +$, by the definition, the change in the edge residual vector is:
\[\|a_+(w_t, w^*) - a_+(w_t, w^{*\prime})\|_1 = (1+\gamma)\|w^* - w^{*\prime}\|_1 \leq 1+\gamma.\]

Since $s(w)=M-\norm{w}_1$, the change in the slack residual is:
\[|a_+^{\text{slack}}(w_t, w^*) - a_+^{\text{slack}}(w_t, w^{*\prime})| = (1+\gamma)|s(w^*) - s(w^{*\prime})| =(1+\gamma)|\norm{w^*}_1-\norm{w^{*\prime}}_1|\leq 1+\gamma.\]
Thus, the total $\ell_1$-norm change of $(a_+, a_+^{\text{slack}})$ is bounded by $c = 2(1+\gamma) \leq 4$ (since $\gamma \in (0,1)$). Similarly, for $\sigma = -$, the total $\ell_1$-norm change of $(a_-, a_-^{\text{slack}})$ is bounded by $2(1-\gamma) \leq 2 \leq 4$. We can conclude that $\norm{P_{\sigma}-P_{\sigma}'}_1\leq4$.

Let $(X^*, t^*)$ and $(X^{*\prime}, t^{*\prime})$ be the unique optimizers of the regularized objectives parameterized by $w^*$ and $w^{*\prime}$, respectively. By the optimality of $(X^*, t^*)$, the objective value for $w^{*\prime}$ evaluated at the $(X^{*\prime}, t^{*\prime})$ must be smaller, i.e.,
\begin{align*}
    F_\sigma(w_t, w^*) 
    &= \langle P_{\sigma}, Q(X^*, t^*)\rangle + \lambda \mathcal{R}(X^*, t^*) \\
    &\geq \langle P_{\sigma}, Q(X^{*\prime}, t^{*\prime})\rangle + \lambda \mathcal{R}(X^{*\prime}, t^{*\prime}).
\end{align*}
Conversely, the objective for $w^{*\prime}$ is exactly evaluated at its own optimizer:
\[
F_\sigma(w_t, w^{*\prime})
 =\langle P_{\sigma}', Q(X^{*\prime}, t^{*\prime})\rangle+ \lambda \mathcal{R}(X^{*\prime},t^{*\prime}).
\]
Combining the above yields:
\begin{align*}
    F_\sigma(w_t, w^{*\prime}) - F_\sigma(w_t, w^*)
&\leq \langle P_{\sigma}'-P_{\sigma}, Q(X^{*\prime}, t^{*\prime})\rangle\\ 
&\leq \|P_{\sigma}-P_{\sigma}'\|_1 \|Q(X^{*\prime},t^{*\prime})\|_\infty \\
&\leq 4 \times 1 = 4.
\end{align*}
The last inequality comes from the fact $\|Q(X,t)\|_\infty\leq1$ for any $X\in\mathcal{X}$ and $t\in[\rho,1]$. A symmetric argument establishes that $F_\sigma(w_t, w^*) - F_\sigma(w_t, w^{*\prime}) \leq 4$, concluding the proof.
\end{proof}

Next, we prove the release of the gradient $\widehat{g}_t$ is private. First we need to bound the sensitivity of the optimal solution. In particular, denote $\Sigma = \left(\begin{array}{cc}
    X^{*} & 0 \\
    0 & t^{*}
\end{array}\right)$ and $\Sigma' = \left(\begin{array}{cc}
    X^{*\prime} & 0 \\
    0 & t^{*\prime}
\end{array}\right)$. Then the unclipped gradients $(\widehat{g},\widehat{g}^{\text{slack}})$ for $w^*$ and $w^{*\prime}$ are sampled from the multivariate Gaussian distributions $\mathcal{N}(0, \Sigma)$ and $\mathcal{N}(0, \Sigma')$. Thus this lemma shows that  covariance matrices from neighboring graphs are close in the relative Frobenius norm.

\begin{lemma}[Stability of the Optimizer]\label{lem:optimizer_stability}
Let $(X^*, t^*)$ and $(X^{*\prime}, t^{*\prime})$ be the unique optimizers of the regularized objectives parameterized by $w^*$ and $w^{*\prime}$, respectively. Denote the corresponding covariance matrices as $\Sigma = \left(\begin{array}{cc}
    X^{*} & 0 \\
    0 & t^{*}
\end{array}\right)$ and $\Sigma' = \left(\begin{array}{cc}
    X^{*\prime} & 0 \\
    0 & t^{*\prime}
\end{array}\right)$. If $\lambda \geq 12$, then:

\[\|\Sigma^{-1/2}(\Sigma' - \Sigma)\Sigma^{-1/2}\|_F \leq O(\frac{1}{\lambda})\leq\frac{1}{2}.\]

\end{lemma}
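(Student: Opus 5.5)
We will treat the regularized separation problem as maximizing a strongly concave function over the joint block-diagonal variable $\Sigma=\diag(X,t)$, and combine a first-order optimality argument with the local strong convexity of $-\log\det$. Write the objective as $\Phi_{w^*}(\Sigma)=\langle P_\sigma, Q(\Sigma)\rangle+\lambda\log\det\Sigma$. Here $\log\det\Sigma=\log\det X+\log t$, and $Q$ is affine in $\Sigma$. The feasible set $\mathcal K=\{(X,t): X\in\mathcal X,\ t\in[\rho,1]\}$ is convex and is the same for $w^*$ and $w^{*\prime}$. Since $\Phi_{w^*}$ is strictly concave, the optimizers $\Sigma,\Sigma'$ are unique. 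The two objectives differ only by the linear term $L(\Sigma)=\langle P_\sigma-P'_\sigma,Q(\Sigma)\rangle$. Its gradient with respect to $\Sigma$ is supported on a constant number of entries: one edge pair contributes to $X_{uv}$ and $X_{vu}$ with coefficient $\pm(1\pm\gamma)/4$ each, and the slack contributes to $t$. So $G:=\nabla L$ satisfies $\|G\|_F=O(1)$, by the $\ell_1$ bound $\|P_\sigma-P'_\sigma\|_1\le 4$ from the proof of Lemma~\ref{lem:f_sensitivity}.

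\textbf{Step 1 (variational inequalities).} By optimality of $\Sigma$ for $\Phi_{w^*}$ and of $\Sigma'$ for $\Phi_{w^{*\prime}}$ over the common convex set, we have
$\langle\nabla\Phi_{w^*}(\Sigma),\Sigma'-\Sigma\rangle\le0$ and $\langle\nabla\Phi_{w^{*\prime}}(\Sigma'),\Sigma-\Sigma'\rangle\le0$.
The linear parts common to both objectives cancel when these are added. This leaves
\[
\lambda\langle \Sigma^{-1}-\Sigma'^{-1},\Sigma'-\Sigma\rangle\le \langle G,\Sigma-\Sigma'\rangle,
\]
up to sign conventions. Equivalently, $\lambda\langle \Sigma'^{-1}-\Sigma^{-1},\Sigma'-\Sigma\rangle\le |\langle G,\Sigma'-\Sigma\rangle|$.

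\textbf{Step 2 (strong convexity in the local norm).} Set $H=\Sigma^{-1/2}(\Sigma'-\Sigma)\Sigma^{-1/2}$, so that $\Sigma'^{-1}-\Sigma^{-1}=-\Sigma^{-1/2}\bigl(I-(I+H)^{-1}\bigr)\Sigma^{-1/2}$. The left side then becomes $\lambda\tr\bigl(H(I-(I+H)^{-1})\bigr)=\lambda\sum_i h_i^2/(1+h_i)$, where $h_i>-1$ are the eigenvalues of $H$.

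For the right side, use $\Sigma\preceq\|\Sigma\|_2 I$ to get $|\langle G,\Sigma^{1/2}H\Sigma^{1/2}\rangle|\le\|\Sigma^{1/2}G\Sigma^{1/2}\|_F\|H\|_F$. Because $G$ touches only the entries $(u,v),(v,u)$ and the slack, this is controlled by the diagonal of $\Sigma$: $\|\Sigma^{1/2}(e_ue_v^\top+e_ve_u^\top)\Sigma^{1/2}\|_F\le 2\sqrt{\Sigma_{uu}\Sigma_{vv}}=2$, since $\diag(X)=\mathbf 1$ and $t\le1$. Hence the right side is at most $c_0\|H\|_F$ for an absolute constant $c_0$ (around $4$).

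\textbf{Step 3 (closing the bound).} This is the main obstacle: the term $h_i^2/(1+h_i)$ is only quadratic when the $h_i$ are bounded. We handle it with a convexity/continuation trick. Let $\Sigma_s=\Sigma+s(\Sigma'-\Sigma)$ for $s\in[0,1]$, and apply the same variational inequality to the pair $(\Sigma,\Sigma_s)$; this is legitimate by monotonicity of $-\nabla\log\det$ along segments. Alternatively, use the elementary bound $h^2/(1+h)\ge\min(h^2,|h|)/2$, which gives $\sum_i\min(h_i^2,|h_i|)\le (2c_0/\lambda)\|H\|_F$.

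If some $|h_i|\ge1$, then $\sum_i\min(h_i^2,|h_i|)\ge\|H\|_F$: the large eigenvalues contribute $\sum|h_i|\ge(\sum h_i^2)^{1/2}$ over that group, and the small ones contribute $h_i^2$. This forces $\|H\|_F\le 2c_0\|H\|_F/\lambda$, which is impossible for $\lambda\ge12>2c_0$ unless $H=0$. So all $|h_i|<1$. Then $\sum_i h_i^2/2\le (c_0/\lambda)\|H\|_F$, which gives $\|H\|_F\le 2c_0/\lambda=O(1/\lambda)$, and for $\lambda\ge12$ this is at most $1/2$.

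I expect to have to tune the constants (the precise $c_0$ and the case split between large and small eigenvalues) so that the threshold $\lambda\ge12$ suffices.
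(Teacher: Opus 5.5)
Your proof is correct and follows the paper's route. It uses the same pair of variational inequalities, the same identity $\lambda\sum_i h_i^2/(1+h_i)$ for the left side, and the same Cauchy--Schwarz bound through $\|\Sigma^{1/2}C_{uv}\Sigma^{1/2}\|_F\le 1$ via $\diag(X)=\mathbf 1$ and $t\le 1$; note the left side is $\langle\Sigma^{-1}-\Sigma'^{-1},\Sigma'-\Sigma\rangle\ge 0$, not the reversed sign in your ``equivalently'' line. The only difference is the last step: the paper uses $1+h_i\le 1+\|H\|_F$ to get $\lambda\|H\|_F^2/(1+\|H\|_F)\le 4\|H\|_F$, i.e.\ $\|H\|_F\le 4/(\lambda-4)$, instead of your $\min(h^2,|h|)$ case split, which yields $\|H\|_F\le 2c_0/\lambda$ and so needs the sharper constant $c_0\le\frac32(1+\gamma)<3$ (which your own computation gives) rather than $c_0\approx 4$ to reach $1/2$ at $\lambda=12$.
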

\begin{proof}
To match the dimensions of the optimization variables $X \in \mathcal{X} \subset \mathbb{R}^{n \times n}$ and $t \in \mathbb{R}$ encoded in the covariance matrix $\Sigma = \begin{pmatrix} X & 0 \\ 0 & t \end{pmatrix} \in \mathbb{R}^{(n+1) \times (n+1)}$, we formulate the linear objective coefficients as matrices in the exact same space. Let $C_{uv} = \begin{pmatrix} \frac{1}{2}(e_u e_v^\top + e_v e_u^\top) & 0 \\ 0 & 0 \end{pmatrix} \in \mathbb{R}^{(n+1) \times (n+1)}$ and $C^{\text{slack}} = \begin{pmatrix} 0 & 0 \\ 0 & 1 \end{pmatrix} \in \mathbb{R}^{(n+1) \times (n+1)}$.
We define the gradient matrix $G_{\sigma}$ with respect to $\Sigma$:
\[ G_{\sigma} = -\frac{1}{2}\sum_{e=\{u,v\}} a_{\sigma}(w_t,w^*)_{uv} C_{uv} + a_{\sigma}^{\text{slack}}(w_t,w^*) C^{\text{slack}}, \]
and define $G_{\sigma}'$ analogously for $w^{*\prime}$. Because the feasible space of $\Sigma$ is a convex set and the objective is concave, the constrained optimal solutions satisfy the first-order variational inequality. Specifically, since both $\Sigma$ and $\Sigma'$ are feasible, cross-evaluating the gradients $\nabla F_\sigma = G_\sigma + \lambda \Sigma^{-1}$ and $\nabla F_\sigma' = G_\sigma' + \lambda \Sigma'^{-1}$ at their respective optima yields:
\[\langle G_{\sigma} + \lambda \Sigma^{-1}, \Sigma' - \Sigma \rangle \leq 0,\quad
\langle G_{\sigma}' + \lambda \Sigma'^{-1}, \Sigma - \Sigma' \rangle \leq 0\]
Summing these two inequalities gives:
\[\lambda \langle \Sigma^{-1} - \Sigma'^{-1}, \Sigma' - \Sigma \rangle \leq \langle G_{\sigma}' - G_{\sigma}, \Sigma' - \Sigma \rangle.\]
Define $E = \Sigma^{-1/2}(\Sigma' - \Sigma)\Sigma^{-1/2}$. We denote the eigenvalues of $I+E$ as $\mu_i$. Since $I+E=\Sigma^{-1/2}\Sigma'\Sigma^{-1/2}$, it holds that $\mu_i>0$ and $\Sigma' = \Sigma^{1/2}(I+E)\Sigma^{1/2}$. Furthermore, $\mu_i \leq 1 + |\mu_i - 1| \leq 1 + \|E\|_F$ and $\norm{E}_F^2=\sum_i(\mu_i-1)^2$. Thus the left-hand side can be lower bounded as follows:
\begin{align*}
    \lambda \langle \Sigma^{-1} - \Sigma'^{-1}, \Sigma' - \Sigma \rangle
    &=\lambda \tr\left( (\Sigma^{-1} - \Sigma'^{-1})(\Sigma' - \Sigma) \right)\\
    &=\lambda \tr\left( (\Sigma^{-1} - \Sigma^{-1/2}(I+E)^{-1}\Sigma^{-1/2})(\Sigma^{1/2}(I+E)\Sigma^{1/2} - \Sigma) \right)\\
    &= \lambda \tr(E (I-(I+E)^{-1})) 
    = \lambda \sum_i \frac{(\mu_i-1)^2}{\mu_i}\\
    &\geq \lambda \frac{\sum_i(\mu_i-1)^2}{1+\norm{E}_F}
    = \lambda \frac{\norm{E}_F^2}{1+\norm{E}_F}.
\end{align*}

For the right-hand side, utilizing our earlier definition of $C_{uv}$, we note that $\Sigma^{1/2} C_{uv} \Sigma^{1/2} = \frac{1}{2} \begin{pmatrix} x_u x_v^\top + x_v x_u^\top & 0 \\ 0 & 0 \end{pmatrix}$ where $x_u = X^{1/2} e_u$. Its squared Frobenius norm expands to $$\frac{1}{2}( (x_u^\top x_v)^2 + \norm{x_u}_2^2 \norm{x_v}_2^2 ) = \frac{1}{2}(X_{uv}^2 + X_{uu}X_{vv}) = \frac{1}{2}(X_{uv}^2 + 1).$$ Since $|X_{uv}|\leq 1$, we establish $\norm{\Sigma^{1/2} C_{uv} \Sigma^{1/2}}_F \leq 1$. Thus by the Cauchy-Schwarz inequality,
\[\langle C_{uv},\Sigma' - \Sigma\rangle=\tr(C_{uv}\Sigma^{1/2} E \Sigma^{1/2})=\tr(\Sigma^{1/2}C_{uv}\Sigma^{1/2}E)\leq\norm{\Sigma^{1/2}C_{uv}\Sigma^{1/2}}_F\norm{E}_F\leq\norm{E}_F.\]
Similarly, since $\norm{\Sigma^{1/2} C^{\text{slack}} \Sigma^{1/2}}_F = t \le 1$, we have $\langle C^{\text{slack}},\Sigma' - \Sigma\rangle\leq\norm{E}_F$. Then we have,
\begin{align*}
    \langle G_{\sigma}' - G_{\sigma}, \Sigma' - \Sigma \rangle 
    =&-\frac{1}{2}\sum_{uv} (a_{\sigma}(w_t,w^{*\prime})_{uv} - a_{\sigma}(w_t,w^*)_{uv})\langle C_{uv},\Sigma' - \Sigma\rangle\\
    &+(a_{\sigma}^{\text{slack}}(w_t,w^{*\prime})-a_{\sigma}^{\text{slack}}(w_t,w^*))\langle C^{\text{slack}},\Sigma' - \Sigma\rangle\\
    &\leq\norm{P_{\sigma}-P_{\sigma}'}_1\norm{E}_F.
\end{align*}
Combining both sides yields
\[\lambda \frac{\norm{E}_F^2}{1+\norm{E}_F} \leq \norm{P_{\sigma}-P_{\sigma}'}_1\norm{E}_F.\]
From the proof of Lemma~\ref{lem:f_sensitivity}, we have $\norm{P_{\sigma}-P_{\sigma}'}_1\leq 4$. Thus if $\lambda \geq 12$, this simplifies to 
\[\|E\|_F=\|\Sigma^{-1/2}(\Sigma' - \Sigma)\Sigma^{-1/2}\|_F \leq\frac{\norm{P_{\sigma}-P_{\sigma}'}_1}{\lambda-\norm{P_{\sigma}-P_{\sigma}'}_1}=O(\frac{1}{\lambda})\leq\frac{1}{2}.\]
\end{proof}

With the stability bounded in the relative Frobenius norm, we utilize the the following lemma proven by \cite{eliavs2020differentially} to prove the gradient is privately released.

\begin{lemma}[Close Covariances Give Approximate DP, Lemma 4.10 in \cite{eliavs2020differentially}]
    \label{lem:dp_gradient}
    Let $\del_0$ a fixed parameter and $\Sigma,\Sigma'$ be symmetric positive definite matrices s.t. $\|\Sigma^{-1/2}(\Sigma' - \Sigma)\Sigma^{-1/2}\|_F \leq 1/2$. Denote $\mathrm{pdf}_{\Sigma}(x)$ and $\mathrm{pdf}_{\Sigma'}(x)$ the probability density functions of $\mathcal{N}(0,\Sigma)$ and $\mathcal{N}(0,\Sigma')$ respectively. Let $\eps_0=O(\log\frac{1}{\del_0}\|\Sigma^{-1/2}(\Sigma' - \Sigma)\Sigma^{-1/2}\|_F)$. Then we have 
    $$\mathrm{pdf}_{\Sigma}(x)\leq
    e^{\eps_0}\mathrm{pdf}_{\Sigma'}(x)$$
    with probability at least $(1-\del_0)$ over $x\in \mathcal{N}(0,\Sigma)$.
\end{lemma}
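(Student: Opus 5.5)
The plan is to compute the privacy loss random variable directly. Write $E=\Sigma^{-1/2}(\Sigma'-\Sigma)\Sigma^{-1/2}$, so that $\Sigma'=\Sigma^{1/2}(I+E)\Sigma^{1/2}$ and $I+E\succ 0$; since $\|E\|_2\le\|E\|_F\le 1/2$, its eigenvalues $\mu_i$ lie in $[1/2,3/2]$. For $x\sim\mathcal N(0,\Sigma)$ set $y=\Sigma^{-1/2}x\sim\mathcal N(0,I)$. Then
\[
\log\frac{\mathrm{pdf}_{\Sigma}(x)}{\mathrm{pdf}_{\Sigma'}(x)}=\frac12\log\det(I+E)+\frac12\, y^\top\bigl((I+E)^{-1}-I\bigr)y .
\]
Diagonalize $E$ and rotate $y$ (still standard Gaussian) to get $\frac12\sum_i\bigl[\log\mu_i+(\mu_i^{-1}-1)y_i^2\bigr]$. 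Splitting $y_i^2=1+(y_i^2-1)$ separates a deterministic part from a centered fluctuation.

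For the deterministic part $\frac12\sum_i(\log\mu_i+\mu_i^{-1}-1)$, note that $f(\mu)=\log\mu+1/\mu-1$ satisfies $f(1)=f'(1)=0$ and $|f''|=O(1)$ on $[1/2,3/2]$. Hence $f(\mu_i)=O((\mu_i-1)^2)$, and the sum is $O(\|E\|_F^2)=O(\|E\|_F)$.

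For the fluctuation $\frac12\sum_i c_i(y_i^2-1)$ with $c_i=\mu_i^{-1}-1$, we have $|c_i|\le 2|\mu_i-1|$. So $\sum_i c_i^2\le 4\|E\|_F^2$ and $\max_i|c_i|\le 2\|E\|_F$. Apply Hanson--Wright, or equivalently Bernstein for centered $\chi^2_1$ variables (whose sub-exponential tail is given in Lemma~\ref{lem:chi_square_tails}). With probability at least $1-\delta_0$,
\[
\Bigl|\sum_i c_i(y_i^2-1)\Bigr|\le O\Bigl(\|c\|_2\sqrt{\log(1/\delta_0)}+\|c\|_\infty\log(1/\delta_0)\Bigr)=O\bigl(\|E\|_F\log(1/\delta_0)\bigr),
\]
taking $\delta_0$ bounded below $1$ so that $\sqrt{\log(1/\delta_0)}\le\log(1/\delta_0)$ up to constants.

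Combining the two parts, the log-ratio is at most $O(\|E\|_F\log(1/\delta_0))=\varepsilon_0$ outside an event of probability $\delta_0$, which is exactly the claim. The only delicate step is the concentration bound. It requires control of both $\|c\|_2$ and $\|c\|_\infty$, and both are handled by $\|E\|_F\le 1/2$ together with the eigenvalue bounds $\mu_i\ge 1/2$, which keep $c_i$ bounded.
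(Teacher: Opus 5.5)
Your proof is correct. The paper gives no proof of this lemma: it imports it verbatim from Eli\'a\v{s} et al.\ and uses it only as a black box in Theorem~\ref{thm:main_privacy}. Your derivation is the standard argument for that cited result. You whiten by $\Sigma^{-1/2}$ and write the privacy loss as a deterministic KL-type term $\frac12\sum_i(\log\mu_i+\mu_i^{-1}-1)=O(\|E\|_F^2)$ plus a centered diagonal Gaussian chaos $\frac12\sum_i c_i(y_i^2-1)$. You then control the chaos by Hanson--Wright or Bernstein, using $\|c\|_2\le 2\|E\|_F$ and $\|c\|_\infty\le 2\|E\|_2$. So there is no competing route in the paper to reconcile with, and no gap.

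Two minor remarks. First, only the one-sided upper tail is needed. Second, your argument actually yields the slightly sharper bound
\[
O\bigl(\|E\|_F^2+\|E\|_F\sqrt{\log(1/\delta_0)}+\|E\|_2\log(1/\delta_0)\bigr),
\]
with the operator norm on the $\log(1/\delta_0)$ term. The restriction to $\delta_0$ bounded away from $1$, which you correctly flag, is needed only to absorb the first two terms into the stated form $O(\|E\|_F\log(1/\delta_0))$.
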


We now state and formally prove the main privacy theorem for Algorithm~\ref{alg:cut_oracle}.

\begin{theorem}[Privacy of Algorithm~\ref{alg:cut_oracle}]\label{thm:main_privacy}
For any privacy parameters $\eps, \delta \in (0, 1)$, Algorithm~\ref{alg:cut_oracle} with parameter $\lambda=\nu=\Theta(\frac{\sqrt{T}\log^{3/2}(T/\del)}{\eps})$\footnote{For simplicity, we assume $\frac{\sqrt{T}\log^{3/2}(T/\del)}{\eps}=\Omega(1)$.} satisfies $(\eps, \delta)$-differential privacy.
\end{theorem}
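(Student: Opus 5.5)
The plan is to view Algorithm~\ref{alg:cut_oracle} as an adaptive composition of at most $T$ rounds. Each round $t$ makes two accesses to the private input $w^*$: (i) the release of the noisy objective values $\widetilde F_{t,+},\widetilde F_{t,-}$, and (ii) the release of the Gaussian sample $(z,z_0)\sim\mathcal N(0,\Sigma_{t,\sigma_t})$. Everything else in the round is post-processing (Lemma~\ref{lem:postprocess}) of these outputs together with the public transcript of earlier rounds: the stopping test, the choice of $\sigma_t$, the clipping to $g_t$, and the mirror-descent update producing $w_{t+1}$. Conditioned on a fixed transcript of rounds $1,\dots,t-1$, the iterate $w_t$ is fixed and identical under $G$ and $G'$. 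So it suffices to prove that each round is $(\eps_0,\delta_0)$-DP with
\[
\eps_0 = c\,\eps/\sqrt{T\log(2/\delta)}, \qquad \delta_0 = \delta/(2T),
\]
and then invoke advanced composition (Lemma~\ref{lem:advanced_composition}) over the at most $T$ rounds. Early termination only shortens the transcript, so it does not hurt the composition.

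For step (i), Lemma~\ref{lem:f_sensitivity} bounds the sensitivity of each $F_\sigma(w_t,\cdot)$ by $4$. By the Gaussian mechanism (Lemma~\ref{lem:gaussian-mechanism}), each scalar release is $(\eps_0/4,\delta_0/4)$-DP, provided
\[
\nu \geq 16\sqrt{2\ln(5/\delta_0)}/\eps_0 = \Theta\!\left(\sqrt{T\log(1/\delta)\log(T/\delta)}/\eps\right).
\]
The choice $\nu=\Theta(\sqrt T\log^{3/2}(T/\delta)/\eps)$ dominates this. Releasing both $\sigma\in\{+,-\}$ therefore costs $(\eps_0/2,\delta_0/2)$ by basic composition.

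For step (ii), fix $\sigma_t$; it is determined by the already-released $\widetilde F_{t,\sigma}$. Lemma~\ref{lem:optimizer_stability} gives $\|\Sigma^{-1/2}(\Sigma'-\Sigma)\Sigma^{-1/2}\|_F\le C/\lambda\le 1/2$ once $\lambda\ge 12$, which holds by the footnote assumption after adjusting constants. Lemma~\ref{lem:dp_gradient} then says that the privacy-loss ratio of $\mathcal N(0,\Sigma)$ versus $\mathcal N(0,\Sigma')$ is at most $e^{\eps_1}$, except on a set of measure $\delta_1$, where $\eps_1=O(\log(1/\delta_1)/\lambda)$. The standard conversion of such a pointwise bound gives $(\eps_1,\delta_1)$-DP for the sample, and applying it in both directions (swapping $\Sigma,\Sigma'$) gives the symmetric statement.

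The step I expect to be the main obstacle is this symmetric, two-sided use of Lemma~\ref{lem:dp_gradient}. The stability bound is stated relative to $\Sigma^{-1/2}$, but the reverse direction needs the analogous bound relative to $\Sigma'^{-1/2}$. I would get it by noting that the proof of Lemma~\ref{lem:optimizer_stability} is symmetric in $w^*,w^{*\prime}$, or alternatively from $\|E\|_F\le 1/2$, which implies $\|(I+E)^{-1}-I\|_F\le 2\|E\|_F$. I would also remark that the clipped quantity $g_t$ is a deterministic function of the sample, so it inherits the guarantee.

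Taking $\delta_1=\delta_0/2$ and $\lambda=\Theta(\log(T/\delta)/\eps_0)=\Theta(\sqrt T\log^{3/2}(T/\delta)/\eps)$ yields $\eps_1\le\eps_0/2$. Thus each round is $(\eps_0,\delta_0)$-DP, and Lemma~\ref{lem:advanced_composition} gives $(\eps,\delta)$-DP overall.
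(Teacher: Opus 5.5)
Your proof is correct and follows the same route as the paper: the Gaussian mechanism via Lemma~\ref{lem:f_sensitivity} for the two objective releases, Lemma~\ref{lem:optimizer_stability} together with Lemma~\ref{lem:dp_gradient} for the covariance sample, and advanced composition (Lemma~\ref{lem:advanced_composition}). The differences are minor: you compose the three accesses within a round by basic composition and then apply advanced composition over $T$ rounds rather than over $3T$ mechanisms, and you explicitly justify the reverse-direction density bound (via the symmetry of the stability proof), which the paper only defers to Eli\'a\v{s} et al.~\cite{eliavs2020differentially}.
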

\begin{proof}
Algorithm~\ref{alg:cut_oracle} executes for at most $T$ iterations. In each iteration $t$, the algorithm accesses the sensitive true edge-weight vector $w^*$ exactly three times: releasing \emph{two} independent noisy objective functions ($\widetilde{F}_{t,+}$ and $\widetilde{F}_{t,-}$) and generating \emph{one} covariance sample sequence ($z, z_0$) based on $X_{t,\sigma_t},t_{t,\sigma_t}$. Consequently, the entire algorithm consists of $3T$ adaptive mechanisms.

By Lemma~\ref{lem:f_sensitivity} and the Gaussian Mechanism (Lemma~\ref{lem:gaussian-mechanism}), setting $\nu=\Theta(\frac{\log(1/\del_0)}{\eps_0})$ ensures that each of the two objective evaluations is $(\eps_0, \delta_0)$-differentially private. For the covariance sample, combining Lemma~\ref{lem:optimizer_stability} and Lemma~\ref{lem:dp_gradient} bounds the density ratio by $e^{\eps_0}$ with probability $(1-\del_0)$. Symmetrically applying it in the reverse neighboring direction (as formalized in \cite{eliavs2020differentially}) establishes that drawing the gradient samples satisfies $(\eps_0, \delta_0)$-differential privacy.

Therefore, setting the per-mechanism budgets to $\eps_0 = \Theta(\frac{\eps}{\sqrt{3T \log(1/\delta)}})$ and $\delta_0 = \Theta(\frac{\delta}{3T})$, by applying the Advanced Composition Theorem (Lemma~\ref{lem:advanced_composition}) over the $3T$ total mechanisms, Algorithm~\ref{alg:cut_oracle} is $(\eps, \delta)$-differentially private.
\end{proof}

\subsection{Utility Analysis}
We analyze the utility via a potential function based on the Bregman divergence in round $t$:
$$\Phi(t) := D_\psi(w^* \| w_t)=\sum_{e} w^*_e \log \frac{w^*_e}{(w_t)_e} + s(w^*) \log \frac{s(w^*)}{s(w_t)}.$$ 
Note that this analysis differs from the utility analysis in the private mirror descent framework established in \cite{eliavs2020differentially} and \cite{peng2025differentially}. For simplicity, we condition on the good event when all the Gaussian noises $\{Z_{t,\sigma}\}$ are bounded by $O(\nu\sqrt{\log (nT)})=\widetilde{O}_{\del}(\sqrt{T}/\eps)$ with probability at least $1-1/n^c$, by the Gaussian Tail Bound (Lemma~\ref{lem:gaussian_norm}) and a union bound over $T$ rounds.

First, we prove that the unclipped estimator $\widehat g$ is unbiased and the clip process incurs small additional error.

\begin{lemma}
\label{lem:gradient_unbias}
For every \(u,v\in\binom{V}{2}\),
\[
  \mathbb E[\widehat g_t (\{u,v\})]=\frac{1 - (X_{t,\sigma_t})_{uv}}{2},
  \qquad
  \mathbb E[\widehat g_t^2 (\{u,v\})]\le\frac{3 - 3(X_{t,\sigma_t})_{uv}}{2},
\]
\[\mathbb E[\widehat g_t^{\text{slack}}]=t_{t,\sigma_t},
  \qquad
  \mathbb E[(\widehat g_t^{\text{slack}})^2]\le3t_{t,\sigma_t}.\]
Furthermore,
\[
  0
  \le
  \mathbb E [\widehat g_t (\{u,v\})-g_t (\{u,v\})]
  \le
  2e^{-B/2},
  \qquad
  0
  \le
  \mathbb E[\widehat  g_t^{\text{slack}}-g_t^{\text{slack}}]
  \le
  2e^{-B/2}.
\]
\end{lemma}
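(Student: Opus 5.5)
Since $z\sim\mathcal N(0,X_{t,\sigma_t})$ with $\diag(X_{t,\sigma_t})=\mathbf 1_n$, for a fixed pair $\{u,v\}$ the scalar $y:=z_u-z_v$ is a centered Gaussian with variance $s^2:=b_e^\top X_{t,\sigma_t} b_e = X_{uu}+X_{vv}-2X_{uv}=2(1-X_{uv})$, where $b_e=e_u-e_v$. The plan is to reduce every claim to moments and tails of $\chi^2_1$ (Lemma~\ref{lem:chi_square_tails}). Write $y=s\,g$ with $g\sim\mathcal N(0,1)$. Then $\widehat g_t(\{u,v\})=\tfrac14 s^2 g^2=\tfrac{1-X_{uv}}{2}\,g^2$, so $\E[\widehat g_t]=\tfrac{1-X_{uv}}{2}$ follows from $\E[g^2]=1$. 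For the second moment, $\E[\widehat g_t^2]=\tfrac{(1-X_{uv})^2}{4}\cdot 3$. Using $0\le 1-X_{uv}\le 2$ (since $|X_{uv}|\le1$), we get $(1-X_{uv})^2\le 2(1-X_{uv})$, which gives the bound $\tfrac{3(1-X_{uv})}{2}$. The slack coordinate is identical: $z_0^2=t_{t,\sigma_t} g_0^2$ with $t_{t,\sigma_t}\le1$, so the mean is $t_{t,\sigma_t}$ and the second moment is $3t_{t,\sigma_t}^2\le 3t_{t,\sigma_t}$.

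For the clipping bias, nonnegativity is immediate since $g_t=\min(B,\widehat g_t)\le \widehat g_t$ pointwise. For the upper bound, write the bias as a tail integral,
\[
\E[\widehat g_t-g_t]=\E[(\widehat g_t-B)_+]=\int_B^\infty \Pr{\widehat g_t>x}\,dx .
\]
Here $\widehat g_t=c\,g^2$ with $c=\tfrac{1-X_{uv}}{2}\le 1$ (respectively $c=t_{t,\sigma_t}\le1$), so $\Pr{\widehat g_t>x}\le\Pr{g^2>x}$. I would use the explicit Gaussian tail $\Pr{|g|>r}\le 2e^{-r^2/2}$ rather than the unspecified constants of Lemma~\ref{lem:chi_square_tails}. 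This yields $\Pr{g^2>x}\le 2e^{-x/2}$, and integrating from $B$ gives $4e^{-B/2}$, which is off from the claimed bound by a factor of $2$.

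Obtaining exactly $2e^{-B/2}$ is the only delicate point. To get it, I would use the sharper tail $\Pr{|g|>r}\le \sqrt{2/\pi}\,e^{-r^2/2}/r$, valid for $B\ge1$, so that $\int_B^\infty \sqrt{2/(\pi x)}\,e^{-x/2}dx\le 2\sqrt{2/(\pi B)}\,e^{-B/2}\le 2e^{-B/2}$. Alternatively, compute directly $\E[(g^2-B)_+]\le \E[g^2\mathbf 1\{g^2>B\}]=2\int_{\sqrt B}^\infty r^2\phi(r)\,dr=2(\sqrt B\phi(\sqrt B)+\Pr{g>\sqrt B})$, which is at most $2e^{-B/2}$ for $B\ge1$, and $B=\Theta(\log(M/\alpha))$ is large. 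The same computation applies verbatim to the slack coordinate.
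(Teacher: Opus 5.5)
Your proof is correct and follows the paper's route: write $\widehat g_t(\{u,v\})=\tfrac{1-(X_{t,\sigma_t})_{uv}}{2}\,g^2$ and $\widehat g_t^{\text{slack}}=t_{t,\sigma_t}\,g_0^2$ with coefficients at most $1$, read off the $\chi^2_1$ moments, and write the clipping bias as $\int_B^\infty\Prb[\widehat g_t>x]\,dx$. The only difference is the tail bound: the paper gets the constant $2$ directly from $\Prb[g^2>y]\le e^{-y/2}$, which holds for all $y\ge0$ because $\Prb[g>r]\le\tfrac12e^{-r^2/2}$. Your Mills-ratio bound also works for $B\ge1$, and it even gives the slightly stronger $2\sqrt{2/(\pi B)}\,e^{-B/2}$.

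Do discard your \emph{alternatively} computation, though. Bounding $(g^2-B)_+$ by $g^2\mathbf 1\{g^2>B\}$ throws away the $-B$ term, and $2\bigl(\sqrt B\,\phi(\sqrt B)+\Prb[g>\sqrt B]\bigr)\ge\sqrt{2B/\pi}\,e^{-B/2}$. That exceeds $2e^{-B/2}$ as soon as $B>2\pi$, which is precisely the large-$B$ regime you appeal to.
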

\begin{proof}
Since $X_{uu} = X_{vv} = 1$ and $z \sim \mathcal{N}(0, X_{t,\sigma_t})$, the linear combination $(z_u - z_v)$ is a zero-mean Gaussian variable with variance:
\[\mathbb{E}[(z_u - z_v)^2] =\left(\mathbb{E}[z_u^2] - 2\mathbb{E}[z_u z_v] + \mathbb{E}[z_v^2]\right) 
    = (X_{t,\sigma_t})_{uu} - 2(X_{t,\sigma_t})_{uv} + (X_{t,\sigma_t})_{vv}= 2 - 2X_{uv}.\]
Since for Gaussian random variable $Y\sim \mathcal{N}(0,\sigma)$ there is $\mathbb{E}[Y^2] = \sigma^2$ and $\mathbb{E}[Y^4] = 3\sigma^4$, we have:
\[\mathbb{E}[\widehat g_t (\{u,v\})]=\mathbb{E}\left[\frac{1}{4}(z_u - z_v)^2\right] = \frac{1 - (X_{t,\sigma_t})_{uv}}{2}, \quad \mathbb{E}[g_t^{\text{slack}}]=\mathbb{E}[z_0^2] = t_{t,\sigma_t},\]
\[\mathbb{E}\left[\widehat{g}_t(\{u,v\})^2\right] = \frac{3}{16}\left(2 - 2(X_{t,\sigma_t})_{uv}\right)^2 \le \frac{3 - 3(X_{t,\sigma_t})_{uv}}{2}, \quad \mathbb{E}[(\widehat{g}_t^{\text{slack}})^2] = 3t_{t,\sigma_t}^2 \le 3t_{t,\sigma_t}.\]
because $X_{t,\sigma_t}\in\mathcal{X}$ ensures $|(X_{t,\sigma_t})_{uv}| \le 1$ and $t_{t,\sigma_t}\leq1$. Since $g_t(e) = \min(B, \widehat g_t(e))$ and $g_t^{\text{slack}} = \min(B, \widehat g_t^{\text{slack}})$, for every $e=\{u,v\}\in\binom{V}{2}$:
\[
\mathbb E [\widehat g_t (e)-g_t (e)] = \mathbb{E}\left[ (\widehat g_t(e) - B) \cdot \mathbb{1}[\widehat g_t(e) > B] \right] = \int_B^\infty \Prb[\widehat g_t(e) > x] dx.
\]

Since we can write $\widehat g_t(\{u,v\}) = \frac{1-(X_{t,\sigma_t})_{uv}}{2} \cdot \xi^2$, where $\xi \sim \mathcal{N}(0, 1)$ and $\xi^2$ follows a chi-squared distribution with one degree of freedom ($\chi_1^2$). Applying the standard tail bound $\Prb[\xi^2 > y] \leq e^{-y/2}$ for $y \geq 1$ (Lemma~\ref{lem:chi_square_tails}), and noting that $\frac{1-(X_{t,\sigma_t})_{uv}}{2} \leq 1$, the tail probability for all $x \geq B \geq 1$ can be bounded by:
\[
\Prb[\widehat g_t(e) > x] = \Prb[\xi^2 > \frac{2x}{1-(X_{t,\sigma_t})_{uv}}]\leq\Prb[\xi^2 > x] \leq e^{-x/2}.
\]
Integrating this tail bound yields:
\[
\mathbb E [\widehat g_t (e)-g_t (e)] \leq \int_B^\infty e^{-x/2} dx = 2e^{-B/2}.
\]
Following the identical process for the slack variable where $\widehat g_t^{\text{slack}} = t \cdot \xi^2$ with $t \leq 1$, we obtain $\mathbb E[\widehat  g_t^{\text{slack}}-g_t^{\text{slack}}]
  \le
  2e^{-B/2}$, which concludes the proof.
\end{proof}

The next lemma shows that the expected potential function is decreasing after each update step.

\begin{lemma}[Expected Potential Progress]\label{lem:potential progress}
Suppose Algorithm~\ref{alg:cut_oracle} does not terminate at iteration $t$. Assume $\alpha\geq\Omega_{\del}(\frac{\sqrt{T}}{\eps})$. By setting $B = \Theta( \log\frac{M}{\alpha})$\footnote{Without loss of generality, we assume $M >\alpha$ as $M$ is public. Otherwise, we simply release the empty graph.} such that $\eta B\leq\frac{1}{4}$ and learning rate $\eta = \Theta(\frac{\gamma}{B})$ such that $\eta \leq \frac{\gamma}{6(1+\gamma)}$, there exists a constant $c$ such that the update step guarantees an expected negative potential progress:
\[
\mathbb{E}[\Phi(t+1) - \Phi(t) \mid w_t] \leq -c \frac{\gamma \alpha}{B}.
\]
\end{lemma}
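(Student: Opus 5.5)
The plan is to compute the one-step change of the Bregman potential exactly from the closed form of the mirror-descent update \cref{eq:mirror descent update}. Then bound the log-partition term by a second-order Taylor estimate, and finally take expectations over the fresh Gaussian sample $(z,z_0)$ conditioned on $w_t$ and on the released values $\widetilde F_{t,\pm}$. Treat the slack coordinate as an extra edge throughout: write $\langle x,y\rangle$ for the sum over the $\binom n2$ edge coordinates plus the slack coordinate.

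\textbf{Step 1: exact potential change.} Let $Z_t := s(w_t)e^{-\theta_t\eta g^{\text{slack}}_t}+\sum_{e'} w_t(e')e^{-\theta_t\eta g_t(e')}$. The update gives $w_{t+1}(e)=Mw_t(e)e^{-\theta_t\eta g_t(e)}/Z_t$ and, since the total mass is $M$, the same formula for the slack. Because $\|w^*\|_1+s(w^*)=M$,
\[
\Phi(t+1)-\Phi(t)=\theta_t\eta\langle w^*,g_t\rangle+M\log\frac{Z_t}{M}.
\]
Since $\eta B\le 1/4$, every exponent satisfies $|\theta_t\eta g|\le 1/4$. Using $e^x\le 1+x+x^2$ for $|x|\le 1/4$ and $\log(1+y)\le y$, I get $M\log(Z_t/M)\le -\theta_t\eta\langle w_t,g_t\rangle+\eta^2\langle w_t,g_t^2\rangle$. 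The clipping gives $g_t^2\le Bg_t$ coordinatewise, so $\eta^2\langle w_t,g_t^2\rangle\le \eta^2B\langle w_t,g_t\rangle$. Hence
\[
\Phi(t+1)-\Phi(t)\le -\theta_t\eta\langle w_t-w^*,g_t\rangle+\eta^2B\langle w_t,g_t\rangle .
\]

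\textbf{Step 2: absorbing the variance via the multiplicative slack.} This is the heart of the argument.

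For $\theta_t=+1$, substitute $w_t=a_++(1+\gamma)w^*$. This gives
\[
\Phi(t+1)-\Phi(t)\le -\eta(1-\eta B)\langle a_+,g_t\rangle-\eta\bigl(\gamma-\eta B(1+\gamma)\bigr)\langle w^*,g_t\rangle .
\]
For $\theta_t=-1$, substitute $w_t=(1-\gamma)w^*-a_-$. This gives
\[
\Phi(t+1)-\Phi(t)\le -\eta(1+\eta B)\langle a_-,g_t\rangle-\eta\bigl(\gamma-\eta B(1-\gamma)\bigr)\langle w^*,g_t\rangle .
\]
The choice $\eta\le \gamma/(6(1+\gamma))$, with $B\ge 1$, makes both coefficients of $\langle w^*,g_t\rangle$ nonnegative. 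Since $g_t\ge0$ and $w^*\ge 0$, those terms can be dropped. This is exactly where the $-\eta\gamma U$ first-order term absorbs the $+\eta^2 U$ variance term without any Cauchy--Schwarz step. What remains is $\Phi(t+1)-\Phi(t)\le -c_\pm\eta\langle a_{\sigma_t},g_t\rangle$ with $c_\pm\in[3/4,5/4]$.

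I expect the main obstacle here: the sign of $\langle a_{\sigma_t},g_t\rangle$ is random. The $c_\pm$ factor is fixed given $\sigma_t$, so I will take the expectation of the linear quantity directly rather than bound it pathwise.

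\textbf{Step 3: expectation, clipping bias, and the non-termination condition.} Condition on $w_t$ and the released $\widetilde F_{t,\pm}$, which fixes $\sigma_t$ and $(X_{t,\sigma_t},t_{t,\sigma_t})$. By Lemma~\ref{lem:gradient_unbias}, $\mathbb E[g_t]=Q(X_{t,\sigma_t},t_{t,\sigma_t})-b$ with $0\le b\le 2e^{-B/2}$ coordinatewise. Hence
\[
\mathbb E\langle a_{\sigma_t},g_t\rangle\ge \langle a_{\sigma_t},Q\rangle-2e^{-B/2}\|a_{\sigma_t}\|_1 .
\]
Since $\|a_\sigma\|_1=O(M)$, choosing $B=\Theta(\log(M/\alpha))$ makes this bias at most $\alpha/32$.

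Next, $\langle a_{\sigma_t},Q\rangle=\widehat F_{\sigma_t}(w_t,w^*,X_{t,\sigma_t},t_{t,\sigma_t})=F_{\sigma_t}(w_t,w^*)-\lambda\mathcal R\ge F_{\sigma_t}(w_t,w^*)$, using $\mathcal R\le 0$ from Lemma~\ref{lem:regularizer-bound}.

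Because the algorithm did not stop, $\widetilde F_{t,\sigma_t}=\max_\sigma\widetilde F_{t,\sigma}>19\alpha/32$. On the good event the noise satisfies $|Z_{t,\sigma}|\le\widetilde O_\delta(\sqrt T/\eps)\le \alpha/32$ by the assumption $\alpha\ge\Omega_\delta(\sqrt T/\eps)$, so $F_{\sigma_t}\ge 18\alpha/32$.

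Combining the three bounds gives $\mathbb E\langle a_{\sigma_t},g_t\rangle\ge 17\alpha/32$. Therefore $\mathbb E[\Phi(t+1)-\Phi(t)\mid w_t]\le -\tfrac34\eta\cdot\tfrac{17}{32}\alpha=-c\gamma\alpha/B$, using $\eta=\Theta(\gamma/B)$.
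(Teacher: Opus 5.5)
Your proof is correct in structure, but it takes a different route from the paper at the absorption step.

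The paper keeps the second-order term $\eta^2\bigl(\sum_e w_t(e)g_t(e)^2+s(w_t)(g_t^{\text{slack}})^2\bigr)$ and controls it only in expectation, through the Gaussian fourth moment: $\mathbb E[g_t^2]\le\mathbb E[\widehat g_t^2]\le 3\,\mathbb E[\widehat g_t]$ (Lemma~\ref{lem:gradient_unbias}). This gives a penalty $3\eta^2U$. The paper then absorbs it into the term $-\eta\frac{\gamma}{1+\theta_t\gamma}U$ extracted from $-\eta\,\mathbb E[\theta_t\widehat V]$. That absorption needs only $3\eta\le\gamma/(1+\gamma)$, which is exactly what the hypothesis $\eta\le\gamma/(6(1+\gamma))$ supplies.

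You instead use the clipping pathwise, $g_t^2\le Bg_t$, and absorb deterministically by writing $w_t$ in terms of $a_{\sigma_t}$ and $w^*$. Only the linear quantity $\langle a_{\sigma_t},g_t\rangle$ then has to be averaged. This is more elementary in several ways:
\begin{itemize}
\item no second moments and no auxiliary $U,U^*$ are needed;
\item the random sign of $\langle a_{\sigma_t},g_t\rangle$ is harmless, because its coefficient is fixed once $\sigma_t$ is;
\item you make explicit the step $\widehat F_{\sigma_t}\ge F_{\sigma_t}$ (from $\mathcal R\le 0$), which the paper uses silently in \eqref{eq:utility_violation}.
\end{itemize}

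The price is a stronger step-size requirement, and your justification of it is wrong as written. Nonnegativity of $\gamma-\eta B(1\pm\gamma)$ needs $\eta B\le\gamma/(1+\gamma)$. From $\eta\le\gamma/(6(1+\gamma))$ and $B\ge 1$ you only get $\eta B\le B\gamma/(6(1+\gamma))$. That says nothing once $B>6$, and $B=\Theta(\log(M/\alpha))$ is typically of order $\log n$.

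For a concrete failure, take $\eta=10\gamma/B$ with $B\ge 60(1+\gamma)$ and $\gamma\le 1/40$. This meets every constraint stated in the lemma, yet makes your coefficient negative, while the paper's argument still goes through.

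The repair is to fix the constant in $\eta=\Theta(\gamma/B)$ small, e.g.\ $\eta=\gamma/(2B)$. Then $\eta B\le\gamma/2\le\gamma/(1+\gamma)$ and also $\eta B\le 1/4$, and the conclusion $-c\gamma\alpha/B$ is unaffected.

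Your route does force $\eta\lesssim\gamma/B$. The paper's variance-based absorption would tolerate $\eta=\Theta(\min\{\gamma,1/B\})$, and hence a per-step drop larger by up to a factor of $B$, though the paper does not exploit this.
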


\begin{proof}
In the following proof, we always condition on $w_t$. Recall that we condition on the good event that all the Gaussian noises $\{Z_{t,\sigma}\}_{t,\sigma}$ are bounded by $O(\nu\sqrt{\log T})=\widetilde{O}_{\del}(\sqrt{T}/\eps)\leq\frac{\alpha}{32}$, by the assumption $\alpha\geq\Omega_{\del}(\frac{\sqrt{T}}{\eps})$. Suppose Algorithm~\ref{alg:cut_oracle} does not terminate at iteration $t$, then the chosen direction $\sigma_t$ ensures a significant violation:
\begin{equation}
\label{eq:utility_violation}
    \widehat{F}_{\sigma_t}(w, w^*,X_{t,\sigma_t},t_{t,\sigma_t})
    \geq \frac{19\alpha}{32}-\frac{\alpha}{32}=\frac{9}{16}\alpha.
\end{equation}
For simplicity, we set:
\begin{align*}
    U &:= \sum_{u,v} w_t(\{u,v\})\frac{1 - (X_{t,\sigma_t})_{uv}}{2} + s(w_t)t_{t,\sigma_t},\\
    U^* &:= \sum_{u,v} w^*(\{u,v\})\frac{1 - (X_{t,\sigma_t})_{uv}}{2} + s(w^*)t_{t,\sigma_t},\\
    V&:=\sum_{u,v} (w_t(\{u,v\})-w^*(\{u,v\}))g_t + (s(w_t)-s(w^*))g_t^{\text{slack}},\\
    \widehat{V}&:=\sum_{u,v} (w_t(\{u,v\})-w^*(\{u,v\}))\widehat{g}_t + (s(w_t)-s(w^*))\widehat{g}_t^{\text{slack}}.
\end{align*}
Let $W_t=\sum_{e} w_t(e) e^{-\theta_t\eta g_t(e)} + s(w_t) e^{-\theta_t\eta g_t^{\text{slack}}}$. The mirror descent update rule in \cref{eq:mirror descent update} yields that:
\begin{align*}
    \Phi(t+1) - \Phi(t) 
    &=\sum_{e} w^*_e \log \frac{(w_{t})_e}{(w_{t+1})_e} + s(w^*) \log \frac{s(w_t)}{s(w_{t+1})}\\
    &=\sum_{e} w^*_e \log \frac{(w_{t})_e}{(w_{t})_ee^{-\theta_t\eta g_t(e)}M/W_t} + s(w^*) \log \frac{s(w_t)}{s(w_{t})e^{-\theta_t\eta g_t^{\text{slack}}}M/W_t}\\
    &=\theta_t\eta \langle w^*, g_t \rangle + \theta_t\eta s(w^*) g_t^{\text{slack}}+M\log\frac{W_t}{M}.
\end{align*}
Since $e^{x} \leq 1 + x + x^2$ holds for every $x\leq1/4$ and $\eta |g_t(e)|\leq\eta B\leq\frac{1}{4}$ by the assumption\footnote{If we use the unclipped gradient, since it may be unbounded, the inequality may not hold.}, together with $\log(1+y) \leq y$, we have,
\begin{align*}
    \log\frac{W_t}{M}
    &=\log \left( \sum_{e} \frac{w_t(e)}{M} e^{-\theta_t\eta g_t(e)} + \frac{s(w_t)}{M} e^{-\theta_t\eta g_t^{\text{slack}}} \right)\\
    &\leq\log \left( \sum_{e} \frac{w_t(e)}{M} (1-\theta_t\eta g_t(e)+\eta^2 g_t^2(e)) + \frac{s(w_t)}{M} (1-\theta_t\eta g_t^{\text{slack}}+\eta^2 (g_t^{\text{slack}})^2)\right)\\
    &\leq\sum_{e} \frac{w_t(e)}{M} (-\theta_t\eta g_t(e)+\eta^2 g_t^2(e)) + \frac{s(w_t)}{M} (-\theta_t\eta g_t^{\text{slack}}+\eta^2 (g_t^{\text{slack}})^2).
\end{align*}
Therefore,
\begin{align*}
    \Phi(t+1) - \Phi(t) &\leq -\theta_t\eta \left( \langle w_t - w^*, g_t \rangle + (s(w_t) - s(w^*))g_t^{\text{slack}} \right) + \eta^2 \left( \sum_{e} w_t(e) g_t(e)^2 + s(w_t) (g_t^{\text{slack}})^2 \right)\\
    &=-\theta_t\eta V+ \eta^2 \left( \sum_{e} w_t(e) g_t(e)^2 + s(w_t) (g_t^{\text{slack}})^2 \right).
\end{align*}
By Lemma~\ref{lem:gradient_unbias}, $\mathbb E[\widehat g_t^2 (\{u,v\})]\le\frac{3 - 3(X_{t,\sigma_t})_{uv}}{2}$ and $\mathbb E[(\widehat g_t^{\text{slack}})^2]\le3t_{t,\sigma_t}$, together with $g_t\leq\widehat g_t$ and $g_t^{\text{slack}}\leq\widehat g_t^{\text{slack}}$ we have,
\[\mathbb E[g_t^2 (\{u,v\})]\leq \mathbb E[\widehat g_t^2 (\{u,v\})]\le\frac{3 - 3(X_{t,\sigma_t})_{uv}}{2},
  \qquad
  \mathbb E[( g_t^{\text{slack}})^2]\leq\mathbb E[(\widehat g_t^{\text{slack}})^2]\le3t_{t,\sigma_t}.\]
Therefore,
\[\mathbb{E}\left[\eta^2 \left( \sum_{e} w_t(e) g_t(e)^2 + s(w_t) (g_t^{\text{slack}})^2 \right)\right]\leq 3\eta^2 U.\]
By Lemma~\ref{lem:gradient_unbias}, $\mathbb E [\widehat g_t (\{u,v\})-g_t (\{u,v\})]
  \le
  2e^{-B/2}$ and $
  \mathbb E[\widehat  g_t^{\text{slack}}-g_t^{\text{slack}}]
  \le
  2e^{-B/2}.$ Thus,
\begin{align*}
\mathbb{E}[|\widehat{V}-V|] &\leq 2e^{-B/2} \left| \sum_e (w_t(e) + w^*(e)) + (s(w_t) + s(w^*)) \right| \\
&= 4M e^{-B/2}.
\end{align*}
For the case when $\sigma_t=+$, \cref{eq:utility_violation} implies $U-(1+\gamma)U^*\geq\frac{9}{16}\alpha$. Note that $\mathbb{E}[\widehat V]=U-U^*$ again because of  Lemma~\ref{lem:gradient_unbias}. Then we have,
\[
\mathbb{E}[\widehat V]=U - U^* \geq \frac{\gamma}{1+\gamma}U + \frac{9}{16(1+\gamma)}\alpha.
\]
Similarly, for the case when $\sigma_t=-$, \cref{eq:utility_violation} implies $(1-\gamma)U^*-U\geq\frac{9}{16}\alpha$. We have,
\[
\mathbb{E}[-\widehat V]=U^* - U \geq \frac{\gamma}{1-\gamma}U + \frac{9}{16(1-\gamma)}\alpha.
\]
Combining the above, we obtain,
\begin{align*}
    \mathbb{E}[\Phi(t+1) - \Phi(t)]&\leq-\eta\mathbb{E}[\theta_t\widehat V]+\eta\mathbb{E}[|\widehat V-V|]+ 3\eta^2 U\\
    &\leq-\eta U \left( \frac{\gamma}{1+\theta_t\gamma} - 3\eta \right) - \frac{9\eta\alpha}{16(1+\theta_t\gamma)} + 4\eta M e^{-B/2}.
\end{align*}
By the assumption that $\eta \leq \frac{\gamma}{6(1+\gamma)}<\frac{\gamma}{6(1-\gamma)}$, the first term is negative for all $U \geq 0$\footnote{In the purely additive error setting ($\gamma=0$), this term reduces to $3\eta^2 U$. Bounding $U$ via the Cauchy-Schwarz inequality finally yields an additive error of $\widetilde{O}(\sqrt{Mn})$, offering no improvement over the error bounds in \cite{eliavs2020differentially}.}. By setting the clipping threshold to be $B = \Theta(\log\frac{M}{\alpha})$, the third term can be smaller than $\frac{\eta\alpha}{100}$. By setting $\eta=\Theta(\frac{\gamma}{B})$, the requirement $\eta \leq \frac{\gamma}{6(1+\gamma)}$ and $\eta B\leq\frac{1}{4}$ can hold. Therefore, utilizing the fact that $\gamma \in (0, 1/4)$ and $\eta=\Theta(\frac{\gamma}{B})$, we conclude:
\[
\mathbb{E}[\Phi(t+1) - \Phi(t) \mid w_t] \leq 0- \frac{9\eta\alpha}{20} + \frac{\eta\alpha}{100} \leq -c \frac{\gamma \alpha}{B}.
\]

\end{proof}

The next lemma computes the number of the update steps needed in Algorithm~\ref{alg:cut_oracle} for obtaining the required accuracy.

\begin{lemma}[Stopping Time]\label{lem:stop time}
Suppose that \Cref{alg:cut_oracle} terminates after $\tau$ updates. If every nonterminal update $t$ satisfies
\[
 \E[\Phi(t+1)-\Phi(t)\mid w_t]
 \leq-\Delta,
\]
then,
\[
 \E[\tau]\leq O(\frac{M\log n}{\Delta}).
\]
Consequently, with probability $\geq5/6$,
\[
 \tau\leq O(\frac{M\log n}{\Delta}).
\]
\end{lemma}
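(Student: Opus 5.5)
The plan is an optional-stopping (supermartingale) argument on the potential, combined with bounds on its initial value and its minimum. First I bound the potential at the start and show it stays nonnegative. At $t=1$ every coordinate of $w_1$, including the slack, equals $M/(\binom{n}{2}+1)$. Since $w^*$ together with $s(w^*)$ is a nonnegative vector of total mass $M$,
\[
\Phi(1)=\sum_e w^*_e\log\frac{w^*_e}{M/(\binom n2+1)}+s(w^*)\log\frac{s(w^*)}{M/(\binom n2+1)}\le M\log\Bigl(\tbinom n2+1\Bigr)=O(M\log n),
\]
because each ratio is at most $\binom n2+1$. The update \eqref{eq:mirror descent update} keeps $w_{t+1}$ and $s(w_{t+1})$ as a positive vector of total mass exactly $M$. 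Hence $\Phi(t)$ is the KL divergence between two measures of equal mass, multiplied by $M$, and so $\Phi(t)\ge 0$ by Gibbs' inequality.

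Next I define the stopped process. Let $\tau$ be the number of nonterminal updates, truncated at $T$ (we only have $\tau\le T$, so $\tau$ is bounded). Set
\[
Y_t:=\Phi(t\wedge\tau+1)+\Delta\,(t\wedge\tau),
\]
with respect to the filtration generated by the algorithm's randomness through round $t$. The event that round $t$ is nonterminal is determined by $w_t$ and the released $\widetilde F_{t,\sigma}$. Conditioning additionally on these test outputs, the gradient sample $(z,z_0)$ is still drawn from $\mathcal N(0,\Sigma_t)$, so the hypothesis applies conditionally. One subtlety is that the hypothesis is phrased as conditioning on $w_t$ only. I will note that the proof of Lemma~\ref{lem:potential progress} in fact holds conditionally on the test outcome, since it only uses the violation \eqref{eq:utility_violation}, and I will interpret the hypothesis in that sense. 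With this reading, $Y_t$ is a supermartingale: on nonterminal rounds the drift of $\Phi$ is at most $-\Delta$, which the $+\Delta$ term offsets, and after stopping $Y_t$ is constant.

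The optional stopping step follows. Since $\tau\le T$ is bounded, we get $\E[Y_T]\le Y_0=\Phi(1)$, and therefore
\[
\Delta\,\E[\tau]\le \Phi(1)-\E[\Phi(\tau+1)]\le \Phi(1)=O(M\log n),
\]
using $\Phi\ge0$. This gives $\E[\tau]\le O(M\log n/\Delta)$. Markov's inequality then gives $\Pr{\tau>6\E[\tau]}\le 1/6$, which yields the high-probability statement with the constant inflated by $6$.

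The main obstacle is the measurability and conditioning issue: the drift bound must hold given the whole past and the current test outcome, not merely given $w_t$. A second point is that the good-noise event assumed in Lemma~\ref{lem:potential progress} is a global conditioning. I would handle both by taking the filtration above and by treating the noise bound as a per-round condition that is absorbed into the failure probability elsewhere.
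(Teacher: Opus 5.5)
Your proof is correct and follows essentially the same route as the paper: the stopped process $\Phi+\Delta\cdot t$ is a supermartingale, and the bounded horizon $T$ together with $\Phi\ge 0$ and $\Phi(1)\le M\log\bigl(\binom n2+1\bigr)=O(M\log n)$ gives $\Delta\,\E[\tau]\le\Phi(1)$, after which Markov's inequality with factor $6$ yields the probability-$5/6$ claim. You are more careful than the paper in two places: you bound the drift given the full past and the current test outcome (which the proof of Lemma~\ref{lem:potential progress} does support), whereas the paper conditions only on $w_t$, and you compute $\Phi(1)$ explicitly where the paper simply asserts it; the paper also takes $T$ to be six times this bound, so that the Markov estimate says the algorithm halts through the test rather than by exhausting the horizon, which is the form used in Theorem~\ref{thm:utility_of_cut_oracle}.
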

Before proving this lemma, we introduce a standard tool in martingale analysis.

\begin{proposition}[Theorem 4.2.12 in \cite{durrett2019probability}]
\label{thm:optional_stopping}
Let a sequence of random variables $X = (X_t)_{t \ge 0}$ be a supermartingale (i.e., $\E[X_{t+1}|X_1,\dots,X_t]\leq X_t$) and $\tau$ be a stopping time, both with respect to a filtration $(\mathcal{F}_t)_{t \ge 0}$. If the stopping time $\tau$ is bounded, then, $\mathbb{E}[X_\tau] \le \mathbb{E}[X_1]$.
\end{proposition}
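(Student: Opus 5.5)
We will prove the bound directly with a telescoping decomposition, which only needs finitely many integrable increments. Throughout we read the hypothesis in the usual filtration form: each $X_t$ is $\mathcal F_t$-measurable and integrable, and $\E[X_{t+1}\mid \mathcal F_t]\le X_t$ almost surely. The paper's phrasing, which conditions on $X_1,\dots,X_t$, is the special case of the natural filtration. Since the sequence is indexed so that $X_1$ is the starting value, we also assume $\tau\ge 1$.

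\textbf{Step 1: telescoping.} Since $\tau$ is bounded, fix a deterministic integer $N$ with $1\le \tau\le N$ almost surely. Then we can write
\[
X_\tau = X_1+\sum_{t=1}^{N-1}(X_{t+1}-X_t)\,\mathbbm{1}[\tau>t].
\]
To check this pointwise, note that on the event $\{\tau=k\}$ exactly the increments with $t<k$ survive, and these telescope to $X_k-X_1$. This representation also shows that $X_\tau$ is integrable, because $|X_\tau|\le\sum_{t\le N}|X_t|$.

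\textbf{Step 2: each increment has non-positive expectation.} By the definition of a stopping time, $\{\tau>t\}=\{\tau\le t\}^c\in\mathcal F_t$. The tower property then gives
\[
\E\big[(X_{t+1}-X_t)\mathbbm{1}[\tau>t]\big]=\E\big[\mathbbm{1}[\tau>t]\,(\E[X_{t+1}\mid\mathcal F_t]-X_t)\big]\le 0,
\]
where the inequality uses the supermartingale property together with the non-negativity of the indicator.

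\textbf{Step 3: conclude.} Take expectations in the identity from Step 1. The sum is finite and every term is integrable, so linearity applies and yields $\E[X_\tau]\le\E[X_1]$.

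There is no serious obstacle here. The only points needing care are the boundedness of $\tau$, which keeps the sum finite and so avoids any dominated-convergence argument, and the measurability of $\{\tau>t\}$ with respect to $\mathcal F_t$, which is exactly what lets the indicator be pulled inside the conditional expectation.
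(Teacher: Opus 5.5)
Your proof is correct, and it is essentially the standard argument behind this result, which the paper only cites from Durrett and does not prove. Your increments $(X_{t+1}-X_t)\mathbbm{1}[\tau>t]$ are exactly the increments of the stopped process $X_{\tau\wedge t}$, and the textbook proof amounts to showing that this stopped process is a supermartingale, so that $\E[X_{\tau\wedge N}]\le\E[X_{\tau\wedge 1}]$.

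Your added assumption $\tau\ge 1$ is genuinely needed for the statement as literally written: for $\tau\equiv 0$ and $X_t=-t$ one gets $\E[X_\tau]=0>-1=\E[X_1]$. It does hold in the paper's only use of the result, where the truncated stopping time $\tau_T=\min(\tau,T)$ is at least $1$.
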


\begin{proof}[Proof of Lemma~\ref{lem:stop time}]
Let $t \ge 1$ denote the iteration index. We define the stochastic process $X_t := \Phi(t) + \Delta \cdot t$ for any $t\leq\tau$.\footnote{For any $t>\tau$, we let $X_t=X_\tau$.} By the assumption, for any $t\leq\tau$, we have,
\begin{align*}
     \E[X_{t+1} \mid w_t] 
     &= \E[\Phi(t+1) \mid w_t] + \Delta \cdot (t+1)
     =\E[\Phi(t+1)-\Phi(t) \mid w_t] +\Phi(t) + \Delta \cdot (t+1)\\
     &\le - \Delta + \Phi(t) + \Delta \cdot (t+1) = \Phi(t) + \Delta \cdot t = X_t.
\end{align*}
This implies that $(X_t)_{t \ge 1}$ is a supermartingale with respect to the filtration generated by $(w_t)_{t \ge 1}$. 

Let $\tau$ denote the first iteration where the halting condition $\max(\widetilde{F}_{t,+}, \widetilde{F}_{t,-}) \leq 19\alpha/32$ is met (with $\tau = \infty$ if it is never met). Since Algorithm~\ref{alg:cut_oracle} explicitly enforces a maximum iteration limit of $T$, the actual number of executed steps is deterministically bounded by $\tau_T = \min(\tau, T) \le T$. Because this bounded stopping time is valid for the optional stopping theorem (Proposition~\ref{thm:optional_stopping}), we have:
\[
 \E[X_{\tau_T}] \le \E[X_1] = \Phi(1) + \Delta.
\]
Since the Bregman divergence is non-negative, $\Phi(\tau_T) \ge 0$, which yields:
\[
 \Delta \cdot \E[\tau_T] \le \E[\Phi(\tau_T) + \Delta \cdot \tau_T] = \E[X_{\tau_T}] \le \Phi(1) + \Delta.
\]
By the identical calculation bounding the initial potential, we have $\Phi(1) \le 2M \log n$. Therefore, the expected number of steps before halting or hitting the horizon is bounded by:
\[
 \E[\tau_T] \le \frac{2M \log n}{\Delta} + 1 = O\left(\frac{M \log n}{\Delta}\right).
\]
To ensure the algorithm halts successfully before hitting the hard limit $T$, we set $T = \lceil 6 \cdot \E[\tau_T] \rceil = \Theta(\frac{M \log n}{\Delta})$. Applying Markov's inequality unconditionally on the bounded variable $\tau_T$:
\[
 \Prb[\tau > T] \le \Prb[\tau_T \ge T] \le \frac{\E[\tau_T]}{T} \le \frac{1}{6}.
\]
Thus, with probability at least $5/6$, the algorithm successfully terminates with a valid output within $T$ iterations.
\end{proof}

Now we state the main utility theorem for \Cref{alg:cut_oracle}.

\begin{theorem}[Utility of \Cref{alg:cut_oracle}]
\label{thm:utility_of_cut_oracle}
Let $G=(V,E)$ be a simple unweighted graph on $n$ vertices, and let $M\geq |E|$ be public.  For $\eps\in (0,1), \delta\in(0,1/2), \gamma\in(0, 1/4)$, \Cref{alg:cut_oracle} outputs a weighted graph $\widehat G$ such that, with probability at least $2/3$, for all $S\subseteq V$,
\[
  \left|\cut_{\widehat G}(S)-\cut_G(S)\right|
  \leq \gamma\,\cut_G(S)+\alpha,
\]
where
\[
  \alpha
  =\widetilde{O}_{\del}\left(
      \frac{n}{\eps}
      +\left(\frac{n^2M}{\eps^2\gamma}\right)^{1/3}
    \right).
\]
\end{theorem}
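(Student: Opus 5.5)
The plan is to combine the stopping-time bound (Lemma~\ref{lem:stop time}) with the termination test and Lemma~\ref{lem:opt implies cut approx}, and then to choose $\alpha$ so that every parameter constraint used earlier holds simultaneously.

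First, condition on the good event (probability $1-1/n^c$) that all Gaussian noises $Z_{t,\sigma}$ are bounded by $\widetilde O_{\del}(\sqrt T/\eps)$. Provided $\alpha \ge C\,\nu\sqrt{\log(nT)}$, this bound is at most $\alpha/32$.

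\textbf{Correctness upon return.} Suppose the algorithm returns $w_t$. Then $\max_\sigma \widetilde F_{t,\sigma}\le 19\alpha/32$, so $\max_\sigma F_\sigma(w_t,w^*)\le 19\alpha/32+\alpha/32=5\alpha/8$. Provided $C_0\lambda n\log(1/\rho)\le \alpha/32$, Lemma~\ref{lem:opt implies cut approx} gives $|\cut_{w_t}(S)-\cut_{G}(S)|\le\gamma\cut_G(S)+\alpha$ for all $S$.

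\textbf{Termination.} Lemma~\ref{lem:potential progress} supplies per-step progress $\Delta=c\gamma\alpha/B$ for every nonterminal step. Lemma~\ref{lem:stop time} then shows that with probability at least $5/6$ the algorithm stops within $T=\Theta(BM\log n/(\gamma\alpha))$ iterations, which is exactly the horizon set in Algorithm~\ref{alg:cut_oracle}, rather than returning $\perp$. A union bound with the noise event gives success probability at least $2/3$.

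\textbf{Parameter balance (main step).} We need both constraints to hold:
\[
\alpha \gtrsim \lambda n\log(1/\rho) \quad\text{and}\quad \alpha\gtrsim \nu\sqrt{\log(nT)},
\qquad \text{where } \lambda=\nu=\widetilde\Theta_{\del}(\sqrt T/\eps).
\]
The first dominates, so it suffices that $\alpha\ge \widetilde C_{\del}\, n\sqrt T/\eps$. Substituting $T=\widetilde\Theta(M/(\gamma\alpha))$ turns this into
\[
\alpha^{3/2}\ \ge\ \widetilde C_{\del}\,\frac{n}{\eps}\sqrt{\frac{M}{\gamma}},
\qquad\text{i.e.}\qquad
\alpha\ \ge\ \widetilde C_{\del}\left(\frac{n^2M}{\eps^2\gamma}\right)^{1/3}.
\]

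\textbf{Obstacles and edge cases.} The main obstacle is circularity: $B$, $\log(1/\rho)$ and $\log(T/\del)$ all depend on $\alpha$. I would handle this by noting they are polylogarithmic in $n,M,1/\del,1/\gamma,1/\eps$, and bounding them by a fixed polylog before solving. The additive $n/\eps$ term covers two cases: the footnote assumption $\sqrt T\log^{3/2}(T/\del)/\eps=\Omega(1)$, and the case $\lambda\ge 12$ needed by Lemma~\ref{lem:optimizer_stability}. In the latter case $\lambda$ is floored at a constant and contributes $O(n\log(1/\rho))$, which is absorbed into $\widetilde O(n/\eps)$. Finally, if $M\le\alpha$, the empty graph trivially satisfies the guarantee, consistent with the footnote in Lemma~\ref{lem:potential progress}.
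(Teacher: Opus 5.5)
Your proposal is correct and follows the paper's proof essentially step for step. You bound the Gaussian test noise by $\alpha/32$, combine Lemma~\ref{lem:potential progress} with Lemma~\ref{lem:stop time} to get termination within $T$ rounds with probability $5/6$, turn the test $\max_\sigma\widetilde F_{t,\sigma}\le 19\alpha/32$ into $\max_\sigma F_\sigma\le 5\alpha/8$ so that Lemma~\ref{lem:opt implies cut approx} applies, and then balance $\alpha\approx n\sqrt{T}/\eps$ against $T\approx M/(\gamma\alpha)$.

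One small correction: the $n/\eps$ term comes from $T\ge 1$, which is how the paper obtains it. That condition forces $\lambda=\widetilde\Omega_{\del}(1/\eps)$ and hence $\alpha\gtrsim n\log(1/\rho)/\eps$. It does not come from flooring $\lambda$ at a constant, which would only give $\alpha\gtrsim n$. The final bound is unaffected.
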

\begin{proof}
Combining Lemma~\ref{lem:potential progress} and Lemma~\ref{lem:stop time}, if $\alpha=\Omega_{\del}(\frac{\sqrt{T}}{\eps})$, setting the total iteration limits to $T = \Theta(\frac{M\log n}{\gamma\alpha/B})=\Theta(\frac{BM\log n}{\gamma\alpha})$ ensures that the algorithm outputs $w_t$ such that $\max(\widetilde{F}_{t,+}, \widetilde{F}_{t,-}) \leq 19\alpha/32$ with probability at least $5/6$. Since $\{Z_{t,\sigma}\}_{t,\sigma}$ are bounded by $O(\nu\sqrt{\log T})=\widetilde{O}_{\del}(\sqrt{T}/\eps)\leq\frac{\alpha}{32}$ with probability at least $5/6$, we have, $\max_{\sigma}F_{t,\sigma}(w_t,w^*)\leq\frac{19\alpha}{32}+\frac{\alpha}{32}=\frac{5\alpha}{8}$. By Lemma~\ref{lem:opt implies cut approx}, if $\rho = \min(\frac{1}{8},\frac{\alpha}{16\gamma M})$ and $C_0 \lambda n \log(1/\rho) \leq \frac{\alpha}{32}$, we can conclude that,
\[|\cut_w(S) - \cut_{w^*}(S)| \leq \gamma \cut_{w^*}(S) + \alpha\]
holds for all $S \subseteq V$.

Since we set $\lambda=\Theta(\frac{\sqrt{T}\log^{3/2}(T/\del)}{\eps})$ and $B=\Theta(\log\frac{M}{\alpha})$, combining them with $C_0 \lambda n \log(1/\rho) \leq \frac{\alpha}{32}$ (already ensures $\alpha=\Omega_{\del}(\frac{\sqrt{T}}{\eps})$) and $T =\Theta(\frac{BM\log n}{\gamma\alpha})$ yields 
\[\alpha=\widetilde{O}_{\del}\left(\frac{\sqrt{T}n}{\eps}\right)=\widetilde{O}_{\del}\left(\frac{\sqrt{M}n}{\sqrt{\gamma\alpha}\eps}\right).\]
Also noting that $T\geq1$, thus the total additive error
\[
\alpha = \widetilde{O}_{\del}\left( \frac{n}{\eps} + \left(\frac{n^2 M}{\eps^2 \gamma}\right)^{1/3} \right).
\]
It can be verified that all assumptions in \Cref{lem:opt implies cut approx,thm:main_privacy,lem:potential progress} are satisfied under our choice of parameters.
\end{proof}


\begin{proof}[Proof of Theorem~\ref{thm:cut-oracle-main}]
We construct the final algorithm $\mathsf{CutOracle}$ by executing the core subroutine (Algorithm~\ref{alg:cut_oracle}) repeatedly. Let $K = \lceil C \log(1/\beta) \rceil$ for a sufficiently large absolute constant $C$. The oracle runs Algorithm~\ref{alg:cut_oracle} up to $K$ times independently. If an execution returns a valid edge-weight vector $w_t$ (instead of "$\perp$"), $\mathsf{CutOracle}$ halts immediately and outputs $w_t$. If all $K$ executions return "$\perp$", it outputs an empty graph.

\paragraph{Privacy Guarantee.} 
In the worst case, the algorithm executes the subroutine $K$ times. We allocate a local privacy budget $(\epsilon_0, \delta_0)$ to each individual run. By the advanced composition Theorem (Lemma~\ref{lem:advanced_composition}), it suffices to set:
\[
    \epsilon_0 = \Theta\left(\frac{\epsilon}{\sqrt{K \log(1/\delta)}}\right), \quad \delta_0 = \frac{\delta}{2K}.
\]
As the halting condition only depends on the privately released $\widetilde{F}_{t,\sigma}$, the adaptive stopping time is private due to post-processing(Lemma~\ref{lem:postprocess}). Finally, combing Theorem~\ref{thm:main_privacy} concludes the proof.

\paragraph{Utility Guarantee.}
We bound the failure probability by analyzing two failure events. First, let $\mathcal{E}_{\text{noise}}^c$ be the event that any drawn Gaussian noise $|Z_{t, \sigma}|$ across all $K$ runs and $T$ iterations exceeds $\alpha/32$. Since $Z_{t, \sigma} \sim \mathcal{N}(0, \nu^2)$, applying standard Gaussian tail bounds (Lemma~\ref{lem:gaussian_norm}) and a union bound over the $2KT$ random variables guarantees that setting $\alpha = \Omega(\nu \sqrt{\log(KT/\beta)})$ is enough to guarantee $\Prb[\mathcal{E}_{\text{noise}}^c] \le \beta/2$. Second, let $\mathcal{E}_{\text{all abort}}$ be the event that all $K$ independent runs fail to halt within their $T$ horizons. Because the $K$ runs are mutually independent, we apply the per-run utility analysis (Theorem~\ref{thm:utility_of_cut_oracle}), which bounds the failure probability of a single run by $1/3$. Thus, the probability is $\Prb[\mathcal{E}_{\text{all abort}}] \le (1/3)^K \le \beta/2$ (by our choice of constant $C$).

By the union bound, the probability that the algorithm either suffers from large noise or abort is bounded by $\Prb[\mathcal{E}_{\text{noise}}^c \cup \mathcal{E}_{\text{all abort}}] \le \beta/2 + \beta/2 = \beta$. Therefore, with probability at least $1-\beta$, neither failure event occurs. Because $\mathcal{E}_{\text{all abort}}$ does not occur, the algorithm successfully halts and outputs a graph; because $\mathcal{E}_{\text{noise}}^c$ does not occur, the analysis in Theorem~\ref{thm:utility_of_cut_oracle} guarantees this output graph satisfies the target error bound:
\[
    \alpha = \widetilde{O}_{\del}\left( \frac{n}{\epsilon_0} + \left(\frac{n^2 M}{\epsilon_0^2 \gamma}\right)^{1/3} \right).
\]
Substituting $\epsilon_0 = \Theta\left(\frac{\epsilon}{\sqrt{\log(1/\beta) \log(1/\delta)}}\right)$, the error scales as:
\[
    \alpha = \widetilde{O}_{\del}\left( \frac{n \sqrt{\log(1/\beta)}}{\epsilon} + \left(\frac{n^2 M \log(1/\beta) }{\epsilon^2 \gamma}\right)^{1/3} \right),
\]
where the $\widetilde{O}_{\del}(\cdot)$ hides polylogarithmic factors in $n, 1/\delta$.

\paragraph{Implementation.}Follow a similar proof to \cite{eliavs2020differentially}, \Cref{alg:cut_oracle} can be implemented in time $\tilde{O}_{\eps,\del,\gamma}(n^{20/3}\log^{O(1)}(n))$  with the same level of guarantee for additive error and privacy. Here we only give a proof sketch.

By \Cref{alg:cut_oracle} and Lemma~\ref{thm:utility_of_cut_oracle}, we have $T=\Theta(\frac{BM\log n}{\gamma\alpha})=\tilde{O}_{\eps,\del,\gamma}(M^{2/3}n^{-2/3})$. In each iteration, note that the update step runs in $O(n^2)$ time. Denote $X^*,t^*$ be the maximizer of the SDP $\max_{X\in\mathcal{X},\rho\le t\le 1}\hat{F}_{\sigma}(w,w^*,X,t)$ and $\Sigma^* = \left(\begin{array}{cc}
    X^{*} & 0 \\
    0 & t^{*}
\end{array}\right)$. We use the algorithm of \cite{lee2015faster} to find an approximate solution $X,t$ (let $\Sigma = \left(\begin{array}{cc}
    X & 0 \\
    0 & t
\end{array}\right)$) of the SDP in \Cref{alg:cut_oracle}
in time $\tilde{O}(n^6\log^{O(1)}(n/\mu))$, s.t., $\norm{(\Sigma^*)^{-\frac{1}{2}}(\Sigma^*-\Sigma)(\Sigma^*)^{-\frac{1}{2}}}_F\leq\mu$. Therefore, across all $K$ rounds and $T$ iterations, our algorithm can be implemented in time $\tilde{O}_{\eps,\del,\gamma}(M^{2/3}n^{16/3}\log^{O(1)}(n))=\tilde{O}_{\eps,\del,\gamma}(n^{20/3}\log^{O(1)}(n))$. According to \cite{eliavs2020differentially}, the additive error only differs by a constant factor when we choose $\mu=1/n^{O(1)}$. And the privacy loss in each iteration is still $O(\frac{1}{\lambda})$ by Lemma~\ref{lem:optimizer_stability}, hence the privacy guarantee is the same as before. This concludes the proof of Theorem~\ref{thm:cut-oracle-main}.
\end{proof}

\section{An $\widetilde{O}(n^{13/12 + o(1)})$ Additive Error for Cut Approximation}\label{sec:cut}

In this section, we study private algorithms for approximating graph cuts. Given a private synthetic of graph Laplacian, the quadratic form $x_S^\top (L_G - {L}_{\widehat{H}}) x_S$ immediately bounds the cut error, and one can easily verify that the additive cut error for any set $S$ is directly bounded by $\|L_G - L_{\widehat{G}}\|_2 \cdot \min(|S|, |V \setminus S|)$. However, the additive error of this direct corollary on cut approximation typically reduces to $\widetilde{O}(n^{1.5})$ or worse on dense graphs. 

If allowing for multiplicative approximation, the best-known polynomial-time edge-differentially private algorithm suffers an $\widetilde{O}(n^{1.25})$ additive error~\cite{aamandbreaking}. To make further improvement, we give a \textit{black-box} reduction by assuming a generic primitive parameterized by a spectral error interface $\Pi(n,d, \epsilon, \delta)$, then develop a demand-aware recursive expander decomposition framework.
\begin{assumption}[Private Spectral Primitive] \label{assump:spectral}
For any $n$-node unweighted graph $H$ with maximum degree bounded by $d$, and for any privacy parameters $\epsilon, \delta \in (0, 1)$ and failure probability $\beta \in (0,1)$, there exists a polynomial-time $(\epsilon, \delta)$-edge-DP algorithm that outputs a non-negative weighted graph Laplacian $\widehat{L}_H$ such that, with probability at least $1-\beta$:
\begin{equation}
    \normt{L_H - \widehat{L}_H} \le \Pi(n, d, \epsilon, \delta, \beta).
\end{equation}
We assume $\Pi = \text{poly}(n,d,1/\epsilon,\log(1/\del), \log(1/\beta))$ and is non-decreasing in $n$ and $d$, and non-increasing in $\epsilon$, $\delta$ and $\beta$. For brevity, when the privacy parameters are clear from the context or not the primary focus of an asymptotic discussion, we may use the shorthand $\Pi(n, d)$.
\end{assumption}
In this section, we will show that under $(\epsilon,\delta)$-differential privacy, we reduce the additive cut error to a bound purely determined by the recursive fixed point of this spectral primitive subject to privacy composition, maintaining a multiplicative error (Theorem~\ref{thm:cut_main}). Finally, we use the spectral bound stated in Theorem~\ref{thm:improved-spectral} (Section~\ref{sec:bootstrapped_fourth_power}) to instantiate Assumption~\ref{assump:spectral} to get a polynomial time private algorithm with $\widetilde{O}(n^{13/12 + o(1)})$ additive error on cut approximation.

\subsection{Stable Port Gadget}\label{sec:stable_port_gadget}

To break the algorithmic dependency on the maximum degree in the spectral oracle (e.g., Theorem~\ref{thm:laplacian-square} or Theorem~\ref{thm:improved-spectral}), we need to transform an input graph $R$ (which might be a residual graph in some recursive steps) into a port proxy graph ${}^\sharp R$ with a smaller bounded maximum degree. For the ease of reading, we decouple this transformation into two distinct phases: (1) a differentially private mechanism to publish a demand vector $\rho \in \mathbb{Z}_{\geq 1}^n$, and (2) a sensitivity-preserving topological routing that constructs ${}^\sharp R$ conditioned on the public $\rho$.

\subsubsection{Private Demand Allocation}
We first define the demand vector $\rho$ in terms of vertices in the original graph $R$, which determines the number of micro-nodes (ports) allocated to each vertex. Such allocation should be sufficient to route all incident edges while keeping the total graph volume bounded.

\begin{lemma}[Private Demand Allocation] \label{lem:demand_allocation}
Let $R = (V, E_R)$ be an unweighted simple graph with $|V|=n$ and a public edge upper bound $M \ge |E_R|$. Let $D = \max\{1, M/n\}$. Given privacy parameter $\epsilon_{deg}$ and failure probability $\beta_{deg}$, there exists an $(\epsilon_{deg}, 0)$-edge-DP mechanism $\mathsf{BuildDemand}(\eps_{deg}, R, M)$ that outputs a demand vector $\rho: V \to \^Z_{\ge 1}^n$ satisfying:
\begin{enumerate}
    \item \textbf{Capacity:} With probability at least $1 - \beta_{deg}$, for all $v \in V$, $\rho(v) \ge deg_R(v)/D$.
    \item \textbf{Demand Mass:} With probability at least $1 - \beta_{deg}$, the total demand is bounded by $$\rho(V) \le O(n) + O\left(\frac{n\ln(n/\beta_{deg})}{D\cdot \eps_{deg}}\right) = \widetilde{O}(n).$$ 
Further, $\rho(V)$ is deterministically bounded by $\rho(V) \le \operatorname{poly}(n, 1/\epsilon_{deg}, \log(1/\beta_{deg}))$.
\end{enumerate}
\end{lemma}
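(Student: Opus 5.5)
We will realize $\mathsf{BuildDemand}$ with the Laplace mechanism on the degree vector, followed by a deterministic rounding and clamping map. Concretely, release $\widehat{\deg}(v) = \deg_R(v) + Z_v$ with $Z_v \sim \mathrm{Lap}(2/\eps_{deg})$ independently for each $v \in V$. Then set
\[
\rho(v) := \max\left\{1,\ \left\lceil \frac{\min\{\widehat{\deg}(v) + \tau,\ n\}}{D} \right\rceil\right\},
\qquad
\tau := \frac{2\ln(n/\beta_{deg})}{\eps_{deg}}.
\]

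\textbf{Privacy.} Adding or removing one edge changes exactly two coordinates of the degree vector, each by $1$, so its $\ell_1$-sensitivity is $2$. Lemma~\ref{lem:laplace-priv} then gives $(\eps_{deg},0)$-DP for the noisy degrees. Since $\rho$ is computed from $\widehat{\deg}$ and the public quantities $M, n, \eps_{deg}, \beta_{deg}$ alone, Lemma~\ref{lem:postprocess} yields $(\eps_{deg},0)$-DP for $\rho$.

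\textbf{Capacity.} By Lemma~\ref{lem:laplace-tail}, each $|Z_v| \le \tau$ with probability at least $1-\beta_{deg}/n$, and a union bound over all $n$ vertices handles them together. On this event $\widehat{\deg}(v) + \tau \ge \deg_R(v)$. Since $\deg_R(v) \le n-1 < n$, the clamp at $n$ never drops the argument below $\deg_R(v)$. Hence $\rho(v) \ge \deg_R(v)/D$. We keep the clamp only to secure the deterministic bound below.

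\textbf{Demand mass.} On the same event $\widehat{\deg}(v) + \tau \le \deg_R(v) + 2\tau$, so $\rho(v) \le 1 + 1 + (\deg_R(v) + 2\tau)/D$. Summing over $v$ gives
\[
\rho(V) \le 2n + \frac{2|E_R|}{D} + \frac{2n\tau}{D}.
\]
Because $D \ge M/n \ge |E_R|/n$, the middle term is at most $2n$. The last term is $O\!\left(n\ln(n/\beta_{deg})/(D\eps_{deg})\right)$, which matches the claimed bound; it is $\widetilde O(n)$ in the regime where $D\eps_{deg} \gtrsim 1$ or $\eps_{deg}$ is treated as a constant. The deterministic bound holds because the clamp forces $\rho(v) \le \lceil n/D\rceil + 1 \le n + 1$ regardless of the noise, so $\rho(V) \le n(n+1)$ always, and this is polynomial in $n$.

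\textbf{Main obstacle.} The one delicate point is reconciling the additive noise slack $\tau/D$ per vertex with the $\widetilde O(n)$ claim. When $D = 1$ (sparse $M$), that slack is $\widetilde O(1/\eps_{deg})$ per vertex, so the bound must be read with $1/\eps$ factors not hidden. The statement already lists this explicitly as the $O\!\left(n\ln(n/\beta_{deg})/(D\eps_{deg})\right)$ term, so we simply report that term.
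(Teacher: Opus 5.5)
Your proposal is correct and follows essentially the same route as the paper. Both add Laplace noise of scale $2/\varepsilon_{deg}$ to the degree vector, shift it up by $\tau=\frac{2}{\varepsilon_{deg}}\ln(n/\beta_{deg})$, clamp, divide by $D$ and round up, and then conclude with the same union-bound and summation argument (using $2|E_R|/D\le 2n$). The only differences are cosmetic: you clamp at $n$ rather than at $n+2a$, and write $\max\{1,\lceil\cdot\rceil\}$ where the paper writes $1+\lceil\max\{0,\cdot\}\rceil$. Your tighter clamp even gives a deterministic bound $\rho(V)\le n(n+1)$ that does not depend on $\varepsilon_{deg}$ or $\beta_{deg}$.
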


\begin{proof}
The true degree vector $deg_R(\cdot)$ has an $\ell_1$-sensitivity of exactly $2$ under edge-neighboring relations, as adding or removing an edge $\{u, v\}$ changes the degrees of both $u$ and $v$ by exactly $1$. We apply the Laplace mechanism (Lemma~\ref{lem:laplace-priv}) by adding independent noise variables $Z_v \sim \lap(2/\epsilon_{deg})$ and computing the noisy degrees $\widetilde{d}_v = deg_R(v) + Z_v$. This is $(\eps_{deg}, 0)$-DP.

To ensure that the capacity condition holds even under worst-case negative noise, we define a high-probability noise bound $a = \frac{2}{\epsilon_{deg}} \ln\Paren{\frac{n}{\beta_{deg}}}$. We further apply a deterministic upper cap: since the true degree satisfies $deg_R(v) < n$, we hard-cap the shifted noisy degree at $n+2a$:
\begin{equation} \label{eq:demand_allocation}
    b_v = \rho(v) = 1 + \ctp{\frac{\min\{n+2a, \max\{0, \widetilde{d}_v + a\}\}}{D}}.
\end{equation}
Deterministically, this hard cap limits the total demand to $$\rho \leq n\left(2 + \frac{n + \frac{4}{\epsilon_{deg}}\ln(n/\beta_{deg})}{D}\right).$$

By the tail bounds of the Laplace distribution (Lemma~\ref{lem:laplace-tail}) and a union bound over all $n$ vertices, the good event that $|Z_v| \le a$ for all $v \in V$ holds with probability at least $1 - \beta_{deg}$. Conditioning on this good event, $Z_v \ge -a$ guarantees that $\widetilde{d}_v + a = deg_R(v) + Z_v + a \ge deg_R(v)$. Because the true degree $deg_R(v) < n \le n+2a$, the deterministic cap never pulls the value below the true degree. Consequently, the allocated number of ports unconditionally guarantees $b_v \ge deg_R(v)/D$ on this good event, satisfying the Capacity property.

To bound the total demand mass $\rho(V)$ accurately on this good event, we sum the unclipped values over all vertices:
\begin{equation*}
    \rho(V) = \sum_{v \in V} b_v \le n + \sum_{v \in V} \frac{deg_R(v) + 2a + 1}{D} \le n + \frac{2M}{D} + \frac{n(2a+1)}{D}.
\end{equation*}
Since $D = \max\{1, M/n\}$, we have $2M/D \le 2n$. Therefore, substituting the constants yields $\rho(V) \le 3n + n(2a+1)/D = \widetilde{O}_{\epsilon_{deg}}(n)$, concluding the proof.
\end{proof}

\subsubsection{Stable Port Routing}
Treating the demand vector $\rho$ defined in \cref{eq:demand_allocation} as a fixed \emph{public} parameter, we now construct the topology of the proxy graph ${}^\sharp R$. We note that the construction should be stable in terms of edge change: changing one edge in $R$ should only trigger a constant number of edge changes in ${}^\sharp R$.

\begin{lemma}[Stable Port Graph] \label{lem:stable_gadget}
Given a public demand vector $\rho \in \mathbb{Z}^n$ and failure probability $\beta_{util}$, there is a routing mechanism $\mathsf{BuildGadget}(R,\rho)$ using public hashes to construct a port graph ${}^\sharp R$ on the lifted vertex set $V({}^\sharp R) = \{(v,i): v \in V, i \in [\rho(v)]\}$. This construction satisfies:
\begin{enumerate}
    \item \textbf{Degree Bound:} The maximum degree $\Delta({}^\sharp R) \le {O}(D) + O\left(\ln({n}/{\beta_{util}})\right)$.
    \item \textbf{Edge-Stability:} If $R$ and $R'$ differ by exactly one edge, the resulting graphs ${}^\sharp R$ and ${}^\sharp R'$ differ by at most $4$ edges.
    \item \textbf{Cut Preservation:} If $\rho$ satisfies the Capacity property of Lemma~\ref{lem:demand_allocation}, then with probability at least $1-\beta_{util}$, for all $S \subseteq V$, defining the lifted cut ${}^\sharp S = \{(v,i): v \in S, i \in [\rho(v)]\}$ yields $w_R(S, V \setminus S) = w_{{}^\sharp R}({}^\sharp S, V({}^\sharp R) \setminus {}^\sharp S)$.
\end{enumerate}
\end{lemma}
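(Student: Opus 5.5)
We construct ${}^\sharp R$ explicitly from the public hash functions $h_v: V\setminus\{v\}\to[\rho(v)]$ of Definition~\ref{def:hash}. For each edge $\{u,v\}\in E_R$, we place a port edge between $(u,h_u(v))$ and $(v,h_v(u))$. To guarantee a deterministic degree bound, which matters for the downstream spectral primitive whose privacy analysis assumes maximum degree at most $d$, we also apply a \emph{stable truncation}. Each port $(v,i)$ receives a public capacity $\kappa=C(D+\ln(n/\beta_{util}))$. Only the first $\kappa$ incident edges are kept, ordered by a fixed public ordering of $V$ (equivalently, of the preimages $h_v^{-1}(i)$). An edge $\{u,v\}$ is routed only if it is kept at both of its endpoint ports; otherwise it is dropped.

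The degree bound then holds deterministically, since every port has at most $\kappa$ incident port edges. For edge-stability, suppose $R'=R\cup\{e\}$ with $e=\{u,v\}$. Edges at ports other than $(u,h_u(v))$ and $(v,h_v(u))$ are unaffected, because hashes and orderings are public and fixed. At the port $(u,h_u(v))$, inserting $e$ into the ordered list can push at most one previously kept edge out of the top-$\kappa$. That displaced edge may then disappear from ${}^\sharp R$. The same happens at $v$'s port. The symmetric difference is therefore $e$ itself, routed or not, plus at most two displaced edges, and in each case at most $3\le 4$ edges.

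A cascade might seem possible, since a displaced edge's removal at its other endpoint could let another edge in there. This does not happen, because the retention rule depends only on rank in the fixed ordering among \emph{all} $R$-edges hashed to that port, not among surviving edges. I expect this to be the main point needing care: the truncation must be defined against the full list so that it is non-adaptive.

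For cut preservation, the plan is to show that with probability $1-\beta_{util}$ no truncation happens at all, so that every edge is routed. Fix a vertex $v$. Its $\deg_R(v)$ neighbors are hashed independently into $\rho(v)$ bins. Under the Capacity property $\rho(v)\ge \deg_R(v)/D$, the expected load of each bin is at most $D$. Applying Corollary~\ref{cor:balls_bins} (padding with dummy balls if the mean is below $1$) and a union bound over the at most $\widetilde O(n)$ ports gives maximum load $O(D+\ln(n/\beta_{util}))\le\kappa$ for a suitable constant $C$. On this event every port edge is retained, and the map $E_R\to E({}^\sharp R)$ is a bijection that sends $\{u,v\}$ to an edge between a copy of $u$ and a copy of $v$. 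Since ${}^\sharp S$ contains all copies of the vertices of $S$, an edge crosses $({}^\sharp S,\cdot)$ if and only if the original crosses $(S,V\setminus S)$, which gives equality of cut weights. A further union bound over the $n$ vertices, or simply bounding the total number of ports via Lemma~\ref{lem:demand_allocation}, controls the failure probability.
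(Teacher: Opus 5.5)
Your proposal is correct and follows essentially the same argument as the paper. Both route each edge to ports via public hashes and truncate non-adaptively to the top $B$ candidates of each port's full list in a fixed public order, keeping an edge only if it survives at both endpoints. Stability comes from the fact that only the two ports of the changed edge are affected, and a Chernoff/union bound over ports shows that no truncation occurs under the Capacity property. Your count of at most three changed edges, rather than the paper's four, is just a slightly sharper tally of the same displacement argument.
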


\begin{proof}
For each vertex $v \in V$, we instantiate its $\rho(v)$ ports. To distribute the original incident edges among these ports, we draw fresh, fully independent public hash functions $h_v: V \setminus \{v\} \to [\rho(v)]$. For any original edge $e = \{u, v\} \in E_R$, its projection is deterministically mapped to the port pair $\{(u, h_u(v)), (v, h_v(u))\}$. To enforce a worst-case upper bound on the maximum degree of ${}^\sharp R$, we apply a deterministic truncation mechanism. For each port $(v, j)$, we define its candidate incident set as $C_{v,j}(R) = \{ \{u, v\} \in E_R : h_v(u) = j \}$. Instead of accepting all candidates, we sort $C_{v,j}(R)$ according to a globally fixed, public lexicographical order based on the vertex IDs. We truncate the sorted list, retaining only the first $B = 2\e D + 3 \ln(2 \rho(V) / \beta_{util})$ elements, yielding the truncated set $\mathrm{Top}_B(C_{v,j}(R))$. A candidate edge is added to ${}^\sharp R$ if and only if it survives the truncation at \emph{both} endpoints. This hard cutoff guarantees that the maximum degree $\Delta({}^\sharp R) \le B$. Since Lemma~\ref{lem:demand_allocation} bounds $\rho(V) \le \widetilde{O}(n)$, the threshold simplifies to $B = {O}(D) + O\left(\ln({n}/{\beta_{util}})\right)$, which proves the Degree Bound property.

To verify the Edge-Stability, suppose the private input graph changes by exactly one edge $e = \{x, y\}$. Because the hash mappings and the lexicographical sorting order are entirely public and independent of the graph topology, this perturbation modifies the candidate sets of exactly two ports: $C_{x, h_x(y)}$ and $C_{y, h_y(x)}$. Within either candidate set, the addition or removal of $e$ translates to inserting or deleting exactly one element in a deterministically sorted list. This operation shifts the truncation boundary, altering the membership of the $\mathrm{Top}_B$ set by at most two elements (the perturbed element itself and the boundary element pushed out of or pulled into the list). Consequently, the symmetric difference between $E({}^\sharp R)$ and $E({}^\sharp R')$ is bounded by at most four edges, ensuring $O(1)$ sensitivity globally regardless of the graph's density.

Finally, we evaluate the probability of truncation under the assumption that the Capacity condition holds. Given $\rho(v) \ge deg_R(v)/D$ (Lemma~\ref{lem:demand_allocation}), the expected load for any port $(v, j)$ is $\mu_{v,j} = \E{|C_{v,j}(R)|} = deg_R(v)/\rho(v) \le D$. By applying the standard Chernoff bound (Lemma \ref{lem:chernoff}), the probability that the load exceeds the threshold $B$ is upper bounded by $$\Prb[|C_{v,j}(R)| \ge B] \le \frac{\beta_{util}}{\rho(V)}.$$ Taking a union bound over all $\rho(V)$ ports across the entire graph, the probability that any port overflows its capacity $B$ is bounded by $\beta_{util}$. When no overflow occurs, every edge in $E_R$ passes the truncation at both endpoints, establishing an exact bijection between the edges of $R$ and ${}^\sharp R$. Therefore, mapping the vertices of $S$ to their corresponding ports ${}^\sharp S$ preserves the cut edges across the boundary, yielding $w_R(S, V \setminus S) = w_{{}^\sharp R}({}^\sharp S, V({}^\sharp R) \setminus {}^\sharp S)$ holds with probability at least $1- \beta_{util}$, which concludes the proof.
\end{proof}
\subsection{Demand-Sensitive Private Cut Oracle}\label{sec:demand_cut_oracle}

Building upon the stable port gadget in Section~\ref{sec:stable_port_gadget}, we construct a private cut oracle capable of achieving additive errors proportional only to the demand mass $\rho(S)$ (defined in \cref{eq:demand_allocation}) and the spectral primitive's error interface $\Pi$, rather than the true maximum degree $d$. 

\begin{lemma}[Demand-Sensitive Private Cut Approximation] \label{lem:demand_oracle}
Let $H = R[U]$ be any induced subgraph of the input graph $R = (V, E_R)$ subject to the public upper bound $|E_R| \le M$. Given the public demand vector $\rho$ from Lemma \ref{lem:stable_gadget}, there exists an $(\epsilon, \delta)$-edge-DP mechanism $\mathsf{SyntheticOracle}(H, \rho, \epsilon, \delta, \beta)$ outputting a synthetic non-negative weighted graph $\widehat{H}$ such that, with probability at least $1-\beta$, for all $S \subseteq U$:
\begin{equation}
    |w_H(S, U \setminus S) - w_{\widehat{H}}(S, U \setminus S)| \le \widetilde{O} \Paren{ \Pi(n, M/n, \epsilon, \delta, \beta) } \min\{\rho(S), \rho(U \setminus S)\}.
\end{equation}
\end{lemma}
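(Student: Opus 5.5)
The plan is to reduce $\mathsf{SyntheticOracle}$ to one call of the spectral primitive in Assumption~\ref{assump:spectral}, applied to the port proxy graph of $H$ rather than to $H$ itself.

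\textbf{Construction.} First, restrict the public demand vector $\rho$ to $U$ and run $\mathsf{BuildGadget}(H,\rho)$ from Lemma~\ref{lem:stable_gadget}. This gives a port graph ${}^\sharp H$ on the lifted vertex set $\{(v,i):v\in U,i\in[\rho(v)]\}$.
\begin{itemize}
\item Since $H$ is an induced subgraph of $R$, every port load of $H$ is dominated by the corresponding port load of $R$. The degree bound of Lemma~\ref{lem:stable_gadget} therefore applies: $\Delta({}^\sharp H)\le O(D)+O(\ln(n/\beta))=\widetilde O(M/n)$ once $M/n\ge 1$ (otherwise the degree is $O(\log(n/\beta))$).
\item By Lemma~\ref{lem:demand_allocation}, the number of ports is $N=\rho(U)\le\rho(V)=\widetilde O(n)$ on the good event.
\end{itemize}
Next, run the primitive on ${}^\sharp H$ with degree parameter $d^\sharp=\widetilde O(M/n)$ and privacy $(\epsilon/4,\delta/4)$. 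This outputs a non-negative weighted Laplacian $\widehat L^\sharp$. Finally, contract the ports back: set $\widehat H$ to have edge weight $w_{\widehat H}(u,v)=\sum_{i,j}\widehat w^\sharp((u,i),(v,j))$. Internal port-port edges of a single vertex are dropped.

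\textbf{Privacy.} By edge-stability, neighboring $H,H'$ yield port graphs differing in at most $4$ edges. Group privacy for the primitive at $(\epsilon/4,\delta/4)$ then gives $(\epsilon,\delta')$ with $\delta'\le 4e^{\epsilon}\delta/4$. This is adjusted by rescaling $\delta$, which only affects the polylog factors of $\Pi$.

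Two details need care here:
\begin{itemize}
\item The primitive must be run with the public bound $d^\sharp$, not with the true maximum degree. The truncation at threshold $B$ makes $d^\sharp$ a deterministic bound on every possible input, which is exactly what the primitive's privacy requires.
\item Contraction back to $\widehat H$ is post-processing (Lemma~\ref{lem:postprocess}).
\end{itemize}

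\textbf{Utility.} On the event of Lemma~\ref{lem:stable_gadget}, cut preservation holds: no truncation occurs, so $w_H(S,U\setminus S)=w_{{}^\sharp H}({}^\sharp S,\cdot)$ for every $S\subseteq U$. Contraction gives $w_{\widehat H}(S,U\setminus S)=w_{\widehat L^\sharp}({}^\sharp S,\cdot)$, because lifted cuts never separate ports of the same vertex. Hence the cut error equals $|\mathbf 1_{{}^\sharp S}^\top(L_{{}^\sharp H}-\widehat L^\sharp)\mathbf 1_{{}^\sharp S}|\le\Pi\cdot|{}^\sharp S|=\Pi\cdot\rho(S)$.

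To get $\min\{\rho(S),\rho(U\setminus S)\}$, apply the same bound to the complement. Both Laplacians annihilate $\mathbf 1$, so the quadratic forms of $\mathbf 1_{{}^\sharp S}$ and $\mathbf 1_{{}^\sharp(U\setminus S)}$ coincide.

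Monotonicity of $\Pi$ finishes the bound: $\Pi(N,d^\sharp)\le\Pi(\widetilde O(n),\widetilde O(M/n))$. Because $\Pi$ is polynomial, this is $\widetilde O(\Pi(n,M/n))$. Union-bounding the failure events of the gadget, the demand mass, and the primitive, each at $\beta/3$, completes the argument.

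\textbf{Main obstacle.} I expect the delicate step to be the privacy interplay with the random events. Truncation guarantees the degree bound deterministically, so privacy never depends on the good events. The no-overflow event and the capacity event are needed only for utility. I must check that capacity for $H$ follows from capacity for $R$; it does, since $\deg_H\le\deg_R$ and $\rho$ was built from $R$.
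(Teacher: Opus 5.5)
Your proposal is correct and follows essentially the same route as the paper. Both arguments build the local port graph ${}^\sharp H$ with $\mathsf{BuildGadget}(H,\rho)$, run the spectral primitive under group privacy for the at most $4$ changed port edges, and collapse ports by block-summing. Both then bound the lifted cut quadratic form by $\|E\|_2\,\rho(S)$ and use $E\mathbf{1}=\mathbf{0}$ to get the complementary bound. Your explicit remarks that the primitive should be run with the deterministic truncation threshold as its public degree bound, and that the capacity/no-overflow property for $H$ follows from $\deg_H\le\deg_R$, are details the paper leaves implicit.
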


\begin{proof}
The oracle $\mathsf{SyntheticOracle}(H, \rho,\epsilon,\delta)$ simply first builds the proxy graph ${}^\sharp H$ with $\rho$ and then run the generic spectral oracle (Assumption~\ref{assump:spectral}) on ${}^\sharp H$ to generate a synthetic graph. 

Instead of inducing from a globally truncated port graph ${}^{\sharp}R$ (which could leak information and introduce more sensitivity from edges outside $U$), we construct the local port proxy graph ${}^\sharp H$ with $\mathsf{BuildGadget}(H, \rho)$ (Lemma~\ref{lem:stable_gadget}) directly on the induced subgraph $H = R[U]$.  We define the lifted vertex set exclusively for $U$ as ${}^\sharp U = \bigcup_{v \in U} \{(v, j) : j \in [\rho(v)]\}$. The total number of micro-nodes in ${}^\sharp H$ evaluates to the local demand mass, bounded by the global capacity:
\begin{equation*}
    |V({}^\sharp H)| = |{}^\sharp U| = \sum_{v \in U} \rho(v) = \rho(U) \le \rho(V) \le \widetilde{O}(n).
\end{equation*}

We then invoke the private spectral primitive (Assumption \ref{assump:spectral}) on ${}^\sharp H$. Because the addition or removal of an original edge in $R$ modifies at most $c=4$ edges in ${}^\sharp H$\footnote{Note that the truncation in $\mathsf{BuildGadget}(H, \rho)$ (Lemma~\ref{lem:stable_gadget}) is applied exclusively to the local edge set $E(H)$, edges outside $U$ have no effect. Thus, any single edge change in the global graph $R$ modifies at most $4$ edges in ${}^\sharp H$.}, treating the graph ${}^\sharp H$ under single-edge DP incurs a group privacy reduction of size $c=4$. By standard group privacy guarantees for approximate differential privacy, running the primitive with calibrated privacy parameters $\epsilon' = \epsilon / c$ and $\delta' = \delta / \sum_{j=0}^{c-1} e^{j\epsilon'}$ ensures $(\epsilon, \delta)$-edge-DP. The mechanism outputs a synthetic graph Laplacian $\widehat{L}_{{}^\sharp H}$. We define the error matrix $E = L_{{}^\sharp H} - \widehat{L}_{{}^\sharp H}$. Given that $\Pi(\cdot, \cdot, \cdot, \cdot, \cdot)$ is monotonically non-decreasing in dimensions and monotonically non-increasing in privacy parameters, with probability at least $1-\beta$, the spectral norm of $E$ satisfies:
\begin{equation} \label{eq:spectral_norm_E}
    \normt{E} \le \widetilde{O} \Paren{ \Pi(|V({}^\sharp H)|, \Delta({}^\sharp H), \epsilon', \delta', \beta) }.
\end{equation}
Here, $\Delta({}^\sharp H)$ is bounded by $O(\max\{1, M/n\}) + O(\log(n/\beta'))$ due to Lemma~\ref{lem:stable_gadget}. According to Assumption~\ref{assump:spectral}, the polylogarithmic inflation in vertices and degrees, alongside the constant splitting of privacy budgets, can be absorbed into the $\widetilde{O}$ notation. This cleanly yields:
\begin{equation} \label{eq:spectral_norm_E_simplified}
\normt{E} \le \widetilde{O} \Paren{ \Pi(n, \max\{1, M/n\}, \epsilon, \delta, \beta) } := \Lambda_\rho(M, \epsilon, \delta, \beta).
\end{equation}
To form the final synthetic graph $\widehat{H}$ on the \textbf{original} vertex space $U$, we collapse the micro-nodes back to their source vertices. Note that, the Laplacian $\widehat{L}_H$ of the synthetic graph $\widehat{H}$ is defined by summing the corresponding block submatrices of $\widehat{L}_{{}^\sharp H}$:
\begin{equation*}
    \widehat{L}_H(u,v) = \sum_{i=1}^{\rho(u)} \sum_{j=1}^{\rho(v)} \widehat{L}_{{}^\sharp H}((u,i), (v,j)).
\end{equation*}
For any cut $S \subseteq U$, we lift it to the port space by defining its indicator vector $1_{{}^\sharp S} \in \{0,1\}^{|{}^\sharp U|}$, where $(1_{{}^\sharp S})_{(v, i)} = 1$ if and only if $v \in S$. Due to the exact cut preservation of the stable gadget, all inter-cluster gadget edges exactly correspond to original edges with probability at least $1-\beta_{util} = 1-1/n^c$. Consequently, the cut weight in $H$ can be formulated identically as the quadratic form of the port Laplacian:
\begin{equation*}
    w_H(S, U \setminus S) = w_{{}^\sharp H}({}^\sharp S, {}^\sharp U \setminus {}^\sharp S) = 1_{{}^\sharp S}^\top L_{{}^\sharp H} 1_{{}^\sharp S}.
\end{equation*}
Simultaneously, the matrix block-sum definition guarantees that the cut value in the synthetic graph $\widehat{H}$ translates to the same quadratic form under the synthetic Laplacian:
\begin{equation*}
    w_{\widehat{H}}(S, U \setminus S) = 1_{{}^\sharp S}^\top \widehat{L}_{{}^\sharp H} 1_{{}^\sharp S}.
\end{equation*}
The absolute additive error is therefore equivalent to the quadratic form evaluated over the error matrix $E$:
\begin{equation} \label{eq:additive_error_S}
    |w_H(S, U \setminus S) - w_{\widehat{H}}(S, U \setminus S)| = \abs{ 1_{{}^\sharp S}^\top (L_{{}^\sharp H} - \widehat{L}_{{}^\sharp H}) 1_{{}^\sharp S} } = \abs{ 1_{{}^\sharp S}^\top E 1_{{}^\sharp S} }.
\end{equation}
Applying the standard operator norm inequality, we bound the quadratic form utilizing \cref{eq:spectral_norm_E} and the $\ell_2$-norm of the indicator vector:
\begin{equation} \label{eq:error_bound_S}
    \abs{ 1_{{}^\sharp S}^\top E 1_{{}^\sharp S} } \le \normt{E} \normt{1_{{}^\sharp S}}^2 \le \Lambda(M, \epsilon, \delta) \sum_{v \in S} \sum_{i=1}^{\rho(v)} 1^2 = \Lambda(M, \epsilon, \delta) \sum_{v \in S} \rho(v) = \Lambda(M, \epsilon, \delta) \rho(S).
\end{equation}
To establish the complementary bound for the complement cut $U \setminus S$, we note that by the definition of graph Laplacian matrix:
$L_{{}^\sharp H} \mathbf{1} = \mathbf{0}$ and $\widehat{L}_{{}^\sharp H} \mathbf{1} = \mathbf{0}$ hold unconditionally, implying $$E \mathbf{1} = (L_{{}^\sharp H}  - \widehat{L}_{{}^\sharp H} )\mathbf{1} = L_{{}^\sharp H} \mathbf{1} - \widehat{L}_{{}^\sharp H} \mathbf{1}= \mathbf{0}.$$ We express the all-ones vector as $\mathbf{1} = 1_{{}^\sharp S} + 1_{{}^\sharp U \setminus {}^\sharp S}$ and expand the quadratic form of the complement indicator vector:
\begin{equation*}
    1_{{}^\sharp U \setminus {}^\sharp S}^\top E 1_{{}^\sharp U \setminus {}^\sharp S} = (\mathbf{1} - 1_{{}^\sharp S})^\top E (\mathbf{1} - 1_{{}^\sharp S}) = \mathbf{1}^\top E \mathbf{1} - \mathbf{1}^\top E 1_{{}^\sharp S} - 1_{{}^\sharp S}^\top E \mathbf{1} + 1_{{}^\sharp S}^\top E 1_{{}^\sharp S}.
\end{equation*}
Note that $E \mathbf{1} = \mathbf{0}$ and $\mathbf{1}^\top E = \mathbf{0}^\top$, which yields that:
\begin{equation*}
    1_{{}^\sharp U \setminus {}^\sharp S}^\top E 1_{{}^\sharp U \setminus {}^\sharp S} = 1_{{}^\sharp S}^\top E 1_{{}^\sharp S}.
\end{equation*}
Substituting this symmetric equality into the operator norm inequality mirroring \cref{eq:error_bound_S}:
\begin{equation} \label{eq:error_bound_US}
    \abs{ 1_{{}^\sharp S}^\top E 1_{{}^\sharp S} } = \abs{ 1_{{}^\sharp U \setminus {}^\sharp S}^\top E 1_{{}^\sharp U \setminus {}^\sharp S} } \le \normt{E} \normt{1_{{}^\sharp U \setminus {}^\sharp S}}^2 \le \Lambda(M, \epsilon, \delta) \sum_{v \in U \setminus S} \rho(v) = \Lambda(M, \epsilon, \delta) \rho(U \setminus S).
\end{equation}
Combining Eq. \eqref{eq:error_bound_S} and Eq. \eqref{eq:error_bound_US} simultaneously restricts the absolute error to the minimum of the two respective demand masses, concluding the proof.
\end{proof}

\subsection{Private Expander Decomposition with Demand}\label{sec:expander-decomposition}

To conduct expander decomposition and privately partition the graph into dense components, we utilize the deterministic bicriteria algorithm by Li and Saranurak (\cite{li2021deterministic}). 
\begin{theorem}[Demand-Aware Bicriteria Cut Routine, Theorem 2.14 in \cite{li2021deterministic}] \label{thm:demand_cut_routine}
There is a deterministic polynomial-time algorithm that takes as input a graph $H = (V_H, E_H)$ with non-negative edge weights $w_H$, a vertex demand vector $\rho: V_H \to \mathbb{R}_{\ge 0}$, and a sparsity parameter $\phi > 0$. The algorithm outputs either a cut $S \subset V_H$ or $\emptyset$ satisfying the following properties:
\begin{enumerate}
    \item \textbf{Sparsity Guarantee:} If the algorithm outputs $S \neq \emptyset$, then $S$ is the side with the smaller or equal demand, i.e., $\rho(S) \le \rho(V_H \setminus S)$, and its demand-sparsity satisfies:
    \begin{equation*}
        w_H(S, V_H \setminus S) \le \phi \cdot \rho(S).
    \end{equation*}
    
    \item \textbf{Bicriteria Maximality (Certifier):} If there exists any other sparse cut $S^* \subset V_H$ with $\rho(S^*) \le \rho(V_H \setminus S^*)$ such that: $w_H(S^*, V_H \setminus S^*) \le \frac{\phi}{d_{\text{exp}}} \cdot \rho(S^*),$ then the output $S \neq \emptyset$ satisfies that:
    \begin{equation*}
        \rho(S) \ge \frac{\rho(S^*)}{d_{\text{size}}}.
    \end{equation*}
\end{enumerate}
Here, $d_{\text{exp}} = O(\log^c n)$ and $d_{\text{size}} = O(1)$ are global constants.
\end{theorem}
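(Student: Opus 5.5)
Since this statement is imported from Li and Saranurak~\cite{li2021deterministic}, the plan is to reconstruct it from the standard cut--matching / most-balanced-sparse-cut machinery, with the vertex demand $\rho$ used as the measure in place of volume or cardinality. The first step is to reduce to an unweighted-measure setting. Scale $\rho$ and $w_H$ so that $\rho$ is integral up to a $(1+o(1))$ loss. Conceptually, split each $v$ into $\rho(v)$ ``demand copies'' joined by infinite-capacity edges. Then $\rho(S)$ becomes the number of copies on the $S$ side, and $w_H(S,V_H\setminus S)$ is unchanged for every clone-consistent cut. Because the auxiliary edges have infinite capacity, any finite cut is clone-consistent. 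This reduces both guarantees to the cardinality-based bicriteria routine. Rather than materialising the copies, I would run the routine with flow sources and sinks whose capacities are proportional to $\rho(v)$, which keeps everything polynomial.

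Next, run the cut--matching game for $O(\log^2 n)$ rounds. In each round the cut player, implemented deterministically as in~\cite{li2021deterministic}, proposes a bipartition of the remaining demand. The matching player solves a max-flow in which each source $v$ emits $\rho(v)$ units, each sink absorbs $\rho(v)$ units, and edges have capacity $\Theta(1/\phi)$ times their weight. There are two outcomes:
\begin{itemize}
\item If the flow routes (almost) all demand, it yields a demand-weighted matching that embeds into $H$ with congestion $O(1/\phi)$.
\item Otherwise, the min-cut gives a set $S_i$ with $w_H(S_i,\cdot)\le\phi\,\rho(S_i)$.
\end{itemize}
We peel off such sets and accumulate their union $S=\bigcup_i S_i$. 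We stop either when $\rho(S)$ reaches a constant fraction of $\rho(V_H)$, or when the game terminates with an embedded $\rho$-expander on the remainder. Taking the smaller-demand side, together with the union bound on sparsity, gives the Sparsity Guarantee: the union of $\phi$-sparse pieces is $\phi$-sparse.

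For Bicriteria Maximality, suppose $S^*$ satisfies $w_H(S^*)\le(\phi/d_{\exp})\,\rho(S^*)$.
\begin{itemize}
\item If the routine ended with $\rho(S)$ a constant fraction of $\rho(V_H)$, the bound $\rho(S)\ge\rho(S^*)/d_{\text{size}}$ is immediate, since $\rho(S^*)\le\rho(V_H)/2$.
\item Otherwise, the remainder $V_H\setminus S$ carries a demand-weighted expander embedding of expansion $\Omega(1/\log^2 n)$ scaled by $\phi$. Consider the part of $S^*$ lying in the remainder. If it had demand much larger than $\rho(S)$, then routing its demand across the embedding would force more than $(\phi/d_{\exp})\rho(S^*)$ edges across $S^*$. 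Those are the edges inside the remainder, plus at most $\phi\rho(S)$ edges leaving the peeled set, so this would be a contradiction once $d_{\exp}=\Theta(\log^c n)$ absorbs the embedding loss.
\end{itemize}
Hence $\rho(S^*\setminus S)=O(\rho(S))$, and therefore $\rho(S)\ge\rho(S^*)/d_{\text{size}}$.

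I expect the main obstacle to be the deterministic implementation of the cut player with a non-uniform measure $\rho$, in particular the need for balanced bipartitions with respect to $\rho$ rather than to vertex count. I would handle it exactly through the demand-copy reduction above: run the deterministic cut player of~\cite{li2021deterministic} on the (implicit) split graph, and check that all of its operations depend only on aggregated per-vertex demands, so that it runs in time polynomial in $n$ rather than in $\rho(V_H)$.
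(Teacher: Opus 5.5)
The paper gives no proof of this statement; it imports it from \cite{li2021deterministic}. Your reconstruction therefore has to stand on its own. Its architecture is the right one: demand copies realized as $\rho$-capacitated sources and sinks, peeling $\phi$-sparse min-cuts, and stopping at either a constant-fraction union or an embedded expander. However, your Bicriteria Maximality step does not give $d_{\text{size}}=O(1)$.

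The certificate your game produces embeds the matching graph on $R=V_H\setminus S$ into $H$, not into $H[R]$. Matchings routed in early rounds may use paths through vertices peeled later. So $R$ is only a \emph{nearly} expander, with expansion $\phi'=\phi/\kappa$ for $\kappa=\mathrm{polylog}\,n$, counted by edges of $H$; your phrase about ``edges leaving the peeled set'' reflects exactly this. Take $A=S^*\cap R$, assuming it is the smaller side within $R$. Its boundary in $H$ is at most $w_H(S^*,V_H\setminus S^*)+w_H(S,R)$, hence
\[
\phi'\rho(A)\le\frac{\phi}{d_{\text{exp}}}\rho(S^*)+\phi\,\rho(S)
\quad\Longrightarrow\quad
\rho(A)\le\frac{\kappa}{d_{\text{exp}}}\rho(S^*)+\kappa\,\rho(S).
\]
Taking $d_{\text{exp}}\gg\kappa$ kills the first term, but nothing absorbs the second. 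That term is the peeled boundary divided by the certificate expansion, and $d_{\text{exp}}$ does not multiply it. You therefore only get $\rho(S^*)\le O(\kappa)\,\rho(S)$, i.e.\ $d_{\text{size}}=\mathrm{polylog}\,n$, not the claimed $\rho(S^*\setminus S)=O(\rho(S))$.

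The missing ingredient is an expander trimming/pruning step, in the style of Saranurak--Wang or the Prune branch of the \textsc{BalCutPrune} primitive that \cite{li2021deterministic} generalizes.
\begin{itemize}
\item Move an extra $O(\kappa\,\rho(S))$ demand into the small side so that $H[R']$ becomes a genuine $(\Omega(\phi'),\rho)$-expander. This costs only a constant factor in the sparsity of the small side.
\item Then $w_{H[R']}(S^*\cap R',R'\setminus S^*)\le w_H(S^*,V_H\setminus S^*)$, so edges to the peeled set no longer enter. This gives $\min\{\rho(S^*\cap R'),\rho(R'\setminus S^*)\}\le\rho(S^*)/4$ once $d_{\text{exp}}\ge C\kappa$.
\item A two-case argument then yields a constant $d_{\text{size}}$. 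The second case, where $R'\setminus S^*$ is the small part, uses $\rho(S^*)\le\rho(V_H)/2$; your sketch omits this case because expansion only bounds the smaller side.
\item Trimming only works when $\rho(S)\lesssim\rho(V_H)/\kappa$. The intermediate regime, between that threshold and a constant fraction, must be handled by re-running on the remainder until the accumulated side is constant-balanced.
\end{itemize}

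Two smaller points. First, when the stopping union $X$ exceeds half and you output its complement, sparsity degrades by the factor $\rho(X)/\rho(V_H\setminus X)<2$, because $X$ stops below $2\rho(V_H)/3$; so peel at $\phi/2$. Second, the paper's downstream use in Lemma~\ref{lem:demand_ns17} only needs $c_{\text{size}}=\rho(V(H))^{o(1)}$, so your polylogarithmic $d_{\text{size}}$ would suffice there. It still does not prove the statement as written.
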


We adapt this algorithm to operate over the private synthetic graph output by our demand-sensitive oracle in Section~\ref{sec:demand_cut_oracle}. In particular, we formalize this procedure of the private cut finder in Algorithm~\ref{algo:cut_finder}. 
The algorithm acts as a differentially private wrapper, utilizing the synthetic graph from our demand-sensitive oracle to execute the deterministic routine in Theorem~\ref{thm:demand_cut_routine}.

\begin{algorithm}[h]
\caption{Private Demand Cut Finder ($\mathsf{CutFinder}$)}
\label{algo:cut_finder}
\KwIn{Induced subgraph $H = R[U]$, public demand vector $\rho$, target sparsity parameter $\psi$, privacy parameters $(\epsilon,\delta)$.}
\KwOut{A cut $S \subset U$ satisfying $\rho(S) \le \rho(U \setminus S)$ and $w_H(S, U \setminus S) \le \psi \rho(S)$, or $\emptyset$}

\tcp{Private Graph Synthesis via Spectral Oracle (Lemma \ref{lem:demand_oracle})}
Let $\widehat{H} \gets \mathsf{SyntheticOracle}(H, \rho, \epsilon, \delta)$\;
\tcp{Trim and Cap (Deterministic Weight Regularization)}
Set $\theta \gets \frac{\psi}{10n}$ and $W_{\max} \gets \psi \rho(U)$\;

\For{\textbf{each} edge $e \in E(\widehat{H})$}{
    \uIf{$w_e < \theta$}{
        Delete $e$ from $\widehat{H}$\;
    }
    \ElseIf{$w_e > W_{\max}$}{
        Truncate weight $w_e \gets W_{\max}$\;
    }
}

\tcp{Deterministic Bicriteria Search}
Execute the Demand-Aware Cut Routine (Theorem \ref{thm:demand_cut_routine}) on $\widehat{H}$ with input demand vector $\rho$ and target sparsity $\phi = 0.8\psi$\;
\eIf{the routine returns a valid cut $S$}{
    \Return $S$\;
}{
    \Return $\emptyset$ \tcp*{Certifies $H$ is a valid $(\psi/d_{exp}, \rho)$-expander}
}
\end{algorithm}

\begin{lemma}[Private Demand Cut Finder] \label{lem:cut_finder}
Fix any $U\subseteq V$ and let $H = R[U]$ be an induced subgraph. Assume we have a synthetic graph $\widehat{H}$ satisfying the guarantee of Lemma \ref{lem:demand_oracle} with additive error bound $\Lambda_\rho $ on any cut value $w(S,U\backslash S)$ for $S\subset U$. For any parameter $\psi \ge 10 c_{exp} \Lambda_\rho$ where $c_{exp} = O(\log^c n)$, there exists an $(\epsilon,\delta)$-edge-DP algorithm $\mathsf{CutFinder}(H, \psi)$ (i.e., Algorithm~\ref{algo:cut_finder}) that outputs either a cut $S \subset U$ with $\rho(S) \le \rho(U \setminus S)$ or $\emptyset$, satisfying:
\begin{enumerate}
    \item \textbf{Sparsity:} If $S \neq \emptyset$, then $w_H(S, U \setminus S) \le \psi \rho(S)$.
    \item \textbf{Bicriteria Maximality:} If there exists any $S' \subset U$ satisfying $\rho(S') \le \rho(U \setminus S')$ and $w_H(S', U \setminus S') \le \frac{\psi}{c_{exp}} \rho(S')$, the output $S \neq \emptyset$ satisfies that $\rho(S) \ge \frac{\rho(S')}{c_{size}}$, where $c_{size} = O(1)$.
\end{enumerate}
\end{lemma}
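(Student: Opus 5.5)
Privacy comes first and is immediate. Algorithm~\ref{algo:cut_finder} touches $H$ only through $\mathsf{SyntheticOracle}$, which is $(\epsilon,\delta)$-edge-DP by Lemma~\ref{lem:demand_oracle}. The trimming, capping and the deterministic routine of Theorem~\ref{thm:demand_cut_routine} act only on $\widehat H$ and the public $\rho,\psi$, so post-processing (Lemma~\ref{lem:postprocess}) covers them. For utility, I condition on the event of Lemma~\ref{lem:demand_oracle}: for every $S\subseteq U$,
\[
|w_H(S)-w_{\widehat H}(S)|\le \Lambda_\rho\min\{\rho(S),\rho(U\setminus S)\}.
\]
I then track how trim-and-cap perturbs this.

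\textbf{Effect of trim and cap.} Deleting the edges of weight below $\theta=\psi/(10n)$ lowers any cut by at most $\theta$ times the number of deleted crossing edges. I plan to bound this loss by $\theta\cdot n\cdot\min\{|S|,|U\setminus S|\}\le (\psi/10)\min\{\rho(S),\rho(U\setminus S)\}$, using $\rho\ge 1$ and counting the pairs between the smaller side and the rest. Capping at $W_{\max}=\psi\rho(U)$ can only lower a cut. It does not matter for sparse cuts: a sparse cut in the regime of interest has true value at most $\psi\rho(S)\le W_{\max}$, so no single crossing edge of $\widehat H$ can exceed $W_{\max}$ by more than the oracle error allows. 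Moreover, the certifier direction only needs an upper bound on $w_{\widehat H}$, which capping preserves. Call the final graph $\widehat H'$. Using $\Lambda_\rho\le\psi/(10c_{exp})\le\psi/10$, I get two bounds on the smaller-demand side:
\[
w_{\widehat H'}(S)\le w_H(S)+\tfrac{\psi}{10}\rho(S),\qquad w_{\widehat H'}(S)\ge w_H(S)-\tfrac{2\psi}{10}\rho(S).
\]
The second bound is correct up to the cap issue, which I handle by arguing that the cap is applied only to edges whose removal of excess cannot create false sparsity.

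\textbf{Sparsity.} Run the routine with $\phi=0.8\psi$. If it returns $S$, then $w_{\widehat H'}(S)\le 0.8\psi\rho(S)$. The lower bound then gives $w_H(S)\le(0.8+0.2)\psi\rho(S)=\psi\rho(S)$.

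\textbf{Maximality.} Take $d_{exp}$ from Theorem~\ref{thm:demand_cut_routine}. Suppose $S'$ has $w_H(S')\le(\psi/c_{exp})\rho(S')$, and set $c_{exp}\ge 2d_{exp}$. Then $w_{\widehat H'}(S')\le(\psi/c_{exp}+\psi/(10c_{exp}))\rho(S')\le (0.8\psi/d_{exp})\rho(S')$, so $S'$ is a witness for the routine's certifier. The routine therefore returns a nonempty $S$ with $\rho(S)\ge\rho(S')/d_{size}$, and we take $c_{size}=d_{size}$.

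\textbf{Main obstacle.} The delicate step is controlling the cap in the lower-bound direction. Capping could in principle turn a dense cut of $H$ into an apparently sparse one in $\widehat H'$. My first attempt is to observe that a cut containing any capped crossing edge has value in $\widehat H'$ at least $W_{\max}=\psi\rho(U)\ge\psi\rho(S)$, so it cannot be output as $0.8\psi$-sparse. Cuts with no capped crossing edges keep their value unchanged apart from the trimming.
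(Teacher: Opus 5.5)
Your proposal is correct and follows the paper's proof. Privacy comes from post-processing of $\mathsf{SyntheticOracle}$. Sparsity comes from showing the returned $S$ cannot cross a capped edge (since $W_{\max}=\psi\rho(U)>0.8\psi\rho(S)$), then paying $\theta n\rho(S)=0.1\psi\rho(S)$ for trimmed edges plus the oracle error. Maximality uses that trimming and capping only lower cut values, with $c_{exp}$ a constant multiple of $d_{exp}$; the paper takes $c_{exp}=4d_{exp}$, you take $c_{exp}\ge 2d_{exp}$.

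The capped-edge contradiction in your final paragraph is what settles the cap issue. The heuristic remark about ``oracle error'' in your trim-and-cap paragraph addresses the wrong direction and can be dropped.
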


\begin{proof}
We first construct a \emph{trimmed} and capped proxy graph $\widehat{H}'$ from $\widehat{H}$ to bound the dynamic range of the edge weights. We define a minimum weight threshold $\theta = \frac{\psi}{10n}$ and a maximum capacity bound $W_{\max} = \psi \rho(U)$. To form $\widehat{H}'$, we simply delete any edge $e \in \widehat{H}$ with weight $w_e < \theta$, and truncate any edge with $w_e > W_{\max}$ to exactly $W_{\max}$. The ratio of the maximum to non-zero minimum edge weight in $\widehat{H}'$ is therefore bounded by $W_{\max} / \theta = 10 n \rho(U) \le \mathrm{poly}(n)$. We then execute the deterministic demand-vector routine (Theorem~\ref{thm:demand_cut_routine}, or Theorem 2.14 in \cite{li2021deterministic}) on $\widehat{H}'$ with a target sparsity threshold of $0.8\psi$. If the algorithm in Theorem~\ref{thm:demand_cut_routine} fails to find a cut meeting this threshold, we output $\emptyset$; otherwise, we output the returned cut $S$.

First, we assume the algorithm outputs $S \neq \emptyset$ and prove the sparsity guarantee. By the guarantee of such cut in Theorem~\ref{thm:demand_cut_routine}, $w_{\widehat{H}'}(S, U \setminus S) \le 0.8\psi \rho(S)$. We prove by contradiction that the cut $S$ does not sever any capped edges. If it did, the weight of the cut in $\widehat{H}'$ would be at least $W_{\max} = \psi \rho(U)$. Since $S$ is the smaller demand side, $\rho(S) \le \rho(U)/2$, which implies $\psi \rho(U) \ge 2\psi \rho(S) > 0.8\psi \rho(S)$, which leads to a contradiction. Thus, the cut $S$ evaluates identically in $\widehat{H}'$ and $\widehat{H}$ except for the deleted small edges. The total weight of all deleted edges crossing the cut is naively bounded by $\theta \cdot |S| \cdot |U \setminus S| \le \theta n \rho(S) = 0.1\psi \rho(S)$. Consequently, the uncut synthetic weight is bounded by $w_{\widehat{H}}(S, U \setminus S) \le w_{\widehat{H}'}(S, U \setminus S) + 0.1\psi \rho(S) \le 0.9\psi \rho(S)$. Applying the additive error guarantee from Lemma \ref{lem:demand_oracle}, the true weight in $H$ evaluates to:
\begin{align*}
    w_H(S, U \setminus S) &\le w_{\widehat{H}}(S, U \setminus S) + \Lambda_\rho \rho(S) \nonumber \\
    &\le 0.9\psi \rho(S) + \Paren{\frac{\psi}{10 c_{exp}}} \rho(S) \le \psi \rho(S).
\end{align*}

Next, we establish the bicriteria maximality. Assume there exists an optimal sparse cut $S'$ satisfying $w_H(S', U \setminus S') \le \frac{\psi}{c_{exp}} \rho(S')$. The oracle's additive error guarantees that the weight of $S'$ in the synthetic graph $\widehat{H}$ is bounded by:
\begin{equation*}
    w_{\widehat{H}}(S', U \setminus S') \le w_H(S', U \setminus S') + \Lambda_\rho \rho(S') \le \Paren{ \frac{\psi}{c_{exp}} + \Lambda_\rho } \rho(S').
\end{equation*}
By setting $c_{exp} = 4 d_{exp}$ (recall that $d_{exp} = \mathrm{polylog}(n)$ is the internal expansion loss of in Theorem~\ref{thm:demand_cut_routine}) and invoking the premise $\psi \ge 10 c_{exp} \Lambda_\rho$, we see that:
\begin{equation*}
    \frac{\psi}{c_{exp}} + \Lambda_\rho \le \frac{\psi}{4 d_{exp}} + \frac{\psi}{40 d_{exp}} = 0.275 \frac{\psi}{d_{exp}}.
\end{equation*}
Because the trimming and capping operations can only decrease the weight of an existing cut, the weight of $S'$ in $\widehat{H}'$ satisfies that $w_{\widehat{H}'}(S', U \setminus S') \le 0.275 \frac{\psi}{d_{exp}} \rho(S') \le 0.8 \frac{\psi}{d_{exp}} \rho(S')$. Under this condition, the bicriteria guarantee in Theorem~\ref{thm:demand_cut_routine} dictates that the returned cut $S \neq \emptyset$ satisfies $\rho(S) \ge \frac{\rho(S')}{d_{size}}$. Setting $c_{size} = d_{size} = O(1)$ concludes the utility proof. 

For the privacy proof, we note that Algorithm~\ref{algo:cut_finder} does not consume more privacy budgets than the private spectral oracle (Assumption~\ref{assump:spectral}) as all trimming, capping, and the computations in Theorem~\ref{thm:demand_cut_routine} are deterministic post-processing steps applied exclusively to the $(\epsilon,\delta)$-edge-DP synthetic graph $\widehat{H}$ given by the spectral primitive in Assumption~\ref{assump:spectral}.
\end{proof}

We then integrate this demand-aware cut finder into a recursive decomposition framework. Because
the performance metric scales with $\rho(S)$ rather than $|S|$, we slightly generalize the classical
recursion in Nanongkai and Saranurak~\cite{nanongkai2017dynamic} by replacing cardinalities with
demand masses. As in~\cite{nanongkai2017dynamic}, an auxiliary graph $I$ will be used only in the
analysis and need not be maintained by the executable algorithm.

\begin{algorithm}[h]
\caption{Demand-Aware Private Expander Decomposition ($\mathsf{DemandExpDecomp}$)}
\label{algo:expander_decomposition}
\KwIn{Graph $H=G[U]$ with public demand vector
$\rho \in \mathbb{Z}_{\ge 1}^{V(H)}$, target sparsity $\psi$, privacy budgets
$(\epsilon,\delta)$, and failure probability $\beta$.}
\KwOut{A disjoint partition $\+P = \{V_1, \dots, V_k\}$ of $V(H)$.}

\If{$H$ is a singleton}{
    \Return $\{V(H)\}$\;
}

\tcp{Part 1: Global Initialization \& Privacy Budgeting}
$\+R \gets \rho(V(H))$ \tcp*{Total global demand}
$\widehat c_{exp} \gets 2c_{exp}$ \tcp*{Constant slack used in the depth proof}
$\sigma \gets \min\left\{1,\sqrt{\log \widehat c_{exp}/\log \+R}\right\}$ \tcp*{Step decay rate}
$L \gets \min\left\{\ell\ge 1:
(\+R/2+1)/\+R^{(\ell-1)\sigma}\le 1\right\}$ \tcp*{Maximum level}
\For{$i = 1$ \KwTo $L+1$}{
    $\overline{s}_i \gets (\+R/2 + 1) / \+R^{(i-1)\sigma}$
    \tcp*{Pre-compute exact mass thresholds}
}
$\Gamma \gets \left\lceil 3c_{size}\+R^\sigma\right\rceil$
\tcp*{Maximum consecutive larger-side steps}
$D_{\max} \gets
(1+\lceil\log_2 \+R\rceil)\bigl(1+L(1+\Gamma)\bigr)$
\tcp*{Hard recursion-depth cap}
$N_{\max} \gets |V(H)|D_{\max}$ \tcp*{Hard upper bound on oracle calls}
$\epsilon_{step} \gets \epsilon / D_{\max}$,
$\delta_{step} \gets \delta / D_{\max}$,
$\beta_{step} \gets \beta / N_{\max}$\;

\BlankLine
\tcp{Part 2: The Recursive Core}
\SetKwProg{Fn}{Function}{:}{}
\Fn{$\mathsf{RecDecomp}(H', l, d)$}{
    \If{$H'$ is a singleton}{
        \Return $\{V(H')\}$\;
    }
    \If{$d \ge D_{\max}$ \textbf{or} $l > L$}{
        \Return $\{V(H')\}$
        \tcp*{Hard cap; this branch is not reached on the success event}
    }
    $\psi_l \gets \dfrac{\psi}{2}\widehat c_{exp}^{\,L-l+1}$\;
    $S \gets \mathsf{CutFinder}
    (H',\psi_l,\rho,\epsilon_{step},\delta_{step},\beta_{step})$\;
    \If{$S = \emptyset$}{
        \Return $\{V(H')\}$\;
    }
    \eIf{$\rho(S) \ge \overline{s}_{l+1}/c_{size}$}{
        \Return
        $\mathsf{RecDecomp}(H'[S],1,d+1)
        \cup
        \mathsf{RecDecomp}(H'[V(H')\setminus S],l,d+1)$\;
    }{
        \Return $\mathsf{RecDecomp}(H',l+1,d+1)$\;
    }
}
\BlankLine
\Return $\mathsf{RecDecomp}(H,1,0)$\;
\end{algorithm}

\begin{lemma}[Properties of Algorithm~\ref{algo:expander_decomposition}]
\label{lem:demand_ns17}
Given a global privacy budget $(\epsilon,\delta)$ and a global failure probability
$\beta\in(0,1)$, let
$\Lambda_\rho(\epsilon',\delta',\beta')$ be a uniform upper bound on the
additive error of the private oracle in Lemma~\ref{lem:demand_oracle}, when
invoked with local parameters $(\epsilon',\delta',\beta')$.
Let $D_{\max},N_{\max},\epsilon_{step},\delta_{step},\beta_{step}$ be the
public quantities defined in Algorithm~\ref{algo:expander_decomposition}.
For every target expansion parameter satisfying
\begin{equation}
    \psi
    \ge
    10\Lambda_\rho
    \Paren{
        \epsilon_{step},
        \delta_{step},
        \beta_{step}
    },
    \label{eq:demand-decomp-oracle-condition}
\end{equation}
the following hold.
\begin{enumerate}
    \item Algorithm~\ref{algo:expander_decomposition} always terminates,
    outputs a disjoint partition of $V(H)$, and satisfies
    $(\epsilon,\delta)$-edge-DP.
    \item Deterministically, every root-to-leaf path contains at most
    $D_{\max}$ private cut-finder invocations.
    \item With probability at least $1-\beta$, every returned component
    $V_i$ induces a $(\psi,\rho)$-expander, and the total weight of all
    inter-component edges is at most
    \begin{equation*}
        \psi\,\rho(V(H))^{1+o(1)}.
    \end{equation*}
\end{enumerate}
In particular, when $c_{exp},c_{size}=\rho(V(H))^{o(1)}$ and
$\rho(V(H))=\poly(n)$, we have
\[
    D_{\max}=\rho(V(H))^{o(1)}=n^{o(1)}
    \qquad\text{and}\qquad
    N_{\max}=n^{1+o(1)}.
\]
Thus, under the polynomial regularity assumed for $\Lambda_\rho$,
condition~\eqref{eq:demand-decomp-oracle-condition} has the asymptotic form
\begin{equation*}
    \psi \ge
    \widetilde{\Omega}\Paren{
        \Lambda_\rho\Paren{
            \frac{\epsilon}{n^{o(1)}},
            \frac{\delta}{n^{o(1)}},
            \frac{\beta}{n^{1+o(1)}}
        }
    }.
\end{equation*}
\end{lemma}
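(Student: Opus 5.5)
We will prove the three items in order: termination and structure first, then privacy, then the probabilistic utility claims.

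\textbf{Termination, partition, depth cap.} Each call of $\mathsf{RecDecomp}$ either returns $\{V(H')\}$ or recurses with depth counter $d+1$. It returns $\{V(H')\}$ whenever $d\ge D_{\max}$, whenever $l>L$, or when $H'$ is a singleton. Hence the recursion tree has depth at most $D_{\max}$ and is finite. Along every root-to-leaf path at most $D_{\max}$ invocations of $\mathsf{CutFinder}$ occur, which is item 2. The returned sets are disjoint and cover $V(H)$ by induction on the tree, since a split returns the union of decompositions of $S$ and $V(H')\setminus S$. The total number of calls is at most $|V(H)|D_{\max}=N_{\max}$, because each level of the tree touches disjoint vertex sets. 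This justifies the union bound with $\beta_{step}=\beta/N_{\max}$.

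\textbf{Privacy.} Each call to $\mathsf{CutFinder}$ on an induced subgraph $H'=H[U']$ is $(\epsilon_{step},\delta_{step})$-DP with respect to the whole graph, by Lemma~\ref{lem:cut_finder} and the footnote in Lemma~\ref{lem:demand_oracle}: a single edge change affects only gadgets on pieces containing both endpoints. The sets $U'$ at one tree level are disjoint and are determined by earlier public outputs, so one edge influences at most one call per level, namely the unique piece containing both endpoints. I would formalize this by treating each depth level as a parallel composition, conditioned on the transcript of previous levels, of mechanisms on disjoint vertex sets. Sequential composition over the $D_{\max}$ levels (Lemma~\ref{lem:composition}) then gives $(D_{\max}\epsilon_{step},D_{\max}\delta_{step})=(\epsilon,\delta)$. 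Here the depth counter $d$ indexes levels consistently, including the $l\to l+1$ self-recursion, which also increments $d$.

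\textbf{Utility: expansion and the levelled potential argument.} On the success event, which has probability at least $1-\beta$ by a union bound over the $N_{\max}$ calls, every call obeys the two guarantees of Lemma~\ref{lem:cut_finder} with $\psi_l\ge\psi\ge 10c_{exp}\Lambda_\rho$. I would absorb the $c_{exp}$ factor into $\widehat c_{exp}$ through the constants in condition~\eqref{eq:demand-decomp-oracle-condition}.

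\emph{Expansion of returned components.} A component is returned with $S=\emptyset$ at some level $l$. By bicriteria maximality, no cut of sparsity at most $\psi_l/c_{exp}\ge\psi$ exists, so the component is a $(\psi,\rho)$-expander. I must also rule out the hard-cap returns. I will adapt the Nanongkai--Saranurak invariant: when $\mathsf{RecDecomp}(H',l,\cdot)$ is called, every $\psi_l/\widehat c_{exp}$-sparse cut of $H'$ has smaller side of demand below $\overline s_l$. At $l=1$ this holds trivially, since $\overline s_1=\+R/2+1$. If $\mathsf{CutFinder}$ returns $S$ with $\rho(S)<\overline s_{l+1}/c_{size}$, maximality shows every $\psi_{l+1}=\psi_l/\widehat c_{exp}$-sparse cut has demand less than $\overline s_{l+1}$, so the invariant advances. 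Since $\overline s_{L+1}\le 1$ and all demands are at least $1$, at level $L$ no sparse cut survives and $\mathsf{CutFinder}$ returns $\emptyset$; thus $l>L$ is never reached.

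\emph{Depth bound.} The larger-side branch keeps level $l$. A level-$l$ small step removes a cut $S$ with $\rho(S)\ge\overline s_{l+1}/c_{size}$, while the remaining sparse cuts have demand below $\overline s_l=\overline s_{l+1}\+R^\sigma$. After $\Gamma$ such steps the cumulative removed demand from that invariant region would exceed the bound, so $l$ must rise or a split must halve the demand. Combining this with the at most $\lceil\log_2\+R\rceil$ halvings bounds the depth by $D_{\max}$, so the $d\ge D_{\max}$ cap is never hit. This counting is the step I expect to be the main obstacle. It requires adapting the NS17 amortization with demands in place of cardinalities and handling the interleaving of level increments, larger-side steps and smaller-side restarts at level $1$. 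I would follow NS17's proof verbatim with $|\cdot|$ replaced by $\rho(\cdot)$.

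\emph{Inter-component edge weight.} Charge each cut edge to the smaller-demand side $S$: $w(S,\cdot)\le\psi_l\rho(S)\le\psi\,\widehat c_{exp}^{\,L}\rho(S)$. Each vertex lies on the smaller side $O(\log\+R)$ times per depth cycle, and at most $D_{\max}$ times overall. This gives total weight at most $\psi\,\widehat c_{exp}^{L}D_{\max}\rho(V(H))$. With $\sigma=\sqrt{\log\widehat c_{exp}/\log\+R}$ we get $L=O(1/\sigma)$, so $\widehat c_{exp}^{L}=2^{O(\sqrt{\log\widehat c_{exp}\log\+R})}=\+R^{o(1)}$, and likewise $\Gamma,D_{\max}=\+R^{o(1)}$. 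The final asymptotic form follows by substituting $\epsilon_{step}=\epsilon/n^{o(1)}$, $\delta_{step}=\delta/n^{o(1)}$ and $\beta_{step}=\beta/n^{1+o(1)}$ into $\Lambda_\rho$ and using its polynomial regularity from Assumption~\ref{assump:spectral}.
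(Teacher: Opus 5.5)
The structural parts of your proposal match the paper: termination, the partition property, the per-path cap of $D_{\max}$ oracle calls, the privacy accounting (parallel composition over the disjoint active pieces at each depth, sequential composition over $D_{\max}$ depths), the union bound over $N_{\max}$ calls, and the expansion of pieces returned with $S=\emptyset$. Your charging bound is looser but still sufficient. One small fix: the hypothesis gives only $\psi\ge 10\Lambda_\rho$. The needed $\psi_l\ge 10c_{exp}\Lambda_\rho$ comes from $\psi_l\ge\psi_L=\tfrac{\psi}{2}\widehat c_{exp}=c_{exp}\psi$.

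The genuine gap is the depth bound, which you flag but then set up incorrectly. It is not optional. A piece returned by the hard cap $d\ge D_{\max}$ has no expansion certificate, so item 3 needs the cap never to fire on the success event.

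Your invariant lives on the current graph $H'$ at threshold $\psi_l/\widehat c_{exp}$, and you never check it across a larger-side step. There it does not propagate. After removing a $\psi_l$-sparse $S$, a cut $Y$ of $H'[V(H')\setminus S]$ may be arbitrarily sparse. In $H'$, however, both $Y$ and $Y\cup S$ can pick up as much as $\psi_l\rho(S)$ extra weight, so they need not be $\psi_l/\widehat c_{exp}$-sparse.

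The invariant is also too weak to drive your counting. The accumulated removed set $X_j=S_1\cup\dots\cup S_j$ is only $\psi_l$-sparse relative to $\rho(X_j)$ in the graph where the run began. It becomes $2\psi_l$-sparse once you pass to its complement. Neither can contradict a statement at threshold $\psi_l/\widehat c_{exp}$.

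The missing idea is the Nanongkai--Saranurak anchor. Carry the graph $I$ at which the current run of larger-side steps began, and keep $I$ and $l$ fixed along the run. Then prove $\operatorname{OPT}_\rho(I,2\psi_l)<\overline s_l$:
\begin{itemize}
\item it is trivial at level $1$;
\item it is unchanged along larger-side steps;
\item at a level increment, bicriteria maximality at threshold $\psi_l/c_{exp}=2\psi_{l+1}$ gives $\operatorname{OPT}_\rho(H',2\psi_{l+1})\le c_{size}\rho(S)<\overline s_{l+1}$. This is exactly why $\widehat c_{exp}=2c_{exp}$.
\end{itemize}

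The counting then goes as follows, with $r=\rho(V(I))$:
\begin{itemize}
\item If $r\le 3\overline s_l$, fewer than $3c_{size}\mathcal{R}^\sigma$ removals of demand at least $\overline s_{l+1}/c_{size}$ can fit.
\item Otherwise, take the first $j$ with $\rho(X_j)\ge\overline s_l$. Since each $S_j$ is the smaller side, $\rho(X_j)<(r+\overline s_l)/2<2r/3$. Hence $X_j$ or $V(I)\setminus X_j$ is a $2\psi_l$-sparse smaller side of demand at least $\overline s_l$ in $I$, a contradiction.
\end{itemize}

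Separately, $l>L$ is never reached simply because an increment from level $L$ would require $0<\rho(S)<\overline s_{L+1}/c_{size}<1$. Your stated reason, that $\mathsf{CutFinder}$ must return $\emptyset$ at level $L$, does not follow. It may still return a $\psi_L$-sparse cut that is not $\psi_L/c_{exp}$-sparse.
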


\begin{proof}
This proof is a demand-weighted analogue of the global expansion-decomposition
analysis of Nanongkai and Saranurak~\cite[Section~5.2]{nanongkai2017dynamic}.
The hard depth cap in Algorithm~\ref{algo:expander_decomposition} ensures
termination and privacy on every transcript. We first analyze utility on the
event that every invocation of $\mathsf{CutFinder}$ satisfies the guarantees
of Lemma~\ref{lem:cut_finder}. For a graph $J$ and a sparsity parameter $\varphi$, define
\begin{equation}
\operatorname{OPT}_\rho(J,\varphi)
:=
\max\left\{
    \rho(X):
    \begin{array}{l}
        \emptyset\neq X\subsetneq V(J),\\
        \rho(X)\le \rho(V(J)\setminus X),\\
        w_J(X,V(J)\setminus X)\le \varphi\rho(X)
    \end{array}
\right\},
\label{eq:demand-opt}
\end{equation}
where the maximum is defined to be zero when no such cut exists.

As in~\cite{nanongkai2017dynamic}, associate an auxiliary graph $I$ with each
recursive call. This graph is used only for analysis:
\begin{itemize}
    \item the root call is represented by $(H,H,1)$;
    \item a smaller-side child is represented by
    $(H'[S],H'[S],1)$;
    \item a larger-side child is represented by
    $(H'[V(H')\setminus S],I,l)$; and
    \item a level-increment child is represented by $(H',H',l+1)$.
\end{itemize}

We claim that every analytical call $(H',I,l)$ made on the success event
satisfies
\begin{equation}
    \operatorname{OPT}_\rho(I,2\psi_l)<\overline{s}_l.
    \label{eq:demand-opt-invariant}
\end{equation}
For $l=1$, every smaller side of a cut in $I$ has demand at most
$\rho(V(I))/2\le \+R/2<\overline{s}_1$. This proves the invariant at the root
and after every smaller-side reset. The invariant is unchanged along a
larger-side recursive call because both $I$ and $l$ remain unchanged.

It remains to consider a level increment. Such an increment occurs only when
\[
    \rho(S)<\frac{\overline{s}_{l+1}}{c_{size}}.
\]
By the choice of the level parameters,
\begin{equation}
    2\psi_{l+1}
    =
    \frac{\psi_l}{c_{exp}}.
    \label{eq:level-parameter-relation}
\end{equation}
The bicriteria maximality guarantee of Lemma~\ref{lem:cut_finder} therefore
implies
\[
    \operatorname{OPT}_\rho(H',2\psi_{l+1})
    =
    \operatorname{OPT}_\rho
    \Paren{H',\frac{\psi_l}{c_{exp}}}
    \le c_{size}\rho(S)
    <\overline{s}_{l+1}.
\]
This proves~\eqref{eq:demand-opt-invariant} inductively.

Because $\overline{s}_L\le 1$ and all nonempty demand masses are positive
integers, the invariant at level $L$ implies
\[
    \operatorname{OPT}_\rho(H',2\psi_L)=0
\]
whenever the analytical anchor is $I=H'$. If
$\mathsf{CutFinder}(H',\psi_L)$ returned a nonempty cut, its sparsity guarantee
would make that cut $\psi_L$-sparse, and hence also $2\psi_L$-sparse, a
contradiction. Thus, on the success event, no call advances beyond level $L$.

We next bound the depth of the recursion before the hard cap. Following
\cite{nanongkai2017dynamic}, call the edge to the smaller-side child a
\emph{left edge}, the edge to the larger-side child a \emph{right edge}, and
a level-increment edge a \emph{down edge}. Along any root-to-leaf path, there
are at most $\lceil\log_2\+R\rceil$ left edges, because every left edge
replaces the current graph by a subgraph of at most half its demand mass.
Between consecutive left edges there are at most $L$ down edges.

It remains to bound consecutive right edges. Fix a maximal sequence of right
edges at one level $l$, beginning immediately after the root, a left edge, or
a down edge. At the beginning of this sequence, the analytical anchor equals
the current graph; denote it by $H_1$. Write
\[
    H_{i+1}=H_i[V(H_i)\setminus S_i],
    \qquad
    X_j=\bigcup_{i=1}^j S_i,
    \qquad
    r=\rho(V(H_1)).
\]
Each cut in this sequence satisfies
\begin{equation}
    \rho(S_i)\ge
    \frac{\overline{s}_{l+1}}{c_{size}},
    \qquad
    \rho(S_i)\le\frac{\rho(V(H_i))}{2},
    \qquad
    w_{H_i}(S_i,V(H_i)\setminus S_i)
    \le\psi_l\rho(S_i).
    \label{eq:right-edge-properties}
\end{equation}
The sets $S_i$ are disjoint, and every edge crossing
$(X_j,V(H_1)\setminus X_j)$ is contained in one of the cuts removed so far.
Consequently,
\begin{equation}
    w_{H_1}(X_j,V(H_1)\setminus X_j)
    \le
    \sum_{i=1}^j
    w_{H_i}(S_i,V(H_i)\setminus S_i)
    \le
    \psi_l\rho(X_j).
    \label{eq:union-cut-bound}
\end{equation}

Set
\[
    a:=\frac{\overline{s}_{l+1}}{c_{size}},
    \qquad
    g:=c_{size}\+R^\sigma,
\]
so that $ga=\overline{s}_l$. Suppose first that
$r\le 3\overline{s}_l$. Then fewer than
$3c_{size}\+R^\sigma$ right edges are possible, since otherwise the disjoint
removed sets would have total demand at least
$3c_{size}\+R^\sigma a=3\overline{s}_l\ge r$, while the current right-side
component has positive demand.

Now suppose that $r>3\overline{s}_l$. If there were at least
$\lceil g\rceil$ right edges, let $j$ be the first index for which
\[
    x:=\rho(X_j)\ge\overline{s}_l,
    \qquad
    x^-:=\rho(X_{j-1})<\overline{s}_l.
\]
Since $S_j$ is the smaller side in $H_j$, we have
\[
    x
    =
    x^-+\rho(S_j)
    \le
    x^-+\frac{r-x^-}{2}
    =
    \frac{r+x^-}{2}
    <
    \frac{r+\overline{s}_l}{2}
    <
    \frac{2r}{3}.
\]
If $x\le r/2$, then~\eqref{eq:union-cut-bound} shows that $X_j$ is a
$2\psi_l$-sparse smaller side with demand at least $\overline{s}_l$.
If $x>r/2$, let $Y=V(H_1)\setminus X_j$. Then
\[
    \rho(Y)=r-x>\frac r3>\overline{s}_l
    \qquad\text{and}\qquad
    x\le 2\rho(Y).
\]
Using~\eqref{eq:union-cut-bound},
\[
    w_{H_1}(Y,V(H_1)\setminus Y)
    \le \psi_l x
    \le 2\psi_l\rho(Y).
\]
Thus, in either case, $H_1$ contains a $2\psi_l$-sparse smaller side of
demand at least $\overline{s}_l$, contradicting
the invariant~\eqref{eq:demand-opt-invariant}. Therefore every such sequence
contains fewer than
\[
    \Gamma=\left\lceil3c_{size}\+R^\sigma\right\rceil
\]
right edges.

There are at most $1+\lceil\log_2\+R\rceil$ blocks separated by left edges,
and each block contains at most $L$ level segments. Hence every
root-to-leaf path contains at most
\[
    (1+\lceil\log_2\+R\rceil)
    \bigl(1+L(1+\Gamma)\bigr)
    =
    D_{\max}
\]
recursive nodes at which a private oracle can be invoked. Therefore, on the
success event, the natural recursion terminates before the hard depth cap is
triggered.

We now establish expansion. On the success event, a non-singleton component
$H'=G[V_i]$ is returned only when
$\mathsf{CutFinder}(H',\psi_l)$ outputs $\emptyset$. By the contrapositive of
the bicriteria maximality guarantee in Lemma~\ref{lem:cut_finder}, no cut
$S'\subsetneq V_i$ satisfies
\[
    w_{H'}(S',V_i\setminus S')
    \le
    \frac{\psi_l}{c_{exp}}
    \min\{\rho(S'),\rho(V_i\setminus S')\}.
\]
By the definition of $\psi_l$,
\begin{equation}
    \frac{\psi_l}{c_{exp}}
    =
    \psi\,\widehat c_{exp}^{\,L-l}
    \ge \psi.
    \label{eq:leaf-expansion-threshold}
\end{equation}
It follows that $G[V_i]$ is a $(\psi,\rho)$-expander. Singleton components
satisfy this property trivially.

To bound the total weight of inter-component edges, consider every recursive
split returning a nonempty cut $S$ at level $l$. By Lemma~\ref{lem:cut_finder},
\[
    w_{H'}(S,V(H')\setminus S)\le\psi_l\rho(S)\le\psi_1\rho(S).
\]
Charge this weight to the demand units in the smaller side $S$. Any fixed
unit of demand can be charged at most $O(\log\+R)$ times, because after each
charge it lies in a recursive component whose total demand is at most half
that of its preceding component. Therefore,
\begin{align*}
    \sum_{\text{inter-component cuts}}
    w_{H'}(S,V(H')\setminus S)
    &\le
    O(\psi_1\+R\log\+R)\\
    &=
    O\Paren{
        \psi\+R\log\+R\,
        \widehat c_{exp}^{\,L}
    }\\
    &=
    \psi\+R^{1+o(1)}.
\end{align*}
The last equality follows from
$L=O(1+1/\sigma)$ and
\[
    \widehat c_{exp}^{\,L}
    =
    \exp\Paren{
        O\Paren{
            \sqrt{\log\+R\log\widehat c_{exp}}
            +\log\widehat c_{exp}
        }
    }
    =
    \+R^{o(1)}.
\]

It remains to prove privacy and bound the failure probability. The hard cap
ensures unconditionally that at most $D_{\max}$ private oracle calls occur
along any root-to-leaf path. Conditioned on any fixed transcript preceding a
recursion depth, the active recursive subproblems are induced on pairwise
disjoint vertex sets. Hence a single input-edge change affects at most one
active subproblem at that depth, and the calls at that depth satisfy
$(\epsilon_{step},\delta_{step})$-DP by parallel composition. Sequential
composition (Lemma~\ref{lem:composition}) across at most $D_{\max}$ depths gives
\[
    \Paren{
        D_{\max}\epsilon_{step},
        D_{\max}\delta_{step}
    }
    =
    (\epsilon,\delta).
\]
The hard-cap and level-cap return operations are post-processing and consume
no additional privacy budget. Thus the algorithm is
$(\epsilon,\delta)$-edge-DP on every transcript, independently of the
oracle's utility behavior.

At any fixed recursion depth, the active nonempty components have disjoint
vertex sets, so there are at most $|V(H)|$ oracle calls. The deterministic
total number of calls is therefore at most
\[
    N_{\max}=|V(H)|D_{\max}.
\]
Each adaptive invocation has conditional utility-failure probability at most
$\beta_{step}$. Independence is not required: a conditional union bound gives
\[
    \Prb[\text{some oracle invocation fails}]
    \le
    N_{\max}\beta_{step}
    =
    \beta.
\]
Finally, for every level $l$,
\[
    \psi_l\ge\psi_L=\psi c_{exp}.
\]
Condition~\eqref{eq:demand-decomp-oracle-condition} therefore implies
\[
    \psi_l
    \ge
    10c_{exp}
    \Lambda_\rho
    \Paren{\epsilon_{step},\delta_{step},\beta_{step}},
\]
so Lemma~\ref{lem:cut_finder} applies simultaneously to all calls on the
success event. Combining the preceding expansion, depth, and charging
arguments proves all claimed utility guarantees with probability at least
$1-\beta$.
\end{proof}

\subsection{Private Cut Approximation via Recursive Expander Decomposition}

Here, we synthesize the preceding mechanisms into a unified multi-round recursive sparsification algorithm. By repeatedly factoring out demand-aware expanders, the residual graph is driven to a highly sparse state parameterized completely by the spectral error $\Pi$ (in Lemma~\ref{lem:demand_oracle}) and the total number of iterative rounds $T$. The final guarantee of our framework is formalized as follows:

\begin{theorem} \label{thm:cut_main}
Assuming access to a private spectral primitive with error interface $\Pi(n, d, \epsilon, \delta)$ (Assumption \ref{assump:spectral}), for any $n$-node unweighted simple graph $G=(V, E)$, privacy parameters $(\epsilon, \delta)$, multiplicative error parameter $\gamma \in (0,1)$, and any positive integer $T \leq \text{poly}(n)$, there exists an $(\epsilon,\delta)$-edge-DP polynomial-time algorithm that outputs a non-negative weighted synthetic graph $\widetilde{G}$ such that, with high probability, for all cuts $C \subseteq V$:
\begin{equation}
    \abs{w_G(C) - w_{\widetilde{G}}(C)} \le \gamma w_G(C) + \widetilde{O}\Paren{ \frac{n}{\epsilon} + \left(\frac{n^2 M_T}{\epsilon^2 \gamma}\right)^{1/3} },
\end{equation}
where the edge count $M_T$ is obtained by iterating the dynamical system
$$
M_{t+1} = \widetilde{O}\left( \frac{n^{1+o(1)}}{\gamma} \, \Pi\left(n, \max\left\{1, \frac{M_t}{n}\right\}, \frac{\epsilon}{T}, \frac{\delta}{T}\right) T \right),
$$
for $t = 0, 1, \dots, T-1$, with initial condition $M_0 = n^2$.
\end{theorem}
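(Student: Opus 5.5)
The plan is a $T$-round loop that repeatedly peels off demand-aware expanders, followed by a single call to the terminal oracle. Privacy is split evenly: each of the $T$ rounds gets $(\epsilon/(2T),\delta/(2T))$, and the terminal call gets $(\epsilon/2,\delta/2)$. We maintain a public edge bound $M_t$ with $M_0=n^2$ and a residual graph $R_t$, starting from $R_0=G$.

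Round $t$ proceeds in four steps.
\begin{enumerate}
\item Run $\mathsf{BuildDemand}$ (Lemma~\ref{lem:demand_allocation}) on $R_t$ with the public bound $M_t$, producing a demand vector $\rho_t$ with $\rho_t(V)=\widetilde O(n)$.
\item Run $\mathsf{DemandExpDecomp}$ (Lemma~\ref{lem:demand_ns17}) on $R_t$ with $\rho_t$. The target is $\psi_t=\widetilde\Theta(\Pi(n,\max\{1,M_t/n\},\epsilon/T,\delta/T)/\gamma)$, with the $n^{o(1)}$ budget loss inside the decomposition absorbed into $\widetilde O$. This choice meets condition~\eqref{eq:demand-decomp-oracle-condition}, because Lemma~\ref{lem:demand_oracle} gives $\Lambda_\rho=\widetilde O(\Pi(n,M_t/n,\cdot))$.
\item On each component $V_i$, call $\mathsf{SyntheticOracle}$ once more to obtain $\widehat H_i$. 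The components are disjoint, so parallel composition lets all these calls share one budget.
\item Set $R_{t+1}$ to be the set of inter-component edges. By Lemma~\ref{lem:demand_ns17}, on the success event $|E(R_{t+1})|\le\psi_t\rho_t(V)^{1+o(1)}=\widetilde O(n^{1+o(1)}\Pi/\gamma)$. We take $M_{t+1}$ to be this public quantity, with the factor $T$ in the recurrence coming from the per-round budget $\epsilon/T$ inside $\Pi$ together with the polylog slack.
\end{enumerate}
After $T$ rounds we run $\mathsf{CutOracle}$ (Theorem~\ref{thm:cut-oracle-main}) on $R_T$ with the public bound $M_T$. The output $\widetilde G$ is the sum of all $\widehat H_i$ from every round plus the terminal graph.

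Privacy is argued per round. Each round is a composition of $\mathsf{BuildDemand}$, the decomposition, and the per-component releases. Every step depends on the data only through $R_t$, which is a deterministic function of $G$ and the previously published partitions. Neighboring $G$ therefore give neighboring $R_t$ for any fixed transcript, and sequential composition over the $T+1$ stages finishes the argument. One point needs care: $M_{t+1}$ must be public. We define it from the deterministic formula rather than from the actual residual size. If the utility event fails, the terminal oracle must still be private with a possibly violated bound $M\ge|E|$. I would handle this by truncating $R_T$ to its first $M_T$ edges in a public order, which keeps sensitivity $O(1)$.

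For utility, fix a cut $C$ and write $E(G)$ as the disjoint union of the intra-component edges of every round and $E(R_T)$. Consider an expander component $U=V_i$ of round $t$, and let $S=C\cap U$. Lemma~\ref{lem:demand_oracle} gives the error bound $\Lambda_\rho\min\{\rho(S),\rho(U\setminus S)\}$. Since $G[U]$ is a $(\psi_t,\rho)$-expander with $\psi_t\ge\Lambda_\rho/\gamma$, this error is at most $\gamma w_{G[U]}(S)$. Summing over components and rounds gives $\gamma$ times the contribution of those edges to $w_G(C)$. The terminal part contributes $\gamma w_{R_T}(C)+\alpha_{\rm term}(n,M_T)$. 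Adding everything, the intra-component edges of the various rounds and the terminal edges partition $E(G)$, so the multiplicative errors sum to $\gamma w_G(C)$. The final error is obtained after rescaling $\gamma$ by a constant and absorbing the $T$-fold budget split in the terminal oracle into $\widetilde O$. A union bound over the $T\le\poly(n)$ rounds keeps the failure probability at $1/\poly(n)$.

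The main obstacle is the per-component release. The error guarantee is on the cut of $H=R_t[U]$ itself, but the expander must be certified for exactly the graph whose cuts are released. This has two consequences. First, the released $\widehat H_i$ must be restricted to cuts inside $U$: edges of $\widehat H_i$ only join vertices of $U$, so $w_{\widehat H_i}(C)=w_{\widehat H_i}(S,U\setminus S)$ holds automatically. Second, the bicriteria certification only guarantees expansion up to the constants in Lemma~\ref{lem:cut_finder}, and I would absorb these by enlarging $\psi_t$ by a $\widetilde O(1)$ factor. I also expect bookkeeping trouble in verifying that the error of the final oracle call and the error of the decomposition's internal calls use compatible $\beta_{step}$ parameters.
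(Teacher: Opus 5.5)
Your proposal is correct and is essentially the paper's own argument (Algorithm~\ref{algo:main_recursive_cut}). It runs per-round $\mathsf{BuildDemand}$, $\mathsf{DemandExpDecomp}$ at $\psi_t\approx n^{o(1)}\Pi/\gamma$, a fresh $\mathsf{SyntheticOracle}$ release per component under parallel composition, deflation to the inter-component edges, and a terminal $\mathsf{CutOracle}$ call, with the cut error split along the resulting edge partition.

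Three small corrections, none affecting validity. First, the extra factor $T$ in the stated recurrence comes in the paper from the conservative split $\gamma_t=\gamma/(2T)$, not from the budget inside $\Pi$; your recurrence without it is dominated by the stated one, by monotonicity of $\Pi$ in $d$. Second, the $n^{o(1)}$ loss from $D_{\max}$ must be absorbed into the $n^{1+o(1)}$ factor, not into $\widetilde O$. Third, the expander certified in round $t$ is $R_t[U]$, not $G[U]$.
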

Before we prove Theorem~\ref{thm:cut_main}, to better formalize the recursive framework, we recall all mechanisms or modular subroutines developed in the preceding sections. Specifically, 
\begin{enumerate}
    \item We denote the subroutine for generating the demand vector $\rho$ from Lemma \ref{lem:demand_allocation} as $\mathsf{BuildDemand}$, and the demand-sensitive expander decomposition (i.e., Algorithm~\ref{algo:expander_decomposition}) from Lemma \ref{lem:demand_ns17} as $\mathsf{DemandExpDecomp}$, which includes $\mathsf{CutFinder}$ (Algorithm~\ref{algo:cut_finder}) as a subroutine.
    \item  Crucially, we formalize the private synthesis process established in Lemma \ref{lem:demand_oracle} as the subroutine $\mathsf{SyntheticOracle}$, which encompasses the local port graph construction and executing the generic private spectral primitive (Assumption~\ref{assump:spectral}).
    \item  We also denote the $\widetilde{O}(n + (n^2M)^{1/3})$ additive error cut release oracle in Theorem~\ref{thm:cut-oracle-main} (Section~\ref{sec:edge-sensitive-cut-oracle}) as $\mathsf{CutRelease}(G,M, \eps,\del)$.
\end{enumerate}
 With these primitives defined, we present the main recursive sparsification in Algorithm \ref{algo:main_recursive_cut}.

\begin{algorithm}[h]
\caption{Recursive Demand-Private Cut Approximation}
\label{algo:main_recursive_cut}
\KwIn{Unweighted graph $G=(V,E)$, privacy budgets $(\epsilon,\delta)$, failure probability $\beta$, multiplicative error $\gamma$, recursion rounds $T$.}
Set $R_0 \gets G$, $M_0 \gets n^2$ and denote by $c_{exp} = \log^c n$ (Lemma~\ref{lem:cut_finder})\;
Set $\epsilon_{round} \gets \epsilon/(8T)$, $\delta_{round} \gets \delta/(8T)$, $\beta_{round} \gets \beta / (8T n^2)$\;

\For{$t = 0$ \KwTo $T-1$}{
    \tcp{Phase 1: Public Demand Generation (Lemma \ref{lem:demand_allocation})}
    $\rho_t \gets \mathsf{BuildDemand}(R_t, M_t, \epsilon_{round}, \beta_{round})$ \\
    
    \tcp{Phase 2: Error and Threshold Calibration}
    $\Lambda_t \gets \widetilde{O} \Paren{ \Pi(n, M_t/n, \epsilon_{round}, \delta_{round}) }$\; 
    $\gamma_t \gets \gamma/(2T)$ \;
    $\psi_t \gets \Lambda_t c_{exp} n^{o(1)} / \gamma_t$ 
    
\tcp{Phase 3: Recursive Expander Decomposition (Lemma \ref{lem:demand_ns17})}
    $\+P_t \gets \mathsf{DemandExpDecomp}(R_t, \rho_t, \psi_t, \epsilon_{round}, \delta_{round}, \beta_{round})$ \\
    
    \tcp{Phase 4: Local Oracle Releases (Parallel Composition)}
    \For{\textbf{each} internal component $U \in \+P_t$}{
        $\widehat{H}_{t,U} \gets \mathsf{SyntheticOracle}(R_t[U], \rho_t, \epsilon_{round}, \delta_{round}, \beta_{round})$ 
    }
    
    \tcp{Phase 5: Graph Deflation}
    $R_{t+1} \gets \{ e \in E(R_t) : e \text{ crosses different components in } \+P_t \}$ \\
    $M_{t+1} \gets \psi_t \rho_t(V)^{1+o(1)}$
}

\tcp{Phase 6: Terminal Sparse Graph Release}
$H_{\mathsf{sparse}} \gets \mathsf{CutRelease}(R_T, M_T, \eps/2, \del/2, \beta/2)$ \;
\Return $\widetilde{G} = H_{\mathsf{sparse}} + \sum_{t=0}^{T-1} \sum_{U \in \+P_t} \widehat{H}_{t,U}$
\end{algorithm}

\begin{proof}
(Of Theorem~\ref{thm:cut_main}.)
We analyze the utility, dynamical recurrence, privacy, and probability bounds consecutively to establish Theorem~\ref{thm:cut_main}. For any cut $C \subseteq V$, every original unweighted edge $e \in E$ is either absorbed inside some component $U \in \+P_t$ at a specific round $t$, or it survives all $T$ rounds to become a residual edge in $R_T$. Therefore, we can decompose the cut value of $C$ (in the original graph) as follows:
\begin{equation} \label{eq:cut_decomposition_true}
    w_G(C) = \sum_{t=0}^{T-1} \sum_{U \in \+P_t} w_{R_t[U]}(C \cap U, U \setminus C) + w_{R_T}(C).
\end{equation}
On the other hand, by the definition of the out $\widetilde{G}$, the estimated cut size evaluates to:
\begin{equation} \label{eq:cut_decomposition_syn}
    w_{\widetilde{G}}(C) = \sum_{t=0}^{T-1} \sum_{U \in \+P_t} w_{\widehat{H}_{t,U}}(C \cap U, U \setminus C) + w_{\widehat{R}_T}(C),
\end{equation}
where $\widehat{R}_T$ is the output of the terminal release mechanism.

For any internal component $U \in \+P_t$, according to Lemma~\ref{lem:demand_ns17}, the subgraph $R_t[U]$ is an $(\psi_t, \rho_t)$-expander (\cref{eq:expansion_definition}). Invoking the demand-sensitive oracle error (Lemma \ref{lem:demand_oracle}) and substituting the expansion bound $w_{R_t[U]} \ge \psi_t \min\{\rho_t(C \cap U), \rho_t(U \setminus C)\}$, we bound the local absolute deviation. To leave room for the terminal relative error, we configure the internal spectral release to operate at target accuracy $\gamma_{internal} = \gamma/2$:
\begin{align} \label{eq:local_absorption}
    \abs{ w_{R_t[U]} - w_{\widehat{H}_{t,U}} } &\le \Lambda_t \min\{\rho_t(C \cap U), \rho_t(U \setminus C)\} \nonumber \\
    &\le \frac{\Lambda_t}{\psi_t} w_{R_t[U]}(C \cap U, U \setminus C) \le \frac{\gamma}{2} w_{R_t[U]}(C \cap U, U \setminus C).
\end{align}
Summing Eq. \eqref{eq:local_absorption} across all components $U$ and rounds $t$, the aggregated error over all internal releases is then bounded by:
\begin{align}\label{eq:internal_multiplicative_bound}
    \sum_{t=0}^{T-1} \sum_{U \in \+P_t} \abs{ w_{R_t[U]} - w_{\widehat{H}_{t,U}} } &\le \frac{\gamma}{2} \sum_{t=0}^{T-1} \sum_{U \in \+P_t} w_{R_t[U]}(C \cap U, U \setminus C) \nonumber \\
    &\le \frac{\gamma}{2} w_G(C).
\end{align}

To determine the final error of the terminal residual graph $R_T$, we map the trajectory of the residual edge sequence $M_t$. Utilizing the fact $\rho_t(V) \le \widetilde{O}(n)$ (Lemma~\ref{lem:demand_allocation}) and substituting $\psi_t$, the recurrence across $T$ rounds is governed by the spectral error interface $\Pi$ evaluated at the deep recursion budgets $\epsilon_{step} = \epsilon/(8T n^{o(1)})$ and $\delta_{step} = \delta/(8T n^{o(1)})$. Crucially, the expander decomposition incurs an $n^{o(1)}$ penalty on both the residual edges and the privacy parameters (Lemma \ref{lem:demand_ns17}). Invoking Assumption~\ref{assump:spectral} again, the $n^{o(1)}$ shrinkage in privacy parameters inflates the error $\Pi$ by at most an $n^{o(1)}$ multiplicative overhead. We explicitly extract and incorporate this penalty:
\begin{equation*}
    M_{t+1} = \psi_t \rho_t(V)^{1+o(1)} \le \widetilde{O} \Paren{ \Pi\left(n, \max\left\{1, \frac{M_t}{n}\right\}, \frac{\epsilon}{T}, \frac{\delta}{T}\right) n^{1+o(1)} \gamma^{-1} T }.
\end{equation*}

Starting from the initial dense graph configuration $M_0 = n^2$, this recursive process produces the terminal residual edge count $M_T$ after $T$ iterations. We then process $R_T$ via the relative terminal release mechanism $\mathsf{CutOracle}$ (Theorem \ref{thm:cut-oracle-main}). Allocating privacy budgets $\epsilon_{terminal} = \epsilon/2$ and $\delta_{terminal} = \delta/2$, and setting the terminal multiplicative error parameter to $\gamma_{oracle} = \gamma/2$, the oracle guarantees:
\begin{equation} \label{eq:terminal_error}
    \abs{ w_{R_T}(C) - w_{\widehat{R}_T}(C) } \le \frac{\gamma}{2} w_{R_T}(C) + \widetilde{O} \Paren{ \frac{n}{\epsilon/2} + \left( \frac{n^2 M_T}{(\epsilon/2)^2 (\gamma/2)} \right)^{1/3} }.
\end{equation}
Finally, the additive bias simplifies to $\alpha_T = \widetilde{O} ({ {n}/{\epsilon} + \left( {n^2 M_T}/{(\epsilon^2 \gamma)} \right)^{1/3} })$. 

Combining the internal absorption \eqref{eq:internal_multiplicative_bound} with the terminal deviation \eqref{eq:terminal_error}, and exploiting the fact that $w_{R_T}(C) \le w_G(C)$ since $R_T$ is a subgraph of $G$, we bound the global additive error for any cut $C$:
\begin{align*}
    \abs{ w_G(C) - w_{\widetilde{G}}(C) } &\le \frac{\gamma}{2} w_G(C) + \left( \frac{\gamma}{2} w_{R_T}(C) + \alpha_T \right) \\
    &\le \gamma w_G(C) + \widetilde{O} \Paren{ \frac{n}{\epsilon} + \left(\frac{n^2 M_T}{\epsilon^2 \gamma}\right)^{1/3} },
\end{align*}
which concludes the utility proof in Theorem \ref{thm:cut_main}.

The overall differential privacy holds under adaptive and parallel composition. The total privacy budget $\epsilon$ is partitioned into four allocations: $\epsilon/2$ dedicated to the terminal $\mathsf{CutOracle}$ algorithm, and the remaining $\epsilon/2$ distributed evenly across the $T$ rounds, allocating $4\epsilon_{round} = 4\epsilon/(8T)$ per round. Within any round $t$, by Lemma~\ref{lem:stable_gadget}, the gadget construction strictly limits $\ell_1$-sensitivity to $O(1)$, consuming $\epsilon_{round}$ under the Laplace mechanism. The recursion steps in Algorithm~\ref{algo:expander_decomposition} isolates original edges to sub-problems of depth bounded by $n^{o(1)}$. By applying basic composition over this $n^{o(1)}$ depth, we allocate a local privacy budget of $\epsilon_{step} = \epsilon_{round}/n^{o(1)}$ to each decomposition step, thereby collectively consuming $\epsilon_{round}$ globally across the depth under basic composition. The subsequent internal spectral releases operate on the mutually disjoint vertex subsets $U \in \+P_t$; thus, via parallel composition, they collectively consume only $\epsilon_{round}$ globally across the round. The allocation for budget $\del$ over each round is identical. By basic composition over $T$ rounds and the final step with $\mathsf{CutOracle}$, the algorithm preserves $(\epsilon,\delta)$-DP.

Finally, we bound the global failure probability. Failure events encompass degree noise violations, hash collisions exceeding bounds, spectral primitive aborts, and terminal oracle failures. In any given round $t$, the recursive expander decomposition tree produces at most $n^{1+o(1)}$ internal components. Consequently, there are at most $n^{1+o(1)} \le n^2$ local invocations of the hash mappings, spectral oracles, and cut-finders per round. By distributing the global failure probability $\beta$ and allocating $\beta_{round} = \beta / (8T n^2)$ to each individual sub-routine invocation across all $T$ rounds (and allocating $\beta/2$ to the final terminal release), the sum of all failure probabilities over the entire pipeline is bounded by $T \cdot n^2 \cdot \beta_{round} + \beta/2 \le \beta/8 + \beta/2 < \beta$, completing the proof.
\end{proof}

With the generic reduction established in Theorem~\ref{thm:cut_main}, we can now plug in the bootstrapped fourth-power spectral primitive (Theorem~\ref{thm:improved-spectral}) developed in Section~\ref{sec:bootstrapped_fourth_power} to break the $O(n^{1.25})$ additive error barrier. 

\begin{lemma}[A square-root contraction recurrence]
\label{lem:sqrt-contraction}
Let \(x_0,b,c\ge 0\), and suppose that
\[
    x_{t+1}
    \le
    \frac14 x_t+b\sqrt{x_t}+c
\]
for every \(t\ge 0\). Then
\[
    x_t
    \le
    2^{-t}x_0+2(b^2+c)
\]
for every \(t\ge 0\).
\end{lemma}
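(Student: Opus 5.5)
Induction on \(t\), after absorbing the square-root term into linear slack. By AM--GM, \(b\sqrt{x_t}\le \frac14 x_t + b^2\). Indeed, \(\frac14 x_t + b^2 - b\sqrt{x_t} = (\frac12\sqrt{x_t}-b)^2\ge 0\). Substituting this into the hypothesis gives the purely linear recurrence
\[
    x_{t+1}\le \frac12 x_t + b^2 + c .
\]

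We then prove \(x_t\le 2^{-t}x_0+2(b^2+c)\) by induction on \(t\). The case \(t=0\) is immediate since \(b^2+c\ge 0\). For the inductive step, assume the bound holds at \(t\). The linear recurrence then gives
\[
    x_{t+1}\le \frac12\bigl(2^{-t}x_0+2(b^2+c)\bigr)+b^2+c = 2^{-(t+1)}x_0 + 2(b^2+c),
\]
which is exactly the claim at \(t+1\). Equivalently, one can unroll the recurrence and bound the geometric series \(\sum_{j\ge0}2^{-j}(b^2+c)\le 2(b^2+c)\).

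The only real choice is how to split \(b\sqrt{x_t}\) so that the combined linear coefficient is at most \(\frac12\). Splitting it as \(\frac14 x_t + b^2\) gives total coefficient \(\frac14+\frac14=\frac12\), and the geometric factor \(1/(1-\frac12)=2\) then yields precisely the constant \(2(b^2+c)\). I expect no obstacle: the proof is elementary and needs only the nonnegativity of \(x_t\), which is implicit since \(\sqrt{x_t}\) appears in the hypothesis.
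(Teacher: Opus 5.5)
Your proposal is correct and follows essentially the same route as the paper: the paper also uses $b\sqrt{x}\le \frac14 x+b^2$ to reduce to $x_{t+1}\le \frac12 x_t+b^2+c$ and then unrolls the recurrence, bounding the geometric series by $2(b^2+c)$. Your explicit induction is just the step-by-step form of that same unrolling.
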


\begin{proof}
For every \(x\ge 0\), it is easy to verify that
$
    b\sqrt{x}
    \le
    \frac14 x+b^2.
$ Consequently, $x_{t+1}
    \le
    \frac12x_t+b^2+c.$ Iterating this inequality yields
\[
    x_t
    \le
    2^{-t}x_0
    +(b^2+c)\sum_{j=0}^{t-1}2^{-j}
    \le
    2^{-t}x_0+2(b^2+c),
\]
as claimed.
\end{proof}
The following corollary utilize Lemma~\ref{lem:sqrt-contraction} to solve the recursive dynamical system driving by our spectral primitive (Theorem~\ref{thm:improved-spectral}) to its optimal fixed point.
\begin{corollary}
[Private Cut Approximation with
\(\widetilde{O}(n^{13/12+o(1)})\) Additive Error]
\label{cor:cut_13_12}
For any \(n\)-node unweighted simple graph \(G=(V,E)\), privacy
parameters \(\epsilon\in(0,1)\), $\del\in (0,1/2)$ and multiplicative error
parameter \(\gamma\in(0,1/4)\), there exists an
\((\epsilon,\delta)\)-edge-DP polynomial-time algorithm that outputs a
non-negative weighted synthetic graph \(\widetilde G\) such that, with
high probability, simultaneously for all cuts \(C\subseteq V\),
\[
    \abs{w_G(C)-w_{\widetilde G}(C)}
    \le
    \gamma w_G(C)
    +
    \widetilde O_{\delta}\Paren{
        \frac{n^{13/12+o(1)}}
             {\epsilon\gamma^{7/6}}
    }.
\]
\end{corollary}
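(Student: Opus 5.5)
The plan is to instantiate Theorem~\ref{thm:cut_main} with the bootstrapped fourth-power primitive of Theorem~\ref{thm:improved-spectral} as the spectral interface $\Pi$, and then solve the resulting recurrence for $M_t$ with Lemma~\ref{lem:sqrt-contraction}. I will take $T=\Theta(\log n)$ rounds, so the per-round privacy parameters $\epsilon/T,\delta/T$ only cost polylogarithmic factors. Theorem~\ref{thm:improved-spectral} requires $q\ge\operatorname{polylog}(n,1/\delta)/\epsilon$; I will fix $q$ to be a sufficiently large polylogarithm (times $1/\epsilon$ and powers of $1/\gamma$) chosen later. Let $d_t:=\max\{1,M_t/n\}$. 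Dividing the recurrence of Theorem~\ref{thm:cut_main} by $n$ then gives, with $\kappa:=n^{o(1)}\operatorname{polylog}(n)\,T/\gamma$ absorbing all logarithmic and $n^{o(1)}$ losses,
\[
d_{t+1}\le \kappa\Big(\frac{d_t}{\sqrt{\epsilon q}}+\frac{n^{1/8}d_t^{1/2}q^{1/8}}{\epsilon^{3/8}}+\frac{n^{1/4}q^{1/4}}{\epsilon^{3/4}}+\frac1\epsilon\Big),
\]
where the additive $1/\epsilon$ term of Theorem~\ref{thm:improved-spectral} is dominated by the third term.

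The main obstacle is making the linear term contract. I need $\kappa/\sqrt{\epsilon q}\le 1/4$, which means choosing $q\ge 16\kappa^2/\epsilon$. The trouble is that $\kappa$ contains the $n^{o(1)}$ factor from the expander decomposition of Lemma~\ref{lem:demand_ns17}. So $q$ becomes $n^{o(1)}\operatorname{polylog}(n)/(\epsilon\gamma^2)$, no longer a pure polylogarithm. That is acceptable provided I track how $q$ feeds back into the sublinear terms. In particular I must check that this $n^{o(1)}$ loss is a fixed multiplicative factor independent of $d$, and that it is not compounded across rounds. The recurrence is applied with the same constants each round, so I expect this to hold.

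With $q$ fixed this way, Lemma~\ref{lem:sqrt-contraction} applies with $x_t=d_t$, $b=\kappa n^{1/8}q^{1/8}\epsilon^{-3/8}$ and $c=\kappa n^{1/4}q^{1/4}\epsilon^{-3/4}$. Starting from $d_0=n$, after $T=\lceil\log_2 n\rceil$ rounds this gives
\[
d_T\le 1+2(b^2+c)=\widetilde O\big(\kappa^2 n^{1/4}q^{1/4}\epsilon^{-3/4}\big).
\]
I will then substitute $\kappa=\widetilde O(n^{o(1)}/\gamma)$ and $q=\widetilde O(n^{o(1)}/(\epsilon\gamma^2))$ and collect powers of $\epsilon$ and $\gamma$. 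This yields $M_T=nd_T=n^{5/4+o(1)}\cdot\mathrm{poly}(1/\epsilon,1/\gamma)$. The exact exponents must be tracked so that the final answer has the claimed form $\epsilon^{-1}\gamma^{-7/6}$, possibly after re-balancing how $\gamma$ is split between the internal rounds and the terminal oracle.

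Finally, I will plug $M_T$ into the terminal bound of Theorem~\ref{thm:cut_main}:
\[
\widetilde O\big(n/\epsilon+(n^2M_T/(\epsilon^2\gamma))^{1/3}\big).
\]
Since $(n^2\cdot n^{5/4+o(1)})^{1/3}=n^{13/12+o(1)}$, this gives the stated additive error. The $n/\epsilon$ term is lower order. Privacy, polynomial running time, the high-probability guarantee and non-negativity of the output weights are all inherited directly from Theorem~\ref{thm:cut_main}, because $T=O(\log n)\le\operatorname{poly}(n)$ and Theorem~\ref{thm:improved-spectral} satisfies the monotonicity and polynomial regularity required by Assumption~\ref{assump:spectral}.
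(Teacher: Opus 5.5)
Your proposal is correct and follows essentially the same route as the paper. You instantiate Theorem~\ref{thm:cut_main} with Theorem~\ref{thm:improved-spectral}, take $T=\lceil\log_2 n\rceil$, choose $q=\Theta(\kappa^2/\epsilon)$ so the linear term contracts by $1/4$ (explicitly accepting the resulting $n^{o(1)}/(\epsilon\gamma^2)$ size of $q$), and solve the recurrence with Lemma~\ref{lem:sqrt-contraction}.

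No rebalancing of $\gamma$ is needed. Since $q^{1/4}=O(\kappa^{1/2}\epsilon^{-1/4})$, you get $d_T=O(\kappa^{5/2}n^{1/4}/\epsilon)$, hence $M_T=n^{5/4+o(1)}/(\epsilon\gamma^{5/2})$. Plugging this into the terminal oracle gives exactly the $\epsilon^{-1}\gamma^{-7/6}$ dependence.
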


\begin{proof}
Define
\[
    \bar\gamma:=\min\{\gamma,1/8\},
    \qquad
    \bar\delta:=\min\{\delta,1/4\}.
\]
It suffices to construct an \((\epsilon,\bar\delta)\)-edge-DP
mechanism with multiplicative error \(\bar\gamma\). Indeed,
\(\bar\delta\le\delta\), \(\bar\gamma\le\gamma\), and $
    \bar\gamma^{-7/6}
    =
    O(\gamma^{-7/6}).
$ Assume \(n\ge 2\), since the claim is immediate for \(n=1\), and fix
the number of recursive sparsification rounds in advance as $T:=\left\lceil\log_2 n\right\rceil.$ We apply Theorem~\ref{thm:cut_main} with parameters
\((\epsilon,\bar\delta,\bar\gamma,T)\) and instantiate its spectral
primitive using Theorem~\ref{thm:improved-spectral}. For every \(t\), define
\[
    d_t:=\max\left\{1,\frac{M_t}{n}\right\}.
\]
In particular, \(d_0=n\). Under privacy parameters
\((\epsilon/T,\bar\delta/T)\), Theorem~\ref{thm:improved-spectral}
gives
\[
\begin{split}
    \Pi\left(
        n,d_t,\frac{\epsilon}{T},\frac{\bar\delta}{T}
    \right)
    =
    \widetilde O_{\bar\delta}\left(
        \frac{d_t\sqrt T}{\sqrt{\epsilon q}}
        +
        \frac{n^{1/8}d_t^{1/2}q^{1/8}T^{3/8}}
             {\epsilon^{3/8}}
        +
        \frac{n^{1/4}q^{1/4}T^{3/4}}
             {\epsilon^{3/4}}
        +
        \frac{T}{\epsilon}
    \right).
\end{split}
\]
The recurrence in Theorem~\ref{thm:cut_main} contains one additional
multiplicative factor \(T\). Therefore, after dividing the residual
edge bound by \(n\), we obtain
\begin{equation}
    d_{t+1}
    \le
    1+
    \widetilde O_{\bar\delta}\Paren{
        \frac{n^{o(1)}}{\bar\gamma}
    }
    \Bigg[
        \frac{T^{3/2}d_t}{\sqrt{\epsilon q}}
        +
        \frac{
            T^{11/8}n^{1/8}d_t^{1/2}q^{1/8}
        }{\epsilon^{3/8}}
        +
        \frac{
            T^{7/4}n^{1/4}q^{1/4}
        }{\epsilon^{3/4}}
        +
        \frac{T^2}{\epsilon}
    \Bigg].
    \label{eq:degree-recurrence-with-T}
\end{equation}
The additive \(1\) comes from the
\(d_{t+1}=\max\{1,M_{t+1}/n\}\). Because \(T=O(\log n)\), all powers of \(T\), all polylogarithmic
losses, and all \(n^{o(1)}\) losses in
\eqref{eq:degree-recurrence-with-T} can be bounded by a public factor
\[
    \mathcal L
    =
    \widetilde O_{\bar\delta}\Paren{n^{o(1)}},
    \qquad
    \mathcal L\ge 1.
\]
We enlarge \(\mathcal L\), if necessary, by an additional
polylogarithmic factor so that the choice of \(q\) below also satisfies
the lower-bound requirement on \(q\) in
Theorem~\ref{thm:improved-spectral}. Thus
\eqref{eq:degree-recurrence-with-T} implies
\begin{equation}
\label{eq:simplified-degree-recurrence}
    d_{t+1}
    \le
    1+
    B\left[
        \frac{d_t}{\sqrt{\epsilon q}}
        +
        \frac{n^{1/8}d_t^{1/2}q^{1/8}}
             {\epsilon^{3/8}}
        +
        \frac{n^{1/4}q^{1/4}}
             {\epsilon^{3/4}}
        +
        \frac1\epsilon
    \right],
    \qquad
    B:=\frac{\mathcal L}{\bar\gamma}.
\end{equation}
We now choose the tunable parameter as:
\begin{equation}
\label{eq:q-choice-cut-corollary}
    q
    :=
    \left\lceil
        \frac{16B^2}{\epsilon}
    \right\rceil.
\end{equation}
In particular, $q
    =
    \widetilde O_{\bar\delta}\Paren{
        \frac{n^{o(1)}}{\epsilon\bar\gamma^2}
    }.$ hus, \(q\) is not asserted to be just polylogarithmic in
\(1/\epsilon\) and \(1/\bar\gamma\). Its polynomial dependence on
these parameters is displayed explicitly. By \cref{eq:q-choice-cut-corollary}, we have $
    \frac{B}{\sqrt{\epsilon q}}
    \le
    \frac14.$ Define
\[
    b
    :=
    \frac{
        Bn^{1/8}q^{1/8}
    }{\epsilon^{3/8}}
\]
and
\[
    c
    :=
    1+
    \frac{
        Bn^{1/4}q^{1/4}
    }{\epsilon^{3/4}}
    +
    \frac{B}{\epsilon}.
\]
Then \eqref{eq:simplified-degree-recurrence} becomes
\[
    d_{t+1}
    \le
    \frac14d_t+b\sqrt{d_t}+c.
\]
Applying Lemma~\ref{lem:sqrt-contraction} gives
\begin{equation}
\label{eq:degree-after-t-rounds}
    d_t
    \le
    2^{-t}d_0+2(b^2+c).
\end{equation}
It remains to bound \(b^2+c\). Since \(B\ge1\) and
\(\epsilon\le1\), the choice
\eqref{eq:q-choice-cut-corollary} gives $q
    \le {17B^2}/{\epsilon}.$
Consequently,
\begin{align*}
    b^2
    =
    \frac{
        B^2n^{1/4}q^{1/4}
    }{\epsilon^{3/4}}
    =
    O\left(
        \frac{
            n^{1/4}B^{5/2}
        }{\epsilon}
    \right),
\end{align*}
and similarly,
\begin{align*}
    \frac{
        Bn^{1/4}q^{1/4}
    }{\epsilon^{3/4}}
    =
    O\left(
        \frac{
            n^{1/4}B^{3/2}
        }{\epsilon}
    \right)
    \le
    O\left(
        \frac{
            n^{1/4}B^{5/2}
        }{\epsilon}
    \right).
\end{align*}
Moreover,
\[
    1+\frac{B}{\epsilon}
    \le
    O\left(
        \frac{
            n^{1/4}B^{5/2}
        }{\epsilon}
    \right).
\]
Therefore,
\begin{equation}
\label{eq:b-c-bound}
    b^2+c
    =
    O\left(
        \frac{
            n^{1/4}B^{5/2}
        }{\epsilon}
    \right).
\end{equation}
Because \(T=\lceil\log_2 n\rceil\), we have
\(2^{-T}d_0\le1\). Combining
\eqref{eq:degree-after-t-rounds} and
\eqref{eq:b-c-bound} yields
\begin{align}
    d_T
    =
    O\left(
        \frac{
            n^{1/4}B^{5/2}
        }{\epsilon}
    \right)
    \nonumber
    =
    O\left(
        \frac{
            n^{1/4}\mathcal L^{5/2}
        }{
            \epsilon\bar\gamma^{5/2}
        }
    \right)
    \nonumber
    =
    \widetilde O_{\bar\delta}\Paren{
        \frac{
            n^{1/4+o(1)}
        }{
            \epsilon\bar\gamma^{5/2}
        }
    }.
    \label{eq:terminal-degree-bound}
\end{align}
Here we used
\(\mathcal L^{5/2}
=\widetilde O_{\bar\delta}(n^{o(1)})\). Since \(M_T\le nd_T\), it follows that
\begin{equation}
\label{eq:terminal-edge-bound}
    M_T
    =
    \widetilde O_{\bar\delta}\Paren{
        \frac{
            n^{5/4+o(1)}
        }{
            \epsilon\bar\gamma^{5/2}
        }
    }.
\end{equation}
Finally, Theorem~\ref{thm:cut_main}, instantiated with
multiplicative parameter \(\bar\gamma\), gives additive error
\[
    \widetilde O_{\bar\delta}\Paren{
        \frac n\epsilon
        +
        \left(
            \frac{
                n^2M_T
            }{
                \epsilon^2\bar\gamma
            }
        \right)^{1/3}
    }.
\]
Substituting \eqref{eq:terminal-edge-bound},
\begin{align*}
    \left(
        \frac{
            n^2M_T
        }{
            \epsilon^2\bar\gamma
        }
    \right)^{1/3}
    =
    \widetilde O_{\bar\delta}\left(
        \left(
            \frac{
                n^{2}
                n^{5/4+o(1)}
            }{
                \epsilon^3
                \bar\gamma^{5/2+1}
            }
        \right)^{1/3}
    \right)
    =
    \widetilde O_{\bar\delta}\Paren{
        \frac{
            n^{13/12+o(1)}
        }{
            \epsilon\bar\gamma^{7/6}
        }
    }.
\end{align*}
Because \(n^{13/12}\ge n\), \(\bar\gamma\le1\), and
\(\epsilon\le1\), this term also absorbs the \(n/\epsilon\) term.
Thus the algorithm satisfies
\[
    \abs{w_G(C)-w_{\widetilde G}(C)}
    \le
    \bar\gamma w_G(C)
    +
    \widetilde O_{\bar\delta}\Paren{
        \frac{
            n^{13/12+o(1)}
        }{
            \epsilon\bar\gamma^{7/6}
        }
    }
\]
simultaneously for all cuts \(C\).

Finally, \(\bar\gamma\le\gamma\),
\(\bar\gamma^{-7/6}=O(\gamma^{-7/6})\), and
\(\widetilde O_{\bar\delta}=\widetilde O_\delta\). This gives
\[
    \abs{w_G(C)-w_{\widetilde G}(C)}
    \le
    \gamma w_G(C)
    +
    \widetilde O_{\delta}\Paren{
        \frac{
            n^{13/12+o(1)}
        }{
            \epsilon\gamma^{7/6}
        }
    },
\]
as required. The algorithm is polynomial-time because
\(T=O(\log n)\), the chosen value of \(q\) is polynomial in the
displayed accuracy and privacy parameters up to an \(n^{o(1)}\)
factor, and both Theorem~\ref{thm:cut_main} and
Theorem~\ref{thm:improved-spectral} provide polynomial-time
implementations.
\end{proof}

\section{Hardness of Graph Spectral Estimation with Approximate DP}\label{sec:lowerbound}
Here we prove that, under edge-level differential privacy, any algorithm for estimating the Laplacian of an unweighted graph on $n$ vertices must incur a spectral error of $\Omega(\sqrt{n})$. This matches the trivial $\widetilde{O}(\sqrt{n})$ upper bound, showing that the improvement to $\widetilde{O}((nd)^{1/4})$ achieved by our private spectral primitive (Theorem~\ref{thm:laplacian-square}) is not an artifact of loose analysis but rather an intrinsic barrier: the trivial $\widetilde{O}(\sqrt{n})$ additive error is, in fact, asymptotically optimal in the worst case.

\begin{theorem}
    [Lower bound for spectral release]\label{thm:spectral-lower-bound}
Fix a constant $\eps>0$. There exists a constant $c=c(\eps)>0$ such that the following holds for all sufficiently large $n$: let $\delta \leq\frac{1}{8(1+e^\eps)}$, for any $(\eps,\delta)$-edge-DP algorithm that releases matrix\footnote{We note that the theorem does not restrict that the output is actually a graph. Therefore the lower bound applies to arbitrary symmetric matrix releases.} $Y$ on input graph $G$, there exists some connected graph $G$ with $m=\Theta(n)$ and maximum degree $d = \Theta(n)$, the algorithm must have
\[
   \Prb[\opnorm{Y-L_G} \geq c\sqrt n] > 0.49.
\]
\end{theorem}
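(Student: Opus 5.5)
We use hidden-star instances together with a reconstruction argument. Fix a center vertex $0$ and a set $W$ of $n/2$ leaf vertices, and make every vertex of $W$ adjacent to $0$. Let the remaining $n/2-1$ vertices form a set $Z$. For each $x\in\{0,1\}^{Z}$, the graph $G_x$ has the star edges $\{0,w\}$ for $w\in W$, the edges $\{0,z\}$ for $z$ with $x_z=1$, and, for every $z$ with $x_z=0$, an edge from $z$ to a fixed partner leaf in $W$ (each leaf used at most once). Every $G_x$ is then connected, has $m=\Theta(n)$ edges and maximum degree $\Theta(n)$ at the center. Changing one bit $x_z$ changes two edges, so neighboring bit strings give graphs at edge distance $2$. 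Group privacy therefore makes the mechanism $(2\eps,(1+e^\eps)\delta)$-DP with respect to Hamming-neighboring $x$.

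\textbf{Step 1 (reconstruction from spectral accuracy).} Given $Y$ with $\opnorm{Y-L_{G_x}}\le c\sqrt n$, estimate $x$ from the test vectors $u_z=e_0-e_z$. Let $y=\one_Z$ restricted to coordinates; more robustly, use the row $Y_{0,Z}$. The $(0,z)$ entries of $L_{G_x}$ equal $-x_z$, so $\|Y_{0,Z}-(-x)\|_2\le\opnorm{Y-L_{G_x}}\le c\sqrt n$. Rounding each entry of $-Y_{0,z}$ to $\{0,1\}$ gives an $\widehat x$ whose Hamming distance to $x$ is at most $4c^2 n$, since each wrong coordinate costs at least $1/4$ in squared norm. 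Choosing $c$ small makes this at most $\eta|Z|$ for any desired constant $\eta$. In words, spectral error $o(\sqrt n)$ reveals almost all private center edges, exactly as advertised in the introduction.

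\textbf{Step 2 (DP forbids this).} Apply the standard approximate-DP reconstruction lower bound for bit vectors. Draw $x$ uniformly at random and let $\widehat x=f(Y)$ be the post-processed estimate. For each coordinate $z$, compare $x$ with $x^{(z)}$, which flips bit $z$. Privacy gives
\[
\Pr{\widehat x_z=x_z}\le e^{2\eps}\Pr{\widehat x_z\ne x_z}+(1+e^\eps)\delta.
\]
Here we use that, conditioned on the other bits, $x_z$ is uniform and the pair $(x,x^{(z)})$ is neighboring. Hence $\Pr{\widehat x_z\ne x_z}\ge \frac{1-(1+e^\eps)\delta}{1+e^{2\eps}}$, and with $\delta\le\frac1{8(1+e^\eps)}$ this is at least a constant $p_\eps>0$. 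Therefore $\E{\mathrm{Ham}(\widehat x,x)}\ge p_\eps|Z|$.

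\textbf{Step 3 (from expectation to probability 0.49).} Suppose that for every $x$, $\Pr{\opnorm{Y-L_{G_x}}\ge c\sqrt n}\le 0.49$. Then with probability at least $0.51$ the Hamming error is at most $4c^2n$, and it is always at most $|Z|$. So the expected error is at most $4c^2 n+0.49|Z|$. This conflicts with Step 2 only if $p_\eps>0.49$, which fails for large $\eps$. The step I expect to be the main obstacle is getting the constant $0.49$ for arbitrary fixed $\eps$.

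To fix this, I would replace the per-bit argument with a packing argument. Take a code $\mathcal C\subseteq\{0,1\}^Z$ of size $2^{\Omega(n)}$ and minimum distance $\Omega(n)$, with $c$ chosen so that the reconstruction radius $4c^2n$ is below half that distance. The regions $\{Y:\widehat x(Y)\text{ decodes to }x\}$ are then disjoint across codewords. For any pair $x,x'\in\mathcal C$ at distance $k=O(n)$, group privacy gives $\Pr[G_x]{\mathcal R_{x'}}\ge e^{-2k\eps}\bigl(0.51-k e^{2k\eps}\delta\bigr)$. Because $\delta$ here is only bounded by a constant rather than by $n^{-\omega(1)}$, this packing bound must be organized carefully. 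I would instead couple through a chain of $O(1/\eps)$ random $\Theta(\eps n)$-bit resamplings and use a hybrid argument, so that the constants $0.49$ and $\frac1{8(1+e^\eps)}$ appear as slack in a two-point (Le Cam-type) test between the distributions of $Y$ under $x$ and under a randomly flipped $x'$. The resulting disjoint-success contradiction then yields the theorem.
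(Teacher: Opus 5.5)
Your Steps 1 and 2 are sound. The row bound $\|(Y-L_{G_x})_{0,Z}\|_2\le\opnorm{Y-L_{G_x}}$ does give Hamming error at most $4c^2n$, and the per-bit two-point inequality is correct. But the proof does not close, and the obstruction is not specific to large $\eps$. Under a uniform prior, DP only forces an expected per-bit error of $p_\eps=\frac{1-(1+e^\eps)\delta}{1+e^{2\eps}}$. This satisfies $p_\eps\le 7/16<0.49$ for every $\eps>0$ once $(1+e^\eps)\delta$ is near its allowed value $1/8$, and even at $\delta=0$ it exceeds $0.49$ only if $e^{2\eps}<51/49$. Meanwhile, success probability $0.51$ only caps the expected error at $0.49|Z|+4c^2n$. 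So an expectation comparison under a balanced prior can never reach the $0.49$ threshold.

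Your proposed repair is not yet an argument. Packing via group privacy over $\Theta(n)$ edits is vacuous for constant $\delta$, as you note. A hybrid whose hops each change $\Theta(\eps n)$ bits pays privacy loss $e^{\Theta(\eps^2 n)}$ per hop, so no constant-slack Le Cam test comes out of it.

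The missing idea, which the paper uses, is an \emph{imbalanced} prior combined with a decoder whose false positives stay bounded even when the spectral guarantee fails. The paper hides only $k=\lfloor\rho N\rfloor$ private center edges: a uniform $k$-subset $S\subseteq U$, with a public path supplying connectivity. It decodes $Y$ to the nearest $k$-star $\widehat S$.

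\begin{itemize}
\item Spectral separation, $\opnorm{L_{G_S}-L_{G_T}}\ge\sqrt{|S\symdiff T|}$, gives $|S\symdiff\widehat S|\le 4\alpha^2$ on success and trivially $|S\symdiff\widehat S|\le 2k$ on failure.
\item Averaging over coordinates, the true-positive rate is at least $1-\beta-2\alpha^2/k$, while the false-positive rate is at most $(\beta k+2\alpha^2)/(N-k)=O(\rho)+O(\alpha^2/n)$.
\item A swap coupling (delete $\{0,j\}$, add $\{0,i\}$, which is two edits) gives $p_i^1\le e^{2\eps}p_i^0+(1+e^\eps)\delta$.
\item With $\beta=0.49$ and $\rho$ small in terms of $\eps$, this forces $\alpha^2=\Omega(n)$.
\end{itemize}

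The small prior mass is what makes the $e^{2\eps}$ factor harmless for arbitrary fixed $\eps$: it multiplies a false-positive rate of order $\rho$.

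Your family could be adapted, but be careful with the edit count. There each bit flip costs two edits, so an exactly-$k$ swap costs four. The $\delta$-term then becomes $(1+e^\eps)(1+e^{2\eps})\delta\le(1+e^{2\eps})/8$, which exceeds $0.51$ once $e^{2\eps}>3.08$. You would need either i.i.d.\ Bernoulli$(\rho)$ bits with a capped top-$k$ style decoder, or private edges that are simply present or absent, with connectivity supplied by public edges as in the paper.
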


\subsection{The Hard Family and Reconstruction}
\label{sec:hard_family}
To prove this theorem, we first introduce the construction of the family of hard instances. Let the vertex set be $V=\{0,1,2,\ldots,n-1\}$. Each hard instance consists of a fixed, public path 
\[
0-1-2-\cdots-(n-1)
\] and a private $k$-star centered $0$. Formally, let $U=\{2,3,\ldots,n-1\}$. For every set of vertices $S\subset U$ of size $k$, we define 
\[
G_S
   =
   B\cup \bigl\{\{0,i\}:i\in S\bigr\},
\]
where $B$ denotes the edge set of the public path. Note that every graph $G_S$ is simple and connected. It has exactly $m=(n-1)+k$ edges: $n-1$ path edges and $k$ private star edges. We denote the set of all possible $G_S$ as the family $\F_{n,k}$.

The next lemma demonstrates how spectral approximation exposes private edges. Specifically, for any two graphs $G_S$ and $G_T$ from this family, the symmetric difference between their edge sets can be bounded by their spectral distance. Since both graphs share the identical public path, this edge difference stems solely from their private stars.

\begin{lemma}[Spectral separation]\label{lem:spectral-separation}
For any two $k$-subsets $S,T\subseteq U$,
\[
   \opnorm{L_{G_S}-L_{G_T}}
   \geq
   \sqrt{|S\symdiff T|}.
\]
\end{lemma}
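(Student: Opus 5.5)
Since $G_S$ and $G_T$ share the public path $B$, the difference of their Laplacians depends only on the two stars. Writing $L_{\{0,i\}}=(e_0-e_i)(e_0-e_i)^\top$, we have
\[
   \Delta := L_{G_S}-L_{G_T}=\sum_{i\in S\setminus T}L_{\{0,i\}}-\sum_{i\in T\setminus S}L_{\{0,i\}}.
\]
Let $P=S\setminus T$ and $N=T\setminus S$. Because $|S|=|T|=k$, we have $|P|=|N|=|S\symdiff T|/2=:s$. In coordinates, $\Delta_{00}=|P|-|N|=0$, $\Delta_{0i}=\Delta_{i0}=-1$ for $i\in P$ and $+1$ for $i\in N$, $\Delta_{ii}=+1$ for $i\in P$ and $-1$ for $i\in N$, and every other entry vanishes. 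The plan is to lower bound $\opnorm{\Delta}$ by exhibiting a single unit test vector.

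The simplest choice is $x=e_0$. Then
\[
   \opnorm{\Delta}\ge\norm{\Delta e_0}_2=\Bigl(\sum_{i\in P\cup N}1\Bigr)^{1/2}=\sqrt{2s}=\sqrt{|S\symdiff T|}.
\]
This works because $\Delta e_0$ is the first column of $\Delta$. Its $0$-th entry is $|P|-|N|=0$, and its remaining nonzero entries are the $\pm1$ off-diagonal values at the $2s$ positions in $P\cup N$. So $\norm{\Delta e_0}_2^2=|S\symdiff T|$ exactly, and $\opnorm{\Delta}\ge \norm{\Delta e_0}_2$ holds by the definition of the operator norm.

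No step is a real obstacle here. The only thing to check is that the diagonal term at vertex $0$ cancels, which uses $|S|=|T|$. Even without that cancellation the bound would still hold, since an extra diagonal entry only adds a nonnegative square to $\norm{\Delta e_0}_2^2$. The case $S=T$ is trivial, as both sides are $0$.
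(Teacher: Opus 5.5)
Your proof is correct, and it takes a slightly different, more elementary route than the paper.

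\textbf{The paper's route.} The paper bounds $\opnorm{L_{G_S}-L_{G_T}}$ through a quadratic form. With $a_i=\1_S(i)-\1_T(i)$, it builds the unit vector $z$ with $z_0=1/\sqrt2$ and $z_i=-a_i/(\sqrt2\norm{a}_2)$. It then expands $z^\top(L_{G_S}-L_{G_T})z$ to get exactly $\norm{a}_2$. That expansion needs two cancellations: $\sum_i a_i=0$, which comes from $|S|=|T|$, and $a_i^3=a_i$.

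\textbf{Your route.} You use $\opnorm{\Delta}\ge\norm{\Delta e_0}_2$ and read off the column $\Delta e_0=-\sum_{i\in U}a_ie_i$ directly.

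\textbf{How they relate.} The two arguments are closely linked: the paper's test vector is exactly $z=\bigl(e_0+\Delta e_0/\norm{\Delta e_0}_2\bigr)/\sqrt2$. Your version skips the quadratic-form expansion. As you observe, it also does not need $|S|=|T|$, since a nonzero $\Delta_{00}$ only increases $\norm{\Delta e_0}_2$.

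\textbf{What each buys.} In exchange for the extra algebra, the paper gets a signed statement, $\lambda_{\max}(L_{G_S}-L_{G_T})\ge\sqrt{|S\symdiff T|}$, rather than only an operator-norm bound. The decoding argument in Lemma~\ref{lem:decoding} uses only the operator norm, so your bound is all that is needed downstream.
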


\begin{proof}
If $S=T$, the claimed inequality is trivial. Thus we assume $S\neq T$. 
Let $a_i=\1_S(i)-\1_T(i)$, for every $i\in U$. By $S\neq T$ we have $a\neq 0$.
Since $|S|=|T|=k$,
\[
   \sum_{i\in U}a_i=0.
\]

Also $a_i\in\{-1,0,1\}$, therefore
\[
   \norm{a}_2^2=\sum_{i\in U}a_i^2=|S\symdiff T|.
\]

The path $B$ is common to both graphs, thus we have:
\[
   L_{G_S}-L_{G_T}
   =
   \sum_{i\in U}a_iL_{\{0,i\}}.
\]

Define a vector $z\in\R^n$ by
\[
   z_0=\frac{1}{\sqrt 2},
   \qquad
   z_i=-\frac{a_i}{\sqrt 2\norm{a}_2}\quad (i\in U),
   \qquad
   z_1=0.
\]

Note $z$ is a unit vector since
\[
   \norm{z}_2^2
   =
   \frac12+\frac{1}{2\norm{a}_2^2}\sum_{i\in U}a_i^2
   =1.
\]

For an edge $\{0,i\}$, $z^\top L_{\{0,i\}}z=(z_0-z_i)^2$. Hence
\begin{align*}
   z^\top(L_{G_S}-L_{G_T})z
   &=
   \sum_{i\in U}a_i(z_0-z_i)^2 \\
   &=
   \frac12\sum_{i\in U}a_i
      \left(1+\frac{a_i}{\norm{a}_2}\right)^2 \\
   &=
   \frac12\sum_{i\in U}a_i
      +\frac{1}{\norm{a}_2}\sum_{i\in U}a_i^2
      +\frac{1}{2\norm{a}_2^2}\sum_{i\in U}a_i^3 .
\end{align*}

Because $a_i\in\{-1,0,1\}$, we have $a_i^3=a_i$.  Since $\sum_i a_i=0$, both the first and third terms vanish.  Therefore
\[
   z^\top(L_{G_S}-L_{G_T})z
   =
   \norm{a}_2
   =
   \sqrt{|S\symdiff T|}.
\]

Since $z$ is a unit vector,
\[
   \opnorm{L_{G_S}-L_{G_T}}
   \geq
   \left|z^\top(L_{G_S}-L_{G_T})z\right|
   =
   \sqrt{|S\symdiff T|}.
\]
\end{proof}

The next lemma shows that the private star of the input graph can be partially recovered, provided we are given a matrix that spectrally approximates the graph with a small error. 

Suppose an algorithm outputs a matrix $Y$.  Define a nearest-neighbor decoder
\[
   \widehat S(Y)
   \in
   \arg\min_{T\subseteq U,\ |T|=k}
   \opnorm{Y-L_{G_T}}.
\]
Ties are broken arbitrarily.  This decoder is not required to be computationally efficient.  Lower bounds may use arbitrary post-processing of the algorithm output.

\begin{lemma}\label{lem:decoding}
If $S\subseteq U$, $|S|=k$, and
\[
   \opnorm{Y-L_{G_S}}\leq \alpha,
\]
then
\[
   |S\symdiff \widehat S(Y)|\leq 4\alpha^2.
\]
\end{lemma}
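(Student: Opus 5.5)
The argument is a nearest-neighbor triangle inequality combined with Lemma~\ref{lem:spectral-separation}. Write $T=\widehat S(Y)$. By definition of the decoder, $T$ minimizes $\opnorm{Y-L_{G_T}}$ over all $k$-subsets of $U$. Since $S$ is itself a feasible candidate,
\[
\opnorm{Y-L_{G_T}}\le \opnorm{Y-L_{G_S}}\le \alpha.
\]

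Next, the triangle inequality gives
\[
\opnorm{L_{G_S}-L_{G_T}}\le \opnorm{L_{G_S}-Y}+\opnorm{Y-L_{G_T}}\le 2\alpha.
\]

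Lemma~\ref{lem:spectral-separation} applies to the two $k$-subsets $S$ and $T$ of $U$. It yields
\[
\sqrt{|S\symdiff T|}\le \opnorm{L_{G_S}-L_{G_T}}\le 2\alpha.
\]
Squaring both sides gives $|S\symdiff T|\le 4\alpha^2$, which is the claim.

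There is no real obstacle; the only point to check is that the minimizer exists. This holds because the candidate set is finite and nonempty, which requires $k\le |U|$ as in the construction of $\F_{n,k}$. Arbitrary tie-breaking does not matter, since any minimizer satisfies the first inequality.
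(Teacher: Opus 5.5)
Your proof is correct and matches the paper's argument step for step: you use optimality of the decoder against the feasible candidate $S$, then the triangle inequality to get $2\alpha$, then Lemma~\ref{lem:spectral-separation}, and finally you square. Your added remark that the minimizer exists because the candidate set is finite and nonempty is a harmless extra check that the paper leaves implicit.
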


\begin{proof}
By the definition of $\widehat S(Y)$,
\[
   \opnorm{Y-L_{G_{\widehat S(Y)}}}
   \leq
   \opnorm{Y-L_{G_S}}
   \leq \alpha.
\]
Therefore, by the triangle inequality,
\[
   \opnorm{L_{G_S}-L_{G_{\widehat S(Y)}}}
   \leq
   \opnorm{L_{G_S}-Y}
   +
   \opnorm{Y-L_{G_{\widehat S(Y)}}}
   \leq 2\alpha.
\]
Lemma~\ref{lem:spectral-separation} gives
\[
   \sqrt{|S\symdiff \widehat S(Y)|}
   \leq
   \opnorm{L_{G_S}-L_{G_{\widehat S(Y)}}}
   \leq 2\alpha.
\]
Squaring yields the claim.
\end{proof}

The next corollary converts a high-probability spectral guarantee into an average reconstruction guarantee.

\begin{corollary}\label{cor:expected-reconstruction}
Let $S$ be uniformly random among all $k$-subsets of $U$, let $Y=A(G_S)$, and let $\widehat S=\widehat S(Y)$.  Suppose that, for every $S$,
\[
   \Prb\bigl[\opnorm{Y-L_{G_S}}\leq \alpha\bigr]
   \geq 1-\beta.
\]
Then
\[
   \E\bigl[|S\symdiff \widehat S|\bigr]
   \leq
   4\alpha^2+2\beta k.
\]
Moreover, since $|S|=|\widehat S|=k$,
\[
   \E\bigl[|S\setminus \widehat S|\bigr]
   =
   \E\bigl[|\widehat S\setminus S|\bigr]
   \leq
   2\alpha^2+\beta k.
\]
\end{corollary}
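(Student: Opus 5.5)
The plan is to split on the good event $\mathcal{G}=\{\opnorm{Y-L_{G_S}}\le\alpha\}$ and to bound $|S\symdiff\widehat S|$ separately on $\mathcal{G}$ and on its complement. This uses no privacy; it is a deterministic decoding bound plus a crude worst-case bound.

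\textbf{Step 1 (good event).} Fix any $k$-subset $S\subseteq U$ and condition on $S$. On $\mathcal{G}$, Lemma~\ref{lem:decoding} applies directly with $Y=A(G_S)$ and gives
\[
|S\symdiff\widehat S(Y)|\le 4\alpha^2 .
\]

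\textbf{Step 2 (bad event).} On $\mathcal{G}^c$, I will use the trivial bound
\[
|S\symdiff\widehat S|\le|S|+|\widehat S|=2k,
\]
which holds because the decoder always outputs a $k$-subset. By hypothesis, $\Prb[\mathcal{G}^c\mid S]\le\beta$.

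\textbf{Step 3 (combine).} Putting the two events together gives
\[
\E\bigl[|S\symdiff\widehat S|\mid S\bigr]\le 4\alpha^2\cdot\Prb[\mathcal{G}\mid S]+2k\cdot\Prb[\mathcal{G}^c\mid S]\le 4\alpha^2+2\beta k .
\]
Averaging over the uniformly random $S$ preserves this bound, which proves the first claim.

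\textbf{Step 4 (one-sided bounds).} For the second claim, note that $|S|=|\widehat S|=k$ implies, pointwise,
\[
|S\setminus\widehat S|=k-|S\cap\widehat S|=|\widehat S\setminus S| .
\]
Hence each of the two one-sided differences is exactly half of $|S\symdiff\widehat S|$. Taking expectations gives $\E\bigl[|S\setminus\widehat S|\bigr]=\E\bigl[|\widehat S\setminus S|\bigr]\le 2\alpha^2+\beta k$.

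I do not expect a real obstacle. The only points needing care are that the failure probability is taken conditionally for each fixed $S$, which the hypothesis "for every $S$" provides, and that the decoder's output always has size exactly $k$, which the $\arg\min$ over $k$-subsets guarantees.
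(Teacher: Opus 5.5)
Your proposal is correct and follows essentially the same argument as the paper: apply Lemma~\ref{lem:decoding} on the success event, use the trivial bound $|S\symdiff\widehat S|\le 2k$ on the failure event (whose probability is at most $\beta$ for every fixed $S$, hence also for uniform $S$), and then use $|S\setminus\widehat S|=|\widehat S\setminus S|=\frac12|S\symdiff\widehat S|$ for the one-sided bounds. Conditioning on a fixed $S$ before averaging, as you do, is exactly the step the paper's proof relies on.
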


\begin{proof}
On the success event $\opnorm{Y-L_{G_S}}\leq \alpha$, Lemma~\ref{lem:decoding} gives $|S\symdiff \widehat S|\leq 4\alpha^2.$
On the failure event, we use the trivial bound $|S\symdiff \widehat S|\leq 2k$, because both sets have size $k$.  Since the failure probability is at most $\beta$ for every fixed $S$, it is also at most $\beta$ when $S$ is uniform.  Hence
\[
   \E\bigl[|S\symdiff \widehat S|\bigr]
   \leq
   4\alpha^2+2\beta k.
\]
Finally, since $|S|=|\widehat S|=k$, the number of false negatives equals the number of false positives:
\[
   |S\setminus \widehat S|=|\widehat S\setminus S|=\frac12 |S\symdiff \widehat S|.
\]
This yields the second claim.
\end{proof}

\subsection{Reconstruction violates Approximate-DP}

The previous section says that small spectral error reconstructs the hidden star support.  We now show that such reconstruction violates $(\eps,\delta)$-DP. We first record the two-edge version of differential privacy.

\begin{lemma}\label{lem:two-edge}
Let $\mathcal{A}$ be $(\eps,\delta)$-edge-DP.  If two graphs $G,H$ differ in at most two edge edits, then for every event $\mathcal E$,
\[
   \Prb[\mathcal{A}(G)\in \mathcal E]
   \leq
   e^{2\eps}\Prb[\mathcal{A}(H)\in \mathcal E]
   +(1+e^{\eps})\delta.
\]
\end{lemma}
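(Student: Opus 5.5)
The plan is the standard group-privacy argument: route through an intermediate graph that differs by one edit from each endpoint, and apply the $(\eps,\delta)$-edge-DP definition twice.

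Let $G$ and $H$ differ in at most two edge edits. If they differ in at most one edit, the bound follows directly from the definition, since $e^{\eps}\le e^{2\eps}$ and $\delta\le(1+e^{\eps})\delta$. Otherwise, let $K$ be the graph obtained from $G$ by applying the first edit only. Then $G\sim K$ and $K\sim H$ are each neighboring pairs under edge-level privacy.

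The proof then has two steps.

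\textbf{Step 1.} Apply the DP guarantee to the pair $G\sim K$. For every event $\mathcal E$,
\[
\Prb[\mathcal A(G)\in\mathcal E]\le e^{\eps}\Prb[\mathcal A(K)\in\mathcal E]+\delta.
\]

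\textbf{Step 2.} Apply it again to the pair $K\sim H$, which gives $\Prb[\mathcal A(K)\in\mathcal E]\le e^{\eps}\Prb[\mathcal A(H)\in\mathcal E]+\delta$. Substituting this into Step 1 yields
\[
\Prb[\mathcal A(G)\in\mathcal E]\le e^{2\eps}\Prb[\mathcal A(H)\in\mathcal E]+e^{\eps}\delta+\delta,
\]
which is exactly the claimed bound.

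There is no real obstacle here. The only point to check is that the intermediate graph $K$ lies in the domain on which $\mathcal A$ is defined. That domain is all graphs on $V$: edge-DP is quantified over all neighboring pairs, not only over members of the family $\F_{n,k}$. So it does not matter that $K$ is a star with $k\pm1$ private edges rather than exactly $k$.
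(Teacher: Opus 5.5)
Your proposal is correct and follows the same route as the paper: pick an intermediate graph one edit away from each endpoint and apply the $(\eps,\delta)$-edge-DP inequality twice, which yields the $e^{2\eps}$ factor and the $(1+e^{\eps})\delta$ additive term. Two things you add that the paper leaves implicit are both valid: the separate treatment of the at-most-one-edit case, and the check that the intermediate graph need only lie in the algorithm's domain of all graphs on $V$, not in $\F_{n,k}$.
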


\begin{proof}
Choose an intermediate graph $G'$ such that $G$ is edge-neighboring to $G'$ and $G'$ is edge-neighboring to $H$.  Applying DP twice gives
\begin{align*}
   \Prb[\mathcal{A}(G)\in \mathcal E]
   &\leq
   e^{\eps}\Prb[\mathcal{A}(G')\in \mathcal E]+\delta \\
   &\leq
   e^{\eps}\left(e^{\eps}\Prb[\mathcal{A}(H)\in \mathcal E]+\delta\right)+\delta \\
   &=
   e^{2\eps}\Prb[\mathcal{A}(H)\in \mathcal E]
   +(1+e^{\eps})\delta.
\end{align*}
\end{proof}

For a fixed vertex $i\in U$, define two distributions over private sets:
\[
   \mathcal D_i^1=\text{uniform over all }k\text{-subsets }S\subseteq U\text{ with }i\in S,
\]
\[
   \mathcal D_i^0=\text{uniform over all }k\text{-subsets }S\subseteq U\text{ with }i\notin S.
\]
The distributions $\mathcal D_i^1$ and $\mathcal D_i^0$ differ only by the membership of the single private edge $\{0,i\}$.  More precisely, they can be coupled so that the corresponding graphs differ in two edge edits: one deletes a star edge $\{0,j\}$ and adds the other star edge $\{0,i\}$.

\begin{lemma}\label{lem:coordinate-comparison}
Let $\mathcal{A}$ be $(\eps,\delta)$-edge-DP.  Fix $i\in U$ and let $\mathcal E_i$ be any event depending on the output.  Then
\[
   \Prb_{S\sim\mathcal D_i^1}[\mathcal{A}(G_S)\in\mathcal E_i]
   \leq
   e^{2\eps}
   \Prb_{S\sim\mathcal D_i^0}[\mathcal{A}(G_S)\in\mathcal E_i]
   +(1+e^{\eps})\delta.
\]
\end{lemma}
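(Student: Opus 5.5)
The plan is to build an explicit coupling between $\mathcal D_i^1$ and $\mathcal D_i^0$, with each coupled pair differing in at most two edge edits. The event inequality then follows by averaging Lemma~\ref{lem:two-edge} over the coupling.

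Let $\mathcal S_1=\{S\subseteq U: |S|=k,\ i\in S\}$ and $\mathcal S_0=\{T\subseteq U: |T|=k,\ i\notin T\}$. Consider the bipartite relation $S\sim T$ iff $T=(S\setminus\{i\})\cup\{j\}$ for some $j\in U\setminus S$. Each $S\in\mathcal S_1$ has exactly $|U|-k=n-2-k$ neighbors in $\mathcal S_0$. Each $T\in\mathcal S_0$ has exactly $k$ neighbors in $\mathcal S_1$, namely $(T\setminus\{j\})\cup\{i\}$ for $j\in T$.

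Define the joint distribution $\pi$ by choosing an edge $(S,T)$ of this biregular bipartite graph uniformly at random. Biregularity gives the two marginals:
\begin{itemize}
\item the $S$-marginal of $\pi$ is uniform on $\mathcal S_1$, i.e.\ equal to $\mathcal D_i^1$;
\item the $T$-marginal of $\pi$ is uniform on $\mathcal S_0$, i.e.\ equal to $\mathcal D_i^0$.
\end{itemize}
We need $1\le k\le |U|-1$ so that both families are nonempty; this is implicit in the definitions. For each coupled pair, $G_S$ and $G_T$ share the public path $B$ and differ only in the star edges $\{0,i\}$ and $\{0,j\}$, so they differ in exactly two edge edits.

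Next, apply Lemma~\ref{lem:two-edge} pointwise with $G=G_S$ and $H=G_T$. For every pair in the support of $\pi$,
\[
\Prb[\mathcal A(G_S)\in\mathcal E_i]\le e^{2\eps}\Prb[\mathcal A(G_T)\in\mathcal E_i]+(1+e^{\eps})\delta .
\]
Take expectation over $(S,T)\sim\pi$. Since the algorithm's internal randomness is independent of the input choice, the left side becomes $\Prb_{S\sim\mathcal D_i^1}[\mathcal A(G_S)\in\mathcal E_i]$ by the first marginal. The right side becomes $e^{2\eps}\Prb_{S\sim\mathcal D_i^0}[\mathcal A(G_S)\in\mathcal E_i]+(1+e^{\eps})\delta$ by the second marginal. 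This gives the claim.

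The only real point to get right is that the coupling has exactly the prescribed marginals. That is the biregularity count above, which I expect to be routine. Everything else is linearity of expectation.
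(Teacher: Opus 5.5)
Your proposal is correct and is essentially the paper's proof. The paper samples $T\sim\mathcal D_i^0$, then a uniform $j\in T$, and sets $S=(T\setminus\{j\})\cup\{i\}$; this is exactly a uniformly random edge of your biregular bipartite graph, and the paper then averages Lemma~\ref{lem:two-edge} over the coupled pairs just as you do.
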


\begin{proof}
We give an explicit coupling.  Sample $T\sim\mathcal D_i^0$ uniformly, then sample $j\in T$ uniformly, and set $S=T\setminus\{j\}\cup\{i\}.$ Then $S$ is uniform over $\mathcal D_i^1$.

Indeed, for any fixed $S$ with $i\in S$, it can arise by choosing $T=S\setminus\{i\}\cup\{j\}$
for some $j\in U\setminus S$, and there are $N-k$ such choices.  Since ${N-1\choose k} = \frac{N-k}{k}{N-1\choose k-1}$, each such $S$ is produced with probability $\frac{N-k}{{N-1\choose k}\, k}
   =
   \frac{1}{{N-1\choose k-1}}$. 
This concludes that $S$ is uniform over $\mathcal D_i^1$.

For every coupled pair $(S,T)$, the graphs $G_S$ and $G_T$ differ by exactly two private star edges: remove $\{0,j\}$ and add $\{0,i\}$.  Lemma~\ref{lem:two-edge} therefore gives
\[
   \Prb[\mathcal{A}(G_S)\in \mathcal E_i]
   \leq
   e^{2\eps}\Prb[\mathcal{A}(G_T)\in \mathcal E_i]
   +(1+e^{\eps})\delta.
\]
Averaging over the coupling proves the lemma.
\end{proof}

We now choose the event $\mathcal E_i$ to be the event that the decoder claims the private edge $\{0,i\}$ is present, i.e., $\mathcal E_i=\{Y:i\in \widehat S(Y)\}.$ We define
\[
   p_i^1=\Prb[i\in \widehat S(Y)\mid i\in S],
   \qquad
   p_i^0=\Prb[i\in \widehat S(Y)\mid i\notin S],
\]
where $S$ is uniformly random among all $k$-subsets of $U$ and $Y=\mathcal{A}(G_S)$.  Lemma~\ref{lem:coordinate-comparison} gives, for each $i\in U$,
\begin{equation}\label{eq:per-coordinate-dp}
   p_i^1\leq e^{2\eps}p_i^0+(1+e^{\eps})\delta.
\end{equation}

On the other hand, reconstruction accuracy implies that $p_i^1$ is large on average and $p_i^0$ is small on average.

\begin{lemma}\label{lem:tp-fp}
Let $S$ be uniformly random among all $k$-subsets of $U$, let $Y=\mathcal{A}(G_S)$, and let $\widehat S=\widehat S(Y)$.  Suppose that, for every $S$,
\[
   \Prb\bigl[\opnorm{Y-L_{G_S}}\leq \alpha\bigr]
   \geq 1-\beta.
\]
Then
\[
   \frac1N\sum_{i\in U}p_i^1
   \geq
   1-\beta-\frac{2\alpha^2}{k},
\]
and
\[
   \frac1N\sum_{i\in U}p_i^0
   \leq
   \frac{\beta k+2\alpha^2}{N-k}.
\]
\end{lemma}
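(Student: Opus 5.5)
We will reduce both inequalities to Corollary~\ref{cor:expected-reconstruction} by rewriting the averages of $p_i^1$ and $p_i^0$ as expected true-positive and false-positive counts. Here $N=|U|=n-2$, and $S$ is uniform over $k$-subsets of $U$, so $\Prb[i\in S]=k/N$ and $\Prb[i\notin S]=(N-k)/N$ for every $i\in U$.

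\textbf{Step 1 (true positives).} By the definition of $p_i^1$,
\[
\sum_{i\in U}p_i^1\cdot\frac kN=\sum_{i\in U}\Prb[i\in\widehat S,\ i\in S]=\E\bigl[|S\cap\widehat S|\bigr].
\]
Since $|S\cap\widehat S|=k-|S\setminus\widehat S|$, Corollary~\ref{cor:expected-reconstruction} gives $\E[|S\cap\widehat S|]\ge k-2\alpha^2-\beta k$. Dividing by $k$ yields
\[
\frac1N\sum_i p_i^1\ge 1-\beta-\frac{2\alpha^2}{k}.
\]

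\textbf{Step 2 (false positives).} In the same way,
\[
\sum_{i\in U}p_i^0\cdot\frac{N-k}{N}=\E\bigl[|\widehat S\setminus S|\bigr]\le 2\alpha^2+\beta k,
\]
again by Corollary~\ref{cor:expected-reconstruction}. Dividing by $N-k$ gives the second claim. This step needs $N>k$; that holds in the regime of interest, and when $N=k$ the conditional probability $p_i^0$ is undefined and the statement is vacuous.

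The only delicate point is the conditioning bookkeeping: $p_i^b$ is a conditional probability under the uniform prior, so $\Prb[i\in\widehat S,\ i\in S]=p_i^1\Prb[i\in S]$. The randomness in the algorithm and the randomness in $S$ are both absorbed in $\E$, exactly as in the corollary, and linearity of expectation over indicators does the rest.
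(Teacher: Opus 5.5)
Your proposal is correct and takes essentially the same route as the paper. Both use $\Prb[i\in S]=k/N$ to write $\frac{k}{N}\sum_i p_i^1$ and $\frac{N-k}{N}\sum_i p_i^0$ as expected counts, then invoke Corollary~\ref{cor:expected-reconstruction}; the only cosmetic difference is that you phrase Step 1 via true positives $|S\cap\widehat S|=k-|S\setminus\widehat S|$, while the paper bounds the false-negative sum $\frac{k}{N}\sum_i(1-p_i^1)$ directly.
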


\begin{proof}
The expected number of false negatives is
\begin{align*}
   \E[|S\setminus\widehat S|]
   &=
   \sum_{i\in U}\Prb[i\in S\text{ and }i\notin\widehat S] \\
   &=
   \sum_{i\in U}\Prb[i\in S]\Prb[i\notin\widehat S\mid i\in S] \\
   &=
   \frac{k}{N}\sum_{i\in U}(1-p_i^1).
\end{align*}
By Corollary~\ref{cor:expected-reconstruction},
\[
   \E[|S\setminus\widehat S|]
   \leq 2\alpha^2+\beta k.
\]
Therefore
\[
   \frac{k}{N}\sum_{i\in U}(1-p_i^1)
   \leq 2\alpha^2+\beta k,
\]
which rearranges to
\[
   \frac1N\sum_{i\in U}p_i^1
   \geq
   1-\beta-\frac{2\alpha^2}{k}.
\]
Similarly, the expected number of false positives is
\begin{align*}
   \E[|\widehat S\setminus S|]
   &=
   \sum_{i\in U}\Prb[i\notin S\text{ and }i\in\widehat S] \\
   &=
   \sum_{i\in U}\Prb[i\notin S]\Prb[i\in\widehat S\mid i\notin S] \\
   &=
   \frac{N-k}{N}\sum_{i\in U}p_i^0.
\end{align*}
Again using Corollary~\ref{cor:expected-reconstruction},
\[
   \frac{N-k}{N}\sum_{i\in U}p_i^0
   \leq
   2\alpha^2+\beta k.
\]
which rearranges to
\[
   \frac1N\sum_{i\in U}p_i^0
   \leq
   \frac{\beta k+2\alpha^2}{N-k}.
\]
\end{proof}

\subsection{Proof of Theorem~\ref{thm:spectral-lower-bound}}
With the tools developed in the previous sections, we are now ready to prove the following lemma, which further derives the $\Omega(\sqrt{n})$ lower bound for DP spectral release.

\begin{lemma}\label{lem:lowerbound-main}
Fix $\eps>0$ and $0<\beta<1/2$.  Let $N=n-2$, let $1\leq k<N/2$, and consider the connected graph family $\F_{n,k}$ defined in Section~\ref{sec:hard_family}. Suppose $\mathcal{A}$ is an $(\eps,\delta)$-edge-DP algorithm that outputs a symmetric matrix $Y$, and suppose that, for every $G\in\F_{n,k}$,
\[
   \Prb\bigl[\opnorm{Y-L_G}\leq \alpha\bigr]
   \geq 1-\beta .
\]
Then
\begin{equation}\label{eq:master-ineq}
   1-\beta-\frac{2\alpha^2}{k}
   \leq
   e^{2\eps}\frac{\beta k+2\alpha^2}{N-k}
   +(1+e^{\eps})\delta.
\end{equation}
In particular, for every constant $\eps>0$ and every constant failure probability $\beta<1/2$, by choosing $k=\lfloor \rho n\rfloor$ for a sufficiently small constant $\rho=\rho(\eps,\beta)>0$, every $(\eps,\delta)$-edge-DP algorithm with negligible $\delta=o(1)$ must incur
\[
   \alpha=\Omega(\sqrt n)
\]
on some connected graph with $m=\Theta(n)$.
\end{lemma}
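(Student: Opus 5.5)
The plan is to average the per-coordinate DP inequality \eqref{eq:per-coordinate-dp} over $i\in U$ and then sandwich the two averages using Lemma~\ref{lem:tp-fp}. Concretely, I first fix the decoder $\widehat S(Y)$ and the events $\mathcal E_i=\{Y: i\in\widehat S(Y)\}$. The hypothesis holds for every $G\in\F_{n,k}$, so it holds for every $k$-subset $S$, and Lemma~\ref{lem:tp-fp} applies to the uniformly random $S$. Lemma~\ref{lem:coordinate-comparison} gives $p_i^1\le e^{2\eps}p_i^0+(1+e^\eps)\delta$ for each $i$. One small point needs checking here: $p_i^1$ and $p_i^0$ are the conditional probabilities under the uniform $S$, and these equal the probabilities under $\mathcal D_i^1$ and $\mathcal D_i^0$, because conditioning a uniform $k$-subset on $i\in S$ (resp.\ $i\notin S$) yields exactly those distributions.

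Averaging over $i\in U$ (with $|U|=N$) gives
\[
\frac1N\sum_{i\in U}p_i^1 \le e^{2\eps}\,\frac1N\sum_{i\in U}p_i^0+(1+e^\eps)\delta .
\]
Lemma~\ref{lem:tp-fp} lower bounds the left side by $1-\beta-2\alpha^2/k$ and upper bounds the average of $p_i^0$ by $(\beta k+2\alpha^2)/(N-k)$. Substituting these two bounds yields \eqref{eq:master-ineq} directly. This part is routine.

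For the asymptotic consequence, I argue by contradiction. Suppose $\alpha\le c\sqrt n$ with $c$ small, and set $k=\lfloor\rho n\rfloor$, so that $N-k\ge n/2$ for small $\rho$ and large $n$. Then $2\alpha^2/k\le 2c^2/\rho+o(1)$, while the right side is at most
\[
e^{2\eps}\,\frac{\beta\rho n+2c^2n}{n/2}+(1+e^\eps)\delta
= 2e^{2\eps}(\beta\rho+2c^2)+o(1).
\]
The main obstacle is choosing constants so the inequality fails. The left side must stay bounded away from $0$ (it tends to $1-\beta>1/2$ once $c^2/\rho$ is small), and the right side must be small. So I first pick $\rho$ with $2e^{2\eps}\beta\rho\le(1-\beta)/8$, then pick $c^2\le \rho(1-\beta)/8$, which also makes $4e^{2\eps}c^2$ small since $\rho\le1$. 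Finally I use $\delta=o(1)$. With these choices the left side is at least $3(1-\beta)/4$ and the right side is at most $(1-\beta)/4+o(1)$, a contradiction.

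Hence $\alpha>c\sqrt n$, and every $G_S$ in the family is connected with $m=n-1+k=\Theta(n)$ edges, which gives the claim.
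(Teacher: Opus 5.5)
Your argument is the paper's: average \eqref{eq:per-coordinate-dp} over $i\in U$ and sandwich the two averages with Lemma~\ref{lem:tp-fp}; for the asymptotic part you argue by contradiction, whereas the paper collects the $\alpha^2$ terms to get $\alpha^2\ge\frac{1-\beta}{4(1/k+e^{2\eps}/(N-k))}$ directly. One constant slip: $c^2\le\rho(1-\beta)/8$ together with $\rho\le 1$ does not give $4e^{2\eps}c^2\le(1-\beta)/8$ (that would need $\rho\le e^{-2\eps}/4$), so you should additionally impose $c^2\le e^{-2\eps}(1-\beta)/32$ --- the paper's direct rearrangement avoids tuning $c$ at all.
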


\begin{proof}
For each $i\in U$, inequality~\eqref{eq:per-coordinate-dp} gives
\[
   p_i^1\leq e^{2\eps}p_i^0+(1+e^{\eps})\delta.
\]
Averaging over $i\in U$ yields
\[
   \frac1N\sum_{i\in U}p_i^1
   \leq
   e^{2\eps}\frac1N\sum_{i\in U}p_i^0
   +(1+e^{\eps})\delta.
\]
Lemma~\ref{lem:tp-fp} lower bounds the left-hand side and upper bounds the right-hand side, giving
\[
   1-\beta-\frac{2\alpha^2}{k}
   \leq
   e^{2\eps}\frac{\beta k+2\alpha^2}{N-k}
   +(1+e^{\eps})\delta.
\]
This proves~\eqref{eq:master-ineq}. Consequently, if we assume:
\begin{equation}\label{eq:parameter-choice}
   e^{2\eps}\frac{\beta k}{N-k}\leq \frac{1-\beta}{4}
   \qquad\text{and}\qquad
   \delta \leq \frac{1}{8(1+e^\eps)}\leq  \frac{1-\beta}{4(1+e^\eps)},
\end{equation}
where $\beta \leq 1/2$, then
\[1-\beta-\frac{2\alpha^2}{k}
   \leq
   \frac{1-\beta}{4}
   +
   \frac{2e^{2\eps}\alpha^2}{N-k}
   +
   \frac{1-\beta}{4}.
   \]
Therefore,
\begin{equation}\label{eq:alpha-lower}
   \alpha^2
   \geq
   \frac{1-\beta}{4\left(1/k+e^{2\eps}/(N-k)\right)}
   =\Omega_{\eps,\beta,k/N}(n).
\end{equation}

Take $k=\lfloor\rho N\rfloor$, where $\rho>0$ is a sufficiently small constant depending only on $\eps$ and $\beta$ so that the first inequality in~\eqref{eq:parameter-choice} holds for all large $n$. Substituting $k = \lfloor\rho N\rfloor$, the term $\left(1/k + e^{2\eps}/(N-k)\right)$ becomes $\Theta(1/N) = \Theta(1/n)$. The bound \eqref{eq:alpha-lower} then gives $\alpha^2 = \Omega(n)$  Since $m=n-1+k=\Theta(n)$, this completes the proof.
\end{proof}
\begin{proof}[Proof of Theorem~\ref{thm:spectral-lower-bound}]
It is enough to show that for any constants $\eps>0$ and $\beta<1/2$, there exists $c=c(\eps,\beta)>0$ such that, for all sufficiently large $n$, no $(\eps,\delta)$-edge-DP algorithm with $\delta = \frac{1}{8(1+e^\eps)}$ can output $Y$ with
\[
\Prb[\|Y - L_G\| \leq c\sqrt{n}] \geq 1-\beta.
\]
on every connected $n$-vertex graph $G$. Choose $k=\lfloor\rho(n-2)\rfloor$ for a sufficiently small constant $\rho=\rho(\eps,\beta)>0$, and apply Lemma~\ref{lem:lowerbound-main}.  The hard graphs have $m=n-1+k=\Theta(n)$ edges and are connected.  The lemma gives $\alpha\geq c'\sqrt n$ for some constant $c'=c'(\eps,\beta,\rho)>0$. Taking $c<c'$ gives the contradiction.
\end{proof}

\section*{Acknowledgments.}
We are very grateful to Mina Dalirrooyfard and Jalaj Upadhyay for their insightful discussions on related topics prior to the start of this project.
\newpage
\bibliographystyle{alpha}
\bibliography{privacy}
\end{document}